\title{Approximate Polymorphisms}
\author{Gilad Chase\thanks{The Henry and Marylin Taub Faculty of Computer Science, Technion, Israel.} \qquad Yuval Filmus\thanks{The Henry and Marylin Taub Faculty of Computer Science, Technion, Israel. This project has received funding from the European Union's Horizon 2020 research and innovation programme under grant agreement No~802020-ERC-HARMONIC.} \qquad Dor Minzer\thanks{Department of Mathematics, Massachusetts Institute of Technology, Cambridge, USA.} \qquad Elchanan Mossel\thanks{Department of Mathematics and IDSS, Massachusetts Institute of Technology, Cambridge, USA.} \qquad Nitin Saurabh\thanks{Department of Computer Science, Technion, Israel. This project has received funding from the European Union's Horizon 2020 research and innovation programme under grant agreement No~802020-ERC-HARMONIC.}}
\date{\vspace{-5ex}}
\newif\ifbackrefshowonlyfirst
\let\BR@direct@old@hyper@natlinkstart\hyper@natlinkstart
\renewcommand*{\hyper@natlinkstart}{\phantomsection\BR@direct@old@hyper@natlinkstart}
\let\BR@direct@oldBR@citex\BR@citex
\renewcommand*{\BR@citex}{\phantomsection\BR@direct@oldBR@citex}%
\long\def\hyper@page@BR@direct@ref#1#2#3{\hyperlink{#3}{#1}}
    \let\backrefxxx\hyper@page@BR@direct@ref
\patchcmd{\Hy@backout}{Doc-Start}{\@currentHref}{}{\errmessage{I can't seem to patch backref}}
\newtheorem{theorem}{Theorem}[section]
\newtheorem{lemma}[theorem]{Lemma}
\newtheorem{open}{Open Question}
\theoremstyle{definition}
\newtheorem{definition}[theorem]{Definition}
\theoremstyle{remark}
\newtheorem{remark}[theorem]{Remark}
\providecommand{\row}{\mathsf{row}}
\providecommand{\col}{\mathsf{col}}
\providecommand{\rr}[1]{\mathcal{R}_{#1}}
\providecommand{\biglor}{\bigvee}
\providecommand{\bigland}{\bigwedge}
\providecommand{\h}{g}
\DeclareMathOperator*{\E}{\mathbb{E}}
\providecommand{\V}[1]{\mathsf{var}(#1)}
\DeclareMathOperator{\NS}{NS}
\DeclareMathOperator{\Inf}{Inf}
\DeclareMathOperator{\Maj}{Maj}
\DeclareMathOperator{\Stab}{Stab}
\providecommand{\cG}{\mathcal{G}}
\providecommand{\cH}{\mathcal{H}}
\providecommand{\normal}{\mathcal{N}}
\providecommand{\normalg}[1]{\normal_{#1}}
\providecommand{\stdnormal}{\normal(0,1)}
\DeclareMathOperator{\clip}{\mathsf{clip}}
\DeclareMathOperator{\sgn}{\mathsf{sign}}
\providecommand{\half}{\sfrac12}
\newcommand\eps{\varepsilon}
\renewcommand\epsilon{\eps}
\renewcommand\geq{\geqslant}
\renewcommand\leq{\leqslant}
\renewcommand\ge{\geqslant}
\begin{document}

\maketitle

\begin{abstract}
For a function $g\colon\{0,1\}^m\to\{0,1\}$, a function $f\colon \{0,1\}^n\to\{0,1\}$ is called a $g$-polymorphism if their actions commute: $f(g(\row_1(Z)),\ldots,g(\row_n(Z))) = g(f(\col_1(Z)),\ldots,f(\col_m(Z)))$ 
for all $Z\in\{0,1\}^{n\times m}$. The function $f$ is called an approximate $g$-polymorphism if this equality holds with probability close to $1$, when 
$Z$ is sampled uniformly. A pair of functions $f_0,f_1\colon \{0,1\}^n \to \{0,1\}$ are called a skew $g$-polymorphism if $f_0(g(\row_1(Z)),\ldots,g(\row_n(Z))) = g(f_1(\col_1(Z)),\ldots,f_1(\col_m(Z)))$ 
for all $Z\in\{0,1\}^{n\times m}$.

We study the structure of exact polymorphisms as well as approximate polymorphisms. 
Our results include:
\begin{enumerate}
    \item We prove that an approximate polymorphism $f$ must be 
    close to an exact \emph{skew} polymorphism;
    \item We give a characterization of exact skew polymorphisms, showing that besides
    trivial cases, only the functions $g = \mathsf{AND}, \mathsf{XOR}, \mathsf{OR}, \mathsf{NAND}, \mathsf{NOR}, \mathsf{NXOR}$ admit non-trivial exact skew polymorphisms.
\end{enumerate}
We also study the approximate polymorphism problem in the list-decoding regime (i.e., when the probability equality holds is not close to $1$, but is bounded away from some value).
We show that if $f(x \land y) = f(x) \land f(y)$ with probability larger than $s_\land \approx 0.815$ then $f$ correlates with some low-degree character, and $s_\land$ is the optimal threshold for this property.

Our result generalize the classical linearity testing result of Blum, Luby and Rubinfeld, that in this language showed that the approximate polymorphisms of 
$g = \mathsf{XOR}$ are close to XOR's, as well as a recent result of Filmus, Lifshitz, Minzer 
and Mossel, showing that the approximate polymorphisms of AND can only be close
to AND functions.
\end{abstract}

\section{Introduction}
\label{sec:introduction}
Let $m\in\mathbb{N}$ be thought of as a constant, $n\in\mathbb{N}$ be thought of as large, and let $g\colon\{0,1\}^m\to\{0,1\}$ be any function. We say that $f\colon\{0,1\}^n\to\{0,1\}$ is a \emph{polymorphism} of $g$ if their operations 
commute. More precisely, defining the functions 
$f\circ g^{n}, g\circ f^{m} \colon \{0,1\}^{n\times m}\to\{0,1\}$ 
as
\[
 (f\circ g^{n})(Z) = f(g(\row_1(Z)),\ldots,g(\row_n(Z))),
 \qquad
 (g\circ f^{m})(Z) = g(f(\col_1(Z)),\ldots,f(\col_m(Z))),
\]
we say that $f$ is a polymorphism of $g$ if $f\circ g^n = g\circ f^m$.
See \Cref{fig:polymorphism} for an illustration.

\begin{figure}[!ht]
\[
\begin{array}{|c|c|c|c|c|}
\cline{1-3} \cline{5-5}
Z_{11} & \cdots & Z_{1m} & \stackrel g\to & g(\row_1(Z))
\\ \cline{1-3} \cline{5-5}
\vdots & \ddots & \vdots & \raisebox{2pt}{$\stackrel g\to$} & \vdots 
\\ \cline{1-3} \cline{5-5}
Z_{n1} & \cdots & Z_{nm} & \stackrel g\to & g(\row_n(Z))
\\ \cline{1-3} \cline{5-5}
\multicolumn{1}{c}{\downarrow \raisebox{1pt}{$\scriptstyle f$}} &
\multicolumn{1}{c}{\downarrow \raisebox{1pt}{$\scriptstyle f$}} &
\multicolumn{1}{c}{\downarrow \raisebox{1pt}{$\scriptstyle f$}} &
\multicolumn{1}{c}{} &
\multicolumn{1}{c}{\downarrow \raisebox{1pt}{$\scriptstyle f$}}
\\ \cline{1-3} \cline{5-5}
f(\col_1(Z)) & \cdots & f(\col_m(Z)) & \stackrel g\to & \ast
\\ \cline{1-3} \cline{5-5}
\end{array}
\]
    \caption{$f$ is a polymorphism of $g$, in symbols $f \circ g^n = g \circ f^m$, if applying $g$ to the bottom row produces identical results to applying $f$ to the rightmost column.}
    \label{fig:polymorphism}
\end{figure}

More generally, for a parameter $\delta>0$, we say that $f$ is a \emph{$\delta$-approximate 
polymorphism} if 
\[
\Pr_{Z}[(f \circ g^{n})(Z) \neq (g \circ f^{m})(Z)]\leq \delta;
\]
here and throughout, the distribution over $Z$ is uniform over $\{0,1\}^{n\times m}$. We note that for any function $g$, one always has dictatorship functions as polymorphisms. Namely, for each $j\in[n]$, it is
easily seen that the function $f(x) = x_j$ is a polymorphism of $g$ as 
\[
(f \circ g^{n})(Z)
=g(\row_j(Z))
=(g\circ f^{m})(Z).
\]
Dictatorship polymorphisms will thus be referred to as \emph{trivial polymorphisms} of $g$. If $g$ possesses a mild structural property, then there are additional 
trivial polymorphisms: when $g$ is odd, anti-dictatorships also form 
polymorphisms; if $g(b,\ldots,b) = b$, then the constant function $f(x) = b$ also forms a polymorphism.
What can be said about the structure of functions $g$ that have non-trivial polymorphisms? More generally, what can be said about functions $g$ that have approximate polymorphisms that are far from being trivial? Furthermore, can we classify the structure of the approximate polymorphisms in these cases?

The problem of studying the structure of polymorphisms as well as approximate polymorphisms has appeared in several different contexts throughout theoretical computer science:
\begin{enumerate}
    \item 
    \textbf{Universal algebra and the complexity of constraint satisfaction problems.} In this context, the function $g$ is allowed to be a predicate rather than a function, and a polymorphism is a function $f$ that takes satisfying assignments to $g$, ordered as rows in the matrix $Z\in\{0,1\}^n$, and produces a satisfying assignment for $g$ in the form $f^{m}(Z)$ (i.e., somewhat of a one-sided version of the above equation). In this context, the existence of non-trivial polymorphisms is strongly linked 
    to the complexity of the constraint satisfaction problem corresponding to the 
    predicate $g$ (see for example~\cite{CSPpoly}).
    
    \item \textbf{Property testing.} Perhaps the most basic problem in property testing, the linearity testing problem~\cite{BLR,Bellare}, can be cast in the language of approximate polymorphisms. Here, one takes $m=2$ and the function $g(x,y) = x\oplus y$, in which case a function $f$ is an $\delta$ approximate polymorphism 
    if $f(x\oplus y) = f(x)\oplus f(y)$ with probability $\geq 1-\delta$, where 
    $x$ and $y$ are sampled uniformly and independently from $\{0,1\}^n$. This question, as well as its $\half + \delta$ list-decoding variant have been 
    well studied and are useful in the study of PCP's~\cite{Hastad}.
    
    \item \textbf{Social choice theory.} In this context, one thinks of the functions 
    $f,g$ as voting rules, and then the above functions $f\circ g^n$, 
    $g\circ f^m$ can be thought of as two ways of aggregating these 
    voting rules in order to reach a final outcome. 
    Due to this interpretation,
    it makes sense
    to also consider the ``cross'' version of the problem, wherein we have multiple functions $f$, say $f_0,\ldots,f_m$, and we replace the above equation by
    \[
    (f_0\circ g^{n})(Z) = g(f_1(\col_1(Z)),\ldots,f_m(\col_m(Z))).
    \]
    The interpretation here is that there are $n$ voters that cast their yes/ no opinion on each one of $m$ topics; notationally, the vector 
    $\row_i(Z)$ represents the opinions of voter $i$. The goal is to aggregate
    these opinions about the topic to reach a final conclusion, and naturally
    this can be done in one of two ways: first, one may aggregate the opinion 
    of each voter, and then aggregate the final conclusion of each voter. This
    way of aggregation is represented by $(f_0\circ g^{n})(Z)$. Another way
    to aggregate these opinions is to first reach a final conclusion regarding
    each topic, which is $f_j(\col_j(Z))$ in the above notation, and then aggregate those; this is represented by the function $g(f_1(\col_1(Z)),\ldots,f_m(\col_m(Z)))$. Thus, in this interpretation, the 
    question asks for which aggregation rules $g$ and $f_i$ does it hold that 
    the two natural ways of aggregating the votes are essentially equivalent.
    
    The case where $g$ is an AND function is a prominent example that has been 
    studied in this context. In particular, the fact 
    that $f = {\sf Majority}$ does not yield equivalent rules is known as the 
    \emph{Doctrinal paradox}, which raised the question of what are all $f$'s in
    this case that yield equivalent rules. This problem has been addressed by Nehama~\cite{Nehama} in the context of social choice theory and by Parnas, Ron and Samorodnitsky~\cite{PRS02} from the property testing point of view, and both works establishing partial results. A recent work~\cite{FLMM2020} has 
    improved these results, showing that in this case 
    the approximate polymorphisms of $g$ can only be functions that are close to AND functions.
\end{enumerate}

With this in mind, it makes sense to ask what is the most general result one can prove when $g$ is a general function on constantly many coordinates. Indeed, answering this question is the main goal of this paper:
\begin{tcolorbox}
Determine all pairs $f,g$ which are \emph{approximate polymorphisms}:
\[
 \Pr[f \circ g^n = g \circ f^m] \geq 1-\delta.
\]
\end{tcolorbox}

\subsection{The structure of exact polymorphisms}
The exact polymorphisms variant of this problem, i.e.\ the case that 
$\delta = 0$, has been previously studied by Dokow and Holzman~\cite{DH09}. 
They manage to give the following tight classification of all possible pairs $f,g$ 
in which $f$ is a polymorphism of $g$:
\begin{enumerate}
\item One of $f,g$ is constant, a dictator ($x_i$), or an anti-dictator ($\lnot x_i$).
\item $f,g$ are XORs or their negation.
\item $f,g$ are ANDs.
\item $f,g$ are ORs.
\end{enumerate}
Stated otherwise, the only $g$'s that have non-trivial polymorphisms are 
AND's, OR's, XOR's and NXOR's. It is interesting to note that in each one
of these cases, the answer to the approximate polymorphisms problem has
already been resolved; the case $g$ is an XOR or an NXOR is linearity testing~\cite{BLR,Bellare}, and it is well-known that $f$ must be close to an XOR or its negation. When $g$ is an AND, it was shown in~\cite{FLMM2020} that $f$ is close to zero or to an AND, and the case where $g$ is an OR is similar.

Thus, it would be natural to guess that the only $g$'s that have non-trivial approximate polymorphisms would be exactly the $g$'s found by Dokow and Holzman.
Furthermore, we would expect that if $g$ is not an XOR, NXOR, AND, or OR, and 
$f$ is an approximate polymorphism, then $f$ must be trivial, i.e. close to a constant, a dictator, or an anti-dictator. Here and throughout, closeness is 
measured with respect to the Hamming distance over the uniform measure on
$\{0,1\}^n$. 

\subsection{Main results}
\subsubsection{Approximate polymorphisms}
In this language, our main result reads:
\begin{theorem} \label{thm:main-intro}
Fix $g\colon \{0,1\}^m \to \{0,1\}$. For every $\epsilon > 0$ there exists $\delta > 0$ (depending on both $g$ and $\epsilon$) such that if $f\colon \{0,1\}^n \to \{0,1\}$ satisfies
\[
 \Pr_Z[(f \circ g^{n})(Z) = (g \circ f^{m})(Z)] \geq 1-\delta,
\]
then either:
\begin{enumerate}
    \item $f$ is $\eps$-close to a constant, dictator, anti-dictator or an exact polymorphism of $g$;
    \item $g$ is either an NOR or an NAND, and $f$ is $\eps$-close to an OR or an AND (respectively).
\end{enumerate}
\end{theorem}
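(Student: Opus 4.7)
The plan is to derive Theorem~\ref{thm:main-intro} by combining the two structural ingredients advertised in the abstract, reducing the probabilistic hypothesis to a purely combinatorial classification of exact skew polymorphisms. First I invoke item~(1): every $\delta$-approximate polymorphism $f$ of $g$ is $\eps$-close (in Hamming distance on the uniform cube) to the left component $f_0$ of some exact skew polymorphism pair $(f_0,f_1)$, i.e.\ with $f_0 \circ g^n = g \circ f_1^m$ identically. Since the conclusion of the theorem is stated only up to closeness, by the triangle inequality it then suffices to enumerate the possible $f_0$'s that can appear on the left of an exact skew polymorphism of $g$.

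Next I invoke item~(2): an exact skew polymorphism is either trivial---meaning $f_0$ is a constant, a dictator, or an anti-dictator, in which case clause~(1) of the theorem is immediate---or $g$ belongs to the six-element list $\{\mathsf{AND},\mathsf{OR},\mathsf{XOR},\mathsf{NAND},\mathsf{NOR},\mathsf{NXOR}\}$. A brief case analysis on these six special $g$'s finishes the argument. For $g \in \{\mathsf{XOR},\mathsf{NXOR}\}$ the classification forces $f_0$ to be an $\mathsf{XOR}$ or its negation, which is itself an exact polymorphism of $g$; for $g \in \{\mathsf{AND},\mathsf{OR}\}$, $f_0$ is forced to be an $\mathsf{AND}$ or an $\mathsf{OR}$, again an exact polymorphism of $g$. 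All four of these subcases thus land in clause~(1). The exceptional cases are $g \in \{\mathsf{NAND},\mathsf{NOR}\}$: here a direct verification exhibits non-trivial skew polymorphisms whose left components are $\mathsf{AND}$s or $\mathsf{OR}$s that are \emph{not} themselves exact polymorphisms of $g$---for instance $(\mathsf{OR},\mathsf{AND})$ is a skew polymorphism of $\mathsf{NAND}$, via the identity $\bigvee_i \lnot\bigwedge_j Z_{ij} = \lnot \bigwedge_{i,j} Z_{ij} = \lnot \bigwedge_j \bigwedge_i Z_{ij}$, and symmetrically for $\mathsf{NOR}$. This yields precisely clause~(2).

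The main obstacle in this plan is the first step. A naive BLR-style local correction of $f$ into a single exact polymorphism cannot work for general $g$, and indeed \emph{must} fail, because the $\mathsf{NAND}/\mathsf{NOR}$ exception in clause~(2) shows that the most natural exact companions of an approximate polymorphism have genuinely different row and column realizations, which is exactly what the skew setting permits. I therefore expect the proof of item~(1) to proceed by first exploiting the product structure of the random matrix $Z$ to extract from $f$ a junta-like or low-degree structure (via noise-stability or influence-type estimates on the induced distribution on $g(\row_i(Z))$), and only then performing an asymmetric rounding that produces possibly different Boolean functions $f_0$ and $f_1$ for the row and column roles. By contrast, item~(2) should follow a Dokow--Holzman-style exhaustive analysis of the truth table of $g$, adapted to the looser skew hypothesis, and it is exactly this looser hypothesis that admits the extra $\mathsf{NAND}/\mathsf{NOR}$ possibilities absent from the classical non-skew list.
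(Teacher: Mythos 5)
Your high-level plan (reduce the approximate hypothesis to the classification of exact skew polymorphisms, then do a case analysis on $g$) mirrors the paper's own two-step proof: the paper first shows $f$ is close to a junta, then on the junta reduces to an exact multi-polymorphism and invokes the Dokow--Holzman-style classification. However there is a concrete error in how you use item~(1) that reverses the conclusion in the nontrivial case.

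In the approximate polymorphism equation $f\circ g^n = g\circ f^m$, the \emph{column} occurrences of $f$ see uniform inputs $\col_j(Z)\sim\mu_{1/2}^n$, while the single \emph{outer} occurrence of $f$ sees $(g(\row_1(Z)),\ldots,g(\row_n(Z)))\sim\mu_p^n$ with $p=\Pr[g=1]$. Accordingly, the paper's quantitative version (its Theorem~3.1) produces a skew polymorphism $(f_0,f_1)$ with $f_0\circ g^n = g\circ f_1^m$ such that $f$ is $\eps$-close to $f_1$ \emph{with respect to $\mu_{1/2}$}, and only close to $f_0$ with respect to $\mu_p$. You have it backwards: you assert $f$ is close on the uniform cube to the \emph{left} component $f_0$. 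This is not merely a naming convention---it changes the answer. In your own example, $(f_0,f_1)=(\mathsf{OR},\mathsf{AND})$ is a skew polymorphism of $g=\mathsf{NAND}$, and the theorem's clause~(2) says that for $g=\mathsf{NAND}$ the function $f$ is close to an $\mathsf{AND}$ (the component $f_1$), not to an $\mathsf{OR}$. Under your reading you would conclude the opposite, so the case analysis would derive a statement inconsistent with the theorem being proved. The intuitive sanity check is that an $\mathsf{OR}$ over a large set of variables is close to the constant~$1$ under $\mu_{1/2}$, which cannot be an approximate polymorphism of $\mathsf{NAND}$ (since $\mathsf{NAND}(1,\ldots,1)=0$), whereas on $\mu_{3/4}$ the closeness to $f_0$ is the meaningful half of the statement.

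Beyond this, the proposal defers the genuinely hard part---establishing item~(1), i.e., the passage from an approximate hypothesis to an exact skew polymorphism---and your sketch of how that step should go (extract junta/low-degree structure via noise-sensitivity arguments, then round asymmetrically) is in the right spirit but is not a proof. The paper carries this out by first showing $f$ is close to a junta (via Jones's regularity lemma together with the ``It Ain't Over Till It's Over'' theorem), then restricting the non-junta coordinates to obtain an \emph{exact} multi-polymorphism on the junta, and finally collapsing the multi-polymorphism to a skew polymorphism using the closeness of the column functions to one another. Those closeness and uniqueness arguments are exactly where the distinction between $f_0$ and $f_1$ matters, so the swap above would also surface there.
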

Naively, one may have hoped that the first item in the theorem must always hold, 
however as we explain next, it is necessary to include the second item as well. Suppose that $g$ is unbalanced, so that $p = \E[g]$ is at least $2^{-m}$-far from $1/2$. Suppose we have functions $f_0$ and $f_1$ satisfying 
$f_0 \circ g^{n} = g \circ f_1^{m}$. Given such $f_0, f_1$, we may construct a function $f$ agreeing with $f_1$ around the middle slice and with $f_0$ around the $pn$-slice, and have that $f \circ g^{n} \approx g \circ f^m$. All new solutions in \Cref{thm:main-intro} arise from such \emph{skew polymorphisms}.

\smallskip

Dokow and Holzman in fact solved the more general cross version of the problem defined above. Namely, they managed to classify all solutions to the equation
$ f_0 \circ g^{n} = g \circ (f_1,\ldots,f_m)$. Our main result extends 
to this setting as well, and we prove an analog of \Cref{thm:main-intro} for it 
as well (see~\Cref{thm:main-multi} for a precise statement).

We also provide an alternative proof of the classification of Dokow and Holzman~\cite{DH09}, using Boolean function analysis. To illustrate the merits of this proof technique, we classify all solutions of the slightly more general equation
\[
 f_0 \circ g^{n} = h \circ (f_1,\ldots,f_m).
\]

\subsubsection{The list decoding regime} 
As discussed earlier, the linearity testing problem, which constitutes 
one example of the approximate polymorphisms problem, can be studied in
several different regimes:
\begin{enumerate}
    \item Exact regime: If $\Pr[f(x \oplus y) = f(x) \oplus f(y)] = 1$ then $f$ is an XOR.
    \item Approximate regime: If $\Pr[f(x \oplus y) = f(x) \oplus f(y)] \geq 1-\delta$ then $f$ is $O(\delta)$-close to an XOR.
    \item List decoding regime: If $\Pr[f(x \oplus y) = f(x) \oplus f(y)] \geq 1/2 + \delta$ then $f$ is $\Omega(\delta)$-correlated with some XOR.
\end{enumerate}
In this language, Dokow and Holzman extended the exact regime to arbitrary functions $g$ (in linearity testing, $g(x,y) = x \oplus y$), and \Cref{thm:main-intro} extends the approximate regime to arbitrary functions $g$. Our second main result extends the list decoding regime to arbitrary functions $g$.

\begin{theorem} \label{thm:list-decoding-intro}
Fix $g\colon \{0,1\}^m \to \{0,1\}$ which is not a constant, a dictator, or an anti-dictator. There exists a constant $s_g < 1$ such that the following holds:

\begin{enumerate}
    \item For every $\epsilon > 0$ there exists $\delta > 0$ such that if
\[
 \Pr_Z[(f \circ g^{n})(Z) = (g \circ f^{m})(Z)] \geq s_g + \delta
\]
then $f$ is $\epsilon$-correlated with some XOR.
\item For every $\delta>0$, there exists large enough $n$ and a function 
$f\colon\{0,1\}^n\to\{0,1\}$ such that the correlation of $f$ with any XOR is
at most $\delta$, and 
\[
\Pr_Z[(f \circ g^{n})(Z) = (g \circ f^{m})(Z)] \leq s_g - \delta.
\]
\end{enumerate}
\end{theorem}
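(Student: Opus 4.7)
The plan is to set $F := (-1)^f$ and $G := (-1)^g$, so that the agreement probability equals $(1 + P(f))/2$ with $P(f) := \E_Z[(F \circ G^n)(Z)(G \circ F^m)(Z)]$, and to define
\[
 s_g := \lim_{\epsilon \downarrow 0} \sup\Bigl\{ \tfrac{1 + P(f)}{2} : n \in \mathbb{N},\ f\colon\{0,1\}^n \to \{0,1\},\ \max_{S \subseteq [n]} |\hat F(S)| \leq \epsilon \Bigr\}.
\]
The supremum is non-increasing in $\epsilon$, so this limit exists in $[\tfrac12, 1]$; the lower bound is witnessed by a uniformly random balanced $f$, whose Fourier coefficients all tend to $0$ (by standard concentration) and whose expected agreement probability tends to $\tfrac12$ (the row- and column-evaluations of $f$ land on distinct points with high probability, hence contribute independent uniform bits).

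The crux will be to show $s_g < 1$, which I would deduce from \Cref{thm:main-intro} as a black box. If $s_g = 1$, there would exist a sequence $(f_k, n_k)$ with $\max_S |\hat F_k(S)| \to 0$ and agreement probability tending to $1$. Applying \Cref{thm:main-intro} with $\epsilon_k \to 0$, each $f_k$ is $\epsilon_k$-close to one of the structured functions listed there: a constant, a (anti)dictator, an exact polymorphism of $g$, or (if $g \in \{\mathsf{NAND}, \mathsf{NOR}\}$) an AND/OR. By the Dokow--Holzman classification~\cite{DH09}, every non-trivial exact polymorphism of $g$ is itself an AND, OR, XOR, or NXOR; a direct Fourier computation shows that every function in this finite menagerie has at least one Fourier coefficient of magnitude at least $\tfrac12$ ($|\hat F(\emptyset)| = 1$ for constants, $|\hat F(\{i\})| = 1$ for (anti)dictators, $|\hat F(\emptyset)| \geq \tfrac12$ for ANDs and ORs on at least one variable, and a unit coefficient at the XOR support for XORs and NXORs). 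Since Hamming-closeness at distance $\epsilon_k$ implies coordinate-wise closeness $2\epsilon_k$ in every Fourier coefficient, this forces $\max_S |\hat F_k(S)| \geq \tfrac12 - 2\epsilon_k \to \tfrac12$, a contradiction.

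Both parts of the theorem then follow by unpacking the definition of $s_g$. For Part~1, if $\Pr \geq s_g + \delta$, then by the definition there is some $\epsilon(\delta) > 0$ at which the supremum above falls below $s_g + \delta/2$, so $f$ must violate $\max_S |\hat F(S)| \leq \epsilon(\delta)$, yielding an $S$ with $|\hat F(S)| > \epsilon(\delta)$ and the desired $\chi_S$-correlation. For Part~2, the definition directly provides, for every $\delta > 0$, a function $f$ (on sufficiently large $n$) with $\max_S |\hat F(S)| \leq \delta$---hence correlation at most $\delta$ with every XOR---and $(1 + P(f))/2$ within $\delta$ of $s_g$, certifying that $s_g$ is indeed the tight threshold.

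The principal obstacle is quantitative: \Cref{thm:main-intro} is proved only qualitatively, so the function $\epsilon(\delta)$ in Part~1 is opaque, and extracting explicit rates would require revisiting its proof. A further and arguably more interesting challenge---not needed for the mere existence of $s_g < 1$, but central to concrete values such as $s_\land \approx 0.815$ promised in the abstract---is to identify the extremal $f$ saturating $s_g$, a task that calls for tailored Fourier-analytic constructions beyond this soft compactness argument.
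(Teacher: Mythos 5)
Your argument is correct as an existence proof, but it takes a markedly different and much softer route than the paper. You define $s_g$ abstractly as the limit, as $\epsilon \downarrow 0$, of the supremum of the agreement probability over functions whose Fourier coefficients are all at most $\epsilon$, and then deduce $s_g < 1$ from \Cref{thm:main-intro} used as a black box: a sequence of approximate polymorphisms with vanishing Fourier coefficients and agreement tending to $1$ would, by \Cref{thm:main-intro}, have to be close to one of the classified exact (skew) polymorphisms, each of which has a Fourier coefficient of magnitude at least $1/2$, a contradiction. (Applying the theorem ``with $\epsilon_k \to 0$'' hides an interleaving of rates; it suffices to apply it once with a fixed $\epsilon' < 1/4$ and then take $k$ large enough.) The paper's proof is the much more substantial \Cref{thm:list-decoding} together with \Cref{lem:sg-ub} and \Cref{lem:sg-lb}: after regularizing via Jones' lemma, smoothing via the connectedness of the distribution $(g(x),x)$, and invoking the invariance principle, the problem is transported to Gaussian space, where Borell's theorem gives an explicit upper bound $s_g^U$ and a matching construction gives a lower bound $s_g^L$; this machinery is what yields $s_\land \approx 0.815$. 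Your compactness argument buys brevity and requires no Gaussian analysis, but furnishes no quantitative information whatsoever --- not even an upper bound on $s_g$ in terms of $g$ --- which is the paper's main contribution here, as you correctly flag. One further note: your Part~1 argument gives ``for every gap $\delta$ there is a correlation level $\epsilon(\delta)$,'' matching the paper's precise \Cref{def:sg}; the quantifier order in \Cref{thm:list-decoding-intro} as literally written would fail for $\epsilon$ near $1$ (an exact $\mathsf{AND}_2$ polymorphism has agreement $1$ but no Fourier coefficient exceeding $1/2$), so you are in effect proving the intended statement rather than the literal one.
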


Computing the value of $s_g$ may be a challenging task in general. 
In the case that $g$ is an XOR, one has $s_g = 1/2$. When $g(x,y) = x\land y$, 
one may have expected $s_g$ to be equal to $3/4$ (as was conjectured by~\cite{FLMM2020}). It turns out that it is actually higher, about $0.814975$.

\subsubsection{Comparison to previous work}

Our main theorems generalize classical work on linearity testing~\cite{BLR,Bellare}, which is the special case of $g = \mathsf{XOR}$.

This work was prompted by recent work~\cite{FLMM2020} which proved \Cref{thm:main-intro} in the special case of $g = \mathsf{AND}$.
\Cref{thm:main-intro} and \Cref{thm:list-decoding-intro} answer two of the three open questions posed in~\cite{FLMM2020}.

Several other results in the literature can be seen as analogs of \Cref{thm:main-intro} in the more general case of predicates:
\begin{enumerate}
    \item \textbf{Arrow's theorem.} The $\mathsf{NAE_3}$ predicate is a predicate on triples of bits which holds whenever the three bits are not all equal.
    Arrow's theorem~\cite{Arrow} for three candidates states, in our language, that the only polymorphisms of $\mathsf{NAE_3}$ satisfying $f(0,\ldots,0) = 0$ and $f(1,\ldots,1) = 1$ are dictators. Wilson~\cite{Wilson} improved this, showing that the only polymorphisms of $\mathsf{NAE_3}$ are dictators and anti-dictators.
    \\
    Mossel~\cite{Mossel2012}, improving on earlier work of Kalai~\cite{Kalai}, showed that approximate polymorphisms of $\mathsf{NAE_3}$ are close to dictators or to anti-dictators.
    Mossel's result is more general, allowing for more than three candidates.\footnote{Mossel's result is even more general, classifying also exact and approximate \emph{multi-}polymorphisms, in which different $f$'s are allowed for different columns.}
    Mossel's work was subsequently improved quantitatively by Keller~\cite{Keller}.
    \item \textbf{Intersecting families.} An intersecting family is the same as a polymorphism of the binary predicate $\mathsf{NAND_2}$ on pairs of bits.
    We define approximate polymorphisms of $\mathsf{NAND_2}$ with respect to the unique distribution $\mu_{p,p}$ supported on the support of $\mathsf{NAND}_2$ whose marginals are $\mu_p$. \\ Friedgut and Regev~\cite{FriedgutRegev} proved, in our language, that when $p < 1/2$, any approximate polymorphism of $\mathsf{NAND_2}$ is close to an exact polymorphism of $\mathsf{NAND_2}$.
\end{enumerate}

\subsection{Techniques: \crtCref{thm:main-intro}}

The proof of \Cref{thm:main-intro} is composed of two parts. First, we show that $f$ is close to a junta. Second, we use this to reduce the approximate question to an \emph{exact} question.

We can assume without loss of generality that $g$ depends on all coordinates. If $g$ is an XOR or an NXOR then \Cref{thm:main-intro} reduces to linearity testing (except for the trivial cases $m = 0$ and $m = 1$), and so we can assume that $g$ is not an XOR or an NXOR. This implies that $g$ has an input $\alpha$ with a nonsensitive coordinate $j$, that is, $g(\alpha) = g(\alpha^{\oplus j})$, where $\alpha^{\oplus j}$ results from flipping the $j$th coordinate. To simplify notation, we assume that $j = m$.

\subsubsection{Showing that \texorpdfstring{$f$}{f} is close to a junta}
\paragraph{The basic argument}
In this part, we assume that $(f \circ g^{n})(Z) = (g \circ f^{m})(Z)$ with probability at least $1 - \eta$. Later on, we will choose $\delta$ as a function of both $\eta$ and the size of the junta.

Suppose first that $f$ has no influential variables. In this case, we will show that $f$ is $\epsilon$-close to a constant function. We do this by assuming that $f$ is $\epsilon$-far from constant and reaching a contradiction.

The idea is to construct two correlated inputs $Z,W$, each individually uniformly random, such that
\[
 (f \circ g^{n})(Z) = (f \circ g^{n})(W)
\]
with probability $1$. To sample such $Z,W$, we use the input $\alpha$. Namely, we form $W$ by resampling the $m$th coordinate of each row whose first $m-1$ coordinates agree with $\alpha$. Thus, by the approximate polymorphism condition, it follows 
that $(g\circ f^{m})(Z) = (g\circ f^{m})(W)$ with probability $\geq 1-2\eta$; as we argue next, this last fact will tell us that $f$ must be close to constant.

Since $g$ depends on all coordinates, there is an input $\beta$ such that $g(\beta) \neq g(\beta\oplus e_m)$; suppose without loss of generality that $g(\beta_1,\ldots,\beta_{m-1},x_m) = x_m$. 
By assumption, $f$ is $\epsilon$-far from constant, and so with probability at least $\epsilon^{m-1}$, if we evaluate $f$ on the first $m-1$ columns of $Z$ (which are identical to the corresponding columns of $W$) then we obtain $\beta_1,\ldots,\beta_{m-1}$. When this happens,
\[
 (g \circ f^{m})(Z) = f(\col_m(Z)) \text{ and } 
 (g \circ f^{m})(W) = f(\col_m(W)),
\]
and we get that $f(\col_m(Z)) = f(\col_m(W))$ with probability $\geq 1-2\eta$.
To analyze this event, we consider the following equivalent way of sampling $z = \col_m(Z)$ and $w = \col_m(W)$:
\begin{enumerate}
    \item Sample a subset $R \subseteq [n]$ by including each element with probability $2^{-(m-1)}$; these are the rows whose first $m-1$ columns agree with $\alpha$.
    \item Sample $z_j = w_j$ for each $j \notin R$.
    \item Sample $z_j,w_j$ independently for each $j \in R$.
\end{enumerate}

The first two steps define a random restriction, and to reach a contradiction we would like to argue that this random restriction still has a significant variance with high probability (so that we will in fact have $f(z)\neq f(w)$ with significant probability). Indeed, this is true provided the variance of $f$ is significant and all of the (low-degree) influences of $f$ are small; this is the so-called ``It Ain't Over Till It's Over'' theorem from~\cite{MOO}. A bit more precisely, this result
asserts that provided the influences of $f$ are small, it is extremely likely (the failure probability is smaller than $\epsilon^{m-1}/2$) that $f$ is $\gamma$-far from constant even after the random restriction, where $\gamma(\eps,m)>0$. This gives us
that
\[
 \Pr[(g \circ f^m)(Z) \neq (g \circ f^m)(W)] \geq \left(\epsilon^{m-1} - \frac{\epsilon^{m-1}}{2}\right) \cdot 2\gamma(1-\gamma).
\]
Thus, choosing $\eta$ so that $2\eta$ is smaller than this expression, we reach a contradiction. This contradiction thus implies that if all of the influences of $f$ 
are small, then the only way for $f$ to be an approximate polymorphism of $g$ is that
$f$ is close to a constant.

\paragraph{Lifting the small low-degree influences assumption.} An arbitrary function $f$ could potentially have influential variables. To generalize our argument to this 
case, we make use of a regularity lemma~\cite{Jones} by Jones. This lemma asserts that one may find a small set of variables $T$ such that randomly restricting them in
$f$, one gets a function with no significant low-degree influences with probability 
close to $1$. Thus, we first perform this random restriction, and then use a variant of the above argument to argue that under such restrictions, $f$ must be in fact 
close to a constant. Over all, we obtain that $f$ is close to a junta.

\subsubsection{Deducing \crtCref{thm:main-intro}}
The previous part shows that $f$ is close to a junta $F$, say depending on the first $L$ coordinates. We split the input $Z$ accordingly to two matrices: $Z^{(1)}$ consists of the first $L$ rows, and $Z^{(2)}$ consists of the remaining rows. Thus with probability $1-\delta$,
\begin{multline*}
 f(g(\row_1(Z^{(1)})),\ldots,g(\row_L(Z^{(1)})), g(\row_1(Z^{(2)})),\ldots,g(\row_{n-L}(Z^{(2)}))) = \\ g(f(\col_1(Z^{(1)}),\col_1(Z^{(2)})),\ldots,f(\col_m(Z^{(1)}),\col_m(Z^{(2)}))).
\end{multline*}
If we fix $Z^{(2)}$ then we can find functions $f_0,\ldots,f_m\colon \{0,1\}^L \to \{0,1\}$ such that the left-hand side becomes $f_0 \circ g$, and the right-hand side becomes $g \circ (f_1,\ldots,f_m)$.

For a typical $Z^{(2)}$, the functions $f_1,\ldots,f_m$ are all close to $F$ and so to each other, and furthermore
\[
 \Pr[f_0 \circ g^{n} \neq g \circ (f_1,\ldots,f_m)] \leq \delta.
\]
We choose $\delta < \min(\eta,2^{-mL})$, and so this implies that in fact,
\[
 f_0 \circ g^{n} = g \circ (f_1,\ldots,f_m).
\]

Since the functions $f_1,\ldots,f_m$ are close to each other, the classification of all solutions to the equation $f_0 \circ g = g \circ (f_1,\ldots,f_m)$ implies that $f_1 = \cdots = f_m$ (except for some corner cases), and so $f_0 \circ g = g \circ f_1$. This completes the proof, since both $f_1$ and $f$ are close to $F$.

\subsection{Techniques: \crtCref{thm:list-decoding-intro}}

We illustrate the proof of the theorem in the special case of the AND function. It will be more convenient to switch from $\{0,1\}$ to $\{-1,1\}$, and to consider $g(x_1,x_2) = x_1 \land x_2 = \min(x_1,x_2)$ (which is one possible interpretation of AND).

We prove the theorem in contrapositive: assuming that all low-degree Fourier coefficients are small (which is equivalent to having small correlation with all XORs), we will bound $\Pr[f(x \land y) = f(x) \land f(y)]$, or rather $\E[f(x \land y) (f(x) \land f(y))]$ (this is one reason to switch to $\{-1,1\}$).

The main idea is to apply the invariance principle in order to switch to a problem in Gaussian space. For this, we need $f$ to have small low-degree influences, as well as low degree. We can assume that $f$ has small low-degree influences by appealing to Jones' regularity lemma. In order to reduce the degree of $f$, we apply a small amount of noise. An expansion argument of Mossel~\cite{Mossel2010} shows that the noise doesn't affect the expectation by much, essentially since given $x$ and $x \land y$ there is some uncertainty regarding $y$.

What do we get in Gaussian space? The vector
\[
 \frac{(x \land y) + 1/2}{\sqrt{3/4}}, x, y
\]
has expectation zero and covariance matrix
\[
 \Sigma = \begin{pmatrix}
 1 & \frac{1}{\sqrt{3}} & \frac{1}{\sqrt{3}} \\
 \frac{1}{\sqrt{3}} & 1 & 0 \\
 \frac{1}{\sqrt{3}} & 0 & 1
 \end{pmatrix} .
\]
This shows that for some functions $q,p\colon \mathbb{R}^n \to \mathbb{R}$,
\[
 \E[f(x \land y) (f(x) \land f(y))] \approx \E[q(\cG_0) (p(\cG_1) \land p(\cG_2))],
\]
where $(\cG_0,\cG_1,\cG_2)$ is a multivariate Gaussian with expectation zero and covariance $\Sigma$, and $p(\cG_1) \land p(\cG_2) = (-1 + p(\cG_1) + p(\cG_2) + p(\cG_1) p(\cG_2))/2$ is the multilinear extension of~$\land$. Due to the degree-reducing noise, $p$ depends mostly on the behavior of $f$ around the middle slice, and $q$ depends mostly on its behavior around the quarter slice.
A standard truncation argument lets us assume that $q,p$ attain values in $[-1,1]$, and a further rounding argument lets us assume that they attain values in $\{-1,1\}$.

The assumption that $f$ has no large low-degree Fourier coefficients translates to $\E[p] \approx 0$.\footnote{Due to the application of Jones' regularity lemma, we are actually working with restrictions of $f$ rather than $f$ itself. In order to ensure that these restrictions have expectation close to zero, we need to assume that $f$ has small lower-degree Fourier coefficients.}
We do not have control over $\E[q]$, since it is controlled by the low-degree Fourier coefficients of $f$ with respect to the $\{-1,1\}$-analog of $\mu_{1/4}$. Applying a generalization of Borell's theorem due to Neeman~\cite{Neeman}, this is enough to show that the optimal choice for $p$ is the one-dimensional sign function. By calculating the corresponding optimal choice for $q$, we obtain



\[
 \Pr[f(x \land y) = f(x) \land f(y)] \lesssim 0.814975356673002.
\]

Here is a matching construction: take $f$ to be the majority function for inputs $x$ such that $\frac{1}{n} \sum_i x_i \approx 0$, and an appropriate threshold function elsewhere. The construction exploits the fact that random points $x,y \in \{-1,1\}^n$ satisfy $\frac{1}{n} \sum_i x_i, \frac{1}{n} \sum_i y_i \approx 0$ while $\frac{1}{n} \sum_i (x_i \land y_i) \approx -\frac{1}{2}$. Choosing the optimal threshold gives a function $f$ with
\[
 \Pr[f(x \land y) = f(x) \land f(y)] \gtrsim 0.814975356673002.
\]

\subsection{Techniques: Classifying exact (multi-)polymorphisms}

Dokow and Holzman~\cite{DH09} classified all exact multi-polymorphisms, that is, all exact solutions to the equation $f_0 \circ g^n = g \circ (f_1,\ldots,f_m)$, using combinatorial arguments. We present an alternative proof using Boolean function analysis in \Cref{sec:exact}. For the sake of the proof, we switch from $\{0,1\}$ to $\{-1,1\}$.

The proof proceeds in two main steps. In the first step, we determine all multilinear polynomials $g,h\colon \{-1,1\}^m \to \mathbb{R}$ and $f_0,\ldots,f_m\colon \{-1,1\}^n \to \mathbb{R}$ which solve the equation
\[
 f_0(g(z_{11},\ldots,z_{1m}),\ldots,g(z_{n1},\ldots,z_{nm})) =
 h(f_1(z_{11},\ldots,z_{n1}),\ldots,f_m(z_{1m},\ldots,z_{nm})),
\]
where the functions $f_0,\ldots,f_m,g,h$ are extended to $\mathbb{R}^n$ or $\mathbb{R}^m$ multilinearly.
Except for some corner cases, these solutions all involve functions of the form
\[
 A \prod_{i \in S} (x_i + \kappa_i) - B.
\]

In the second step, we observe that a function of the form above is Boolean iff it corresponds to either XOR, NXOR, AND, or OR, which completes the classification.

The first step is itself composed of two substeps. In the first substep, we relate the supports of the Fourier expansions of $g,h,f_0,\ldots,f_m$ to that of $f_0 \circ g$ and $g \circ (f_1,\ldots,f_m)$, and conclude that except for some corner cases, and after possibly removing irrelevant coordinates, $\deg g = \deg h = m$ and $\deg f_0 = \deg f_1 = \cdots = \deg f_m = n$. In the second step, we show that up to affine shifts, the only solution to $f_0 \circ g^n = h \circ (f_1,\ldots,f_m)$ is $g(y) = h(y) = \prod_{j=1}^m y_j$ and $f_0(x) = f_1(x) = \cdots = f_m(x) = \prod_{i=1}^m x_i$.

\paragraph{Paper organization} After a few preliminaries in \Cref{sec:prel}, we prove \Cref{thm:main-intro} in \Cref{sec:approximate-polymorphisms}, and its generalization to the setting of several $f$'s in \Cref{sec:approximate-polymorphisms-multi}. We discuss the list decoding regime in \Cref{sec:list-decoding}. \Cref{sec:exact} determines all solutions to the equation $f_0 \circ g^n = h \circ (f_1,\ldots,f_m)$. We close the paper with \Cref{sec:open-questions}, which poses several open questions.

\section{Preliminaries}
\label{sec:prel}

We assume that the reader is familiar with the rudiments of Boolean function analysis, as described in the monograph~\cite{ODonnell}.

We use $[n]$ for $\{1,\ldots,n\}$. If $v$ is a vector, $v|_I$ denotes its restriction to the coordinates in $I$.

The low-degree influence is defined by $\Inf_i^{\leq d}[f] = \Inf_i[f^{\leq d}]$. The distribution $\mu_p$ is the product distribution over $\{0,1\}^n$ in which each coordinate equals~$1$ with probability $p$. We sometimes denote this distribution by $\mu_p(\{0,1\}^n)$. The distribution $\stdnormal$ is a standard normal distribution (zero mean, unit variance).

We use the notation $x^{\oplus n}$ for the $n$-fold XOR of $x$. That is, $x^{\oplus n} = 0$ if $x = 0$ or $n$ is even, and $1^{\oplus n} = 1$ if $n$ is odd. The latter definition makes sense even when $n$ is negative.

A function $f\colon \{0,1\}^n \to \{0,1\}$ is \emph{odd} if $f(\lnot x) = \lnot f(x)$, where $\lnot x = (\lnot x_1,\ldots,\lnot x_n)$, and \emph{even} if $f(\lnot x) = f(x)$. The function $f$ is \emph{balanced} if $\E[f] = 1/2$.

Two Boolean functions $f,g\colon \{0,1\}^n \to \{0,1\}$ are \emph{$\epsilon$-close} if $\Pr[f \neq g] \leq \epsilon$. The functions are \emph{$\epsilon$-close with respect to $\mu_p$} if $\Pr_{\mu_p}[f \neq g] \leq \epsilon$.

A function $f$ is a \emph{$J$-junta} if it depends on at most $J$ coordinates (if $J \in \mathbb{N}$), or if it depends on the coordinates in $J$ (if $J$ is a set).

Unless stated otherwise, random variables ranging over $\{0,1\}^n$ take the uniform distribution $\mu_{1/2}$.

Various results will start by fixing some parameters. Any subsequent big~$O$ bounds will depend on these parameters.

\subsection{Polymorphisms}
\label{sec:prel-polymorphisms}

We start by formally defining polymorphisms. First, we define some required notation.

\begin{definition}[Composition]
Let $f\colon \{0,1\}^n \to \{0,1\}$ and $g\colon \{0,1\}^m \to \{0,1\}$. We define functions $f\circ g^n\colon \{0,1\}^{nm} \to \{0,1\}$ and $g\circ f^m\colon \{0,1\}^{nm} \to \{0,1\}$, whose input is an $n \times m$ matrix $Z$, by
\begin{align*}
(f \circ g^n)(Z) &= f(g(\row_1(Z)),\ldots,g(\row_n(Z))), \\
(g\circ f^m)(Z) &=
g(f(\col_1(Z)),\ldots,f(\col_m(Z))).
\end{align*}

If $f_1,\ldots,f_m\colon \{0,1\}^m \to \{0,1\}$, we also define $g \circ (f_1,\ldots,f_m)\colon \{0,1\}^{nm} \to \{0,1\}$ by
\[
 (g \circ (f_1,\ldots,f_m))(Z) =
 g(f_1(\col_1(Z)), \ldots, f_m(\col_m(Z))).
\]
\end{definition}

We can now define polymorphisms and their generalization, skew-polymorphisms and multi-polymorphisms.

\begin{definition}[Polymorphisms]
A function $f\colon \{0,1\}^n \to \{0,1\}$ is a \emph{polymorphism} of a function $g\colon \{0,1\}^m \to \{0,1\}$ if
\[
 f \circ g^n = g \circ f^m.
\]

The pair $f_0,f_1\colon \{0,1\}^n \to \{0,1\}$ forms a \emph{skew polymorphism} of a function $g\colon \{0,1\}^m \to \{0,1\}$ if
\[
 f_0 \circ g^n = g \circ f_1^m.
\]

A tuple $f_0,\ldots,f_m\colon \{0,1\}^n \to \{0,1\}$ forms a \emph{multi-polymorphism} of $g\colon \{0,1\}^m \to \{0,1\}$ if
\[
 f_0 \circ g^n = g \circ (f_1,\ldots,f_m).
\]

The function $f$ is an \emph{$\epsilon$-approximate} polymorphism of $g$ if
\[
 \Pr[f \circ g^n = g \circ f^m] \geq 1 - \epsilon.
\]
We define approximate versions of the other notions in a similar way.
\end{definition}

Dokow and Holzman~\cite{DH09} essentially characterized all multi-polymorphisms (a result which we reprove in this paper). Let us first state the resulting characterization of polymorphisms.

\begin{theorem} \label{thm:exact-polymorphisms}
If $f\colon \{0,1\}^n \to \{0,1\}$ is a polymorphism of $g\colon \{0,1\}^m \to \{0,1\}$ then one of the following cases holds:
\begin{itemize}
    \item $g = b$ is constant and $f(b,\ldots,b) = b$, or $f = b$ is constant and $g(b,\ldots,b) = b$.
    \item $g = x_i$ and $f$ is arbitrary, or $f = x_i$ and $g$ is arbitrary.
    \item $g = \lnot x_i$ and $f$ is odd, or $f = \lnot x_i$ and $g$ is odd.
    \item There exist $I \subseteq [n]$ and $J \subseteq [m]$ such that $f(x) = \bigoplus_{i \in I} x_i \oplus a$, $g(x) = \prod_{j \in J} x_j \oplus b$, and $a^{\oplus (|J|-1)} = b^{\oplus (|I|-1)}$.
    \item There exist $I \subseteq [n]$ and $J \subseteq [m]$ such that $f(x) = \biglor_{i \in I} x_i$ and $g(x) = \biglor_{j \in J} x_j$.
    \item There exist $I \subseteq [n]$ and $J \subseteq [m]$ such that $f(x) = \bigland_{i \in I} x_i$ and $g(x) = \bigland_{j \in J} x_j$.
\end{itemize}
\end{theorem}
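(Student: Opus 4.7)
\emph{The plan} is to work in the $\{-1,1\}$ representation and exploit the uniqueness of the multilinear Fourier expansion. First, I dispose of the easy cases: if $g$ or $f$ is constant, a dictator, or an anti-dictator, a direct check produces the corresponding entries of the classification. Beyond this, I may assume that $f$ depends on all $n$ coordinates and $g$ depends on all $m$ coordinates (otherwise I restrict to the relevant variables) and that neither is affine in a trivial way.

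The equation $f \circ g^n = g \circ f^m$ is an identity between multilinear polynomials on $\{-1,1\}^{nm}$. Expanding both sides in the monomial basis gives
\[
 \sum_{S \subseteq [n]} \hat f(S) \prod_{i \in S} g(\row_i(z)) \;=\; \sum_{T \subseteq [m]} \hat g(T) \prod_{j \in T} f(\col_j(z)).
\]
The first substep is to compare the Fourier supports of the two sides in $\mathbb{R}[z_{ij}]$ and to conclude that, after excluding a small number of corner cases, one must have $\hat f([n]) \neq 0$ and $\hat g([m]) \neq 0$, i.e.\ both $f$ and $g$ are of full degree in their respective variables. The excluded corner cases are precisely those that feed into the XOR/NXOR branch of the classification, where $g$ turns out to be a signed monomial in the $\{-1,1\}$ picture, corresponding to an XOR (or its negation) in the $\{0,1\}$ picture.

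The second and main substep is a rigidity argument. Under the full-degree assumption, matching coefficients of successively smaller monomials forces the ratios $\hat f(S)/\hat f([n])$ to be multiplicative in $S$ (and likewise for $\hat g$), so that
\[
 f(x) = A \prod_{i=1}^n (x_i + \kappa_i) - B, \qquad g(y) = A' \prod_{j=1}^m (y_j + \lambda_j) - B',
\]
for some real constants $A,B,A',B',\kappa_i,\lambda_j$. I expect this step to be the main obstacle: one must extract enough independent equations from the polynomial identity to pin these ratios down without drowning in redundant constraints. This is the analytic counterpart of the combinatorial manipulation performed by Dokow and Holzman~\cite{DH09}.

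Finally, I check which affine shifts of products $A \prod_i (x_i + \kappa_i) - B$ are actually Boolean-valued on $\{-1,1\}^n$. A short case analysis shows that each $\kappa_i$ must lie in $\{-1,+1\}$, in which case the product is supported on a single point and the affine shift realises $f$ as an AND or an OR (possibly with some of the inputs negated), while the degenerate $\kappa_i = 0$ limit has already been split off in the first substep as the XOR/NXOR branch. Cross-matching the resulting shape of $f$ with that of $g$ — recording which coordinates are negated and which constants are chosen — then produces exactly the six bullets of \Cref{thm:exact-polymorphisms}, including the compatibility condition $a^{\oplus(|J|-1)} = b^{\oplus(|I|-1)}$ in the XOR case.
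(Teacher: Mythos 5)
Your high-level plan---switch to $\{-1,1\}$, analyze Fourier supports of the identity $f\circ g^n = g\circ f^m$, extract a product form $A\prod_i(x_i+\kappa_i)-B$ by a rigidity argument, and finally classify which such expressions are Boolean---is the same route the paper takes (it reproves the Dokow--Holzman classification exactly this way, via \Cref{lem:junta-variables}, \Cref{lem:formula}, \Cref{lem:boolean}, and \Cref{thm:main-g-h}). The middle two steps as you describe them match the paper's argument quite closely.

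However, there is a genuine structural error in how you route the XOR branch, and it would leave a hole if carried out literally. You claim that the corner cases excluded in the support step---the cases where the argument fails to conclude $\hat f([n])\neq 0$ and $\hat g([m])\neq 0$---``are precisely those that feed into the XOR/NXOR branch,'' and you then use this in your final step to justify ignoring the degenerate $\kappa_i=0$ limit (``already been split off in the first substep''). This is backwards. If $g$ depends on all $m$ coordinates and is an XOR or NXOR, then in $\{-1,1\}$ notation $g(y)=\pm\prod_{j=1}^m y_j$, so $\hat g([m])=\pm 1\neq 0$; likewise, if $f(x)=A\prod_i x_i - B$ (i.e.\ $\kappa_i=0$), then $\hat f([n])=A\neq 0$. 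So the XOR/NXOR solutions are emphatically \emph{not} corner cases of the support step; they are the $\kappa_i=0$ branch of your final Booleanity check. The actual corner cases of the support step (in the paper's \Cref{lem:junta-variables}: $\deg g=1$, or the junta $U$ is a singleton) are the dictator/anti-dictator bullets of the theorem, not the XOR bullet. In your final step you therefore cannot skip $\kappa_i=0$; the correct dichotomy (as in \Cref{lem:boolean}) is that either all $\kappa_i=B=0$ and $A=\pm 1$ (giving XOR/NXOR), or all $\kappa_i,B\in\{-1,1\}$ (giving AND/OR). If you discard $\kappa_i=0$ believing it was absorbed into the corner cases, you omit all XOR polymorphisms with $|I|,|J|\geq 2$ and the proof is incomplete.

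A smaller inaccuracy: for the (non-skew) polymorphism $f\circ g^n = g\circ f^m$, the AND/OR cases do \emph{not} allow negated literals---those appear only for skew and multi-polymorphisms. Your phrase ``possibly with some of the inputs negated'' is either unnecessary or needs to be argued away by matching the $\kappa$'s of $f$ against those of $g$ (forcing them to agree) in the non-skew setting.
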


The characterization of skew polymorphisms has a few more cases.

\begin{theorem} \label{thm:exact-skew-polymorphisms}
If $f_0,f_1\colon \{0,1\}^n \to \{0,1\}$ is a skew polymorphism of $g\colon \{0,1\}^m \to \{0,1\}$ then one of the following cases holds:
\begin{itemize}
    \item $g(x) = b$ is constant and $f_0(b,\ldots,b) = b$, or $f_0 = a_0$ and $f_1 = a_1$ are constants and $g(a_1,\ldots,a_1) = a_0$.
    \item $g(x) = x_j$ and $f_0 = f_1$, or $f_0(x) = f_1(x) = x_i$.
    \item $g(x) = \lnot x_j$ and $f_0(\lnot x) = \lnot f_1(x)$.
    \item $f_0(x) = f_1(x) = \lnot x_i$ and $g$ is odd.
    \item $f_0(x) = x_i$, $f_1(x) = \lnot x_i$, and $g$ is even.
    \item There exist $I \subseteq [n]$ and $J \subseteq [m]$ such that $f_0(x) = \bigoplus_{i \in I} x_i \oplus a_0$, $f_1(x) = \bigoplus_{i \in I} x_i \oplus a_1$, $g(x) = \bigoplus_{j \in J} x_j \oplus b$, and $a_0 \oplus b^{\oplus (|I|-1)} = a_1^{\oplus |J|}$.
    \item There exist $I \subseteq [n]$ and $J \subseteq [m]$ such that $f_0(x) = f_1(x) = \biglor_{i \in I} x_i$ and $g(x) = \biglor_{j \in J} x_j$.
    \item There exist $I \subseteq [n]$ and $J \subseteq [m]$ such that $f_0(x) = \biglor_{i \in I} x_i$, $f_1(x) = \bigland_{i \in I} x_i$, and $g(x) = \biglor_{j \in J} \lnot x_j$.
    \item Same as preceding two cases, with $\biglor$ and $\bigland$ switched.
\end{itemize}
\end{theorem}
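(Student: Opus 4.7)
The plan is to follow the $\{-1,1\}$-valued multilinear-polynomial strategy announced for \Cref{sec:exact}, specializing the general multi-polymorphism classification to the case $h = g$ and $f_1 = \cdots = f_m$. First I would dispose of the trivial cases by direct substitution: if $g$ is constant, a dictator, or an anti-dictator, the identity $f_0 \circ g^n = g \circ f_1^m$ collapses to an explicit pointwise constraint relating $f_0$ and $f_1$, which immediately yields the first four cases of the theorem. Symmetrically, if $f_0$ is a dictator or an anti-dictator, evaluating the identity on matrices $Z$ whose $i$th row ranges over all of $\{0,1\}^m$ while the other rows are fixed forces $g$ to inherit the corresponding parity structure, producing the dictator-$f_0$ and anti-dictator-$f_0$ cases.

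For the main case, where $g$ depends on all $m$ coordinates and, after absorbing irrelevant variables, $f_0$ and $f_1$ depend on a common set of $n$ coordinates, I would pass to multilinear extensions on $\{-1,1\}^{n \times m}$ and regard $f_0 \circ g^n = g \circ f_1^m$ as an identity of multilinear polynomials in the $Z_{ij}$. Expanding both sides in Fourier, every monomial appearing on the left is of the form $\prod_i \prod_{j \in T_i} Z_{ij}$ with $T_i$ in the support of $g$, while every monomial on the right is of the form $\prod_j \prod_{i \in S_j} Z_{ij}$ with $S_j$ in the support of $f_1$. Matching these two combinatorial shapes forces $g$, $f_0$, and $f_1$, after removing irrelevant coordinates and applying affine shifts, to be single characters; this is the content of the real-valued classification of \Cref{sec:exact} specialized to our case, and yields $g(y) = A_g \prod_{j \in J}(y_j + \kappa_j) - B_g$ and analogous expressions for $f_0, f_1$.

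Translating back to Boolean, the shifted monomial is Boolean-valued only when each $\kappa_j \in \{-1,0,1\}$: $\kappa_j = 0$ for all $j$ collapses the product to a $\pm 1$-valued XOR character, while $\kappa_j = \pm 1$ turns each factor into an indicator that $y_j$ takes one specific value, so the product becomes an AND/OR/NAND/NOR over $J$. The additive offsets then produce the parity constraints, such as $a_0 \oplus b^{\oplus(|I|-1)} = a_1^{\oplus|J|}$ in the XOR case, which I would verify by evaluating both sides of the functional equation at $Z = 0$ and at $Z = \mathbf{1}$. The main obstacle is bookkeeping: the skew equation tolerates \emph{different} affine shifts for $f_0$ and for $f_1$, and this asymmetry generates the genuinely new cases of the theorem---in particular the mixed cases where $f_0 = \biglor_{i \in I} x_i$ while $f_1 = \bigland_{i \in I} x_i$ and $g = \biglor_{j \in J} \lnot x_j$, which arise from De Morgan dualities that collapse in the ordinary (non-skew) polymorphism setting of \Cref{thm:exact-polymorphisms}. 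Carefully enumerating these dualities, while tracking the resulting parity relations among $a_0, a_1, b, |I|, |J|$, is where most of the case analysis lies.
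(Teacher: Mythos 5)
Your proposal matches the paper's proof strategy almost exactly: Section~\ref{sec:exact} derives \Cref{thm:exact-skew-polymorphisms} as a corollary of the multilinear classification of solutions to $f_0 \circ g^n = h \circ (f_1,\ldots,f_m)$, precisely by disposing of the corner cases where $g$ or the $f_j$ are constant or depend on one coordinate, then using the Fourier/monomial-shape argument (Lemmas~\ref{lem:junta-variables} and~\ref{lem:formula}) to force the remaining solutions to be affine shifts of monomials, and finally invoking the Booleanization step (Lemma~\ref{lem:boolean}), which splits into the all-$\kappa_j$-zero (XOR) and all-$\kappa_j$-in-$\{-1,1\}$ (AND/OR) branches. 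Your observation that the skew setting permits different affine shifts on the two sides, which generates the mixed $\biglor$/$\bigland$ and NOR/NAND cases absent from \Cref{thm:exact-polymorphisms}, is exactly the bookkeeping the paper tracks in case~\eqref{c:AND} of \Cref{thm:main-g-h}.
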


The characterization of multi-polymorphisms has even more cases.

\begin{theorem}
\label{thm:exact-multi-polymorphisms}
If $f_0,\ldots,f_m\colon \{0,1\}^n \to \{0,1\}$ is a multi-polymorphism of $g\colon \{0,1\}^m \to \{0,1\}$ then one of the following cases holds:
\begin{itemize}
    \item $g = b$ is constant, and $f_0(b,\ldots,b) = b$.
    \item $g = x_i$ and $f_0 = f_i$.
    \item $g = \lnot x_i$ and $f_0(\lnot x) = \lnot f_i(x)$.
    \item $f_0 = b_0$ is constant, and there exists $J \subseteq [m]$ such that $f_j = b_j$ is constant for $j \in J$, and $g(x) = b_0$ whenever $x_j = b_j$ for all $j \in J$.
    \item $f_j = x_i \oplus a_j$ for $j = 0$ and all $j$ that $g$ depends on, where $g(x_1 \oplus a_1,\ldots,x_m \oplus a_m) = g(x_1,\ldots,x_m) \oplus a_0$.
    \item There exist $I \subseteq [n]$ and $J \subseteq [m]$ such that $f_j(x) = \bigoplus_{i \in I} x_i \oplus a_j$ for $j = 0$ and $j \in J$, $g(x) = \bigoplus_{j \in J} x_j \oplus b$, and $a_0 \oplus b^{\oplus (|I|-1)} = \bigoplus_{j \in J} a_j$.
    \item There exist $I \subseteq [n]$ and $J \subseteq [m]$ such that $f_0(x) = \biglor_{i \in I} x_i$, $g(x) = \biglor_{j \in J} (x_j \oplus a_j)$, and for every $j \in J$, if $a_j = 0$ then $f_j(x) = \biglor_{i \in I} x_i$, and if $a_j = 1$ then $f_j(x) = \bigland_{i \in I} x_i$.
    \item Same as preceding case, with $\biglor$ and $\bigland$ switched.
\end{itemize}
\end{theorem}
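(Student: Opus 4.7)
The plan is to deduce \Cref{thm:exact-multi-polymorphisms} by first classifying all solutions to the more general polynomial equation $f_0 \circ g^n = h \circ (f_1,\ldots,f_m)$, in which $g,h,f_0,\ldots,f_m$ are arbitrary multilinear polynomials over $\mathbb{R}$, and only afterwards intersecting this real classification with the demand that each of them be $\{-1,1\}$-valued. I switch the ground set from $\{0,1\}$ to $\{-1,1\}$ so that the Fourier expansion takes its cleanest form, and split off the genuinely degenerate branches (some $f_j$ constant, $g$ a dictator or anti-dictator, $f_0$ constant with only some $f_j$'s surviving, etc.) at the very start; these account for the first four bullets of the theorem and will make the nondegenerate analysis clean.

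Stage~1 is a Fourier-support argument to pin degrees. Expanding both sides of $f_0 \circ g^n = h \circ (f_1,\ldots,f_m)$ in the $nm$-variable Fourier basis on $\{-1,1\}^{n\times m}$, the coefficient of the full monomial $\prod_{i,j} z_{ij}$ on the left is $\widehat{f_0}([n])\cdot \widehat{g}([m])^n$ and on the right is $\widehat{h}([m])\cdot \prod_{j=1}^m \widehat{f_j}([n])$. By contrasting the top monomial with the coefficients of monomials that omit a single row or a single column, I either land in one of the degenerate branches or conclude that $\widehat{g}([m]), \widehat{h}([m])$ and every $\widehat{f_j}([n])$ are nonzero. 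After discarding coordinates on which some $f_j$ is constant, this gives $\deg g = \deg h = m$ and $\deg f_j = n$ for every $j \in \{0,\ldots,m\}$.

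Stage~2 is a rigidity argument in the nondegenerate regime. Rescale so that the leading Fourier coefficients are $1$, and use affine shifts to reduce to $g(y) = \prod_j y_j + \widetilde g(y)$, $h(y) = \prod_j y_j + \widetilde h(y)$, and $f_j(x) = \prod_i x_i + \widetilde f_j(x)$, with tildes of strictly smaller degree. Comparing the coefficient of $\prod_i z_{ij}$ for a fixed column $j$, and then of progressively smaller monomials, I peel off one Fourier level at a time and show inductively that all sub-leading coefficients of $\widetilde g, \widetilde h, \widetilde f_0, \ldots, \widetilde f_m$ vanish. Undoing the normalization yields a common shape $A\prod_{i \in I}(x_i+\kappa_i) - B$ for each $f_j$, together with analogous expressions for $g$ and $h$, with the parameters coupled by a handful of compatibility relations.

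Stage~3 is the intersection with Boolean-valuedness. A function of the form $A\prod_{i \in I}(x_i+\kappa_i) - B$ attains only two values on $\{-1,1\}^I$ precisely when either (i) every $\kappa_i=0$, which after matching signs yields an XOR or NXOR pattern, or (ii) every $\kappa_i \in \{+1,-1\}$, which after negating individual literals yields AND, OR, NAND, or NOR. Intersecting with the compatibility relations produced in Stage~2, and reinstating the corner cases handled at the outset, recovers exactly the bullets of \Cref{thm:exact-multi-polymorphisms}. The main obstacle is Stage~2: although the rigidity statement matches intuition, one must carefully propagate the affine shifts of $g$ and $h$ into correlated affine shifts of all the $f_j$'s simultaneously, which is what forces them to share a common active set $I \subseteq [n]$ and the aligned XOR/AND/OR structure seen in the theorem.
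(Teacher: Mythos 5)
Your proposal matches the paper's approach in Section~\ref{sec:exact} essentially step for step: classify real/multilinear solutions to $f_0 \circ g^n = h \circ (f_1,\ldots,f_m)$ over $\{-1,1\}$ first (corresponding to \Cref{lem:junta-variables} and \Cref{lem:formula}), then intersect with Boolean-valuedness (\Cref{lem:boolean}), and finally specialize $h = g$ and restore the corner cases (\Cref{thm:main-g-h}). The one place your sketch is looser than the paper's argument is Stage~2: rather than ``peeling off one Fourier level at a time,'' the paper makes a specific affine shift of the $f_j$'s chosen so that $\hat{F}_j([n-1]) = 0$, after which a \emph{single} monomial comparison (of $z_{[n-1][m]} z_{nT}$) forces $\hat G(T) = 0$ for all $T \neq \emptyset,[m]$ in one shot — this targeted shift-then-compare is the content of the obstacle you correctly flag at the end, and is worth getting precise before claiming an inductive level-by-level peeling works.
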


\subsection{Results from Boolean function analysis}
\label{sec:prel-bfa}

We will need several results concerning random restrictions and regular functions, which we first define.

\begin{definition}[Restriction]
Let $f\colon \{0,1\}^n \to \{0,1\}$, let $J \subseteq [n]$, and let $z \in \{0,1\}^{\overline{J}}$. We define the \emph{restriction} $f_{\overline{J}\to z}\colon \{0,1\}^J \to \{0,1\}$ by
\[
 f_{\overline{J} \to z}(y) = f(x),
 \text{ where }
 x_j = \begin{cases}
 z_j & \text{if } j \notin J, \\
 y_j & \text{if } j \in J.
 \end{cases}
\]
\end{definition}

\begin{definition}[Random restriction]
A \emph{$p$-random restriction} is a pair $(J,z)$, where $J \subseteq [n]$ (the parameter $n$ will be clear from context) and $z \in \{0,1\}^{\overline{J}}$ are sampled as follows: every element in $[n]$ is included in $J$ with probability $p$, and $z$ is chosen uniformly at random.
We denote this distribution by $\rr{p}$.
\end{definition}

\begin{definition}[Regularity]
A Boolean function $f\colon \{0,1\}^n \to \{0,1\}$ is \emph{($d,\tau)$-regular} if $\Inf_i^{\leq d}[f] \leq \tau$ for all $i \in [n]$.
\end{definition}

The first result is a regularity lemma due to Jones~\cite{Jones}; see also~\cite[Lemma 2.7]{Mossel2020}.

\begin{theorem}[\cite{Jones}]
\label{thm:jones}

For all $d \in \mathbb{N}$ and $\tau,\epsilon > 0$ there exists $L \in \mathbb{N}$ such that the following holds.

For every $f\colon \{0,1\}^n \to \{0,1\}$ we can find a set $T \subseteq [n]$ of at most $L$ coordinates such that
\[
 \Pr_{z \in \{0,1\}^T}[f_{T \to z} \text{ is $(d,\tau)$-regular}] \geq 1-\epsilon.
\]

Furthermore, the result holds with respect to any finite number of product measures $\mu_{p_1},\ldots,\mu_{p_\ell}$. That is, we can find a set $T$ such that for each $i \in [\ell]$, the statement above holds when $z$ is sampled according to $\mu_{p_i}$ and when regularity is defined with respect to $\mu_{p_i}$.
\end{theorem}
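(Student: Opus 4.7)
The plan is to construct $T$ by an iterative greedy procedure. Start with $T_0 = \emptyset$; at each step, either halt (because the current $T_k$ already satisfies the conclusion) or add a single coordinate $i^*$ to $T_k$ that witnesses the failure. A Fourier-analytic potential that is bounded and strictly decreases by a controlled amount per iteration will bound the total number of iterations by a constant $L = L(d,\tau,\eps)$. A natural choice is
\[
\Phi(T) \;=\; \sum_{\substack{S \ne \emptyset \\ |S \cap \overline T| \le d}} |S \cap \overline T| \cdot \hat f(S)^2 \;=\; \sum_{i \notin T}\E_{z \in \{0,1\}^T}\!\bigl[\Inf_i^{\le d}[f_{T \to z}]\bigr],
\]
which is at most $d$ because $\sum_i \Inf_i^{\le d}[h] \le d$ for every Boolean $h$. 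If at each step we add an $i^*$ with $\E_z[\Inf_{i^*}^{\le d}[f_{T_k \to z}]] \ge \tau'$ for some $\tau' = \tau'(d,\tau,\eps)$, then a short Fourier computation shows $\Phi$ decreases by at least $\tau'$, modulo a controllable lower-order correction coming from level-$(d+1)$ mass being promoted to level-$d$ after the restriction (this correction can be absorbed by enlarging the potential by one degree). This gives $L \le O_d(1/\tau')$.

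The subtle point is how to deduce that a failing $T_k$ admits such an $i^*$. The naive argument, Markov followed by a union bound over $i \notin T_k$, gives a dependence on $n$ which is unacceptable. Instead, one uses the pointwise bound $\sum_i \Inf_i^{\le d}[f_{T_k \to z}] \le d$ and a second-moment inequality
\[
 \Pr_z\!\left[\max_i \Inf_i^{\le d}[f_{T_k\to z}] \ge \tau\right] \;\le\; \tau^{-2}\sum_i \E_z\!\left[\Inf_i^{\le d}[f_{T_k \to z}]^2\right],
\]
together with a Fourier-analytic bound on $\sum_i \E[X_i^2]$ (leveraging hypercontractivity for the degree-$d$ polynomial $f_{T_k \to z}^{\le d}$ to trade a 4th-moment for a 2nd-moment quantity) to conclude that failure of the conclusion forces some individual average influence $\E_z[\Inf_{i^*}^{\le d}[f_{T_k \to z}]]$ to be at least $\tau'$. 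This reconciliation between the ``with-high-probability regularity'' demanded by the theorem and the ``average regularity'' controlled by $\Phi$, without any $n$-dependence, is the main technical obstacle of the proof.

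For the furthermore part, I would run the same greedy procedure simultaneously with respect to all $\ell$ measures $\mu_{p_1},\ldots,\mu_{p_\ell}$. Each $\mu_{p_j}$ has its own Fourier basis (the corresponding product-measure orthonormal characters) and its own potential $\Phi_j$, and at each step we inspect all measures and add a witness for whichever one currently fails. Since each measure independently admits at most $O_d(1/\tau'_j)$ additions before its potential is exhausted, the total iteration count is $\sum_j O_d(1/\tau'_j)$, which depends only on $d, \tau, \eps$ and the $p_j$'s. The same set $T$ produced by this joint process therefore simultaneously satisfies the conclusion for every $\mu_{p_j}$, completing the proof of the furthermore.
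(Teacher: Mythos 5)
Your approach is genuinely different from the paper's and, if it worked, would give substantially better bounds on $L$. The paper (Appendix~A) works with \emph{noisy} influences and defines the potential to be the average noise stability $\phi(T) = \sum_k \E_{z \sim \mu_{p_k}}\bigl[\Stab_{1-\delta}^{\mu_{p_k}}(f_{T\to z})\bigr]$; the key identity is $\E_a[\Stab_\rho(f_{\{i\}\to a})] = \Stab_\rho(f) + (1-\rho)\Inf_i^{\rho}[f]$, so the potential is monotone non-decreasing under any restriction, and strictly increases when the restricted variable has large noisy influence. Crucially, at each step the paper adds a \emph{separate} variable $i_z$ for every bad leaf $z \in \{0,1\}^{T_N}$, so $|T_{N+1}| \leq |T_N| + 2^{|T_N|}$; only the number of \emph{iterations} is bounded by $\ell/(\epsilon\delta\tau)$, which is why the resulting $L$ is tower-type. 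You instead propose to add one variable per iteration, driven by the average low-degree influence $\E_z[\Inf_{i}^{\le d}[f_{T\to z}]]$, which would yield $L = O_d(1/\tau')$.

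The genuine gap in your argument is the claim that failure of the conclusion at $T_k$ forces some \emph{single} variable $i^*$ to have average low-degree influence $\E_z[\Inf_{i^*}^{\le d}[f_{T_k\to z}]] \ge \tau'$ with $\tau' = \tau'(d,\tau,\epsilon)$ independent of $n$. You identify this as the main technical obstacle, but the sketch you give does not close it. The inequality $\Pr_z[\max_i X_i \ge \tau] \le \tau^{-2}\sum_i \E_z[X_i^2]$ is fine, but then bounding $\sum_i \E[X_i^2]$ via $X_i \le 1$ gives $\sum_i \E[X_i^2] \le \sum_i \E[X_i] < n\tau'$, via $X_i \le d$ gives $\le d\sum_i\E[X_i] \le d^2$ (which does not use the hypothesis at all), and a hypercontractive tail bound on each $X_i$ followed by a union bound over $i$ again picks up a factor of $n$. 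None of these routes produces an $n$-independent bound, and nothing in the proposal indicates how to combine the pointwise constraint $\sum_i X_i(z) \le d$ with the low-degree structure to avoid this. This is precisely the difficulty the paper sidesteps by adding a variable \emph{per bad leaf} rather than a single globally-averaged variable, at the cost of tower-type $L$. (Note also that the large-$m$ addressing function shows that all the $\E_z[X_i]$ can simultaneously be as small as $2^{-|T|}$ even while every restriction is a dictator, so the desired $\tau'$ would at best have to depend on $|T|$, introducing a circularity that you would need to break explicitly.) A secondary, fixable issue: ``enlarging the potential by one degree'' does not absorb the level-$(d+1)$-to-$d$ promotion error, it merely shifts it up a level; the clean fix is to use the truncated potential $\sum_S \min(|S\cap\overline T|, d)\,\hat f(S)^2$, which is genuinely monotone non-increasing and decreases by exactly $\E_z[\Inf_{i^*}^{\le d}[f_{T\to z}]]$ when $i^*$ is added. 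Your handling of the ``furthermore'' by interleaving the measures is essentially what the paper does (the paper sums the $\ell$ potentials into one), and is fine modulo the gap above.
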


We explain how to deduce this result from Jones' argument in \Cref{apx:jones}.

The second result is the famous ``It Ain't Over Till It's Over'' theorem.

\begin{theorem}[\cite{MOO}] \label{thm:it-aint-over}

For all $\epsilon_1,\epsilon_2,p > 0$ there exist $d \in \mathbb{N}$ and $\tau,\delta > 0$ such that if $f\colon \{0,1\}^n \to \{0,1\}$ is $(d,\tau)$-regular and $\V{f} \geq \epsilon_1$ then
\[
 \Pr_{(J,z) \sim \rr{p}}[\V{f_{\overline{J}\to z}} \geq \delta] \geq 1-\epsilon_2.
\]
\end{theorem}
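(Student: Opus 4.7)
My plan is to follow the approach of Mossel, O'Donnell, and Oleszkiewicz, reducing the question to anti-concentration of a low-degree polynomial in Gaussian space via the invariance principle. The starting observation is that because $f$ is Boolean, $f^2 = f$, and hence $\V{f_{\overline{J}\to z}} = F(J,z)(1 - F(J,z))$, where
\[
 F(J,z) := \E_y[f_{\overline{J}\to z}(y)] = \sum_{S \subseteq \overline{J}} \widehat{f}(S)\,\chi_S(z).
\]
Therefore the event $\V{f_{\overline{J}\to z}} < \delta$ is the event that $F(J,z)$ lies in an $O(\delta)$-neighborhood of $\{0,1\}$, and it suffices to prove an anti-concentration bound for $F(J,z)$ near $0$ and near $1$.

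Next I would split $F = F^{\leq d} + F^{>d}$ with $F^{\leq d}(J,z) := \sum_{S \subseteq \overline{J},\, |S| \leq d} \widehat{f}(S)\,\chi_S(z)$. By Parseval, $\E_{(J,z)}[F^{>d}(J,z)^2] = \sum_{|S|>d} (1-p)^{|S|} \widehat{f}(S)^2 \leq (1-p)^d$, so for $d$ large enough (depending on $p$ and $\epsilon_2$) the tail $|F^{>d}(J,z)|$ is tiny outside a rare event. For fixed $J$, the polynomial $F^{\leq d}(J,\cdot)$ has degree at most $d$ and its coordinate influences are bounded by $\Inf_i^{\leq d}[f] \leq \tau$. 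The invariance principle then yields that for every $J$, the law of $F^{\leq d}(J,z)$ with $z$ uniform over $\{0,1\}^{\overline{J}}$ is within $O(d\tau^{1/(4d+1)})$ Kolmogorov distance of the law of the analogous polynomial evaluated on i.i.d.\ standard Gaussians.

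In Gaussian space I would apply the Carbery-Wright anti-concentration inequality, which says that a degree-$d$ polynomial $P$ in independent Gaussians satisfies $\Pr[|P(\cG) - c| \leq s] = O(d(s/\sqrt{\V{P}})^{1/d})$ for every constant $c$; applied with $c \in \{0,1\}$ and $s = O(\delta)$ this gives exactly the desired bound, provided one has a lower bound on $\V{F^{\leq d}(J,\cdot)}$ for typical $J$. This variance bound comes from the Efron-Stein-type identity $\V{f} = \V_{(J,z)}[F(J,z)] + \E_{(J,z)}[\V{f_{\overline{J}\to z}}]$ combined with $\V{f} \geq \epsilon_1$ and the tail estimate on $F^{>d}$: if the bad event had probability exceeding $\epsilon_2$, the combined estimates would contradict Carbery-Wright. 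The main obstacle is parameter balancing: one chooses $d$ large (depending on $\epsilon_1, \epsilon_2, p$) to kill the high-frequency tail and to make the Carbery-Wright estimate quantitative, then $\tau$ small (depending on $d$ and $\epsilon_2$) so the invariance error is $o(\epsilon_2)$; a subtler issue is that Carbery-Wright needs a lower bound on $\V{F^{\leq d}(J,\cdot)}$ conditional on $J$ rather than only in expectation, which is handled by restricting attention via Markov to the bulk of the $J$'s.
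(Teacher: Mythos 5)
The paper does not actually prove this theorem; it is cited directly from~\cite{MOO} (it is the ``It Ain't Over Till It's Over'' theorem), so there is no internal proof to compare against. Your sketch does reconstruct the standard MOO strategy: write $\V{f_{\overline J\to z}} = F(J,z)(1-F(J,z))$ with $F(J,z) = \sum_{S\subseteq\overline J}\hat f(S)\chi_S(z)$, split $F$ into $F^{\le d}+F^{>d}$, kill the high-degree tail in $L^2$, and get anticoncentration of $F^{\le d}$ near $\{0,1\}$ via invariance plus Carbery--Wright. That is the right skeleton.

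The gap is in the step you yourself flag as ``subtler'': you assert that the needed lower bound on $\V_z[F^{\leq d}(J,\cdot)]$ for a given $J$ is ``handled by restricting attention via Markov to the bulk of the $J$'s,'' but Markov gives an \emph{upper} tail bound, not the lower bound on conditional variance you need, and for some $J$ (e.g.\ $J=[n]$) the conditional variance is genuinely $0$. The correct resolution is a dichotomy on $V_J := \V_z[F^{\leq d}(J,\cdot)]$: when $V_J$ is not tiny you apply invariance and Carbery--Wright as you describe; when $V_J$ is tiny (and the tail $\E_z[(F^{>d}(J,\cdot))^2]$ is also tiny, which holds for most $J$ by Markov), $F(J,\cdot)$ concentrates around its conditional mean $\E_z[F(J,z)]$. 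The crucial observation you omit is that this conditional mean is \emph{independent of $J$} and equals $\E[f]$, and since $\V{f}=\E[f](1-\E[f])\geq\epsilon_1$, the mean is bounded away from $\{0,1\}$; so in this case $F(J,z)$ is automatically bounded away from $\{0,1\}$ for most $z$ without any anticoncentration estimate. Without this case analysis (or the cleaner alternative of applying invariance to $F$ as a joint polynomial in $(J,z)$ via a $\{0,1,*\}$ encoding, where the relevant variance is the single quantity $\sum_{1\leq|S|\leq d}(1-p)^{|S|}\hat f(S)^2$), the argument is incomplete; with it, the proof goes through essentially as you sketch.
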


The third result states that noise-stable functions are close to constant.

\begin{definition}[Noise sensitivity] \label{def:noise-sensitivity}
Let $p \in (0,1)$ and $\rho \in [0,1]$, and let $f\colon \{0,1\}^n \to \{0,1\}$. The \emph{noise sensitivity} of $f$ with respect to $\mu_p$ is
\[
 \NS_\rho^{\mu_p}(f) = \Pr_{x,y \sim N_\rho^{\mu_p}}[f(x) \neq f(y)],
\]
where $N_\rho^{\mu_p}$ is the following distribution: we sample $x \sim \mu_p$, and for each $i \in [n]$ independently, with probability $\rho$ we let $y_i = x_i$, and otherwise we sample $y_i$ according to $\mu_p$.
\end{definition}

\begin{lemma} \label{lem:noise-stable-constant}
Fix $p,\rho \in (0,1)$. If $f\colon \{0,1\}^n \to \{0,1\}$ satisfies $\NS_\rho^{\mu_p}(f) \leq \epsilon$ then $f$ is $O(\epsilon)$-close to a constant function.
\end{lemma}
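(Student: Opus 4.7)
The plan is a short $p$-biased Fourier argument followed by the standard rounding step for Boolean functions. First, I would expand $f$ in the orthonormal $p$-biased Fourier basis $\{\chi_S^p\}_{S \subseteq [n]}$ and use the fact that the $\rho$-correlated noise operator associated to $N_\rho^{\mu_p}$ is diagonalized by this basis with eigenvalues $\rho^{|S|}$, so that $\E[f(x)f(y)] = \sum_S \rho^{|S|} \hat f(S)^2$ for $(x,y) \sim N_\rho^{\mu_p}$. Combining this with the Boolean identity $\Pr[f(x)\neq f(y)] = \E[(f(x)-f(y))^2]$ and with $\sum_S \hat f(S)^2 = \E_{\mu_p}[f^2] = \E_{\mu_p}[f]$ (using $f^2 = f$) yields the clean identity
\[
 \NS_\rho^{\mu_p}(f) = 2\sum_{S\neq\emptyset}\bigl(1-\rho^{|S|}\bigr)\,\hat f(S)^2.
\]

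Next I would apply the trivial bound $1 - \rho^{|S|} \geq 1-\rho$ for all nonempty $S$ to conclude
\[
 \NS_\rho^{\mu_p}(f) \geq 2(1-\rho)\sum_{S \neq \emptyset} \hat f(S)^2 = 2(1-\rho)\,\V{f},
\]
where the variance is computed under $\mu_p$; the hypothesis $\NS_\rho^{\mu_p}(f) \leq \epsilon$ then gives $\V{f} \leq \epsilon/\bigl(2(1-\rho)\bigr)$. Since $f$ is $\{0,1\}$-valued one has $\V{f} = \E_{\mu_p}[f]\bigl(1-\E_{\mu_p}[f]\bigr)$, and consequently $\min\!\bigl(\E_{\mu_p}[f],\, 1-\E_{\mu_p}[f]\bigr) \leq 2\V{f} = O(\epsilon)$. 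This quantity is exactly the $\mu_p$-distance from $f$ to the closer of the two constants $0,1$, so we obtain the desired conclusion with an implicit constant depending on $\rho$ through the factor $1-\rho$ (and notably independent of $p$).

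I do not expect any real obstacle here: the only mildly delicate piece is verifying that the $\rho$-correlated noise operator under $\mu_p$ is diagonalized by the $p$-biased characters with eigenvalues $\rho^{|S|}$, but this is completely standard (see e.g.~\cite{ODonnell}). The bound $1-\rho^{|S|} \geq 1-\rho$ is the only ``Fourier'' inequality used, and it is tight already at $|S| = 1$, reflecting the fact that almost-linear functions like dictators can saturate the relation between $\NS_\rho^{\mu_p}$ and $\V{f}$ up to constants.
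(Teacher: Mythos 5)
Your proposal is correct and takes essentially the same approach as the paper: both diagonalize the $\rho$-correlated noise operator in the $p$-biased Fourier basis, bound the nonconstant eigenvalues by $\rho$, deduce $\V{f} = O(\epsilon)$, and round to the nearest constant. The only cosmetic difference is that you work directly with the $\{0,1\}$-valued $f$ (via $\Pr[f(x)\neq f(y)] = \E[(f(x)-f(y))^2]$) while the paper passes to $F = (-1)^f$ and uses $\E[F(x)F(y)] = 1 - 2\NS_\rho^{\mu_p}(f)$.
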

\begin{proof}
Define $F\colon \{0,1\}^n \to \{-1,1\}$ by $F(x) = (-1)^{f(x)}$.
On the one hand,
\[
 \E_{x,y \sim N_\rho^{\mu_p}}[F(x) F(y)] = 1 - 2\NS_\rho^{\mu_p}(f) \geq 1 - 2\epsilon,
\]
On the other hand, it is well-known that
\[
 \E_{x,y \sim N_\rho^{\mu_p}}[F(x) F(y)] = \sum_{d=0}^n \rho^d \|F^{=d}\|^2 \leq (1 - \V{F}) + \rho \V{F} = 1 - (1-\rho) \V{F}.
\]
It follows that $\V{F} = O(\epsilon)$, and so $f$ is $O(\epsilon)$-close to a constant function.
\end{proof}

\section{Approximate polymorphisms}
\label{sec:approximate-polymorphisms}

In this section we prove the following result, which implies \Cref{thm:main-intro} by simple case analysis.

\begin{theorem} \label{thm:main}
Fix $g\colon \{0,1\}^m \to \{0,1\}$, and let $p = \Pr[g = 1]$. For every $\epsilon > 0$ there exists $\delta > 0$ (depending on both $g$ and $\epsilon$) such that if $f\colon \{0,1\}^n \to \{0,1\}$ is a $\delta$-approximate polymorphism of $g$ then there exists a skew polymorphism $f_0,f_1$ of $g$ such that $f$ is $\epsilon$-close to $f_1$ with respect to $\mu_{1/2}$ and to $f_0$ with respect to $\mu_p$.

If $g$ is balanced then $f$ is moreover $\epsilon$-close to a polymorphism of $g$.
\end{theorem}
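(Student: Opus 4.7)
The plan is to follow the two-step approach sketched in \Cref{sec:introduction}. The trivial cases ($g$ constant, a dictator, or an anti-dictator) are immediate from \Cref{thm:exact-skew-polymorphisms}, and the cases $g = \mathsf{XOR}$ or $\mathsf{NXOR}$ reduce to linearity testing~\cite{BLR}. I may therefore assume $g$ depends on all $m$ coordinates and is not a (possibly negated) XOR; a short case analysis then produces an input $\alpha \in \{0,1\}^m$ and a coordinate, WLOG the $m$-th, with $g(\alpha) = g(\alpha^{\oplus m})$.

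The first step is to show that $f$ is $O(\eps)$-close to some junta $F$ on a set $T$ of size $L = L(\eps,g)$, under \emph{both} measures $\mu_{1/2}$ and $\mu_p$. Using the multi-measure version of \Cref{thm:jones}, extract a single $T$ for which a random restriction on $\overline{T}$ is $(d,\tau)$-regular with probability $\geq 1-\eta$ under each of $\mu_{1/2}$ and $\mu_p$, with $d,\tau$ chosen for \Cref{thm:it-aint-over}. For each such regular restriction $f'$ I claim $\V{f'}$ must be tiny. Assuming the contrary, couple two copies $Z,W$ by resampling the $m$-th column in every row whose first $m-1$ entries equal $\alpha_1,\ldots,\alpha_{m-1}$. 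The non-sensitivity of coordinate $m$ at $\alpha$ yields $(f'\circ g^n)(Z) = (f'\circ g^n)(W)$ identically, so two applications of the approximate-polymorphism hypothesis give $(g\circ (f')^m)(Z) = (g\circ (f')^m)(W)$ except with probability $2\delta$. Since $g$ depends on its $m$-th coordinate, there exists $\beta\in\{0,1\}^{m-1}$ such that $g(\beta, x_m)=x_m\oplus c$; conditioning on the first $m-1$ columns of $f'$-values matching $\beta$ (a positive-probability event in terms of $\V{f'}$) reduces the equation to $f'(\col_m(Z)) = f'(\col_m(W))$. But $(\col_m(Z),\col_m(W))$ is the noise pair of a $\rho = 2^{-(m-1)}$-random restriction of $f'$, and \Cref{thm:it-aint-over} then produces a contradiction with $\V{f'}$ bounded below, provided $\delta$ is small enough. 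An analogous coupling based on the row-$g$ values yields the same conclusion under $\mu_p$, so $f$ is close to a common $T$-junta $F$ in both measures.

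The second step converts the junta approximation into an exact skew polymorphism. Relabel so $T = [L]$, and split $Z$ into $Z^{(1)}$ (its first $L$ rows) and $Z^{(2)}$ (the remaining rows); fixing $Z^{(2)}$ turns the polymorphism identity into
\[
 f_0^{Z^{(2)}} \circ g^L \;=\; g \circ (f_1^{Z^{(2)}}, \ldots, f_m^{Z^{(2)}})
\]
on $\{0,1\}^L$, where the $f_j^{Z^{(2)}}$ are the restrictions of $f$ obtained by pre-computing $g$ on the rows of $Z^{(2)}$. Choosing $\delta \ll 2^{-mL}$, Markov's inequality selects a $Z^{(2)}$ for which the failure probability in this identity drops below $2^{-mL}$—forcing exact equality on every input of $\{0,1\}^L$—while simultaneously leaving every $f_j^{Z^{(2)}}$ for $j \in [m]$ at most $O(\eps)$-close to $F$ under $\mu_{1/2}$, and $f_0^{Z^{(2)}}$ at most $O(\eps)$-close to $F$ under $\mu_p$. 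Then \Cref{thm:exact-multi-polymorphisms} classifies the solutions: in every classification case where distinct $f_j$'s could arise, some pair would be each other's complement, which is inconsistent with mutual $\eps$-closeness; hence $f_1^{Z^{(2)}}=\cdots=f_m^{Z^{(2)}}$, producing an exact skew polymorphism $(f_0^{Z^{(2)}}, f_1^{Z^{(2)}})$ of $g$. Setting $f_0 := f_0^{Z^{(2)}}$ and $f_1 := f_1^{Z^{(2)}}$, the triangle inequality through $F$ gives the desired closeness in each measure. The balanced case follows immediately: when $p = 1/2$ the two measures coincide, so $f_0,f_1$ are both close to the same junta in the same measure, and \Cref{thm:exact-skew-polymorphisms} then forces $f_0 = f_1$.

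The main obstacle I anticipate is the quantitative chaining of the parameters $\eps,\eta,\delta,L,d,\tau$, which must be fixed in the correct order (first $\eps$, then $d,\tau$ from \Cref{thm:it-aint-over}, then $L$ from \Cref{thm:jones}, and finally $\delta \ll 2^{-mL}$). A related subtlety is that the junta step must be verified in \emph{both} measures $\mu_{1/2}$ and $\mu_p$—requiring a separate coupling for the $\mu_p$ argument, based on the LHS representation $(f\circ g^n)(Z)$ whose inputs to $f$ are naturally $\mu_p$-distributed—and the \Cref{thm:it-aint-over}-style restriction argument needs to retain its effectiveness under the skewed measure.
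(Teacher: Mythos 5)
Your proposal follows the paper's two-step outline (junta reduction, then reduction to the exact classification), and the parameter-chaining you describe ($\eps \to d,\tau \to L \to \delta \ll 2^{-mL}$) is exactly what the paper does. However, there are two places where you diverge from the paper, one of which is an unnecessary complication.

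First, your ``subtlety'' about establishing junta-closeness under \emph{both} $\mu_{1/2}$ and $\mu_p$ is a red herring: the paper's \Cref{lem:f-junta} only establishes closeness under $\mu_{1/2}$, and the $\mu_p$-closeness of $f$ to $f_0$ is deduced for free at the end of \Cref{lem:approximate-polymorphisms-not-xor}. Concretely, once $f_0,f_1$ is shown to be an exact skew polymorphism with $f$ close to $f_1$ under $\mu_{1/2}$, one observes that with probability $1 - O(\eps)$ over $Z$,
\[
 (f \circ g^n)(Z) = (g \circ f^m)(Z) = (g \circ f_1^m)(Z) = (f_0 \circ g^n)(Z),
\]
and since the $n$ arguments fed into $f$ and $f_0$ on the outermost left and right are $\mu_p$-distributed, this is exactly $\mu_p$-closeness of $f$ to $f_0$. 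Your proposed second coupling (based on the LHS $(f\circ g^n)(Z)$, resembling the argument of \Cref{lem:f-junta-multi-f0}) would also work, but it is extra machinery the single-function theorem does not need.

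Second, your summary of the case analysis---``in every classification case where distinct $f_j$'s could arise, some pair would be each other's complement''---is not quite right. In the constant case of \Cref{thm:exact-multi-polymorphisms} (case where $f_0=b_0$ and $f_j=b_j$ constant for $j$ in some $J$), the functions $f_j$ for $j \notin J$ are unconstrained and need not equal each other or be complements. The paper handles this case by a separate observation: since $g$ is non-constant, $J \neq \emptyset$, so $f$ itself is $O(\eps)$-close to a constant $a_1$, and then one takes $f_1 := a_1$ and $f_0 := g(a_1,\ldots,a_1)$, which is a valid skew polymorphism. Your ``complement'' argument correctly disposes of the dictator/XOR and $\lor/\land$ cases, but the constant case needs this different treatment. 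With these two corrections your argument aligns with the paper's; neither is a structural gap, but both should be fixed before the proof would read as complete.
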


We can assume without loss of generality that $g$ depends on all coordinates (we show this formally below). If $g$ is an XOR or NXOR, then the result is either trivial (if $g$ depends on at most one coordinate), or follows from an analysis in the style of linearity testing (if $g$ depends on at least two coordinates). Therefore we concentrate on the case in which $g$ depends on all coordinates and is neither an XOR nor an NXOR.

When $g$ is neither an XOR nor an NXOR, it must have a non-trivial certificate $\alpha$.

\begin{lemma} \label{lem:g-alpha-beta}
Suppose that $g\colon \{0,1\}^m \to \{0,1\}$ depends on all coordinates, and is not of the form $\bigoplus_{i=1}^m x_i \oplus b$.

After possibly reordering the coordinates of $g$, we can find $\alpha$ such that
\[
 g(\alpha_1,\ldots,\alpha_{m-1},0) = g(\alpha_1,\ldots,\alpha_{m-1},1).
\]
\end{lemma}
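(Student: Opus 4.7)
The plan is to prove the contrapositive. I would assume that no such coordinate and input exist, meaning that for every $j \in [m]$ and every $\alpha \in \{0,1\}^m$ we have $g(\alpha) \neq g(\alpha \oplus e_j)$, and then conclude that $g$ must be an XOR or NXOR of all of its coordinates, contradicting the hypothesis. Equivalently, I want to find some $j$ and some $\alpha$ witnessing a non-sensitive flip, and then reorder the coordinates so that $j = m$.

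The key observation is that if $g(\alpha \oplus e_j) \neq g(\alpha)$ for every $j$ and every $\alpha$, then, working modulo~$2$, we have $g(\alpha \oplus e_j) = g(\alpha) \oplus 1$ for all $\alpha, j$. Iterating this identity along a sequence of single-bit flips that transforms $0^m$ into an arbitrary string $\alpha$ (using exactly $|\alpha|$ flips), one obtains
\[
 g(\alpha) = g(0^m) \oplus |\alpha| \pmod 2 = g(0^m) \oplus \bigoplus_{i=1}^m \alpha_i.
\]
Setting $b = g(0^m)$, this is precisely $g(x) = \bigoplus_{i=1}^m x_i \oplus b$, contradicting the hypothesis.

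Therefore there exist $j \in [m]$ and $\alpha \in \{0,1\}^m$ with $g(\alpha) = g(\alpha \oplus e_j)$. Permute the coordinates of $g$ so that this distinguished coordinate becomes the last one; then, writing $\alpha$ in the new ordering, the two values $g(\alpha_1,\ldots,\alpha_{m-1},0)$ and $g(\alpha_1,\ldots,\alpha_{m-1},1)$ are equal, as claimed.

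There isn't really a main obstacle here: the argument is a one-line pigeonhole once the single-flip identity is written out, and the assumption that $g$ depends on all coordinates plays no role in the proof (it is inherited from the context of the paper and ensures that the coordinate reordering is meaningful). The only care required is in the direction of the implication and in remembering that the conclusion is about the existence of at least one non-sensitive pair $(\alpha,j)$, which is weaker than asking for non-sensitivity at a specific prescribed coordinate.
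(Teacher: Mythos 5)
Your proof is correct and follows essentially the same contrapositive argument as the paper's: if no nonsensitive single-bit flip exists, then $g$ flips value at every coordinate of every input, which forces $g(x) = g(0^m) \oplus \bigoplus_i x_i$, contradicting the hypothesis. The only difference is that you make the iterated-flipping step and the (correct) observation that the ``depends on all coordinates'' hypothesis is not actually used, both of which the paper leaves implicit.
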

\begin{proof}
If it is impossible to find such $\alpha$ even after reordering the coordinates, then for every $x \in \{0,1\}^m$ and any $j \in [m]$ we have $g(x) \neq g(x^{\oplus j})$, where $x^{\oplus j}$ is obtained from $x$ by flipping the $j$th coordinate. Thus
\[
 g(x_1,\ldots,x_m) = g(0,\ldots,0) \oplus \bigoplus_{j=1}^m x_j,
\]
in contrast to the assumption. Therefore we can find the desired $\alpha$ after possibly reordering the coordinates.
\end{proof}

We can use the certificate $\alpha$ to show that if $f$ is an approximate polymorphism of $g$, then $f$ is close to a junta.

\begin{lemma} \label{lem:f-junta}
Fix a function $g\colon \{0,1\}^m \to \{0,1\}$ which depends on all coordinates and is not of the form $\bigoplus_{i=1}^m x_i \oplus b$.

For every $\epsilon > 0$ there exist $L \in \mathbb{N}$ and $\eta > 0$ such that any $\eta$-approximate polymorphism of $g$ is $\epsilon$-close to an $L$-junta.
\end{lemma}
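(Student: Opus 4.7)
The plan is to combine Jones' regularity lemma with an It Ain't Over Till It's Over style argument built around the certificate $\alpha$ produced by \Cref{lem:g-alpha-beta}. Fix also $\beta$ on which $g$ is sensitive in the last coordinate; after possible relabeling assume $g(\beta_1,\ldots,\beta_{m-1},x_m)=x_m$.

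I first treat the regular case: if $f$ is $(d,\tau)$-regular for suitable $d,\tau$ and is an $\eta$-approximate polymorphism of $g$, then $f$ is $\epsilon$-close to a constant. Assume for contradiction $f$ is $\epsilon$-far from constant. Couple a uniform matrix $Z$ with $W$ obtained by resampling entry $(i,m)$ precisely when $\row_i(Z)|_{[m-1]}=\alpha$. By the choice of $\alpha$, $(f\circ g^n)(Z)=(f\circ g^n)(W)$ identically, so the approximate polymorphism hypothesis gives $(g\circ f^m)(Z)=(g\circ f^m)(W)$ with probability $\geq 1-2\eta$. On the event $B=\{f(\col_j(Z))=\beta_j\text{ for all }j<m\}$ this equality collapses to $f(\col_m(Z))=f(\col_m(W))$, and $\Pr[B]\geq \epsilon^{m-1}$ by column independence. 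Marginally $(\col_m(Z),\col_m(W))$ is distributed as a $p$-random-restriction pair with $p=2^{-(m-1)}$, so \Cref{thm:it-aint-over} furnishes some $\gamma=\gamma(\epsilon,m)>0$ such that $\V{f_{\overline{J}\to z}}\geq \gamma$ except with probability $\epsilon^{m-1}/2$. Assembling these pieces,
\[
 \Pr[(g\circ f^m)(Z)\neq(g\circ f^m)(W)]\geq \tfrac{\epsilon^{m-1}}{2}\cdot 2\gamma(1-\gamma),
\]
contradicting the $\leq 2\eta$ bound once $\eta$ is taken small enough.

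To lift this to arbitrary $f$, apply \Cref{thm:jones} with appropriate parameters to obtain $T\subseteq[n]$ with $|T|\leq L$ such that $f_{T\to u}$ is $(d,\tau)$-regular for a large fraction of $u\in\{0,1\}^T$. Assume for contradiction $f$ is $\epsilon$-far from every $L$-junta on $T$; then a set $S'\subseteq\{0,1\}^T$ of measure $\Omega(\epsilon)$ has $f_{T\to u}$ simultaneously $(d,\tau)$-regular and far from constant. Replay the coupling above, but perform the $(i,m)$-resampling in $W$ only for rows $i\notin T$; the identity $(f\circ g^n)(Z)=(f\circ g^n)(W)$ still holds since non-$T$ rows that are touched satisfy $\row_i(Z)|_{[m-1]}=\alpha$. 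Conditional on $v=\col_m(Z)|_T$, the pair $(\col_m(Z),\col_m(W))$ restricted to $[n]\setminus T$ is a $p$-random-restriction pair of $f_{T\to v}$, and for $v\in S'$ \Cref{thm:it-aint-over} yields the same variance lower bound as before. The event $B$ decomposes analogously through $u_j=\col_j(Z)|_T$: by column independence, the joint event that every $u_j$ lies in $S'$ and that $f_{T\to u_j}(\col_j|_{\overline{T}})=\beta_j$ holds for all $j<m$ has probability bounded below by a function of $\epsilon$ and $m$ alone. Combining yields a positive lower bound $\Omega_{\epsilon,m}(1)$ on $\Pr[(g\circ f^m)(Z)\neq(g\circ f^m)(W)]$, contradicting the $2\eta$ upper bound for $\eta$ small enough.

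The main technical obstacle is that restrictions of approximate polymorphisms are not themselves approximate polymorphisms, so one cannot simply apply the regular-case argument separately to each $f_{T\to u}$. The resolution is to keep the $(Z,W)$-coupling on the full $n\times m$ matrix and factor through $T$ only inside the IAOTIO analysis of the restriction pair $(\col_m(Z),\col_m(W))$ and inside the selection event $B$; this exploits the regularity of $f_{T\to v}$ produced by Jones for analytic purposes while retaining the approximate-polymorphism structure of $f$ itself.
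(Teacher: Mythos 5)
Your proof is correct and follows essentially the same path as the paper's: the coupling $(Z,W)$ via the non-sensitive certificate $\alpha$, the $\beta$-event that collapses $g\circ f^m$ to $f(\col_m)$, Jones' regularity lemma to extract the set $T$, and the It Ain't Over Till It's Over theorem applied to the restriction pair $(\col_m(Z),\col_m(W))$ over $\overline T$. The paper streamlines the regular case and the lift into a single argument via a three-stage sampling of $(Z,W)$ (and proves it for general multi-polymorphisms $f_0,\ldots,f_m$, not just a single $f$), but that is a presentational and generality difference rather than a substantive one.
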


We prove \Cref{lem:f-junta} in \Cref{sec:approximate-polymorphisms-junta}. Let us see how it implies \Cref{thm:main}.

\Cref{lem:f-junta} requires $g$ to depend on all coordinates. Let us first dispense with this requirement.

\begin{lemma} \label{lem:approximate-polymogrphism-all-coords}
Suppose that \Cref{thm:main} holds when $g$ depends on all coordinates. Then it holds for all $g$.
\end{lemma}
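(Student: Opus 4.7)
The plan is to reduce the general case to the one where $g$ depends on all coordinates by simply discarding the irrelevant coordinates. Let $S \subseteq [m]$ be the set of coordinates on which $g$ genuinely depends, and let $g'\colon \{0,1\}^{|S|} \to \{0,1\}$ be the induced function, so that $g(x_1,\ldots,x_m) = g'(x|_S)$ for every $x \in \{0,1\}^m$. Since a uniform $Z \in \{0,1\}^{n\times m}$ restricted to its columns in $S$ gives a uniform $Z' \in \{0,1\}^{n\times |S|}$, and since both $(f\circ g^n)(Z)$ and $(g\circ f^m)(Z)$ depend on $Z$ only through $Z'$ (the first because $g(\row_i(Z)) = g'(\row_i(Z'))$, the second because $g$ ignores the columns outside $S$ entirely), we have the pointwise identities $(f\circ g^n)(Z) = (f\circ (g')^n)(Z')$ and $(g\circ f^m)(Z) = (g' \circ f^{|S|})(Z')$. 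In particular, $f$ is a $\delta$-approximate polymorphism of $g$ if and only if it is a $\delta$-approximate polymorphism of $g'$, and moreover $\E[g'] = \E[g] = p$.

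Next I would invoke the hypothesis of the lemma: since $g'$ depends on all of its coordinates, \Cref{thm:main} applied to $g'$ yields, for every $\epsilon > 0$, some $\delta > 0$ such that if $f$ is a $\delta$-approximate polymorphism of $g'$, then there is a skew polymorphism $(f_0, f_1)$ of $g'$ with $f$ being $\epsilon$-close to $f_1$ under $\mu_{1/2}$ and to $f_0$ under $\mu_p$. I then need to lift $(f_0,f_1)$ to a skew polymorphism of $g$. Using the same reduction in reverse, for any $Z \in \{0,1\}^{n\times m}$, we have $(f_0 \circ g^n)(Z) = (f_0 \circ (g')^n)(Z') = (g' \circ f_1^{|S|})(Z') = (g \circ f_1^m)(Z)$, so $(f_0,f_1)$ is indeed a skew polymorphism of $g$.

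For the moreover-clause, observe that $g$ is balanced if and only if $g'$ is balanced, and that if $f_1$ is an exact polymorphism of $g'$ then the same column-dropping identity shows $f_1 \circ g^n = g \circ f_1^m$, so $f_1$ is a polymorphism of $g$ as well. The closeness guarantee for $f_1$ therefore transfers verbatim. There is essentially no obstacle here: the argument is entirely a bookkeeping reduction, and the only thing to be careful about is correctly matching the matrices $Z$ and $Z'$ and the distributions on them, which is immediate for the uniform and $\mu_p$ measures used in the statement.
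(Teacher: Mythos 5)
Your proof is correct and follows essentially the same reduction as the paper: both pass from $g$ to the induced function $g'$ on its relevant coordinates, note that $(f\circ g^n)(Z)$ and $(g\circ f^m)(Z)$ factor through $Z' = Z|_{[n]\times S}$ so approximate-polymorphism status and the bias $p$ are preserved, and then lift a skew polymorphism of $g'$ back to one of $g$. The paper's version is terser but identical in substance.
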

\begin{proof}
If $g$ depends only on the coordinates in some subset $J \subseteq [m]$, we can find a function $G\colon \{0,1\}^J \to \{0,1\}$ such that $g(x) = G(x|_J)$. By assumption, \Cref{thm:main} holds for $G$, say with some parameter $\delta$. We will show that \Cref{thm:main} holds for $g$ as well, with the same parameter $\delta$.

If $f$ is a $\delta$-approximate polymorphism of $g$, then it is also a $\delta$-approximate polymorphism of $G$. \Cref{thm:main} shows that $f$ is $\epsilon$-close (in an appropriate sense) to a skew polymorphism of $G$, which is also a skew polymorphism of $g$.
\end{proof}

\Cref{lem:f-junta} also requires $g$ not to be XOR or its negation. We first show how to prove \Cref{thm:main} under this restriction, and then handle all other cases. 
\begin{lemma} \label{lem:approximate-polymorphisms-not-xor}
\Cref{thm:main} holds when $g$ depends on all coordinates and is not of the form $\bigoplus_{i=1}^m x_i \oplus b$.
\end{lemma}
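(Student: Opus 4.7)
The approach is to reduce the approximate problem to an exact multi-polymorphism problem on a smaller space by restriction. Since \Cref{lem:f-junta} applies under our hypotheses on $g$, any $\eta$-approximate polymorphism $f$ is $\epsilon'$-close to an $L$-junta $F$, which after reordering we may assume depends only on coordinates $[L]$. The plan is to pick parameters $\epsilon' \ll 2^{-L}$ and $\delta \ll 2^{-mL}$ (both much smaller than the $\eta$ provided by \Cref{lem:f-junta}) so that ``close'' upgrades to ``equal'' on the finite domain $\{0,1\}^L$, and ``approximate equation'' upgrades to ``exact equation'' on $\{0,1\}^{L \times m}$.

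Split the matrix $Z = (Z^{(1)}, Z^{(2)})$, where $Z^{(1)}$ consists of the first $L$ rows and $Z^{(2)}$ of the remaining $n - L$ rows. For each fixed $Z^{(2)}$, define functions $f_0^{Z^{(2)}}, f_1^{Z^{(2)}}, \ldots, f_m^{Z^{(2)}} \colon \{0,1\}^L \to \{0,1\}$ by substituting $(g(\row_i(Z^{(2)})))_i$ into the last $n - L$ inputs of $f$ (for $f_0^{Z^{(2)}}$), and by substituting $\col_j(Z^{(2)})$ into the last $n - L$ inputs of $f$ (for $f_j^{Z^{(2)}}$, $j \geq 1$). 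A Markov averaging argument then produces a set of ``typical'' $Z^{(2)}$ of measure at least $1 - O(\sqrt{\delta} + \sqrt{\epsilon'})$ on which simultaneously (i) the restricted equation $f_0^{Z^{(2)}} \circ g^L = g \circ (f_1^{Z^{(2)}}, \ldots, f_m^{Z^{(2)}})$ holds exactly on all $2^{mL}$ inputs (using $\delta \ll 2^{-mL}$), and (ii) each $f_j^{Z^{(2)}}$ for $j \geq 1$ coincides with $F$ regarded as a function on $\{0,1\}^L$ (using $\epsilon' \ll 2^{-L}$). For such a $Z^{(2)}$, \Cref{thm:exact-multi-polymorphisms} applied to the exact multi-polymorphism $(f_0^{Z^{(2)}}, F, \ldots, F)$ shows that the equation is really a skew-polymorphism equation; moreover $f_0^{Z^{(2)}}$ is uniquely determined on all of $\{0,1\}^L$, because $g$ is nonconstant and so the map $(y_1,\ldots,y_L) \mapsto (g(y_1),\ldots,g(y_L))$ is surjective. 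Calling this common value $f_0^*$ and extending $(f_0^*, F)$ to $\tilde f_0, \tilde f_1 \colon \{0,1\}^n \to \{0,1\}$ that ignore the last $n - L$ coordinates gives the skew polymorphism to output.

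Closeness then follows from two observations. First, $\tilde f_1 = F$ as a function on $\{0,1\}^n$, so $f$ is $\epsilon'$-close to $\tilde f_1$ under $\mu_{1/2}$. Second, coupling $U \sim \mu_p^n$ to a uniform $Z$ via $U = (g(\row_i(Z)))_i$ makes $f(U) = \tilde f_0(U)$ whenever $Z^{(2)}$ is typical, which bounds $\Pr_{\mu_p}[f \neq \tilde f_0]$ by the atypicality probability. In the balanced case $p = 1/2$, these two bounds combined with $\epsilon' < 2^{-L}$ force $\tilde f_0 = \tilde f_1$, producing a proper polymorphism. The main obstacle will be routing all cases of \Cref{thm:exact-multi-polymorphisms} uniformly through the conclusion $f_1^{Z^{(2)}} = \cdots = f_m^{Z^{(2)}}$: the classification does permit multi-polymorphisms whose component functions genuinely differ (for example when $f_0$ is constant, only some $f_j$'s need match it), but closeness of every $f_j^{Z^{(2)}}$ to the single function $F$ together with the pigeonhole bound $\epsilon' < 2^{-L}$ collapses each such case to the ``all equal'' scenario.
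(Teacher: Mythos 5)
Your proposal follows the same high-level strategy as the paper: apply the junta lemma, restrict the variables outside the junta to pass from an approximate equation to an exact multi-polymorphism equation on $\{0,1\}^L$, and then invoke \Cref{thm:exact-multi-polymorphisms}. However, there is a genuine gap in the plan to force $f_j^{Z^{(2)}} = F$ \emph{exactly} for typical $Z^{(2)}$.

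That step hinges on the constraint $\epsilon' \ll 2^{-L}$ (so that the Markov bound $\Pr_{z}[\Pr_{x^{(1)}}[f\neq F \mid x_{\overline{[L]}}=z] \geq 2^{-L}] \leq \epsilon' 2^L$ pushes the exceptional probability below~$1$). But $L$ is not a free parameter: the junta lemma produces $L = L(\epsilon')$ as a function of $\epsilon'$, and this function grows very rapidly (the proof goes through Jones' regularity lemma, which gives tower-type dependence). There is no guarantee that $\epsilon' < 2^{-L(\epsilon')}$ is ever satisfiable; for a generic worst-case $f$ you should expect $L(\epsilon') \gg \log(1/\epsilon')$. So the requirement is circular and cannot be enforced by shrinking $\epsilon'$, and consequently you cannot conclude $f_j^{Z^{(2)}} = F$ exactly. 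The quantity you \emph{can} make as small as you like without touching $L$ is $\delta$, which is why the ``(i) upgrade'' (approximate equation $\to$ exact equation on $\{0,1\}^{L\times m}$) goes through, but the ``(ii) upgrade'' (close to $F$ $\to$ equal to $F$ on $\{0,1\}^L$) does not.

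The paper avoids this by working with only $O(\epsilon)$-closeness of the restricted functions $f_1,\ldots,f_m$ to the junta $F$, and then using the \emph{structure} supplied by \Cref{thm:exact-multi-polymorphisms} to force exact equality. Concretely: in the non-degenerate cases the classification tells you each $f_j$ is a dictator $x_i \oplus a_j$, or an OR/AND $\biglor_{i\in K}x_i$ vs. $\bigland_{i\in K}x_i$ (for the same set $K$); two distinct functions of this form disagree on at least a $1/2 - o(1)$ fraction of inputs (e.g.\ $\Pr[\biglor_{i\in K}x_i = \bigland_{i\in K}x_i] = 2^{1-|K|}\leq 1/2$ for $|K|\geq 2$). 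So $O(\epsilon)$-closeness with $\epsilon$ small enough already forces $f_1 = \cdots = f_m$, without needing $\epsilon < 2^{-L}$. The degenerate case where $f_0$ is constant and only some $f_j$ are pinned down is handled separately and lands in the ``close to a constant'' bucket. Your final paragraph gestures at this, but the resolution it proposes---``the pigeonhole bound $\epsilon' < 2^{-L}$ collapses each such case''---relies on exactly the unachievable constraint. Replacing that pigeonhole argument with the structure-based disagreement bound would repair the proof, at which point it essentially becomes the paper's proof.

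One smaller remark: the paper picks a single good fixing $W$ for the outside coordinates by a union bound over $m+1$ events, whereas you define a whole ``typical set'' of $Z^{(2)}$ and use a coupling $U = (g(\row_i(Z)))_i$ to prove $\mu_p$-closeness of $f$ to $f_0$. Your coupling is a clean way to get the $\mu_p$-closeness and is a legitimate variation, but it relies on $f_0^{Z^{(2)}}$ being the \emph{same} function $f_0^*$ for every typical $Z^{(2)}$, which in turn relies on $f_1^{Z^{(2)}} = \cdots = f_m^{Z^{(2)}} = F$ exactly---the very step that fails. With only $O(\epsilon)$-closeness of the $f_j^{Z^{(2)}}$'s to $F$, you should instead fix a single good $W$ (as the paper does) and then chain the approximate equalities $(f\circ g^n)(Z) = (g\circ f^m)(Z) = (g\circ f_1^m)(Z) = (f_0\circ g^n)(Z)$ to get $\mu_p$-closeness.
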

\begin{proof}
Given $\epsilon > 0$ (which we assume is small enough), let $L \in \mathbb{N}$ and $\eta > 0$ be the constants promised by \Cref{lem:f-junta}. We choose $\delta = \min(\epsilon,\eta,2^{-mL}/3)$.

Let $f$ be a $\delta$-approximate polymorphism of $g$. According to \Cref{lem:f-junta}, $f$ is $\epsilon$-close to an $L$-junta $F$, say depending on the coordinates $I \subseteq [n]$. We also think of $F$ as a function $F_I\colon \{0,1\}^I \to \{0,1\}$.

Let $W$ be a random $\overline{I} \times m$ zero-one matrix. With probability at least $2/3$ over the choice of $W$,
\[
 \Pr_Z[(f \circ g^n)(Z) \neq (g \circ f^m)(Z) \mid \row_i(Z) = \row_i(W) \text{ for all } i \in \overline{I}] < 3\delta.
\]
For each $j \in [m]$, with probability at least $1-1/(2m)$ over the choice of $W$,
\[
 \Pr_{x \in \{0,1\}^n}[f(x) \neq F(x) \mid x|_{\overline{I}} = \col_j(W)] < 2m\epsilon.
\]
Since $1/3 + m \cdot 1/(2m) < 1$, we can find $W$ so that all of these $m+1$ events happen.

Using $W$, we define $m+1$ functions $f_0,\ldots,f_m\colon \{0,1\}^I \to \{0,1\}$ which will turn out to be a multi-polymorphism of $g$:
\begin{itemize}
    \item For $j \in [m]$, we let $f_j$ be the restriction of $f$ obtained by fixing the coordinates in $\overline{I}$ to the values $\col_j(W)$.
    \item Similarly, $f_0$ is the restriction of $f$ obtained by fixing coordinate $i \notin I$ to $g(\row_i(W))$.
\end{itemize}
By construction,
\[
 \Pr_{Z \in \{0,1\}^{I \times [m]}}[f_0 \circ g^n \neq g \circ (f_1,\ldots,f_m)] < 3\delta \leq 2^{-mL},
\]
and so $f_0,\ldots,f_m$ is a multi-polymorphism of~$g$.

Since all of $f_1,\ldots,f_m$ are $O(\epsilon)$-close to $F_I$, all of them are $O(\epsilon)$-close to each other, and moreover $f$ is $O(\epsilon)$-close to all of them. 

According to \Cref{thm:exact-multi-polymorphisms}, one of the following cases holds:
\begin{enumerate}
    \item $f_0 = b_0$ is constant, and there exists $J \subseteq [m]$ such that $f_j = b_j$ is constant for $j \in J$, and $g(x) = b_0$ whenever $x_j = b_j$ for all $j \in J$.
    \item $f_j = x_i \oplus a_j$ for some $i \in I$ and $j \in \{0,\ldots,m\}$.
    \item $g(x) = \biglor_{j=1}^m (x_j \oplus a_j)$, and there exists $K \subseteq I$ such that $f_0(x) = \biglor_{i \in K} x_i$ and for $j \in [m]$, if $a_j = 0$ then $f_j(x) = \biglor_{i \in K} x_i$, and if $a_j = 1$ then $f_j(x) = \bigland_{i \in K} x_i$.
    \item Same as preceding case, with $\biglor$ and $\bigland$ switched.
\end{enumerate}

If the first case holds then $J \neq \emptyset$ since $g$ is non-constant, and so $f$ is $O(\epsilon)$-close to some constant $a_1$. Let $a_0 = g(a_1,\ldots,a_1)$. With probability $1 - O(\epsilon)$, $(f \circ g^n)(Z) = (g \circ f^m)(Z) = g(a_1,\ldots,a_1) = a_0$, and so $f$ is $O(\epsilon)$-close to $a_0$ with respect to $\mu_p$.

In all other cases, we observe that since $f_1,\ldots,f_m$ are $O(\epsilon)$-close to each other, necessarily $f_1 = \cdots = f_m$, and so $f_0,f_1$ is a skew polymorphism of $g$. This is clear in the second case. In the third and fourth cases, we can assume that $|K| \geq 2$, since otherwise we are in one of the first two cases. Then $f_1 = \cdots = f_m$ follows from
\[
 \Pr\left[ \biglor_{i \in K} x_i = \bigland_{i \in K} x_i \right] = \frac{2}{2^{|K|}} \leq \frac{1}{2}.
\]
Clearly $f$ is $O(\epsilon)$-close to $f_1$. It remains to show that $f$ is $O(\epsilon)$-close to $f_0$ with respect to $\mu_p$. To see this, note that with probability $1 - O(\epsilon)$,
\[
 (f \circ g^n)(Z) = (g \circ f^m)(Z) = (g \circ f_1^m)(Z) = (f_0 \circ g^n)(Z).
\]

Finally, when $g$ is balanced, $f_0$ and $f_1$ are $O(\epsilon)$-close, and so a simple case analysis shows that $f_0 = f_1$. Therefore $f_1$ is a polymorphism of $g$.
\end{proof}

To complete the proof of \Cref{thm:main}, we show that it holds when $g$ is an XOR or its negation. This includes the cases that $g$ is constant ($m = 0$) or an (anti-)dictator ($m = 1$). When $m \neq 0$, the function $g$ is balanced, and so it suffices to show that any approximate polymorphism of $g$ is close to an exact polymorphism of $g$.

\begin{lemma} \label{lem:approximate-polymorphism-xor}
\Cref{thm:main} holds when $g(x) = \bigoplus_{i=1}^m x_i \oplus b$.
\end{lemma}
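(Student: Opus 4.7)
The plan is to split into cases according to $m$, with only $m\geq 2$ requiring real work.

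When $m=0$, $g$ is the constant $b$, so the polymorphism equation collapses to the deterministic identity $f(b,\ldots,b)=b$; for $\delta<1$ this must hold, making $f$ itself an exact polymorphism and $f_0=f_1=f$ a valid choice. When $m=1$, $g(z)=z\oplus b$ and the condition becomes $f(z\oplus b\mathbf{1})=f(z)\oplus b$ with probability at least $1-\delta$ over uniform $z$. For $b=0$ every $f$ is an exact polymorphism; for $b=1$, the bad pairs $\{z,\bar z\}$ form at most a $\delta$-fraction of $\{0,1\}^n$, so flipping $f$ on one representative of each bad pair produces an odd function $\tilde f$ at Hamming distance $\delta/2$ from $f$, and any odd function is a polymorphism of $\lnot x$ by \Cref{thm:exact-polymorphisms}.

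For the main case $m\geq 2$, I would pass to $\pm 1$ via $F(x)=(-1)^{f(x)}$ and rewrite the hypothesis, with $y^{(j)}=\col_j(Z)$, as
\[
 \E\Bigl[(-1)^b F\Bigl(\bigoplus_{j=1}^m y^{(j)}\oplus b\mathbf{1}\Bigr)\prod_{j=1}^m F(y^{(j)})\Bigr]\geq 1-2\delta.
\]
Expanding each copy of $F$ in the Fourier basis and using independence of the $y^{(j)}$'s, only the diagonal Fourier terms survive, leaving $\sum_S\hat F(S)^{m+1}(-1)^{b(|S|+1)}\geq 1-2\delta$. A standard Parseval-based bound $\sum_S|\hat F(S)|^{m+1}\leq\max_S|\hat F(S)|^{m-1}$ then yields $\max_S|\hat F(S)|\geq 1-O(\delta)$. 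Picking $S^*$ attaining this maximum, $f$ is within Hamming distance $O(\delta)$ of the affine function $\ell(x)=\bigoplus_{i\in S^*}x_i\oplus a$, where $a$ is chosen according to the sign of $\hat F(S^*)$.

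The final and most delicate step is to upgrade $\ell$ to an \emph{exact} polymorphism of $g$. The key observation is that for an affine $\ell$ the polymorphism equation is itself affine in $y^{(1)},\ldots,y^{(m)}$, so it either holds for every $Z$ or fails for every $Z$. If $\ell$ were not a polymorphism, then for $f$ to satisfy the polymorphism equation on $Z$ one would need an odd number of the $m+1$ inputs $y^{(1)},\ldots,y^{(m)},\bigoplus_j y^{(j)}\oplus b\mathbf{1}$ to lie in the disagreement set between $f$ and $\ell$; a union bound caps this probability by $(m+1)\cdot O(\delta)$, which is below $1-\delta$ once $\delta$ is a small enough constant, contradicting the hypothesis. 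Hence $\ell$ is already an exact polymorphism, and choosing $\delta$ small enough in terms of $\epsilon$ also makes the Hamming distance from $f$ to $\ell$ at most $\epsilon$. Setting $f_0=f_1=\ell$ delivers the skew polymorphism required by \Cref{thm:main}, and because $g$ is balanced for every $m\geq 1$ the stronger balanced clause also follows. The one technical point requiring care is calibrating $\delta$ to satisfy both constraints simultaneously, but this is more bookkeeping than genuine difficulty.
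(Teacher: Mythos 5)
Your proof is correct, and the first two cases ($m=0$, $m=1$) as well as the Fourier expansion for $m\geq 2$ track the paper's argument essentially line by line (your ``flip one representative of each bad pair'' for $m=1$, $b=1$ is a mild restatement of the paper's explicit construction of an odd $F$). Where you diverge is the final step, upgrading the affine approximation $\ell$ to an exact polymorphism. The paper simply picks a $Z$ on which the polymorphism equation holds and reads off the constraint $a \oplus b^{\oplus |S|} = b \oplus a^{\oplus m}$, i.e.\ $b^{\oplus(|S|-1)} = a^{\oplus(m-1)}$, directly. You instead observe that for affine $\ell$, the quantity $\ell(u_0)\oplus\bigoplus_j \ell(u_j)\oplus b$ (where $u_0 = \bigoplus_j \col_j(Z)\oplus b\mathbf{1}$ and $u_j = \col_j(Z)$) is a constant independent of $Z$, so the polymorphism identity for $\ell$ holds for all $Z$ or none; and if it held for none, the approximate-polymorphism probability for $f$ could be at most $(m+1)\cdot O(\delta)$ by a union bound over the $m+1$ uniform inputs, contradicting the hypothesis. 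This is a clean structural alternative: it avoids explicit bookkeeping with the XOR-parity identity, at the cost of an extra union-bound observation, and yields the same conclusion. Both arguments are valid; the paper's is more explicit about \emph{which} affine $\ell$ arises, while yours makes the ``close to approximate implies close to exact'' mechanism transparent and would generalize more readily to other affine-stable settings.
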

\begin{proof}
If $m = 0$ then $g = b$, and so a $\delta$-approximate polymorphism $f$ of $g$ satisfies
\[
 \Pr[f(b,\ldots,b) = b] \geq 1-\delta.
\]
For any $\delta < 1$ this implies that $f(b,\ldots,b) = b$, and so $f$ is a polymorphism of $g$.

If $m = 1$ then $g = x_j$ or $g = \lnot x_j$. If $g = x_j$ then every function is a polymorphism of $g$. If $g = \lnot x_j$ then a $\delta$-approximate polymorphism $f$ of $g$ satisfies
\[
 \Pr[f(\lnot x_1,\ldots,\lnot x_n) = \lnot f(x_1,\ldots,x_n)] \geq 1-\delta.
\]
In particular, if we define $F\colon \{0,1\}^n \to \{0,1\}$ by
\[
 F(x) =
 \begin{cases}
 f(x) & \text{if } x_j = 0, \\
 \lnot f(\lnot x) & \text{if } x_j = 1,
 \end{cases}
\]
then $F$ is odd and
\[
 \Pr[f \neq F] =
 \frac{1}{2} \Pr[f(x) \neq F(x) \mid x_j = 1] =
 \frac{1}{2} \Pr[f(x) \neq \lnot f(\lnot x) \mid x_j = 1] \leq \delta.
\]
Therefore, choosing $\delta = \epsilon$, we get that $f$ is $\epsilon$-close to a polymorphism of $g$.

Finally, suppose that $m \geq 2$. If $f$ is a $\delta$-approximate polymorphism of $g$ then the function $F\colon \{-1,1\}^n \to \{-1,1\}$, defined by $F((-1)^{x_1},\ldots,(-1)^{x_n}) = (-1)^{f(x_1,\ldots,x_n)}$, satisfies
\[
 \E_{z \in \{-1,1\}^{nm}}\left[
 F\left((-1)^b \prod_{j=1}^m z_{1j}, \ldots, (-1)^b \prod_{j=1}^m z_{nj}\right) \cdot
 (-1)^b \prod_{j=1}^m F(z_{1j},\ldots,z_{nj})
 \right] \geq 1 - 2\delta.
\]
Expanding the left-hand side according to the Fourier expansion of $F$, we obtain
\[
 \sum_{S_0,\ldots,S_m \subseteq [n]} \hat{F}(S_0) \cdots \hat{F}(S_m) (-1)^{b(|S_0|+1)}
 \prod_{j=1}^m 
 \E\left[\prod_{i \in S_0 \Delta S_j} z_{ij} \right] =
 \sum_{S \subseteq [n]} (-1)^{b(|S|+1)} \hat{F}(S)^{m+1}.
\]
Since $\sum_S \hat{F}(S)^2 = 1$, we conclude that there must exist $S \subseteq [n]$ such that $|\hat{F}(S)|^{m-1} \geq 1 - 2\delta$, which implies that $|\hat{F}(S)| \geq 1 - O(\delta)$, and so $f$ is $O(\delta)$-close to $\bigoplus_{i \in S} x_i \oplus a$, for some $a \in \{0,1\}$. We choose $\delta$ so the closeness guarantee becomes~$\epsilon$.

To complete the proof, we need to show that $b^{\oplus (|S|-1)} = a^{\oplus (m-1)}$. To this end, note that with probability $1-O(\epsilon)$ over $Z$,
\[
 a \oplus \bigoplus_{i \in S} \left( \bigoplus_{j=1}^m Z_{ij} \oplus b\right) =
 b \oplus \bigoplus_{j=1}^m \left( \bigoplus_{i \in S} Z_{ij} \oplus a \right).
\]
Considering any $Z$ for which this holds, we immediately obtain $a \oplus b^{\oplus |S|} = b \oplus a^{\oplus m}$, and so $b^{\oplus (|S|-1)} = a^{\oplus (m-1)}$.
\end{proof}

\subsection{Closeness to junta}
\label{sec:approximate-polymorphisms-junta}

In this section we complete the proof of \Cref{thm:main} by proving \Cref{lem:f-junta}. For future use, we prove a more general version; the reader can mentally take $f_1 = \cdots = f_m = f$.

\begin{lemma} \label{lem:f-junta-general}
Fix a function $g\colon \{0,1\}^m \to \{0,1\}$ which depends on all coordinates, and assume moreover that $g$ is not of the form $\bigoplus_{i=1}^m x_i \oplus b$.

For every $\epsilon > 0$ there exist $L \in \mathbb{N}$ and $\eta > 0$ such that if $f_0,\ldots,f_m$ is an $\eta$-approximate multi-polymorphism of $g$ then $f_j$ is $\epsilon$-close to an $L$-junta for some $j \in [m]$.
\end{lemma}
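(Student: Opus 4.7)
The plan is to extend the single-function argument sketched in the introduction — Jones' regularity lemma together with a resampling contradiction — to the multi-polymorphism setting.

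For the setup, I would invoke \Cref{lem:g-alpha-beta}: after reordering coordinates, there is $\alpha \in \{0,1\}^m$ with $g(\alpha_1,\ldots,\alpha_{m-1},0) = g(\alpha_1,\ldots,\alpha_{m-1},1)$, and since $g$ depends on coordinate $m$ there is also $\beta \in \{0,1\}^m$ with $g(\beta_1,\ldots,\beta_{m-1},0) \neq g(\beta_1,\ldots,\beta_{m-1},1)$. I would then fix $(d,\tau)$ via \Cref{thm:it-aint-over} at noise rate $p = 2^{-(m-1)}$, apply \Cref{thm:jones} individually to each $f_j$ for $j=0,1,\ldots,m$, and take the union of the resulting sets to obtain a common $T \subseteq [n]$ of bounded size $L$ such that with high probability over uniform $z \in \{0,1\}^T$, every restriction $(f_j)_{T \to z}$ is $(d,\tau)$-regular.

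For a typical good $z$, I would run the resampling contradiction on $\tilde f_j := (f_j)_{T \to z}$: sample uniform $Z$ and form $W$ from $Z$ by resampling column $m$ of every row $i$ whose first $m-1$ coordinates equal $(\alpha_1,\ldots,\alpha_{m-1})$. By insensitivity at $\alpha$, $g(\row_i(Z)) = g(\row_i(W))$ for every $i$, so $\tilde f_0 \circ g^n$ agrees on $Z,W$ always, while the approximate multi-polymorphism forces $g \circ (\tilde f_1,\ldots,\tilde f_m)$ to agree on $Z,W$ with probability $\geq 1 - 2\eta_z$, where the restricted error rate $\eta_z = O(\eta)$ by Markov. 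Since columns $1,\ldots,m-1$ of $Z,W$ coincide, the two sides can only differ via the last column, and the $g$-value is actually sensitive there only on the event $E = \{\tilde f_j(\col_j(Z)) = \beta_j \text{ for } j < m\}$. Assuming for contradiction that every $\tilde f_j$ ($j \in [m]$) is $\epsilon'$-far from every constant, column independence gives $\Pr[E] \geq (\epsilon')^{m-1}$, and \Cref{thm:it-aint-over} applied to the regular $\tilde f_m$ under the $p$-random restriction implicit in the resampling gives $\tilde f_m(\col_m(Z)) \neq \tilde f_m(\col_m(W))$ with significant probability, independent of $E$; choosing $\eta$ small enough produces the contradiction. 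Hence for every typical good $z$, some $\tilde f_{j(z)}$ with $j(z) \in [m]$ is $\epsilon'$-close to a constant.

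The main obstacle is the aggregation step: converting ``for each typical $z$, some $j(z) \in [m]$ gives close-to-constant restriction'' into ``for some fixed $j^* \in [m]$, $f_{j^*}$ is $\epsilon$-close to an $L$-junta''. A na\"ive pigeonhole only produces an index handling a $1/m$ fraction of slices, yielding a junta approximation of quality roughly $1 - 1/m$, far weaker than required. I would address this by iterating Jones together with the basic contradiction: fix the index produced by pigeonhole, absorb the slices on which its restriction is close to constant into a growing junta approximation for that index, re-restrict the complementary slices via Jones, and rerun the basic contradiction with tightened parameters — iterating until the residual mass drops below $\epsilon$. Keeping the total $|T|$ and the required $\eta$ bounded by functions of $\epsilon, m, g$ alone, and ensuring that a single fixed index $j^*$ ultimately emerges rather than cycling through different indices across rounds, are the principal quantitative difficulties.
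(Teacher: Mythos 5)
Your proposal captures the right tools (Jones, the ``It Ain't Over'' theorem, the $\alpha$/$\beta$ resampling trick), but the obstacle you identify as ``the main difficulty'' — aggregating the $z$-dependent index $j(z)$ into a single $j^*$ — is a genuine gap, and the iterative scheme you sketch is not a correct fix (nothing forces the pigeonholed index to persist across rounds, and you acknowledge this). The paper avoids this aggregation problem entirely by a different allocation of the restriction: Jones is applied to $f_m$ \emph{only}, and the set $T$ restricts $f_m$ alone. The functions $f_1,\ldots,f_{m-1}$ are never restricted; in the ``second stage success'' event one requires $f_j(\col_j(Z)) = \beta_j$ for the \emph{unrestricted} $f_j$, so the assumption ``$f_j$ is $\epsilon$-far from constant for all $j \in [m-1]$'' is a single global hypothesis that does not depend on $z$. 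The contradiction then yields a clean two-case dichotomy: either some unrestricted $f_j$ ($j \in [m-1]$) is $\epsilon$-close to a constant (a $0$-junta, so the lemma holds trivially), or $\Pr_z[\V{(f_m)_{T\to z}} \geq \epsilon/4] \leq \epsilon/4$, from which $f_m$ is $\epsilon$-close to a $T$-junta by rounding each restriction to its majority value. No pigeonhole, no iteration, no union of $m+1$ Jones sets is needed (and $f_0$ plays no role other than being evaluated identically on $Z$ and $W$).

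The reason your version of the argument loses this cleanliness is that by restricting $\tilde f_j = (f_j)_{T\to z}$ for $j < m$, your ``far from constant'' hypothesis becomes local — it must hold for the restricted $\tilde f_j$'s, and whether it does depends on $z$. Once the hypothesis is $z$-dependent, so is the conclusion, and you are stuck with a per-slice, per-index statement. Concretely, the fix is: in the event $E$, test $f_j(\col_j(Z)) = \beta_j$ for the original $f_j$ (which sees all of $\col_j(Z)$, including the $T$-rows), and reserve the random restriction (Jones plus ``It Ain't Over'') solely for $f_m$, whose remaining variance after conditioning is what produces the lower bound on $\Pr[(g\circ(f_1,\ldots,f_m))(Z) \neq (g\circ(f_1,\ldots,f_m))(W)]$. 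With that change the probability calculation you wrote down goes through essentially verbatim and yields the stated lemma directly.
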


According to \Cref{lem:g-alpha-beta}, we can reorder the coordinates of $g$ so that some assignment $\alpha \in \{0,1\}^{m-1}$ satisfies
\[
 g(\alpha_1,\ldots,\alpha_{m-1},0) = g(\alpha_1,\ldots,\alpha_{m-1},1).
\]
Since $\h$ depends on all coordinates, we can find an assignment $\beta \in \{0,1\}^{m-1}$ and $b \in \{0,1\}$ so that
\[
 g(\beta_1,\ldots,\beta_{m-1},x_m) = x_m \oplus b.
\]

We will construct the junta by appealing to Jones' regularity lemma (\Cref{thm:jones}). First, we need to determine appropriate parameters:
\begin{itemize}
    \item Let $d,\tau,\delta$ be the parameters promised by \Cref{thm:it-aint-over} for $p = 2^{-(m-1)}$, $\epsilon_1 = \epsilon/4$, and $\epsilon_2 = \epsilon^{m-1}/2$.
    \item Let $L$ be the parameter promised by \Cref{thm:jones} for $d,\tau$ and $\epsilon/5$.
\end{itemize}
We will prove the lemma for $\eta = (\epsilon/20)^m\delta$.

According to \Cref{thm:jones}, we can find a set $T \subseteq [n]$ of at most $L$ coordinates such that
\begin{equation} \label{eq:jones-promise-ap}
 \Pr_{z \in \{0,1\}^T}[\text{$(f_m)_{T \to z}$ is  $(d,\tau)$-regular}] \geq 1-\epsilon/5.
\end{equation}
We will show that this implies that either $f_j$ is $\epsilon$-close to constant for some $j \in [m-1]$, or
\begin{equation} \label{eq:jones-conclusion-ap}
\Pr_{z \in \{0,1\}^T} [\V{(f_m)_{T \to z}} \geq \epsilon/4] \leq \epsilon/4.
\end{equation}
This will complete the proof of \Cref{lem:f-junta-general}.
Indeed, if $f_j$ is $\epsilon$-close to constant for some $j \in [m-1]$ then the lemma clearly hold. If \eqref{eq:jones-conclusion-ap} holds then define a $T$-junta $F$ by letting $F(x)$ be the majority value of $(f_m)_{T \to x|_T}$. We also think of $F$ as a function $F_T\colon \{0,1\}^T \to \{0,1\}$. Since
\[
 \V{(f_m)_{T \to z}} = \Pr[(f_m)_{T \to z} = F_T(z)] \Pr[(f_m)_{T \to z} \neq F_T(z)] \geq \tfrac{1}{2} \Pr[(f_m)_{T \to z} \neq F_T(z)],
\]
we see that
\[
 \Pr[f_m \neq F] = \E_{z \in \{0,1\}^T} \bigl[ \Pr[(f_m)_{T \to z} \neq F_T(z)] \bigr] \leq
 2\E_{z \in \{0,1\}^T}[\V{(f_m)_{T \to z}}] \stackrel{\eqref{eq:jones-conclusion-ap}}\leq
 2 \left(\frac{\epsilon}{4} \cdot 1 + \left(1 - \frac{\epsilon}{4}\right) \cdot \frac{\epsilon}{4}\right) \leq \epsilon.
\]

In the rest of this section, we show that \eqref{eq:jones-promise-ap} implies that either $f_j$ is $\epsilon$-close to constant for some $j \in [m-1]$, or \eqref{eq:jones-conclusion-ap} holds. The proof is by contradiction: we assume that \eqref{eq:jones-promise-ap} holds and $f_j$ is $\epsilon$-far from constant for all $j \in [m-1]$, and \eqref{eq:jones-conclusion-ap} fails, and reach a contradiction.

\medskip

We construct a pair $(Z,W)$ of coupled $n \times m$ zero-one matrices in the following way:
\begin{itemize}
    \item Choose $Z$ uniformly at random.
    \item Let $R \subseteq \overline{T}$ consist of those rows $i \notin T$ such that $\row_i(Z)|_{[m-1]} = \alpha$.
    \item Define $W$ to be the matrix obtained from $Z$ by resampling $W_{im}$ for $i \in R$.
\end{itemize}

By construction,
\[
 (f_0 \circ g^n)(Z) = (f_0 \circ g^n)(W),
\]
and so
\[
 \Pr_{Z,W}[(\h \circ (f_1,\ldots,f_m))(Z) \neq (\h \circ (f_1,\ldots,f_m))(W)] \leq 2\eta.
\]
In the rest of this section, we give a lower bound on the probability that $(\h \circ (f_1,\ldots,f_m))(Z) \neq (\h \circ (f_1,\ldots,f_m))(W)$ which is larger than $2\eta$, thus reaching a contradiction.

We will sample $Z,W$ in three stages:

\begin{enumerate}
    \item First, we sample the $m$ column of the rows in $T$.
    \item Second, we sample columns $1,\ldots,m-1$, which defines $R$, and column $m$ of the rows in $\overline{T} \setminus R$.
    \item Third, we sample column $m$ of the rows in $R$. This is the only part in which $Z,W$ differ.
\end{enumerate}

We say that the first stage is successful if $(f_m)_{T \to \col_m(Z)|_T}$ has variance at least $\epsilon/4$ and is $(d,\tau)$-regular. Our assumptions implies that this holds with probability at least
\[
 \epsilon/4 - \epsilon/5 = \epsilon/20.
\]

Suppose now that the first stage is successful.
We say that the second stage is successful if (i) $f_j(\col_j(Z)) = \beta_j$ for all $j \in [m-1]$, and (ii) $(f_m)_{\overline{R} \to \col_m(Z)|_{\overline{R}}}$ has variance at least $\delta$. Property~(i) holds with probability at least $\epsilon^{m-1}$, since by assumption, $f_j$ is $\epsilon$-far from constant for all $j \in [m-1]$. Since each row in $\overline{T}$ belongs to $R$ with probability $2^{-(m-1)}$, by \Cref{thm:it-aint-over}, property~(ii) fails with probability at most $\epsilon^{m-1}/2$. Hence the first two stages are successful with probability at least
\[
 (\epsilon/20) \cdot \epsilon^{m-1}/2 > (\epsilon/20)^m.
\]

If the first two stages are successful then
\begin{align*}
 (\h \circ (f_1,\ldots,f_m))(Z) &= \h(\beta_1,\ldots,\beta_{m-1},f_m(\col_m(Z))) = f_m(\col_m(Z)) \oplus b, \\
 (\h \circ (f_1,\ldots,f_m))(W) &= \h(\beta_1,\ldots,\beta_{m-1},f_m(\col_m(W))) = f_m(\col_m(W)) \oplus b,
\end{align*}
and $(f_m)_{\overline{R} \to \col_m(Z)|_{\overline{R}}}$ has variance at least $\delta$. Hence $(\h \circ (f_1,\ldots,f_m))(Z) \neq (\h \circ (f_1,\ldots,f_m))(W)$ with probability at least $2\delta$. Consequently,
\[
 \Pr[(\h \circ (f_1,\ldots,f_m))(Z) \neq (\h \circ (f_1,\ldots,f_m))(W)] > 2(\epsilon/20)^m\delta \geq 2\eta,
\]
and we reach a contradiction.

\section{Approximate multi-polymorphisms}
\label{sec:approximate-polymorphisms-multi}

In this section we prove the following generalization of \Cref{thm:main}.

\begin{theorem} \label{thm:main-multi}
Fix $g\colon \{0,1\}^m \to \{0,1\}$, and let $p = \Pr[g=1]$. For every $\epsilon > 0$ there exists $\delta > 0$ (depending on both $g$ and $\epsilon$) such that if $f_0,\ldots,f_m\colon \{0,1\}^n \to \{0,1\}$ is a $\delta$-approximate mult-polymorphism of $g$ then there exists an exact multi-polymorphism $F_0,\ldots,F_m$ of $g$ such that $f_0$ is $\epsilon$-close to $F_0$ with respect to $\mu_p$, and $f_1,\ldots,f_m$ are $\epsilon$-close to $F_1,\ldots,F_m$ with respect to $\mu_{1/2}$.
\end{theorem}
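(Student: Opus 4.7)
The plan is to mimic the strategy of \Cref{thm:main} with the $m+1$ functions $f_0,f_1,\ldots,f_m$ in place of the single $f$. I would first reduce to the case where $g$ depends on all coordinates by the argument of \Cref{lem:approximate-polymogrphism-all-coords}, and handle the case $g(x)=\bigoplus_j x_j \oplus b$ separately by a direct Fourier-analytic argument generalizing \Cref{lem:approximate-polymorphism-xor}: in $\{-1,1\}$ coordinates, the expected product on both sides of the multi-polymorphism equation expands as $\sum_S \hat{F}_0(S)\prod_{j=1}^m \hat{F}_j(S)(-1)^{b(|S|+1)}$, whose value being $\geq 1-2\delta$ forces every $\hat F_j$ to have a common large Fourier coefficient on the same set $S$, so each $f_j$ is close to the XOR over $S$ (or its negation), with parities satisfying the compatibility condition from \Cref{thm:exact-multi-polymorphisms}.

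For the remaining case---$g$ depends on all coordinates and is not XOR-type---I would first show that \emph{each} $f_j$ is close to a $T$-junta on some common small set $T$. \Cref{lem:f-junta-general}, applied as stated, only produces one such $f_{j^*}$. To obtain a junta for every $f_j$ on which $g$ depends, I would rerun the argument underlying \Cref{lem:f-junta-general}, now targeting each coordinate $j$ in turn: if $g$ admits a non-sensitive certificate at $j$ (some $\alpha$ with $g(\alpha)=g(\alpha^{\oplus j})$), the same coupling-and-restriction argument yields either that $f_j$ is close to a junta or that some other $f_{j'}$ is close to a constant (a $0$-junta, which suffices); if instead $g$ is always sensitive at $j$, so $g(x)=x_j\oplus h(x_{-j})$, the multi-polymorphism equation rearranges to a BLR-style constraint $f_j(\col_j(Z)) = f_0(\ldots) \oplus h(f_1,\ldots,f_{j-1},f_{j+1},\ldots,f_m)(Z)$ that determines $f_j$ in terms of $f_0$ and the junta approximations already produced for the remaining $f_{j'}$. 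Taking $T$ to be the union of the resulting junta supports---still bounded in terms of $\epsilon$ and $g$---leaves each $f_j$ to be $O(\epsilon)$-close to a $T$-junta $F^{(j)}$.

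With all $f_j$'s close to $T$-juntas, I would then choose $\delta \ll 2^{-m|T|}$ and run an averaging argument analogous to \Cref{lem:approximate-polymorphisms-not-xor}: for a typical restriction $W\in\{0,1\}^{\overline{T}\times[m]}$, the restricted tuple $(f_0^W,\ldots,f_m^W)$ on $\{0,1\}^T$ (i) forms an \emph{exact} multi-polymorphism of $g$, and (ii) satisfies $f_j^W \approx F^{(j)}$ for every $j$. Applying \Cref{thm:exact-multi-polymorphisms} classifies $(f_0^W,\ldots,f_m^W)$ into one of its structural cases, and I would extend the resulting formula naturally from $T$ to all of $[n]$ to define the exact multi-polymorphism $(F_0,\ldots,F_m)$. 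A short case-by-case verification---exploiting the junta-closeness---then confirms that $f_j$ is $\epsilon$-close to $F_j$ in the correct measure ($\mu_p$ for $j=0$, matching the marginal distribution of $g(\row_i(Z))$; $\mu_{1/2}$ for $j\ge 1$). The hard part will be the middle step: propagating the junta conclusion from the single $f_{j^*}$ produced by \Cref{lem:f-junta-general} to all of the $f_j$'s simultaneously, and in particular handling the coordinates at which $g$ is affine, where the non-sensitive certificate argument is unavailable and must be replaced by the linearity-test-style rearrangement above.
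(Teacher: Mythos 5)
Your overall scaffolding matches the paper: reduce to $g$ depending on all coordinates, treat $g=\bigoplus_j x_j \oplus b$ via a Fourier argument generalizing \Cref{lem:approximate-polymorphism-xor}, get all $f_j$ close to juntas on a common set $T$, then use a typical restriction plus \Cref{thm:exact-multi-polymorphisms}. The final averaging argument is essentially \Cref{lem:approximate-multi-polymorphisms-not-xor}. But the middle step has a genuine gap, and it is exactly where you flagged the difficulty.

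You propose to obtain a junta for each $f_j$ directly by rerunning the coupling of \Cref{lem:f-junta-general} at each coordinate that admits a non-sensitive certificate, falling back to a BLR-style rearrangement at always-sensitive coordinates. The problem is that the always-sensitive coordinates form a linear block: if $A$ is the set of $j$ with $g(x)\oplus g(x^{\oplus j})\equiv 1$, then $g(x)=\bigoplus_{j\in A}x_j\oplus g'(x_{\overline A})$, and $|A|\geq 2$ is perfectly possible even when $g$ is not XOR/NXOR (e.g.\ $g=x_1\oplus x_2\oplus(x_3\wedge x_4)$). In that case your rearrangement gives a constraint on $\bigoplus_{j\in A}f_j(\col_j(Z))$ involving \emph{all} the as-yet-unresolved $f_{j'}$, $j'\in A$, not a constraint determining one $f_j$ from ``junta approximations already produced.'' The plan is circular. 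You also invoke $f_0$ being close to a junta in that rearrangement, but you never establish it (you jump from one $f_{j^*}$ being a junta to the BLR step without an analogue of \Cref{lem:f-junta-multi-f0}). Finally, ``a $0$-junta, which suffices'' glosses over a real case split: when some $f_j$ is close to constant, one must show those constants form a certificate for $g$ and that $f_0$ is close to a matching constant under $\mu_p$; this is case~(ii) of \Cref{lem:f-junta-multi}, not just a trivial junta.

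The paper avoids the always-sensitive problem entirely by routing through $f_0$: use the single non-sensitive certificate once (\Cref{lem:f-junta-general}) to get one $f_{j^*}$ close to a junta; then couple $Z,W$ by resampling column $m$ outside that junta set, invariance of the $(f_0\circ g^n)$ side plus the fact that $g$ genuinely depends on that coordinate gives a noise operator with $\rho<1$, and \Cref{lem:noise-stable-constant} yields $f_0$ close to a $T$-junta with respect to $\mu_p$ (\Cref{lem:f-junta-multi-f0}); finally, couple $Z,W$ by resampling \emph{all} entries in rows outside $T$, so the $(f_0\circ g^n)$ side is stable, and use a \emph{sensitive} certificate $\beta$ at each coordinate $j$ (which exists because $g$ depends on all coordinates, with no need for a non-sensitive one) to conclude each $f_j$ is close to a $T$-junta (\Cref{lem:f-junta-multi-all}). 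That ordering is what makes the argument close.
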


We will require an analog of \Cref{lem:f-junta}, which we prove in \Cref{sec:approximate-multi-polymorphisms-junta}.

\begin{lemma} \label{lem:f-junta-multi}
Fix a function $g\colon \{0,1\}^m \to \{0,1\}$ which depends on all coordinates, and assume moreover that $g$ is not of the form $\bigoplus_{i=1}^m x_i \oplus b$. Let $p = \E[g]$.

For every $\epsilon > 0$ there exist $L \in \mathbb{N}$ and $\eta > 0$ such that if $f_0,\ldots,f_m$ is an $\eta$-approximate multi-polymorphism of $g$ then one of the following cases holds:
\begin{enumerate}[(i)]
\item There exists a set $T \subseteq [n]$ of at most $L$ coordinates such that $f_0$ is $\epsilon$-close to a $T$-junta with respect to $\mu_p$, and $f_1,\ldots,f_m$ are $\epsilon$-close to $T$-juntas with respect to $\mu_{1/2}$.
\item There exist $J \subseteq [m]$, $a \in \{0,1\}^J$ and $a_0 \in \{0,1\}$ such that $f_0$ is $\epsilon$-close to $a_0$ with respect to $\mu_p$, $f_j$ is $\epsilon$-close to $a_j$ for all $j \in J$, and $g(x) = a_0$ whenever $x_j = a_j$ for all $j \in J$.
\end{enumerate}
\end{lemma}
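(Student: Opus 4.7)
My plan is to mimic the proof of \Cref{lem:f-junta-general}, enhanced in two ways: produce a single common junta set $T$ valid for all of $f_0, f_1, \ldots, f_m$, and explicitly detect case (ii) when it arises. I would apply the multi-measure form of Jones' lemma (\Cref{thm:jones}) to $(f_0, f_1, \ldots, f_m)$ under the measures $(\mu_p, \mu_{1/2}, \ldots, \mu_{1/2})$, with regularity parameters $(d, \tau)$ inherited from \Cref{thm:it-aint-over} as in the proof of \Cref{lem:f-junta-general}; this yields a common set $T \subseteq [n]$ of size at most $L = L(\epsilon, m)$ such that each $(f_j)_{T \to z}$ is $(d, \tau)$-regular except on a set of restrictions of probability at most $\epsilon/(10(m+1))$ in the relevant measure.

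Next I would define $C = \{j \in [m] : f_j \text{ is $\epsilon$-close to a constant } a_j\}$ and ask whether there exist $J \subseteq C$ and $a_0 \in \{0,1\}$ with $g \equiv a_0$ on the sub-cube $\{x : x_J = a_J\}$. If so, I output case (ii): on the event $\{f_j(\col_j(Z)) = a_j \text{ for all } j \in J\}$ (probability $\geq 1 - m\epsilon$), $(g \circ (f_1,\ldots,f_m))(Z) = a_0$, so the polymorphism hypothesis gives $(f_0 \circ g^n)(Z) = a_0$ with probability $\geq 1 - m\epsilon - \eta$; since $g^n(Z) \sim \mu_p^n$, $f_0$ is $(m\epsilon + \eta)$-close to $a_0$ in $\mu_p$, and tuning $\eta$ polynomially small in $\epsilon$ concludes.

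If no such $J$ exists, I would claim case (i) holds, splitting on whether $g$ is ``XOR-like'' in some coordinate $j^\circ$, i.e., admits a representation $g(x) = x_{j^\circ} \oplus G(x_{-j^\circ})$. In the \emph{non-XOR-like} case, every $j^* \in [m] \setminus C$ admits a nonsensitive pattern $\alpha^{(j^*)}$ and a sensitive pattern $\beta^{(j^*)}$ of $g$ at coordinate $j^*$; setting $\alpha^{(j^*)}|_C = \beta^{(j^*)}|_C = a_C$ whenever possible (and otherwise flipping at most a bounded number of coordinates in $C$ at an $\epsilon^{O(m)}$ probability cost, absorbed by $\eta$), the coupling argument from the proof of \Cref{lem:f-junta-general}, with $j^*$ playing the role of $m$, goes through verbatim to force $f_{j^*}$ to be $\epsilon$-close to a $T$-junta. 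Once $f_1, \ldots, f_m$ are all $T$-juntas, the polymorphism equation forces $f_0 \circ g^n$ to depend essentially only on rows of $Z$ in $T$, and since $g^n(Z) \sim \mu_p^n$, this gives that $f_0$ too is $\epsilon$-close to a $T$-junta in $\mu_p$.

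The main obstacle is the XOR-like sub-case, for which I would induct on $m$. Via a column-resampling coupling on $\col_{j^\circ}$, the polymorphism equation yields the linearity-style identity $f_0(A) \oplus f_0(A \oplus D) \approx f_{j^\circ}(\col_{j^\circ}(Z)) \oplus f_{j^\circ}(\col_{j^\circ}(W))$ with $A = g^n(Z)$ and $D$ the resampling difference; a BLR-style Fourier argument then pins $f_0$ and $f_{j^\circ}$ to XORs of a common set $S \subseteq [n]$ modulo constants. Substituting back and using $g^n = \col_{j^\circ} \oplus G^n(Z \setminus \col_{j^\circ})$, the polymorphism equation reduces to an approximate multi-polymorphism of the $(m-1)$-variable function $G$ (not pure XOR since $g$ isn't) with tuple $(\bigoplus_{i \in S}, (f_j)_{j \neq j^\circ})$; the inductive hypothesis then yields a common junta set $T'$ of size $L(\epsilon, m-1)$ for these functions, forcing $|S| \leq L(\epsilon, m-1)$ so that $f_0, f_{j^\circ}$ are also $\epsilon$-close to the junta on $T'$. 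Setting $T$ to the union of $T'$ and the Jones set from the first step completes case (i); the inductive case (ii) for $G$ translates cleanly into case (ii) for the main lemma, since $\bigoplus_{i \in S}$ being close to a constant under a non-trivial product measure forces $|S| = 0$ and hence $f_0, f_{j^\circ}$ to be close to constants.
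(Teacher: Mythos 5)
Your strategy diverges substantially from the paper's, and it has a genuine gap in the ``non-XOR-like'' branch. The paper never tries to show each $f_{j^*}$ for $j^* \notin C$ is a junta directly by a coupling at coordinate $j^*$. Instead it (1)~applies the argument of \Cref{lem:f-junta-general} to get that \emph{one} $f_j$ is a junta (this needs only a single nonsensitive coordinate of $g$ somewhere), (2)~resamples that one column to propagate junta-ness to $f_0$ via noise stability (\Cref{lem:f-junta-multi-f0}), and then (3)~resamples \emph{entire rows} outside $T$ — now possible because $f_0$ is a $T$-junta — to propagate to all $f_j$'s at once, \emph{provided} none of them is close to constant (\Cref{lem:f-junta-multi-all}); the residual case is handled by a separate contradiction argument showing that the near-constant $f_j$'s necessarily form a certificate.

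The concrete gap: when you write ``setting $\alpha^{(j^*)}|_C = \beta^{(j^*)}|_C = a_C$ whenever possible (and otherwise flipping at most a bounded number of coordinates in $C$ at an $\epsilon^{O(m)}$ probability cost, absorbed by $\eta$),'' you are implicitly assuming $\Pr[f_j \ne a_j] = \Omega(\epsilon)$ for $j \in C$. But ``$f_j$ is $\epsilon$-close to $a_j$'' gives only an \emph{upper} bound $\Pr[f_j \ne a_j] \le \epsilon$; in the extreme case $f_j$ is exactly constant and the flip has probability~$0$, not $\epsilon^{O(m)}$. Moreover the no-$J$-certificate hypothesis does \emph{not} guarantee the needed patterns exist with $\alpha|_C = \beta|_C = a_C$. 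Take $g(x_1,x_2,x_3)=(x_1\land x_2)\lor(\lnot x_1\land x_3)$ (the MUX), $C=\{1\}$, $a_1=1$ and $f_1\equiv 1$. Then $g|_{x_1=1}=x_2$ is non-constant (no certificate from $C$), yet there is no nonsensitive pattern at $j^*=2$ with $x_1=1$, and no sensitive pattern at $j^*=3$ with $x_1=1$; every pattern your coupling would need lies on the $\Pr$-zero fibre $x_1=0$. (In fact this situation is vacuous — no approximate multi-polymorphism exists with these constraints — but the paper establishes that vacuity via the certificate-contradiction in the last step of its proof, whereas your argument just breaks.)

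Separately, the XOR-like branch introduces an induction on $m$ and a BLR step that the paper does not need. The BLR reduction is plausible (conditioning on the non-$j^\circ$ columns does give a linearity test for $f_0$ versus $f_{j^\circ}$ over uniform inputs, and the $\ell_1$-tightness of Cauchy--Schwarz then pins both to a common character), but you would need to track how the $O(\sqrt\eta)$ error from BLR interacts with the inductive hypothesis parameters, and to verify that the stripped-down function $G$ still satisfies the ``depends on all coordinates, not XOR'' hypotheses (it does, since $g\ne\pm\prod x_i$ and $g$ depends on all coordinates). The paper sidesteps all of this: once $f_0$ is known to be close to a $T$-junta with respect to $\mu_p$, resampling all rows outside $T$ simultaneously handles every column — including any XOR-like one — in one shot, with no induction and no case split on $g$'s structure.
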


An argument analogous to \Cref{lem:approximate-polymogrphism-all-coords} shows that it suffices to prove \Cref{thm:main-multi} when $g$ depends on all coordinates. The proof of \Cref{lem:approximate-polymorphisms-not-xor} carries through, as we briefly spell out.

\begin{lemma} \label{lem:approximate-multi-polymorphisms-not-xor}
\Cref{thm:main-multi} holds when $g$ depends on all coordinates and is not of the form $\bigoplus_{i=1}^m x_i \oplus b$.
\end{lemma}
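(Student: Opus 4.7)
The plan is to mirror the proof of Lemma \ref{lem:approximate-polymorphisms-not-xor} line by line, substituting Lemma \ref{lem:f-junta-multi} for Lemma \ref{lem:f-junta}. Given $\epsilon > 0$ (assumed small enough), let $L$ and $\eta$ be the constants guaranteed by Lemma \ref{lem:f-junta-multi}, and set $\delta = \min(\epsilon, \eta, 2^{-mL}/(m+2))$. Suppose $f_0, \ldots, f_m$ is a $\delta$-approximate multi-polymorphism of $g$. Lemma \ref{lem:f-junta-multi} puts us in one of two cases. Case~(ii) is immediate: if $f_0$ is $\epsilon$-close to $a_0$ with respect to $\mu_p$, $f_j$ is $\epsilon$-close to $a_j$ with respect to $\mu_{1/2}$ for $j \in J$, and $g$ is constantly $a_0$ on the sub-cube pinned by $a|_J$, then defining $F_0 = a_0$, $F_j = a_j$ for $j \in J$, and $F_j = f_j$ for $j \notin J$ yields an exact multi-polymorphism (the constant / sub-cube branch of Theorem \ref{thm:exact-multi-polymorphisms}) that meets the closeness requirements, since $F_j$ for $j \notin J$ is unconstrained by the characterization.

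In case~(i) we obtain a common set $T \subseteq [n]$ of size at most $L$ and $T$-juntas $G_0, G_1, \ldots, G_m$ with $f_0$ being $\epsilon$-close to $G_0$ in $\mu_p$ and each $f_j$, $j \in [m]$, being $\epsilon$-close to $G_j$ in $\mu_{1/2}$. Sample a uniform matrix $W \in \{0,1\}^{\overline{T} \times [m]}$. The crucial distributional observation is that each column $\col_j(W)$ is uniform on $\{0,1\}^{\overline{T}}$, while the vector $(g(\row_i(W)))_{i \in \overline{T}}$ has i.i.d.\ $\mu_p$ coordinates. By Markov's inequality applied separately to each of the $m+1$ closeness statements and to the approximate-polymorphism error, we can fix $W$ so that simultaneously: (a) for each $j \in [m]$, the restriction $f_j' := (f_j)_{\overline{T} \to \col_j(W)}$ is $O(m\epsilon)$-close to $G_j$ in $\mu_{1/2}$; (b) the restriction $f_0' := (f_0)_{\overline{T} \to (g(\row_i(W)))_{i \in \overline{T}}}$ is $O(m\epsilon)$-close to $G_0$ in $\mu_p$; and (c) $\Pr[f_0' \circ g^{|T|} \neq g \circ (f_1', \ldots, f_m')] \leq (m+2)\delta < 2^{-m|T|}$ on $\{0,1\}^{T \times [m]}$. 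Condition~(c) forces $f_0', \ldots, f_m'$ to be an exact multi-polymorphism of $g$ on $\{0,1\}^T$, and extending them back to $\{0,1\}^n$ as $T$-juntas produces $F_0, \ldots, F_m$, an exact multi-polymorphism of $g$. Combining (a)--(b) with the triangle inequality and the original $\epsilon$-closeness from Lemma \ref{lem:f-junta-multi} shows that $f_j$ is $O(m\epsilon)$-close to $F_j$ in the appropriate measure; shrinking $\epsilon$ at the outset by a constant factor absorbs the $O(m)$.

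The one point requiring care --- and the only place the adaptation from Lemma \ref{lem:approximate-polymorphisms-not-xor} is non-trivial --- is that the closeness of $f_0$ to its approximating junta is stated in $\mu_p$ rather than $\mu_{1/2}$, so a naive uniform restriction of $f_0$'s $\overline{T}$-coordinates would not transfer that closeness. The resolution is the choice of restriction used above: fixing those coordinates to $(g(\row_i(W)))_{i \in \overline{T}}$ makes them $\mu_p$-distributed, which is precisely what is needed for Markov's inequality to yield a $\mu_p$-closeness bound for $f_0'$, and it is simultaneously the natural restriction compatible with the polymorphism equation once we set rows outside $T$ of $Z$ to match $W$. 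Everything else is a routine union bound followed by an appeal to Theorem \ref{thm:exact-multi-polymorphisms}.
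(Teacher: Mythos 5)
Your proof is correct and follows essentially the same route as the paper: the paper's own proof is a terse reference to the argument of Lemma~\ref{lem:approximate-polymorphisms-not-xor} ("as in the proof of... we find restrictions $F_0,\ldots,F_m$"), and you have filled in exactly those details — sampling a uniform $W$, restricting each $f_j$ to $\col_j(W)$ and $f_0$ to $(g(\row_i(W)))_{i\in\overline{T}}$ so the latter becomes $\mu_p$-distributed, then a union bound plus the $2^{-m|T|}$ threshold to force an exact multi-polymorphism. Your explicit treatment of case~(ii), recognizing it already matches the constant/sub-cube branch of Theorem~\ref{thm:exact-multi-polymorphisms}, is also what the paper means by "there is nothing more to prove."
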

\begin{proof}
Given $\epsilon > 0$, let $L \in \mathbb{N}$ and $\eta > 0$ be the constants promised by \Cref{lem:f-junta-multi}. We choose $\delta = \min(\epsilon, \eta, 2^{-mL}/3)$.

Apply \Cref{lem:f-junta-multi} to the $\delta$-approximate multi-polymorphism $f_0,\ldots,f_m$ of $g$. In case~(ii), there is nothing more to prove, so assume that case~(i) holds.

As in the proof of \Cref{lem:approximate-polymorphisms-not-xor}, we find restrictions $F_0,\ldots,F_m\colon \{0,1\}^T \to \{0,1\}$ which form a multi-polymorphism of $g$, and such that $F_j$ is $\epsilon$-close to $f_j$. This completes the proof in this case.
\end{proof}

It remains to settle the case where $g$ is an XOR or an NXOR, in which case we need to generalize slightly the argument of \Cref{lem:approximate-polymorphism-xor}.

\begin{lemma} \label{lem:approximate-multi-polymorphisms-xor}
\Cref{thm:main-multi} holds when $g(x) = \bigoplus_{i = 1}^m x_i \oplus b$.
\end{lemma}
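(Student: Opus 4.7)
The plan is to adapt the Fourier-analytic argument from \Cref{lem:approximate-polymorphism-xor} to the multi-function setting. I would first dispatch the degenerate cases directly: when $m=0$, $g\equiv b$ and the condition $\Pr[f_0(b,\ldots,b)=b]\geq 1-\delta$ forces $f_0(b,\ldots,b)=b$ whenever $\delta<1$, so $f_0$ itself is an exact multi-polymorphism. When $m=1$ and $g(x)=x\oplus b$, defining $F_1(x):=f_0(x\oplus b^n)\oplus b$ makes $(f_0,F_1)$ an exact multi-polymorphism, and the approximate hypothesis rewrites directly as $\Pr[f_1\neq F_1]\leq\delta$, giving the desired $\epsilon$-closeness once $\delta\leq\epsilon$.

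For the main case $m\geq 2$, pass to $\pm 1$-valued encodings $F_j(y)=(-1)^{f_j(x)}$ with $y_i=(-1)^{x_i}$ and expand the approximate multi-polymorphism identity as in \Cref{lem:approximate-polymorphism-xor}. The cross terms collapse in exactly the same way, yielding
\[
 \sum_{S\subseteq[n]} (-1)^{b(|S|+1)}\,\hat F_0(S)\prod_{j=1}^m \hat F_j(S) \;\geq\; 1-2\delta.
\]
The heart of the argument is to isolate a single dominant Fourier set $S^*$ shared by all $F_0,\ldots,F_m$. Applying Cauchy--Schwarz with $\hat F_0(S)$ on one side and $\prod_{j=1}^m\hat F_j(S)$ on the other, together with $\sum_S\hat F_1(S)^2=1$, yields
\[
 (1-2\delta)^2 \;\leq\; \sum_S \prod_{j=1}^m \hat F_j(S)^2 \;\leq\; \max_S \prod_{j=2}^m \hat F_j(S)^2,
\]
so at some $S^*$ every $|\hat F_j(S^*)|\geq 1-2\delta$ for $j=2,\ldots,m$. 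Repeating the argument with $\hat F_m(S)$ pulled out in place of $\hat F_1(S)$ produces a set $S^{**}$ with $|\hat F_j(S^{**})|\geq 1-2\delta$ for $j=1,\ldots,m-1$; since a $\{-1,1\}$-valued function has at most one Fourier coefficient of magnitude exceeding $1/\sqrt2$, $S^*=S^{**}$, and hence $|\hat F_j(S^*)|\geq 1-2\delta$ for every $j\in[m]$. A second Cauchy--Schwarz bound on the tail, using $\sum_{S\neq S^*}\hat F_1(S)^2\leq 4\delta$, gives $\sum_{S\neq S^*}|\hat F_0(S)|\prod_j|\hat F_j(S)|\leq\sum_{S\neq S^*}|\hat F_0(S)\hat F_1(S)|\leq 2\sqrt\delta$, which forces $|\hat F_0(S^*)|\geq 1-O(\sqrt\delta)$. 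Consequently each $f_j$ is $O(\sqrt\delta)$-close to a function of the form $F_j(x)=\bigoplus_{i\in S^*}x_i\oplus a_j$ for some $a_j\in\{0,1\}$.

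It remains to verify the parity identity $a_0\oplus b^{\oplus(|S^*|-1)}=\bigoplus_{j=1}^m a_j$ that makes $(F_0,\ldots,F_m)$ an \emph{exact} multi-polymorphism (a direct expansion of both sides of the defining equation on the $F_j$'s produces exactly this condition). If it failed, $F_0\circ g^n$ and $g\circ(F_1,\ldots,F_m)$ would disagree on \emph{every} $Z$; but a union bound using the $O(\sqrt\delta)$ column-closeness $\Pr_{\mu_{1/2}}[f_j\neq F_j]$ for $j\in[m]$, together with the fact that $(g(\row_1Z),\ldots,g(\row_nZ))$ is uniform on $\{0,1\}^n$ (so closeness of $f_0$ to $F_0$ under $\mu_{1/2}$ transfers through $g^n$), would force $\Pr[f_0\circ g^n=g\circ(f_1,\ldots,f_m)]$ to be bounded by $O(m\sqrt\delta)$, contradicting the $\delta$-approximate hypothesis once $\delta$ is small enough. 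Choosing $\delta$ small enough that both $O(\sqrt\delta)\leq\epsilon$ and the above contradiction is sharp completes the proof. The only delicate step is the two-stage Cauchy--Schwarz that pins down the common $S^*$ and then controls $\hat F_0(S^*)$; everything else is a union bound or case check.
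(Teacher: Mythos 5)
Your handling of $m=0$, $m=1$, and the final parity consistency check is fine, and you start from the same core inequality $\sum_S(-1)^{b(|S|+1)}\prod_{j=0}^m\hat F_j(S)\geq 1-2\delta$ as the paper. But the step that locates a single dominant $S^*$ is different from the paper's and has a genuine gap at $m=2$. Your first Cauchy--Schwarz pass pins down a set $S^*$ with $|\hat F_j(S^*)|\geq 1-2\delta$ for $j\in\{2,\ldots,m\}$, and the symmetric second pass pins down $S^{**}$ with $|\hat F_j(S^{**})|\geq 1-2\delta$ for $j\in\{1,\ldots,m-1\}$. The conclusion $S^*=S^{**}$ relies on some \emph{single} function $F_j$ having a near-maximal Fourier coefficient at both sets, which requires the index ranges $\{2,\ldots,m\}$ and $\{1,\ldots,m-1\}$ to intersect — so $m\geq 3$. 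At $m=2$ the ranges are $\{2\}$ and $\{1\}$, disjoint, and nothing in your argument forces the two sets to coincide. This is repairable (e.g.\ once $\hat F_2(S^*)^2\geq(1-2\delta)^2$, split $\sum_S\hat F_1(S)^2\hat F_2(S)^2 = \hat F_1(S^*)^2\hat F_2(S^*)^2 + \sum_{S\neq S^*}\hat F_1(S)^2\hat F_2(S)^2$ and bound the tail by $\sum_{S\neq S^*}\hat F_2(S)^2 = O(\delta)$, which forces $\hat F_1(S^*)^2\geq 1-O(\delta)$), but as written your proof does not cover $m=2$, which is the linearity-testing base case.

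The paper avoids this entirely by exploiting the asymmetric role of $F_0$: it bounds the left-hand side by $\max_S|\hat F_0(S)|\cdot\sum_S\prod_{j=1}^m|\hat F_j(S)|$ and applies the generalized H\"older inequality with all $m$ exponents equal to $m$, giving $\sum_S\prod_{j=1}^m|\hat F_j(S)|\leq\prod_{j=1}^m\bigl(\sum_S|\hat F_j(S)|^m\bigr)^{1/m}\leq\prod_{j=1}^m\bigl(\sum_S|\hat F_j(S)|^2\bigr)^{1/m}=1$, using $|\hat F_j(S)|\leq 1$ and $m\geq 2$. This pins down $\hat F_0$ in one stroke, uniformly for all $m\geq 2$, and the other $\hat F_j$'s then follow by a tail estimate much like the one you wrote. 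It is worth noticing why this works cleanly: $\hat F_0(S)$ appears exactly once in each product while $\hat F_1,\ldots,\hat F_m$ are interchangeable, so isolating $\hat F_0$ by $\ell^\infty$ and pushing H\"older on the symmetric block is the natural move; your two-stage approach tries to isolate one of the symmetric factors instead, and needs extra overlap to recombine the two passes.
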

\begin{proof}
The proofs of the cases $m = 0$ and $m = 1$ are very similar to the corresponding proofs in \Cref{lem:approximate-polymorphism-xor}, so assume that $m \geq 2$. We will prove the lemma for $\delta = c\epsilon^2$, where $c>0$ is some constant.

Define functions $F_0,\ldots,F_m\colon \{-1,1\}^n \to \{-1,1\}$ by $F_j((-1)^{x_1},\ldots,(-1)^{x_n}) = (-1)^{f_j(x_1,\ldots,x_n)}$. A calculation along the lines of \Cref{lem:approximate-polymorphism-xor} shows that
\[
 \sum_{S \subseteq [n]} (-1)^{b(|S|+1)} \prod_{j=0}^m \hat{F}_j(S) \geq 1 - 2\delta.
\]
We can bound the left-hand side in absolute value by
\[
 \max_{S \subseteq [n]} |\hat{F}_0(S)| \sum_{S \subseteq [n]} \prod_{j=1}^m |\hat{F}_j(S)|.
\]
The generalized H\"older inequality shows that
\[
 \sum_{S \subseteq [n]} \prod_{j=1}^m |\hat{F}_j(S)| \leq
 \sqrt[m]{\prod_{j=1}^m \sum_{S \subseteq [n]} |\hat{F}_j(S)|^m} \leq
 \sqrt[m]{\prod_{j=1}^m \sum_{S \subseteq [n]} |\hat{F}_j(S)|^2} \leq 1,
\]
and so $|\hat{F}_0(S)| \geq 1-2\delta$ for some $S \subseteq [n]$. Parseval's identity shows that $|\hat{F}_0(T)|^2 = O(\delta)$ for all $T \neq S$, and so
\[
 \sum_{\substack{T \subseteq [n] \\ T \neq S}} \prod_{j=0}^m |\hat{F}_j(T)| \leq O(\sqrt{\delta}) \cdot \sum_{T \subseteq [n]} \prod_{j=1}^m |\hat{F}_j(T)| = O(\sqrt{\delta}),
\]
implying that
\[
 (-1)^{b(|S|+1)} \prod_{j=0}^m \hat{F}_j(S) \geq 1 - O(\sqrt{\delta}).
\]
In particular, $|\hat{F}_j(S)| \geq 1 - O(\sqrt{\delta})$ for all $j \in \{0,\ldots,m\}$, and so each $f_j$ is $O(\sqrt{\delta})$-close to $\bigoplus_{i \in S} x_i \oplus a_j$. We complete the proof by showing that $a_0 \oplus b^{\oplus (|S|-1)} = \bigoplus_{j=1}^m a_j$, using the same argument that concludes \Cref{lem:approximate-polymorphism-xor}.
\end{proof}

\subsection{Closeness to junta}
\label{sec:approximate-multi-polymorphisms-junta}

In this section we complete the proof of \Cref{thm:main-multi} by proving \Cref{lem:f-junta-multi}.

We break up the proof into four parts:
\begin{enumerate}
    \item \Cref{lem:f-junta-general} shows that \emph{one} of $f_1,\ldots,f_m$ is close to a junta.
    \item This implies that $f_0$ is close to a junta (with respect to $\mu_p$).
    \item We conclude that \emph{all} of $f_1,\ldots,f_m$ are close to juntas, unless some of $f_1,\ldots,f_m$ are close to constants.
    \item In the latter case, we show that $f_0$ is close to a constant $a_0$, and those of $f_1,\ldots,f_m$ which are close to constants constitute an $a_0$-certificate for $g$.
\end{enumerate}

We start by showing that $f_0$ is close to a junta.

\begin{lemma} \label{lem:f-junta-multi-f0}
Fix a function $g\colon \{0,1\}^m \to \{0,1\}$ which depends on all coordinates, and assume moreover that $g$ is not of the form $\bigoplus_{i=1}^m x_i \oplus b$. Let $p = \E[g]$.

For every $\epsilon > 0$ there exist $L \in \mathbb{N}$ and $\eta > 0$ such that if $f_0,\ldots,f_m$ is an $\eta$-approximate multi-polymorphism of $g$ then $f_0$ is $\epsilon$-close to an $L$-junta with respect to $\mu_p$.
\end{lemma}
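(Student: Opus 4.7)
The plan is to combine \Cref{lem:f-junta-general}, which already yields a junta approximation for some $f_{j^*}$ with $j^*\in[m]$, with Jones' regularity lemma applied to $f_0$ under $\mu_p$ and an ``It Ain't Over Till It's Over'' contradiction argument in the spirit of \Cref{sec:approximate-polymorphisms-junta}. The novelty compared to that proof is the direction of transport: there, one uses a coupling that forces the left-hand side of the polymorphism equation to agree on two matrices and then derives a variance upper bound for $f_{j^*}$; here, I instead use the junta approximation of $f_{j^*}$ to force the right-hand side to agree on two coupled matrices, thereby deriving a variance bound for restrictions of $f_0$.

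Concretely, I first apply \Cref{lem:f-junta-general} with a small parameter $\epsilon'$ (to be fixed later) to find $j^*\in[m]$ and a set $T_1$ of size at most $L_1$ such that $f_{j^*}$ is $\epsilon'$-close to a $T_1$-junta $F_{j^*}$. I then apply \Cref{thm:jones} to $f_0$ under $\mu_p$ (in a variant of Jones' iterative construction that starts from $T_1$ instead of the empty set) with regularity parameters $(d,\tau)$ and failure probability $\epsilon/5$, obtaining a set $T\supseteq T_1$ of size at most $L:=L_0+L_1$ such that $(f_0)_{T\to z}$ is $(d,\tau)$-regular with probability at least $1-\epsilon/5$ over $z\sim \mu_p^T$. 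I then define the candidate $T$-junta $F_0(x)$ to be the $\mu_p^{\overline T}$-majority value of $(f_0)_{T\to x|_T}$, and it suffices, as in \Cref{sec:approximate-polymorphisms-junta}, to show that $\V{(f_0)_{T\to z}}\le \epsilon/4$ with probability at least $1-\epsilon/4$ over $z\sim \mu_p^T$.

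The main obstacle is establishing this variance bound, which I do by contradiction. Suppose $\V{(f_0)_{T\to z}}\ge \epsilon/4$ on a $z$-event of probability at least $\epsilon/4$; by union bound, on a $z$-event of probability at least $\epsilon/20$ the restriction $(f_0)_{T\to z}$ is simultaneously $(d,\tau)$-regular and of variance at least $\epsilon/4$. I couple two matrices $(Z,Z')$ as follows: draw $Z$ uniformly, set $\col_j(Z')=\col_j(Z)$ for all $j\neq j^*$, keep $Z'_{ij^*}=Z_{ij^*}$ for $i\in T$, and resample $Z'_{ij^*}$ independently and uniformly for $i\notin T$. Since $f_j$ is unchanged for $j\neq j^*$ and $f_{j^*}$ is $\epsilon'$-close to the $T_1$-junta $F_{j^*}$ (hence also a $T$-junta, as $T\supseteq T_1$), the two right-hand sides $(g\circ(f_1,\ldots,f_m))(Z)$ and $(g\circ(f_1,\ldots,f_m))(Z')$ coincide with probability at least $1-2\epsilon'$, and the approximate polymorphism hypothesis then gives $f_0(u(Z))=f_0(u(Z'))$ with probability at least $1-2\eta-2\epsilon'$, where $u_i(Z):=g(\row_i(Z))$.

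To derive the contradiction I analyze the joint law of $(u(Z),u(Z'))$: the two vectors agree on $T$, while for $i\notin T$ the pair $(u_i(Z),u_i(Z'))$ has $\mu_p$ marginals and, because $g$ depends on coordinate $j^*$ (as it depends on all coordinates), decomposes as ``with probability $1-q$ output equal $\mu_p$ samples, and with probability $q$ output two independent $\mu_p$ samples'' for some absolute constant $q=q(g)>0$. Consequently, conditional on $u(Z)|_T=z$, the pair $(u(Z)|_{\overline T},u(Z')|_{\overline T})$ arises precisely from a $q$-random restriction of $\mu_p^{\overline T}$ followed by independent resampling on the live coordinates. Applying \Cref{thm:it-aint-over} to $(f_0)_{T\to z}$ with restriction rate $q$ yields that, when $(f_0)_{T\to z}$ is $(d,\tau)$-regular and of variance at least $\epsilon/4$, the further $q$-restriction still has variance at least some $\delta=\delta(q,\epsilon)>0$ with probability at least $1/2$, so $\Pr[f_0(u(Z))\neq f_0(u(Z'))\mid u(Z)|_T=z]\ge \delta$. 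Integrating over the good $z$-event yields
\[
 \Pr\bigl[f_0(u(Z))\neq f_0(u(Z'))\bigr]\ge \frac{\epsilon\delta}{20},
\]
which contradicts the earlier upper bound $2\eta+2\epsilon'$ as soon as $\eta$ and $\epsilon'$ are chosen so that $2\eta+2\epsilon'<\epsilon\delta/20$, completing the plan.
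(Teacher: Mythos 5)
Your proposal is correct in spirit but takes a considerably heavier route than the paper. The paper's proof also begins by applying \Cref{lem:f-junta-general} to obtain one $f_{j^*}$ close to a $T$-junta, and also uses the coupling $(Z,W)$ that resamples column $j^*$ outside $T$ so that the right-hand side agrees on both matrices. But at that point the paper diverges: instead of invoking Jones' regularity lemma on $f_0$ and ``It Ain't Over Till It's Over,'' it simply observes that conditional on $u(Z)|_T = x$, the pair $(u(Z)|_{\overline T}, u(W)|_{\overline T})$ is distributed as $N^{\mu_p}_\rho$ for some $\rho < 1$ (precisely your decomposition, phrased as a noise distribution), and therefore the noise sensitivity $\NS_\rho^{\mu_p}((f_0)_{T\to x})$ is small. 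It then applies the elementary \Cref{lem:noise-stable-constant} (a one-line Fourier-weight estimate that holds directly over $\mu_p$) to conclude that $(f_0)_{T\to x}$ is close to constant; no regularity of $f_0$ is needed at all. This sidesteps two technical issues your route creates: (i) \Cref{thm:it-aint-over} as stated in the paper is for the uniform measure, and you would need to invoke (or prove) its $\mu_p$ analogue; (ii) your ``start Jones' construction from $T_1$'' is a genuine modification of \Cref{thm:jones}, which is correct (the potential argument in \Cref{apx:jones} can begin from any seed set) but would need to be spelled out. The paper's approach buys you a shorter, cleaner argument with no regularity step for $f_0$; yours is a uniform application of the machinery from \Cref{sec:approximate-polymorphisms-junta} at the cost of stronger lemmas. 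If you keep your route, state explicitly that \Cref{thm:it-aint-over} is being used in its $\mu_p$ form, and either reprove Jones' lemma from a seed set or take $T$ to be the union of $T_1$ with a fresh Jones set for $f_0$ while accounting for the fact that restrictions in $T_1$ could spoil regularity — the paper avoids this entire issue.
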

\begin{proof}
Given $\epsilon > 0$, let $L,\eta_0$ be the parameters promised by \Cref{lem:f-junta-general}, and let $\eta = \min(\eta_0,\epsilon)$. According to the lemma, there exists $j \in [m]$ such that $f_j$ is $\epsilon$-close to an $L$-junta. Without loss of generality, suppose that $f_m$ is close to a $T$-junta, where $T \subseteq [n]$ contains at most $L$ coordinates.

We construct a pair $(Z,W)$ of coupled $n \times m$ zero-one matrices as follows: choose $Z$ uniformly at random, and let $W$ be the matrix obtained from $Z$ by resampling $W_{im}$ for $i \notin T$.

With probability $1 - 2\epsilon - 2\eta = 1 - O(\epsilon)$,
\[
 (f_0 \circ g^n)(Z) = (\h \circ (f_1,\ldots,f_m))(Z) = (\h \circ (f_1,\ldots,f_m))(W) = (f_0 \circ g^n)(W).
\]
(In more detail, if $f_m$ is $\epsilon$-close to the $T$-junta $F_m$, then $f_m(\col_m(Z)) = F_M(\col_m(Z)) = F_M(\col_m(W)) = f_m(\col_m(W))$ with probability $1 - 2\epsilon$.)

For each $x \in \{0,1\}^T$, define
\[
 \epsilon_0(x) = \Pr[(f_0 \circ g^n)(Z) \neq (f_0 \circ g^n)(W) \mid g(\row_i(Z)) = x_i \text{ for all } i \in T],
\]
so that $\E_{\mu_p}[\epsilon_0(x)] = O(\epsilon)$.

Recall that for $i \notin T$, $\row_i(W)$ is obtained from $\row_i(Z)$ by resampling the $m$th coordinate. Since $g$ depends on the $m$th coordinate, the joint distribution of $g(\row_i(Z)),g(\row_i(W))$ is $N_\rho^{\mu_p}$ for some $\rho < 1$. Therefore \Cref{lem:noise-stable-constant} shows that $(f_0)_{T \to x}$ is $O(\epsilon_0(x))$-close to a constant function with respect to $\mu_p$. In other words, $f_0$ is $O(\epsilon)$-close to a $T$-junta with respect to $\mu_p$.
\end{proof}

We now deduce that either one of $f_1,\ldots,f_m$ is close to constant, or all of $f_1,\ldots,f_m$ are close to juntas.

\begin{lemma} \label{lem:f-junta-multi-all}
Fix a function $g\colon \{0,1\}^m \to \{0,1\}$ which depends on all coordinates, and assume moreover that $g$ is not of the form $\bigoplus_{i=1}^m x_i \oplus b$. Let $p = \E[g]$.

For every $\epsilon > 0$ there exist $L \in \mathbb{N}$ and $\eta > 0$ such that if $f_0,\ldots,f_m$ is an $\eta$-approximate multi-polymorphism of $g$ then there exists a set $T \subseteq [n]$ of at most $L$ coordinates such that $f_0$ is $\epsilon$-close to a $T$-junta with respect to $\mu_p$, and furthermore one of the following cases holds:
\begin{enumerate}[(i)]
\item $f_1,\ldots,f_m$ are $\epsilon$-close to $T$-juntas.
\item One of $f_1,\ldots,f_m$ is $\epsilon$-close to a constant.
\end{enumerate}
\end{lemma}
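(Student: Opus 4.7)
My plan is to combine Lemma~\ref{lem:f-junta-multi-f0} with a noise-sensitivity coupling argument that leverages the junta structure of $f_0$ to pin down each $f_j$ separately.

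First, I would apply Lemma~\ref{lem:f-junta-multi-f0} with an auxiliary closeness parameter $\epsilon' \ll \epsilon$, obtaining a set $T \subseteq [n]$ of size at most some $L$ and a $T$-junta $F_0$ such that $f_0$ is $\epsilon'$-close to $F_0$ with respect to $\mu_p$. If some $f_{j'}$ is already $\epsilon$-close to a constant then case~(ii) holds and we are done, so we may assume $p_{j'} := \Pr[f_{j'}=1] \in (\epsilon, 1-\epsilon)$ for every $j' \in [m]$. Under this assumption the goal is to show that every $f_j$ is $\epsilon$-close to a $T$-junta, which gives case~(i).

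Fix $j \in [m]$, and pick $\beta \in \{0,1\}^m$ at which $g$ is sensitive to coordinate $j$; such $\beta$ exists because $g$ depends on every coordinate. Consider the coupling $(Z,W)$ in which $W$ equals $Z$ except that each entry $W_{ij}$ with $i \notin T$ is independently re-sampled uniformly. Since the rows indexed by $T$ are identical in $Z$ and $W$, the $T$-junta $F_0$ evaluates identically on both matrices; combining this with $f_0 \approx F_0$ under $\mu_p$ and the approximate-polymorphism hypothesis applied to both $Z$ and $W$, I get
\[
 \Pr\bigl[(g \circ (f_1,\ldots,f_m))(Z) \ne (g \circ (f_1,\ldots,f_m))(W)\bigr] \le 2\eta + 2\epsilon'.
\]
Since $\col_{j'}(Z) = \col_{j'}(W)$ for every $j' \ne j$, the only way the two right-hand sides can differ is through the $j$-th input to $g$, namely $f_j(\col_j Z)$ versus $f_j(\col_j W)$. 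Conditioning on the event $E = \{f_{j'}(\col_{j'} Z) = \beta_{j'} \text{ for all } j' \ne j\}$, which has probability at least $\epsilon^{m-1}$ by independence, the sensitivity of $g$ at $\beta$ forces $f_j(\col_j Z) = f_j(\col_j W)$ whenever the displayed equation holds. Because $E$ depends only on columns other than $j$, it is independent of $(\col_j Z, \col_j W)$, and therefore
\[
 \Pr[f_j(\col_j Z) \ne f_j(\col_j W)] \le \frac{2\eta + 2\epsilon'}{\epsilon^{m-1}}.
\]

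The pair $(\col_j Z, \col_j W)$ is obtained by fixing the $T$-coordinates in common and independently re-sampling the rest, so the left-hand side of this last inequality equals $2\,\E_{x \in \{0,1\}^T}\bigl[\V{(f_j)_{T \to x}}\bigr]$. Defining $F_j$ as the $T$-junta whose value on $x$ is the majority of $(f_j)_{T \to x}$, the elementary bound $\Pr[h \ne \mathrm{maj}(h)] \le 2\V{h}$ for Boolean $h$ yields $\Pr[f_j \ne F_j] \le 2(\eta+\epsilon')/\epsilon^{m-1}$. Choosing $\epsilon' = \epsilon^m/8$ and $\eta \le \min(\eta_0(\epsilon'),\,\epsilon^m/8)$, where $\eta_0(\cdot)$ is the threshold from Lemma~\ref{lem:f-junta-multi-f0}, makes this bound at most $\epsilon$, uniformly over $j$. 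The only real obstacle is parameter bookkeeping; the conceptual content is the single coupling that transfers the junta structure of $f_0$ to each $f_j$ in turn. Crucially, the set $T$ is fixed once and for all and does not depend on $j$, so the same junta support serves every $f_j$ simultaneously.
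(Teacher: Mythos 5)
Your proof is correct, and its skeleton matches the paper's: both first invoke \Cref{lem:f-junta-multi-f0}, dispose of the near-constant case as case~(ii), and then use a coupling that freezes the $T$-rows so as to transfer the junta structure of $f_0$ to each $f_j$ via a sensitivity event. The one genuine difference is the coupling itself. The paper resamples the \emph{entire} $i$-th row for every $i \notin T$; as a result all $m$ columns of $W$ differ from those of $Z$ off $T$, and the conditioning event must require that $f_{j'}(\col_{j'}(Z))$ and $f_{j'}(\col_{j'}(W))$ \emph{both} equal $\beta_{j'}$ for every $j' \neq j$. Lower-bounding this needs the convexity step $\Pr[f_{j'}(\col_{j'}(Z)) = f_{j'}(\col_{j'}(W)) = \beta_{j'}] = \E[\epsilon_{j'}^2] \ge \E[\epsilon_{j'}]^2 \ge \epsilon^2$, giving a conditioning probability of $\epsilon^{2(m-1)}$ and forcing the paper to call \Cref{lem:f-junta-multi-f0} at closeness $\epsilon^{2m}$. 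Your coupling resamples only the $j$-th column outside $T$, so the remaining columns of $Z$ and $W$ coincide; the event $E$ then depends on $Z$ alone, $\Pr[E] \ge \epsilon^{m-1}$ follows directly by independence of columns, and the milder parameter $\epsilon' = \epsilon^m/8$ suffices. Both couplings leave the $T$-rows intact, hence preserve $F_0 \circ g^n$ exactly and $f_0 \circ g^n$ up to $2\epsilon'$, so the rest of the argument is identical. Your variant is the more economical of the two; note also that the paper already uses exactly your one-column resampling in the proof of \Cref{lem:f-junta-multi-f0}, so there was no structural obstacle to using it here.
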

\begin{proof}
Given $\epsilon > 0$, let $L,\eta_0$ be the parameters promised by \Cref{lem:f-junta-multi-f0} for $\epsilon^{2m}$, and let $\eta = \min(\eta_0,\epsilon^{2m})$. According to the lemma, $f_0$ is $\epsilon^{2m}$-close to a $T$-junta, where $T \subseteq [n]$ contains at most $L$ coordinates.

We can assume that all of $f_1,\ldots,f_m$ are $\epsilon$-far from constants, since otherwise case~(ii) holds. We will show that $f_m$ is $O(\epsilon^2)$-close to a $T$-junta; an identical argument works for $f_1,\ldots,f_{m-1}$.

We construct a pair $(Z,W)$ of coupled $n \times m$ zero-one matrices: we choose $Z$ uniformly at random, and let $W$ be the matrix obtained from $Z$ by resampling the entire $i$th row for all $i \notin T$.

With probability $1 - 2\epsilon^{2m} - 2\eta = 1 - O(\epsilon^{2m})$,
\[
 (\h \circ (f_1,\ldots,f_m))(Z) = (f_0 \circ g^n)(Z) = (f_0 \circ g^n)(W) = (\h \circ (f_1,\ldots,f_m))(W).
\]

Since $\h$ depends on all coordinates, there exist $\beta \in \{0,1\}^{m-1}$ and $b \in \{0,1\}$ such that $\h(\beta_1,\ldots,\beta_{m-1},x_m) = x_m \oplus b$. 

Let $j \in [m-1]$. Since $f_j$ is $\epsilon$-far from constant, $\Pr[f_j(\col_j(Z)) = \beta_j] \geq \epsilon$. We claim that moreover, $\Pr[f_j(\col_j(Z)) = f_j(\col_j(W)) = \beta_j] \geq \epsilon^2$. Indeed, for $x \in \{0,1\}^T$ define
\[
 \epsilon_j(x) = \Pr[f_j(\col_j(Z)) = \beta_j \mid \col_j(Z)|_T = x].
\]
Then $\E[\epsilon_j] \geq \epsilon$, and
\[
 \Pr[f_j(\col_j(Z)) = f_j(\col_j(W)) = \beta_j] = \E[\epsilon_j^2] \geq \E[\epsilon_j]^2 = \epsilon^2.
\]
It follows that with probability at least $\epsilon^{2(m-1)}$, we have $f_j(\col_j(Z)) = f_j(\col_j(W)) = \beta_j$ for all $j \in [m-1]$. Denote this event by $E$. Thus
\[
 \Pr[(\h \circ (f_1,\ldots,f_m))(Z) \neq (\h \circ (f_1,\ldots,f_m))(W) \mid E] = \frac{O(\epsilon^{2m})}{\epsilon^{2(m-1)}} = O(\epsilon^2).
\]
On the other hand, when $E$ happens,
\begin{align*}
 (\h \circ (f_1,\ldots,f_m))(Z) &= \h(\beta_1,\ldots,\beta_{m-1},f_m(\col_m(Z))) = f_m(\col_m(Z)) \oplus b, \\
 (\h \circ (f_1,\ldots,f_m))(W) &= \h(\beta_1,\ldots,\beta_{m-1},f_m(\col_m(W))) = f_m(\col_m(W)) \oplus b,
\end{align*}
and so
\[
 \Pr[f_m(\col_m(Z)) \neq f_m(\col_m(W)) \mid E] = O(\epsilon^2).
\]
Since $E$ only depends on the first $m-1$ columns, in fact
\[
 \Pr[f_m(\col_m(Z)) \neq f_m(\col_m(W))] = O(\epsilon^2).
\]

Finally, for $x \in \{0,1\}^T$ define
\[
 \epsilon_m(x) = \Pr[f_m(\col_m(Z)) \neq f_m(\col_m(W)) \mid \col_m(Z')|_T = x],
\]
so that $\E[\epsilon_m] = O(\epsilon^2)$. Since $\epsilon_m = 2\V{f_m|_{T \to x}}$, we conclude that $f_m$ is $O(\epsilon^2)$-close to a $T$-junta.
\end{proof}

We complete the proof of \Cref{lem:f-junta-multi} by handling case~(ii) of the preceding lemma.

\begin{proof}[Proof of \Cref{lem:f-junta-multi}]
Given $\epsilon > 0$, let $L,\eta_0$ be the parameters promised by \Cref{lem:f-junta-multi-all} for $\epsilon^m$, and let $\eta = \min(\eta_0,\epsilon^m)$.
Applying the lemma, we can assume that we are in case~(ii). Without loss of generality, suppose that $f_m$ is $\epsilon^m$-close to a constant.

We construct a pair $(Z,W)$ of coupled $n \times m$ zero-one matrices as follows: choose $Z$ uniformly at random, and let $W$ be the matrix obtained from $Z$ by resampling the $m$th column.

With probability $1 - 2\epsilon^m - 2\eta = 1 - O(\epsilon^m)$,
\[
 (f_0 \circ g^n)(Z) = (\h \circ (f_1,\ldots,f_m))(Z) = (\h \circ (f_1,\ldots,f_m))(W) = (f_0 \circ g^n)(W).
\]
Since $g$ depends on the $m$th coordinate, the joint distribution of the inputs to $f_0$ on both sides is $N_\rho^{\mu_p}$ for some $\rho < 1$. Therefore \Cref{lem:noise-stable-constant} shows that $f_0$ is $O(\epsilon^m)$-close to a constant $a_0$ with respect to $\mu_p$.

Let $J \subseteq [m]$ consist of the coordinates $j \in [m]$ such that $f_j$ is $\epsilon$-close to a constant $a_j$. Let $x \in \{0,1\}^m$ satisfy $x_j = a_j$ for all $j \in J$. For each $j \notin J$, since $f_j$ is $\epsilon$-far from constant, $f_j(\col_j(Z)) = x_j$ with probability at least $\epsilon$. For each $j \in J$, since $f_j$ is $\epsilon$-close to $a_j = x_j$, $f_j(\col_j(Z)) = x_j$ with probability at least $1-\epsilon$. Hence with probability at least $\epsilon^{m-|J|} (1-\epsilon)^{|J|} = \Omega(\epsilon^{m-1})$,
\[
 (\h \circ (f_1,\ldots,f_m))(Z) = \h(x).
\]
On the other hand, with probability $1 - \epsilon^m - \eta = 1- O(\epsilon^m)$,
\[
 (\h \circ (f_1,\ldots,f_m))(Z) = (f_0 \circ g^n)(Z) = a_0.
\]
Since $\Omega(\epsilon^{m-1}) > O(\epsilon^m)$, both events hold simultaneously with positive probability, and so $\h(x) = a_0$, as desired.
\end{proof}

\section{List decoding regime}
\label{sec:list-decoding}

A polymorphism of $g$ is a function $f$ satisfying $f \circ g^n = g \circ f^m$. There are two ways to relax this definition:

\begin{itemize}
    \item The 99\% regime: study functions $f$ satisfying $f \circ g^n = g \circ f^m$ for most inputs. \Cref{thm:main} shows that such functions are close to exact polymorphisms.
    \item The 1\% regime: study functions $f$ satisfying $f \circ g^n = g \circ f^m$ with \emph{significant} probability. We would like to say that such functions are \emph{structured}. 
\end{itemize}

When $g$ is the XOR function, the classic analysis of linearity testing~\cite{Bellare} shows that if $\Pr[f(x \oplus y) = f(x) \oplus f(y)] \geq 1/2 + \epsilon$, then $f$ is correlated with some character, that is, for some $S \subseteq [n]$,
\[
 \Pr\left[f(x) = \bigoplus_{i \in S} x_i \right] \geq \frac{1}{2} + \epsilon.
\]
Conversely, if $f$ is a random function then $\Pr[f(x \oplus y) = f(x) \oplus f(y)] \approx 1/2$, showing that $1/2$ is the correct threshold for this kind of structure.

What happens for other $g$? Let us take the AND function as a test case. If we choose $f$ at random then $\Pr[f(x \land y) = f(x) \land f(y)] \approx 1/2$, and so one could conjecture that when $\Pr[f(x \land y) = f(x) \land f(y)] \geq 1/2 + \epsilon$ then $f$ is correlated with some character. The Majority function refutes this conjecture, since it satisfies $\Pr[f(x \land y) = f(x) \land f(y)] \approx 3/4$ but is not correlated with any character.

The threshold $3/4$ is natural, since it corresponds to the following ``almost-random'' construction: choose $f$ at random for inputs whose Hamming weight is close to $n/2$ (where $n$ is the input size), and choose $f$ to be~$0$ elsewhere. However, it is not the correct threshold: if we choose $f$ to be Majority for inputs whose Hamming weight is close to $n/2$, and a \emph{biased} majority for other inputs, then $\Pr[f(x \land y) = f(x) \land f(y)] \approx 0.814975$.

Our main result shows that $0.814975$ is the correct threshold for AND: if $\Pr[f(x \land y) = f(x) \land f(y)] \geq 0.814975 + \epsilon$ then $f$ is correlated with some character. Moreover, we can guarantee that this character has low degree.

The idea of the proof is to translate the question about Boolean variables to a question on Gaussian space. To this end, we define a Gaussian analog of the distribution $(g(x),x)$. It will be convenient to switch from $\{0,1\}$ to $\{-1,1\}$.

\begin{definition} \label{def:gaussian-g}
Let $g\colon \{-1,1\}^m \to \{-1,1\}$ be non-constant. The distribution $\normalg{g}$ is an $(m+1)$-variate Gaussian distribution $(\cG_0,\cG_1,\ldots,\cG_m)$ given by:
\begin{itemize}
    \item Each coordinate is a standard Gaussian.
    \item The Gaussians $\cG_1,\ldots,\cG_m$ are independent.
    \item For each $j \in [m]$,
    \[
     \E[\cG_0 \cG_j] = \frac{\hat{g}(\{j\})}{\sqrt{1 - \hat{g}(\emptyset)^2}}.
    \]
\end{itemize}
\end{definition}

Our main result states that if $\Pr[f \circ g^n = g \circ f^m]$ exceeds a certain threshold, then $f$ is correlated with a low-degree character. The result applies to any function other than XOR or NXOR; analogous results for these functions (in which the character need not be low-degree) follow from a generalization of the arguments in~\cite{Bellare}.

\begin{definition} \label{def:sg}
Fix a function $g\colon \{-1,1\}^m \to \{-1,1\}$. Let $s_g$ be the infimum over all $s$ for which the following holds.

For every $\epsilon > 0$ there exist $\delta > 0$ and $L \in \mathbb{N}$ such that for all $n \in \mathbb{N}$, if $f\colon \{-1,1\}^n \to \{-1,1\}$ satisfies
\[
 \Pr[f \circ g^m = g \circ f^m] \geq s + \epsilon
\]
then $f$ has correlation at least $\delta$ with some character of degree at most $L$, that is, there exists $S \subseteq [n]$, of size at most $L$, such that
\[
 \left| \E\left[ f(x) \prod_{i \in S} x_i \right] \right| \geq \delta.
\]
\end{definition}

\begin{theorem} \label{thm:list-decoding}
Fix a function $g\colon \{-1,1\}^m \to \{-1,1\}$ which depends on all coordinates and is not $\pm \prod_{i=1}^m x_i$.

If $\E[g] \neq 0$ then let $s_g^U$ be the supremum of
\[
 \frac{1}{2} + \frac{1}{2}
 \E_{x \sim \stdnormal^n}
 \left[\Bigl|
 \E_{\cG \sim \normalg{g}^n}[g(q_1(\cG_1), \ldots, q_m(\cG_m)) \mid \cG_0 = x]
 \Bigr|\right]
\]
over all $n \in \mathbb{N}$ and all functions $q_1,\ldots,q_m\colon \mathbb{R}^n \to \{-1,1\}$ satisfying $\E[q_1] = \cdots = \E[q_m] = 0$.

If $\E[g] = 0$, instead let $s_g^U$ be the supremum of
\[
 \frac{1}{2} + \frac{1}{2}
 \E_{\cG \sim \normalg{g}^n}[q_0(\cG_0) g(q_1(\cG_1), \ldots, q_m(\cG_m))]
\]
over all $n \in \mathbb{N}$ and all functions $q_0,\ldots,q_m\colon \mathbb{R}^n \to \{-1,1\}$ satisfying $\E[q_0] = \cdots = \E[q_m] = 0$.

Then $s_g \leq s_g^U$.
\end{theorem}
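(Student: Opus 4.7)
The plan is to prove the theorem by contrapositive. Writing $\mathcal{E}(f) := \E_Z[(f \circ g^n)(Z) \cdot (g \circ f^m)(Z)]$, so that $\Pr[f \circ g^n = g \circ f^m] = \tfrac{1}{2} + \tfrac{1}{2} \mathcal{E}(f)$, it suffices to exhibit for every $\epsilon > 0$ parameters $\delta > 0$ and $L \in \mathbb{N}$ such that if $|\hat{f}(S)| < \delta$ for every $S \subseteq [n]$ with $|S| \le L$, then $\mathcal{E}(f) \le 2 s_g^U - 1 + \epsilon$. My first step is to regularize $f$ by applying Jones' regularity lemma (\Cref{thm:jones}) simultaneously under the two product measures $\mu_{1/2}$ and the product of the $\{-1,1\}$-marginal of $g$; this produces a set $T \subseteq [n]$ of bounded size $\ell = \ell(\epsilon)$ such that for most $z \in \{-1,1\}^T$, the restriction $f_{T \to z}$ is $(d,\tau)$-regular under both measures. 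Since $\mathcal{E}(f)$ decomposes as an average of the analogous quantities for such restrictions, it suffices to bound the typical ones. I would then apply a small noise operator $T_{1-\gamma}$ to each restriction to reduce its effective degree. The crucial input here is Mossel's expansion argument~\cite{Mossel2010}: because $g$ depends on all coordinates and is not $\pm \prod_i x_i$, noise on the $f$-side is absorbed by the outer $g$'s, so $\mathcal{E}$ changes by only $o_\gamma(1)$.

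Next I would apply the multidimensional invariance principle. The per-row input vector $(g(\row_i(Z)), Z_{i,1}, \ldots, Z_{i,m})$, after centering and rescaling the first coordinate by $1/\sqrt{1 - \hat g(\emptyset)^2}$, has covariance matrix matching exactly that of $\normalg{g}$; the regularity and bounded degree obtained above allow the invariance principle to replace $\mathcal{E}(f_{T \to z})$ by $\E_{\cG \sim \normalg{g}^n}[P_0(\cG_0)\, g(P_1(\cG_1), \ldots, P_m(\cG_m))]$ within error $o(1)$, where $g$ is extended multilinearly and the $P_j$ are the Hermite-multilinear analogs of the restrictions of $f$ (with $P_0$ corresponding to the row-side composition $f \circ g^n$ and $P_j$ for $j \in [m]$ to the columns). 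A standard truncation step brings each $P_j$ into $[-1,1]$, and randomized rounding -- which does not decrease the objective because $g$ is multilinear in each argument -- then produces $\{-1,1\}$-valued functions $q_0, \ldots, q_m$ of the type appearing in the definition of $s_g^U$.

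Finally, the Fourier-smallness hypothesis controls the means of the $q_j$: for $j \in [m]$, the $\mu_{1/2}$-expectation of $f_{T \to z}$ equals $\sum_{S \subseteq T} \hat f(S) \prod_{i \in S} z_i$, which is $o(1)$ for a typical $z$ once $L \ge \ell$ and $\delta \ll 2^{-\ell}$; hence $\E[q_j] \approx 0$ for $j \ge 1$. When $\E[g] = 0$, the same reasoning applies to $q_0$ (since the induced distribution on $g$-outputs is again $\mu_{1/2}$), matching the second formula in the definition of $s_g^U$. When $\E[g] \neq 0$ no such constraint on $q_0$ is available, but the first formula already maximizes over $q_0$ pointwise via the absolute value $\E_{\cG_0}[|\E[\cdots \mid \cG_0]|]$. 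In either case we conclude $\mathcal{E}(f) \le 2 s_g^U - 1 + O(\epsilon)$, as required. The main obstacle is the noise-smoothing step: the expansion argument must carefully exploit the Fourier structure of $g$ so that noise on each $f_{T\to z}$ does not compound destructively when composed via $g$; this is exactly where the hypothesis $g \neq \pm \prod_i x_i$ enters, ensuring both that $g$ has enough non-top-degree Fourier mass to absorb the noise and that the Gaussian limit in the invariance step is nontrivial.
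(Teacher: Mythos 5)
Your proposal matches the paper's proof very closely: regularize with Jones' lemma under both relevant product measures, smooth with a small noise operator whose error is controlled via Mossel's connected-distribution argument, pass to Gaussian space via the multidimensional invariance principle with the covariance matrix of $\normalg{g}$, truncate to $[-1,1]$, and then round to $\{-1,1\}$ using multilinearity, with the Fourier-smallness hypothesis forcing $\E[q_j]\approx 0$ for $j\ge 1$ (and for $j=0$ when $\E[g]=0$). One minor imprecision worth flagging: you attribute the noise absorption to ``the outer $g$'s,'' but what actually makes the smoothing error small is that the per-row joint distribution $(g(x),x_1,\ldots,x_m)$ is \emph{connected} after a suitable permutation (this is \Cref{lem:g-connected}, feeding into \Cref{thm:connected}); the hypothesis $g\neq\pm\prod_i x_i$ is precisely what guarantees this connectedness, and also that the covariances $\hat g(\{j\})$ are not all zero so the Gaussian limit is non-degenerate. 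This is the idea you clearly have in mind, just with slightly misplaced attribution. Otherwise the proposal is correct and takes essentially the same route as the paper.
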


We can show that $s_g^U < 1$ for all functions $g$ covered by the theorem.

\begin{lemma} \label{lem:sg-ub}
If $g\colon \{-1,1\}^m \to \{-1,1\}$ depends on all coordinates and is not $\pm \prod_{i=1}^m x_i$ then $s_g^U < 1$.
\end{lemma}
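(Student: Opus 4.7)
The plan is to prove the quantitative bound
\[
 \|P\|_2^2 \leq 1 - \sum_{|S|\geq 2} \hat{g}(S)^2,
\]
holding uniformly over $n$ and over the choice of the mean-zero Boolean functions $q_j$, where $P(x) := \E_{\cG \sim \normalg{g}^n}[g(q_1(\cG_1),\ldots,q_m(\cG_m)) \mid \cG_0 = x]$. The hypothesis on $g$ will then force the right-hand side to be strictly less than $1$. A first reduction shows that in either case of the definition of $s_g^U$ the supremum is at most $\tfrac12+\tfrac12\|P\|_2$: for $\E[g]\neq 0$ this comes from $\E[|P|]\leq\|P\|_2$, and for $\E[g]=0$ from $\E[q_0 P]\leq\|q_0\|_2\|P\|_2=\|P\|_2$ by Cauchy--Schwarz.

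The central tool I will invoke is the classical identity that for jointly Gaussian vectors $(X,Y)$, the Hirschfeld--Gebelein--R\'enyi maximal correlation equals the largest canonical correlation; equivalently, on mean-zero functions of $Y$ the conditional expectation $\E[\,\cdot\,\mid X]$ is a contraction whose operator norm equals that canonical correlation. In our setting $X=\cG_0$ and $Y=(\cG_1,\ldots,\cG_m)$: the independence of $\cG_1,\ldots,\cG_m$ gives $\Sigma_{YY}=I_{mn}$ and $\Sigma_{XY}=[\rho_1 I_n \mid \cdots \mid \rho_m I_n]$, so $\Sigma_{XY}\Sigma_{XY}^{\top} = \|\rho\|^2 I_n$ and every canonical correlation equals $\|\rho\|=\sqrt{\sum_j\rho_j^2}$. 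Since the $\cG_j$ are unconditionally independent and each $q_j$ has mean zero, the tuple $\vec y=(q_j(\cG_j))_{j=1}^m$ is uniform on $\{-1,1\}^m$, and hence $F := g(\vec y) - \hat{g}(\emptyset)$ is a mean-zero element of $L^2(\sigma(\cG_1,\ldots,\cG_m))$ with $\|F\|_2^2 = 1 - \hat{g}(\emptyset)^2$. Applying the contraction and squaring yields
\[
 \|P\|_2^2 = \hat{g}(\emptyset)^2 + \|P-\hat{g}(\emptyset)\|_2^2 \leq \hat{g}(\emptyset)^2 + \|\rho\|^2(1-\hat{g}(\emptyset)^2) = \hat{g}(\emptyset)^2 + \sum_{j=1}^m \hat{g}(\{j\})^2 = 1 - \sum_{|S|\geq 2}\hat{g}(S)^2,
\]
where the collapses use the definition $\rho_j^2 = \hat{g}(\{j\})^2/(1-\hat{g}(\emptyset)^2)$ and Parseval.

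It remains to verify that $c(g):=\sum_{|S|\geq 2}\hat{g}(S)^2>0$ under the lemma's hypotheses. First, $m\geq 2$: for $m\leq 1$ every non-constant Boolean $g$ on $\{-1,1\}^m$ equals $\pm x_1 = \pm\prod_{i=1}^m x_i$, which is excluded. Then for $m\geq 2$, if all Fourier coefficients of degree $\geq 2$ vanished then $g$ would be affine and hence, being $\{-1,1\}$-valued, would reduce to a constant or a single-coordinate dictator $\pm x_j$; neither depends on all $m$ coordinates, a contradiction. Therefore $c(g)>0$ and $s_g^U \leq \tfrac12 + \tfrac12\sqrt{1-c(g)} < 1$. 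The technical point most deserving care is the invocation of the Gaussian maximal-correlation identity in the vector case; this should follow from the tensor-Hermite decomposition of $L^2$ under a joint Gaussian measure, which separates by total degree and reduces the extremal problem to the linear (canonical) one, and I would cite it as a standard fact rather than reprove it here.
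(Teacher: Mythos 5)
Your proof is correct, and it takes a genuinely different route from the paper's. The paper fixes the natural decomposition $\cG_0 = Z_0$, $\cG_j = aZ_j + \sqrt{1-a^2}\,W_j$ (with $a = \|\rho\|$), uses multilinearity to push the noise onto the $q_j$'s, and then invokes \emph{Borell's theorem} to get the pointwise bound $\E[(U_a q_j)^2] \le \tfrac{2}{\pi}\arcsin(a^2)$; this yields $s_g^U \le \tfrac12 + \tfrac12\sqrt{\langle T_\rho g, g\rangle}$ with $\rho = \tfrac{2}{\pi}\arcsin(a^2) < a^2$. You instead apply the classical maximal-correlation/canonical-correlation identity for jointly Gaussian vectors directly to the conditional expectation operator $\E[\cdot\mid\cG_0]$ acting on the mean-zero part of $g(\vec y)$, obtaining the weaker but simpler bound $\|P\|_2^2 \le \hat g(\emptyset)^2 + a^2(1-\hat g(\emptyset)^2) = 1 - \|g^{>1}\|^2$. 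Both routes rely on the same structural facts ($a<1$ precisely when $\deg g \ge 2$, and the reduction via Cauchy--Schwarz/Jensen to bounding an $L^2$ norm), so they land on the same non-degeneracy condition $\sum_{|S|\ge 2}\hat g(S)^2 > 0$, which your case analysis for $m\le 1$ and for degree-one Boolean functions correctly establishes. The tradeoff is purely quantitative: the paper's Borell-based estimate gives a strictly stronger numerical bound on $s_g^U$, whereas yours is more elementary (canonical correlations in place of Gaussian isoperimetry). For the statement of the lemma, which only needs $s_g^U < 1$, either suffices. One small point worth being explicit about: your computation of $\|F\|_2^2 = 1 - \hat g(\emptyset)^2$ uses that each $q_j$ is $\{-1,1\}$-valued (so that $\vec y$ is uniform); this is fine here because $s_g^U$ is defined over Boolean $q_j$, but it means your argument as written does not directly bound the fractional variant $S_g^U$, whereas the paper's Borell estimate does (Borell applies to $[-1,1]$-valued functions).
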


If we take the supremum in \Cref{thm:list-decoding} with the additional constraint that the functions $q_1,\ldots,q_m$ coincide, then the resulting value is a lower bound on $s_g$.

\begin{lemma} \label{lem:sg-lb}
Fix a function $g\colon \{-1,1\}^m \to \{-1,1\}$ which depends on all coordinates and is not $\pm \prod_{i=1}^m x_i$.

If $\E[g] \neq 0$ then let $s_g^L$ be the supremum of
\[
 \frac{1}{2} + \frac{1}{2}
 \E_{x \sim \stdnormal^n}
 \left[\Bigl|
 \E_{\cG \sim \normalg{g}^n}[g(q(\cG_1), \ldots, q(\cG_m)) \mid \cG_0 = x]
 \Bigr|\right]
\]
over all $n \in \mathbb{N}$ and all functions $q\colon \mathbb{R}^n \to [-1,1]$ satisfying $\E[q] = 0$.

If $\E[g] = 0$, instead let $s_g^L$ be the supremum of
\[
 \frac{1}{2} + \frac{1}{2}
 \E_{\cG \sim \normalg{g}^n}[q(\cG_0) g(q(\cG_1), \ldots, q(\cG_m))]
\]
over all $n \in \mathbb{N}$ and all functions $q\colon \mathbb{R}^n \to [-1,1]$ satisfying $\E[q] = 0$.

There exists a sequence of functions $f_N\colon \{-1,1\}^N \to \{-1,1\}$, with $N \to \infty$, such that
\[
 \Pr[f_N \circ g^N = g \circ f_N^m] \longrightarrow s^L_g,
\]
and for each $\delta > 0$ and $L \in \mathbb{N}$, for large enough $N$ the functions $f_N$ do not have correlation at least $\delta$ with any character of degree at most $L$.

Consequently, $s_g \geq s_g^L$.
\end{lemma}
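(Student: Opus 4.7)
The plan is to realize any near-optimizer of the Gaussian variational problem defining $s_g^L$ as the asymptotic limit of an explicit block-sum Boolean construction, verify convergence via the multivariate central limit theorem, and control low-degree Fourier coefficients through the symmetry of the construction. As a preparatory reduction I would argue that the supremum in the definition of $s_g^L$ is attained (up to $\epsilon>0$) by $q\colon\mathbb{R}^n\to\{-1,+1\}$ rather than merely $[-1,1]$-valued: for each fixed $y_0\in\mathbb{R}^n$, the quantity being optimized is affine in $q(y_0)$ (holding $q$ fixed elsewhere) in the balanced case and, after the outer $|\cdot|$ and integration in $x$, a convex function of $q(y_0)$ in the unbalanced case; together with the linear constraint $\E[q]=0$ a standard Lagrangian / bang-bang argument forces the optimum to be $\{-1,+1\}$-valued. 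A further mollification of $q$ (convolving its indicator with a tiny Gaussian kernel and re-thresholding to preserve mean zero) makes $q$ continuous Gaussian-almost-everywhere, so that the continuous mapping theorem applies.

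Given such a near-optimal $q$, I partition $[N]$ into $n$ equal blocks $B_1,\ldots,B_n$ of size $B=N/n$ and form the normalized block sums $\xi_k(x)=\frac{1}{\sqrt{B}}\sum_{i\in B_k}x_i$. In the balanced case set $f_N(x)=q(\xi_1(x),\ldots,\xi_n(x))$. In the unbalanced case the rows $u_i=g(\row_i(Z))$ have mean $2p-1\ne 0$, so I use the empirical mean $w(x)=\frac{1}{N}\sum_i x_i$ as a regime selector: on the event $|w(x)|<\delta_N$ (the regime visited with high probability by each column $v_j$) set $f_N(x)=q(\xi_1(x),\ldots,\xi_n(x))$, and on the event $|w(x)-(2p-1)|<\delta_N$ (the regime visited with high probability by $u$) set
\[
 f_N(x)=Q_0\!\left(\frac{\xi_1(x)-\sqrt{B}(2p-1)}{2\sqrt{p(1-p)}},\ldots,\frac{\xi_n(x)-\sqrt{B}(2p-1)}{2\sqrt{p(1-p)}}\right),
\]
where $Q_0(y)=\sgn\bigl(\E_{\cG\sim\normalg{g}^n}[g(q(\cG_1),\ldots,q(\cG_m))\mid \cG_0=y]\bigr)$ is the pointwise-optimal threshold on the row side and $\delta_N\downarrow 0$ slowly; outside both regimes $f_N$ is arbitrary.

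For convergence I apply the multivariate CLT block-by-block: each row $i$ contributes an i.i.d.\ sample of the joint law of $(g(x),x_1,\ldots,x_m)$ for $x$ uniform on $\{-1,+1\}^m$, whose centered and rescaled covariance matrix is precisely that defining $\normalg{g}$ (here $\sqrt{1-\hat g(\emptyset)^2}=2\sqrt{p(1-p)}$, matching the rescaling inside $Q_0$). Hence the tuple of block sums for $(u,v_1,\ldots,v_m)$ converges jointly to $n$ independent copies of $\normalg{g}$; concentration of $w(u)$ around $2p-1$ and of $w(v_j)$ around $0$ ensures the regime-selector picks the intended branch with probability $1-o(1)$, and together with Gaussian-a.e.\ continuity of $q$ and $Q_0$ this yields
\[
 \Pr[f_N\circ g^N=g\circ f_N^m]\longrightarrow \tfrac12+\tfrac12\,\E_x\!\bigl[\bigl|\E_\cG[g(q(\cG_1),\ldots,q(\cG_m))\mid\cG_0=x]\bigr|\bigr]
\]
(with $q(\cG_0)$ in place of $Q_0(\cG_0)$ in the balanced case), matching the Gaussian value for $q$ and hence approximating $s_g^L$ within $\epsilon$. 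A standard diagonalization over $\epsilon\downarrow 0$ then produces a single sequence with $\Pr[\cdots]\to s_g^L$.

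For the vanishing low-degree Fourier coefficients, $f_N$ depends on $x$ only through $\xi_1,\ldots,\xi_n$ and is therefore symmetric within each block $B_k$, so the Fourier coefficient $\hat f_N(S)$ depends on $S$ only through $s_k=|S\cap B_k|$. Parseval combined with this symmetry gives $\prod_k\binom{B}{s_k}\hat f_N(S)^2\le\|f_N\|_2^2=1$, whence $|\hat f_N(S)|=O(B^{-|S|/2})=o(1)$ uniformly over $|S|\le L$ as $N\to\infty$; diagonalizing $\epsilon$ against the parameters $\delta$ and $L$ yields a single sequence satisfying both conclusions of the lemma. The main obstacle is the simultaneous multivariate CLT above: one must track the \emph{joint} Gaussian limit of the block sums of $u$ (drawn from $\mu_p^{\otimes N}$) together with the $m$ uniform columns $v_1,\ldots,v_m$ while controlling the weight events defining the regimes of $f_N$, and one must verify that the cross-covariances that appear in the limit are precisely the Fourier coefficients $\hat g(\{j\})$ (once normalized by $\sqrt{1-\hat g(\emptyset)^2}$), which is exactly what makes the limit $\normalg{g}$ and aligns the Boolean value with $s_g^L$.
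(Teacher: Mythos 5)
Your construction is essentially the same as the paper's — block sums $\xi_k$ realizing each coordinate of Gaussian space, a global-Hamming-weight regime selector for the unbalanced case, the pointwise-optimal sign threshold $Q_0$ on the row side, the multivariate CLT for convergence, and a symmetry argument via Parseval for the vanishing of low-degree Fourier coefficients. All of that matches, with only a cosmetic difference in parametrization ($N$ split into $n$ blocks of size $B = N/n$ versus the paper's $n$ vectors of length $N$).

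The one place where I think you have a genuine gap is the preparatory reduction to $\{-1,1\}$-valued $q$. You argue that the functional is ``affine in $q(y_0)$'' in the balanced case and, after $|\cdot|$ and integration, ``convex in $q(y_0)$'' in the unbalanced case, hence bang-bang. But the same function $q$ is evaluated at $\cG_0, \cG_1, \ldots, \cG_m$ simultaneously; expanding $g$ multilinearly gives $\E[q(\cG_0)\,g(q(\cG_1),\ldots,q(\cG_m))] = \sum_S \hat g(S)\, \E\bigl[q(\cG_0)\prod_{j\in S} q(\cG_j)\bigr]$, a polynomial of degree $m+1$ in the ``values of $q$,'' not an affine one; and the conditional expectation inside the absolute value in the unbalanced case is degree $m$ in $q$, so its expectation-of-absolute-value is not convex in $q$ either. (Contrast this with the paper's rounding argument for $s_g^U$ in \Cref{sec:list-decoding}, which does go one coordinate at a time — but it crucially relies on $q_1,\ldots,q_m$ being allowed to be \emph{different} functions; for $s_g^L$ you are forced to take them equal, so that argument does not apply.) The paper sidesteps this entirely by running the block-sum construction with the $[-1,1]$-valued $q$ to get a \emph{fractional} function $f^*_N$, and then rounding at the Boolean level: each value $f_N(x)$ is independently sampled with expectation $f^*_N(x)$, and standard concentration shows this changes neither $\Pr[f_N \circ g^N = g \circ f_N^m]$ nor the low-degree Fourier coefficients by more than $o(1)$. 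You should adopt this rounding-at-the-end strategy (or supply a different, correct justification) rather than the bang-bang reduction.

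A couple of small remarks: your regime selector using shrinking windows $|w(x)| < \delta_N$ and $|w(x)-(2p-1)| < \delta_N$ works but is slightly more delicate than the paper's crisp threshold at $\E[g]\cdot Nn/2$ (you need $\delta_N \to 0$ slowly enough relative to the CLT fluctuations); and when you say $f_N$ is ``arbitrary'' outside both regimes, it should be arbitrary \emph{as a function of $\xi_1,\ldots,\xi_n$}, or else the block-symmetry you later invoke for the Fourier bound is lost.
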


\noindent We do not know whether $s_g^U > s_g^L$ holds for some $g$, that is, whether there is any advantage in allowing $q_1,\ldots,q_m$ to be different.

When $\E[g] \neq 0$, for any function $q$ satisfying $\E[q] = 0$ we have
\[
 \E_{x \sim \stdnormal^n}
 \left[\Bigl|
 \E_{\cG \sim \normalg{g}^n}[g(q(\cG_1), \ldots, q(\cG_m)) \mid \cG_0 = x]
 \Bigr|\right] \geq
 \Bigl|\E_{\cG \sim \normalg{g}^n}[g(q(\cG_1), \ldots, q(\cG_m))]\Bigr| = |\E[g]|,
\]
and so
\[
 s_g^L \geq \frac{1}{2} + \frac{1}{2} |\E[g]|.
\]
This corresponds to the trivial construction in which $f$ is chosen randomly around the middle slice, and $\sgn(\E[g])$ elsewhere. The following result, proved by taking $n = 1$ and $q = \sgn$, shows that we can improve it if all Fourier coefficients of $g$ on the first level are non-zero.

\begin{lemma} \label{lem:sg-nontrivial}
If $g\colon \{-1,1\}^m \to \{-1,1\}$ satisfies $\E[g] \neq 0$ and $\hat{g}(\{x_j\}) \neq 0$ for all $j \in [m]$ then $s_g^L > \frac{1}{2} + \frac{1}{2} |\E[g]|$.
\end{lemma}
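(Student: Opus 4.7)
The plan is to take $n = 1$ and $q = \sgn\colon \mathbb{R} \to \{-1,1\}$ in the definition of $s_g^L$. Since $\sgn$ is odd, $\E_{x\sim\stdnormal}[\sgn(x)] = 0$, so $q = \sgn$ is admissible, and the resulting lower bound on $s_g^L$ is $\frac{1}{2} + \frac{1}{2}\E_x|F(x)|$, where
\[
F(x) = \E_{\cG \sim \normalg{g}}[g(\sgn\cG_1,\ldots,\sgn\cG_m) \mid \cG_0 = x] = \sum_{S \subseteq [m]} \hat g(S)\prod_{j\in S}\mu_j(x),
\]
with $\mu_j(x) = 2\Phi(\sigma_j x)-1$ and $\sigma_j = \rho_j/\sqrt{1-\rho_j^2}$. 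Hence it suffices to prove the strict inequality $\E_x|F(x)| > |\E g|$.

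First I would record the structural observations. Without loss of generality $\E g > 0$, and $F(0) = \E g$: conditioning on $\cG_0 = 0$ makes each $\cG_j$ an independent centered Gaussian, so the signs $\sgn \cG_j$ are i.i.d.\ uniform on $\{-1,+1\}$. The hypothesis $\hat g(\{j\}) \ne 0$ forces $\sigma_j \ne 0$, with $\hat g(\{j\})$ and $\sigma_j$ of the same sign, so
\[
F'(0) = \sqrt{2/\pi}\sum_{j=1}^m \hat g(\{j\})\sigma_j > 0,
\]
and $F$ is strictly non-constant near zero. Decomposing $F = F_e + F_o$ into its even and odd parts in $x$, we have $F_e(0) = \E g$, $F_o(0) = 0$, and $F_o \not\equiv 0$. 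Combining the algebraic identity $|u+v|+|u-v| = 2\max(|u|,|v|)$ with the $x\mapsto -x$ symmetry of the Gaussian gives the workhorse identity
\[
\E_x|F(x)| = \E_x \max(|F_e(x)|,\,|F_o(x)|).
\]

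The hard part will be converting this to the strict inequality $\E_x \max(|F_e|,|F_o|) > \E g$. A naive Jensen bound on $|F_e|$ alone is insufficient, since $\E_x F_e$ can be strictly smaller than $F_e(0) = \E g$ (this happens already for $g$ equal to two-variable OR). To close the gap I would exploit the tail behavior of $F$: as $|x|\to\infty$ each $\mu_j(x)$ converges to $\pm\sgn\hat g(\{j\})$, so $F(x)$ converges to $g$ evaluated at a point in $\{-1,+1\}^m$; in particular $|F(x)|\to 1$, and since $|\E g|<1$ (because $g$ is nonconstant) we have $|F(x)| > \E g$ throughout the tails. Combined with the strict local increase of $F$ through $\E g$ at zero and continuity, the set $\{|F|>\E g\}$ therefore carries substantial Gaussian mass. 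Writing
\[
\E_x|F| - \E g = \int_{\{|F|>\E g\}}(|F|-\E g)\phi\,dx - \int_{\{|F|\le \E g\}}(\E g - |F|)\phi\,dx,
\]
the remaining step is to check that the first integral strictly dominates the second. The clean one-dimensional model is the elementary inequality $\E_{Y\sim\stdnormal}|a+bY| > a$ for $a,b>0$, which follows from Mill's ratio ($\phi(t)/t > 1-\Phi(t)$ for $t>0$); my plan is to lift this comparison to $F$ by combining its local linearization $F(x) = \E g + F'(0)x + O(x^2)$ near zero with its asymptotes $\pm 1$ at $\pm\infty$, treating the intermediate region by the explicit formula for $F$ in terms of the $\mu_j$. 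Turning this qualitative picture into the strict quantitative inequality is the main obstacle, and where the bulk of the work lies.
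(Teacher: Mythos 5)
Your plan shares the paper's entry point exactly: take $n=1$, $q = \sgn$, and reduce to showing $\E_x|\gamma(x)| > |\E[g]|$ where $\gamma(x) = \E[g(\sgn\cG_1,\ldots,\sgn\cG_m)\mid\cG_0 = x]$. The paper dispatches this in three lines (Jensen gives $\E|\gamma|\geq|\E\gamma|=|\E[g]|$, equality is claimed to force $|\gamma|\equiv|\E[g]|$, and the tail limits $|\gamma(\pm\infty)|=1$ rule that out), whereas you correctly sense that Jensen alone is insufficient and spend most of the proposal circling that difficulty before conceding that you have not closed it.

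Your instinct is right, and the gap is not one you could have closed, because the lemma is false as stated and the paper's own proof breaks at exactly the step you flagged. Equality in Jensen forces only that $\gamma$ have almost-surely constant \emph{sign}, not constant modulus, so the paper's tail observation $|\gamma(\pm\infty)|=1$ yields no contradiction unless the two tail limits $g(\sgn\hat g(\{1\}),\ldots,\sgn\hat g(\{m\}))$ and $g(-\sgn\hat g(\{1\}),\ldots,-\sgn\hat g(\{m\}))$ genuinely differ in sign. They need not: take $m=3$ and $g(x)=-1$ iff exactly one coordinate is $+1$. Then $\E[g]=\tfrac14$, every $\hat g(\{j\})=\tfrac14\neq 0$, and both tail limits equal $g(1,1,1)=g(-1,-1,-1)=1$. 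Indeed, for \emph{any} admissible $q$ in any dimension, $\gamma_q(x) = P(\nu(x))$ with $\nu = U_\rho q \in[-1,1]$ and $P(t) = \tfrac14 + \tfrac34 t + \tfrac34 t^2 - \tfrac34 t^3$, whose minimum on $[-1,1]$ is $P(-\tfrac13)=\tfrac19 > 0$; hence $\gamma_q > 0$ everywhere, $\E|\gamma_q|=\E\gamma_q=\E[g]$ for every $q$, and $s_g^L = \tfrac12 + \tfrac12|\E[g]|$ with no slack. (Your side remark that ``$\E_x F_e$ can be strictly smaller than $F_e(0)=\E g$'' is also wrong --- $F_o$ is odd so $\E_x F_e = \E_x F = \E[g]$ always --- but this does not change the verdict.) The lemma would be repaired by adding the hypothesis that $g(\sgn\hat g(\{1\}),\ldots)\neq g(-\sgn\hat g(\{1\}),\ldots)$; then $\gamma$ has opposite signs at $\pm\infty$ and Jensen is strict.
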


When $g$ is an AND function,  we can show that $s_g^U$ is attained by $n = 1$ and $q_1(x) = \cdots = q_m(x) = \sgn(x)$. In view of \Cref{lem:sg-lb}, this gives an expression for $s_g$.

\begin{theorem} \label{thm:sg-AND}
Let $m \geq 2$, and let $g(x_1,\ldots,x_m) = \min(x_1,\ldots,x_m)$. Then
\[
 s_g = 
 \frac{1}{2} + \frac{1}{2}
 \E_{x \sim \stdnormal}
 \left[\Bigl|
 \E_{\cG \sim \normalg{g}}[g(\sgn(\cG_1), \ldots, \sgn(\cG_m)) \mid \cG_0 = x]
 \Bigr|\right].
\]

In particular, when $m = 2$, we obtain
\[
 s_\land \approx 0.814975356673002,
\]
where $\land$ denotes the binary AND function $g(a,b) = a \land b = \min(a,b)$.
\end{theorem}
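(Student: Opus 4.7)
The lower bound direction is furnished by \Cref{lem:sg-lb}: specializing the construction there to $n = 1$ and $q = \sgn$ already yields $s_g$ bounded below by the one-dimensional expression in the statement. It therefore suffices to match this by an upper bound, which by \Cref{thm:list-decoding} amounts to showing that the supremum defining $s_g^U$ is attained at $n = 1$ with $q_1 = \cdots = q_m = \sgn$.

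My plan for the upper bound follows the blueprint sketched in the techniques section. Start from $f\colon \{-1,1\}^n \to \{-1,1\}$ achieving $\Pr[f \circ g^n = g \circ f^m] \geq s + \epsilon$ with no correlation of magnitude $\delta$ to any character of degree at most $L$. First, invoke \Cref{thm:jones} with respect to both $\mu_{1/2}$ and the marginal $\mu_{1/2^m}$ induced by $\min$ to pass to a restriction with small low-degree influences under both measures. Second, apply a tiny noise operator $T_{1-\gamma}$ to $f$; an expansion identity of Mossel~\cite{Mossel2010}, exploiting that conditioning on both $x$ and $x \wedge y$ leaves genuine uncertainty in $y$, shows that this noising changes $\E[(f \circ g^n)\cdot(g \circ f^m)]$ by only $O(\gamma)$. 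Now $f$ is simultaneously low-degree and low-influence, so the invariance principle applies coordinate-wise with respect to the joint Gaussian $\normalg{g}$ of \Cref{def:gaussian-g}. After the standard truncation to $[-1,1]$ and randomized rounding to $\{-1,1\}$, one obtains functions $p, q_0 \colon \mathbb{R}^n \to \{-1,1\}$ with
\[
 \E[(f \circ g^n)\cdot(g \circ f^m)] \approx \E\bigl[q_0(\cG_0)\cdot \tilde{g}(p(\cG_1),\ldots,p(\cG_m))\bigr],
\]
where $\tilde{g}$ is the multilinear extension of $\min$, and the vanishing of the low-degree Fourier coefficients of $f$ under $\mu_{1/2}$ translates to $\E[p] \approx 0$ while $\E[q_0]$ is unconstrained.

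The main obstacle is the resulting Gaussian optimization problem: maximize $\E[q_0(\cG_0)\,\tilde{g}(p_1(\cG_1),\ldots,p_m(\cG_m))]$ over $\{-1,1\}$-valued $p_j$ with $\E[p_j]=0$ and arbitrary $\{-1,1\}$-valued $q_0$. Taking the supremum over $q_0$ first collapses $q_0(\cG_0)$ to $\sgn(\E[\tilde{g}(p_1(\cG_1),\ldots,p_m(\cG_m))\mid\cG_0])$, producing the absolute-value expression in $s_g^U$. Two facts then remain: that the $m$ inner factors may be taken equal, and that their common optimizer is a one-dimensional halfspace. The first follows from a symmetrization argument, using that under $\normalg{g}$ the coordinates $\cG_1,\ldots,\cG_m$ have identical non-negative correlation with $\cG_0$ and are mutually independent, combined with the symmetry and monotonicity of $\tilde{g} = \min$. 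The second invokes Neeman's generalization of Borell's theorem~\cite{Neeman}, which in our setting asserts that the Gaussian stability functional $p \mapsto \E[|\E[\tilde{g}(p(\cG_1),\ldots,p(\cG_m))\mid\cG_0]|]$ subject to $\E[p]=0$ is maximized by one-dimensional threshold functions; symmetry of the covariance $\normalg{g}$ in the $\cG_j$ forces the threshold to be $\sgn$. Substituting $p = \sgn$ and integrating against the $(m+1)$-variate Gaussian with covariance specified in \Cref{def:gaussian-g} produces the numerical value $s_\land \approx 0.814975356673002$ for $m = 2$ by a direct bivariate Gaussian integral with correlation $1/\sqrt{3}$.
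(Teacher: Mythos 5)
The lower-bound half (via \Cref{lem:sg-lb} with $n=1$, $q=\sgn$) is correct and matches the paper. For the upper bound, you propose to re-derive the Gaussian reduction from scratch (Jones, noise operator, invariance, rounding), but this is unnecessary: the paper simply reuses the characterization of $s_g^U$ as a supremum of $\E_{\cG\sim\normalg{g}}[q_0(\cG_0)\,g(q_1(\cG_1),\ldots,q_m(\cG_m))]$ that was already established in the proof of \Cref{thm:list-decoding}, so the only new work is the Gaussian optimization. That part of your proposal contains a genuine gap.

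You invoke Neeman's theorem as if it directly asserted that the functional $p\mapsto \E_x\bigl[\bigl|\E[\tilde g(p(\cG_1),\ldots,p(\cG_m))\mid\cG_0=x]\bigr|\bigr]$ is maximized by a halfspace; but \Cref{thm:neeman} is a statement about $\E[f_1(\cG_1)\cdots f_m(\cG_m)]$ for $\{0,1\}$-valued $f_i$ under a nonnegative-correlation Gaussian, and it does not apply directly to a functional involving the multilinear extension of $\min$, a conditional expectation, and an absolute value. The paper bridges this gap with an algebraic step that you omit and that is specific to $g=\mathsf{AND}$: setting $p_i=(1+q_i)/2$ makes $g(q_1,\ldots,q_m)=2\prod_i p_i - 1$, so $\E[q_0\cdot g(q_1,\ldots,q_m)] = 4\E[p_0\cdots p_m] - 2\E[p_0]+1$, a genuine product form; one then verifies $\hat g(\{j\})=2^{-(m-1)}>0$ so that the hypothesis of \Cref{thm:neeman} (nonnegative covariance) holds, applies Neeman to $\E[p_0\cdots p_m]$ with $\E[p_0]$ temporarily fixed, and only afterwards optimizes over $\E[p_0]$. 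Without that $\{0,1\}$ conversion and the fixing-then-optimizing of $\E[p_0]$, the appeal to Neeman does not go through.

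Separately, the ``symmetrization argument'' you sketch for forcing $q_1=\cdots=q_m$ is both unjustified and unnecessary. The maximization over $\{-1,1\}$-valued functions is over a non-convex set, so symmetry of the problem does not by itself yield a symmetric maximizer. In the paper the equality $p_1=\cdots=p_m=\mathbf 1_{x_1\le 0}$ (hence $q_j=\sgn$) falls out automatically from Neeman's structural characterization once the prescribed expectations $\E[p_j]=1/2$ are imposed; no separate symmetrization step is invoked or needed.
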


\begin{remark}
If $g\colon \{-1,1\}^m \to \{-1,1\}$ is of the form $\pm \prod_{i=1}^m x_i$ for $m \geq 2$ and $\Pr[f \circ g^n = g \circ f^m] \geq \frac{1}{2} + \epsilon$, then the proof of \Cref{lem:approximate-polymorphism-xor} shows that $f$ has correlation $\Omega(\epsilon^{1/(m-1)})$ with \emph{some} character. In contrast to \Cref{thm:list-decoding}, we cannot guarantee correlation with a low-degree character.
\end{remark}

\subsection{Preliminaries}

For $\lambda \in (-1,1)$ and (implicit) $n$, we denote by $\pi_\lambda$ the product distribution on $\{-1,1\}^n$ in which each coordinate has expectation $\lambda$ (this is the analog of the distributions $\mu_p$).
We denote the corresponding noise operator by $T_\rho^{\lambda}$; it multiplies the $d$th level of the $\lambda$-biased Fourier expansion by $\rho^d$. If $\lambda = 0$ (corresponding to the uniform distribution) then we sometimes omit the superscript.

We extend functions on $\{-1,1\}^n$ to functions on $\mathbb{R}^n$ multilinearly, that is, via the Fourier expansion (it doesn't matter which bias we use to define the Fourier expansion). The analog of $T_\rho^\lambda$ in Gaussian space is $U_\rho^\lambda$.

The function $\clip\colon \mathbb{R} \to [-1,1]$ clips its argument to $[-1,1]$:
\[
 \clip(x) = \begin{cases} -1 & \text{if } x < -1, \\ x & \text{if } -1 \leq x \leq 1 \\ 1 & \text{if } x > 1. \end{cases}
\]

\subsubsection{Invariance principle}

Our proof will require a multidimensional version of the invariance principle due to Mossel~\cite{Mossel2010}.

\begin{theorem} \label{thm:invariance}
Let $\mathcal{D}$ be an arbitrary distribution on $\{-1,1\}^k$, and let $\mathcal{G}$ be a multivariate Gaussian distribution with the same mean vector and covariance matrix. Denote the expectations of the individual coordinates by $\mu_1,\ldots,\mu_k$.

For every $\epsilon,\gamma > 0$ there exist $d \in \mathbb{N}$ and $\tau > 0$ such that for any $f_1,\ldots,f_k\colon \{-1,1\}^n \to \{-1,1\}$ and $\gamma_1,\ldots,\gamma_k \geq \gamma$, if $\Inf_i^{\leq d}[f_j] \leq \tau$ for all $j \in [k]$ (where influence is with respect to $\pi_{\mu_j}$) and $i \in [n]$ then
\[
 \left|
 \E_{x \sim \mathcal{D}^n}
 [(T_{1-\gamma_1}^{\mu_1} f_1)(x_1) \cdots (T_{1-\gamma_k}^{\mu_k} f_k)(x_k)]
 -
 \E_{g \sim \mathcal{G}^n}
 [\clip(T_{1-\gamma_1}^{\mu_1} f_1)(g_1) \cdots \clip(T_{1-\gamma_k}^{\mu_k} f_k)(g_k)]
 \right| < \epsilon.
\]

Furthermore, for each $j \in [k]$,
\[
 |\E_{g \sim \mathcal{G}^n}[\clip(T_{1-\gamma_j}^{\mu_j} f_j)(g_j)] - \E_{x \in \mathcal{D}^n}[f_j(x_j)]| < \epsilon.
\]
\end{theorem}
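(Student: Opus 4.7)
The plan is to derive this multidimensional statement directly from Mossel's vector-valued invariance principle~\cite{Mossel2010}, applied to the collection $(F_1,\ldots,F_k) := (T_{1-\gamma_1}^{\mu_1}f_1,\ldots,T_{1-\gamma_k}^{\mu_k}f_k)$ tested against the smooth function $\phi(y_1,\ldots,y_k) = \prod_{j=1}^k \clip(y_j)$ (and, for the ``Furthermore'' clause, against the single-variable test $\clip$). The whole statement is essentially a repackaging of Mossel's principle once the correspondence between the $k$ separate biased Boolean spaces and the single correlated product space is set up.

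First I would encode the problem on the product probability space $(\Omega,\mathcal{D})^n$ with $\Omega = \{-1,1\}^k$. A sample $x \sim \mathcal{D}^n$ becomes a matrix whose $n$ rows are i.i.d.\ from $\mathcal{D}$, and $f_j(x_j)$ simply reads off column $j$. For each row $i$, I would fix an orthonormal basis $\{\psi_S^{(i)}\}_{S \subseteq [k]}$ of $L^2(\Omega,\mathcal{D})$ with $\psi_\emptyset^{(i)} \equiv 1$ and $\psi_{\{j\}}^{(i)}(\omega) = (\omega_j - \mu_j)/\sqrt{1-\mu_j^2}$. In this basis, the $\pi_{\mu_j}$-biased Fourier expansion of $f_j$ is literally its expansion as a multilinear polynomial in the variables $\{\psi_{\{j\}}^{(i)}\}_{i \in [n]}$. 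Substituting $\psi_{\{j\}}^{(i)} \mapsto g_{j,i}$ for jointly Gaussian $(g_{j,i})$ with matching covariance structure produces precisely the multilinear extension $\tilde F_j$ evaluated at $g_j \in \mathbb{R}^n$ on the Gaussian side.

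Second, I would reduce to bounded degree by the usual noise argument: $T_{1-\gamma_j}^{\mu_j}$ damps the degree-$d$ part of $f_j$ by $(1-\gamma_j)^d$, so choosing $d = d(\gamma,\epsilon)$ large enough ensures $\|F_j - F_j^{\leq d}\|_2 \leq \epsilon'$ for any prescribed $\epsilon'$. Since $\phi$ and $\clip$ are Lipschitz, replacing $F_j$ by $F_j^{\leq d}$ on either side perturbs the expectation by only $O(\epsilon')$, and the hypothesis $\Inf_i^{\leq d}[f_j] \leq \tau$ transfers to a bound on the row-$i$ influence of $F_j^{\leq d}$ in the basis above. Mossel's principle applied to the vector of truncated polynomials then yields
\[
 \bigl| \E_{x \sim \mathcal{D}^n}[\phi(F_1^{\leq d}(x),\ldots,F_k^{\leq d}(x))] - \E_{g \sim \mathcal{G}^n}[\phi(\tilde F_1^{\leq d}(g),\ldots,\tilde F_k^{\leq d}(g))] \bigr| < \epsilon,
\]
and backing out the truncation gives the first inequality in the statement. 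The ``Furthermore'' clause follows by the same procedure with $k=1$ and test function $\clip$, using that $T_{1-\gamma_j}^{\mu_j} f_j$ is already $[-1,1]$-valued on Boolean inputs (so $\clip$ acts as the identity there) and preserves mean.

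The main obstacle is purely bookkeeping: verifying that the row-$i$ low-degree influence of $F_j^{\leq d}$ on the product space $\Omega^n$ in the basis $\{\psi_S^{(i)}\}$ coincides with the $\pi_{\mu_j}$-biased influence $\Inf_i^{\leq d}[f_j]$ (this uses that $T_{1-\gamma_j}^{\mu_j}$ is a contraction on each Fourier level), and that the invariance substitution $\psi_{\{j\}}^{(i)} \mapsto g_{j,i}$ indeed reconstructs the Hermite/multilinear extension appearing in the statement. Once these two correspondences are nailed down, Mossel's theorem applies off the shelf and there is nothing further to prove.
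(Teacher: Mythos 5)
Your proposal is correct, but it routes through a different pair of black boxes from Mossel~\cite{Mossel2010} than the paper does. The paper's proof cites two packaged results: Theorem~4.1 of~\cite{Mossel2010}, which compares $\E_{\mathcal{D}^n}[\prod_j (T_{1-\gamma_j}^{\mu_j}f_j)(x_j)]$ directly with its Gaussian counterpart $\E_{\mathcal{G}^n}[\prod_j (T_{1-\gamma_j}^{\mu_j}f_j)(g_j)]$ (no clipping, no test function needed), and Theorem~4.2 of~\cite{Mossel2010}, which bounds $\E_{\mathcal{G}^n}[((T_{1-\gamma_j}^{\mu_j}f_j)(g_j) - \clip(T_{1-\gamma_j}^{\mu_j}f_j)(g_j))^2]$; a $k$-fold Cauchy--Schwarz then converts the latter into a bound on the difference between the unclipped and clipped Gaussian products, and the ``Furthermore'' clause falls out of Theorem~4.2 together with $\E[(T f_j)(g_j)] = \E[f_j]$. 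You instead invoke the basic low-degree, Lipschitz-test-function invariance principle applied to $\phi = \prod_j \clip$, using the observation that $\clip$ acts as the identity on the Boolean side because $T_{1-\gamma_j}^{\mu_j}f_j$ already lies in $[-1,1]$. This is sound, and your truncation step (degree damping via $T_{1-\gamma_j}^{\mu_j}$, Lipschitz perturbation estimate, transfer of influence bounds to the truncated $F_j^{\leq d}$) is all correct; it essentially re-derives by hand what~\cite{Mossel2010} Theorems~4.1/4.2 prove once and for all, including the built-in noise that obviates explicit truncation. The one thing to be careful about, which you should flag, is that $\phi$ is only Lipschitz, not $C^3$, so you need the Lipschitz-functional form of Mossel's theorem (which mollifies internally); the paper's route sidesteps this entirely by working with the raw product. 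Net effect: both proofs are correct; yours is more self-contained in spirit but does more bookkeeping, while the paper's is shorter because it leans on the more finished statements from~\cite{Mossel2010}.
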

\begin{proof}
\cite[Theorem 4.1]{Mossel2010} shows that for appropriate $d,\tau$ we have
\[
 \left|
 \E_{x \sim \mathcal{D}^n}
 [(T_{1-\gamma_1}^{\mu_1} f_1)(x_1) \cdots (T_{1-\gamma_k}^{\mu_k} f_k)(x_k)]
 -
 \E_{g \sim \mathcal{G}^n}
 [(T_{1-\gamma_1}^{\mu_1} f_1)(g_1) \cdots (T_{1-\gamma_k}^{\mu_k} f_k)(g_k)]
 \right| < \frac{\epsilon}{2}.
\]

\cite[Theorem 4.2]{Mossel2010} shows that for appropriate $d,\tau$, for every $j \in [k]$ we have
\[
 \left|
 \E_{g \sim \mathcal{G}^n} [((T_{1-\gamma_j}^{\mu_j} f_j)(g_j) - \clip(T_{1-\gamma_j}^{\mu_j} f_j)(g_j))^2]
 \right| < \left(\frac{\epsilon}{2k}\right)^2.
\]
Applying Cauchy--Schwarz $k$ times implies that
\[
 \left|
 \E_{g \sim \mathcal{G}^n}
 [(T_{1-\gamma_1}^{\mu_1} f_1)(g_1) \cdots (T_{1-\gamma_k}^{\mu_k} f_k)(g_k)]
 -
 \E_{g \sim \mathcal{G}^n}
 [\clip(T_{1-\gamma_1}^{\mu_1} f_1)(g_1) \cdots \clip(T_{1-\gamma_k}^{\mu_k} f_k)(g_k)]
 \right| < \frac{\epsilon}{2},
\]
completing the proof of the main estimate.

The bound given by \cite[Theorem 4.2]{Mossel2010} also implies that for every $j \in [k]$,
\[
 \left|\E_{g \sim \mathcal{G}^n}[\clip(T_{1-\gamma_j}^{\mu_j} f_j)(g_j)] - \E_{g \in \mathcal{G}^n}[(T_{1-\gamma_j}^{\mu_j} f_j)(g_j)]\right| < \epsilon.
\]
Since $\E[(T_{1-\gamma_j}^{\mu_j} f_j)(g_j)] = \E[f_j(g_j)] = \E[f_j(x_j)]$, this completes the proof.
\end{proof}

We will apply this theorem to the distribution $(g(x),x)$, whose Gaussian analog we now calculate.

\begin{lemma} \label{lem:normalg}
Let $g\colon \{-1,1\}^m \to \{-1,1\}$ be a non-constant function, and let $\mu = \E[g]$. If $x$ is distributed uniformly over $\{-1,1\}^m$ then the vector
\[
 \frac{g(x) - \mu}{\sqrt{1 - \mu^2}},x_1,\ldots,x_m
\]
has the same mean vector and covariance matrix as $\normalg{g}$.
\end{lemma}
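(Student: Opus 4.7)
The plan is a direct moment computation: verify that the vector $V = \bigl(\tfrac{g(x)-\mu}{\sqrt{1-\mu^2}}, x_1, \ldots, x_m\bigr)$ has the same mean and covariance as $\normalg{g}$, checking the five requirements one by one. Since $\normalg{g}$ is specified by its mean vector (zero), its unit variances, the pairwise independence of $\cG_1, \ldots, \cG_m$, and the covariances $\E[\cG_0 \cG_j] = \hat{g}(\{j\})/\sqrt{1-\hat g(\emptyset)^2}$, there is nothing to do beyond algebra.

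First I would handle the means. Clearly $\E[x_j] = 0$ for each $j \in [m]$, and $\E\bigl[\tfrac{g(x)-\mu}{\sqrt{1-\mu^2}}\bigr] = 0$ by definition of $\mu = \E[g]$. Then I would check the variances: $\E[x_j^2] = 1$ since $x_j \in \{-1,1\}$, and since $g$ also takes values in $\{-1,1\}$ we have $\E[g^2] = 1$, so $\V{g} = 1 - \mu^2$ and the normalized first coordinate has variance exactly $1$. Nonconstancy of $g$ ensures $\mu^2 < 1$, so the normalization is well defined.

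Next I would check the off-diagonal covariances among $x_1, \ldots, x_m$: for $i \neq j$, $\E[x_i x_j] = \E[x_i]\E[x_j] = 0$ by independence of the coordinates under the uniform distribution. Finally, for the covariance between the normalized first coordinate and $x_j$,
\[
 \E\!\left[\frac{g(x)-\mu}{\sqrt{1-\mu^2}} \cdot x_j\right]
 = \frac{\E[g(x)\, x_j] - \mu\,\E[x_j]}{\sqrt{1-\mu^2}}
 = \frac{\hat g(\{j\})}{\sqrt{1-\mu^2}},
\]
using that the Fourier coefficient $\hat g(\{j\})$ equals $\E[g(x)\,x_j]$ under the uniform measure on $\{-1,1\}^m$. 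Since $\hat g(\emptyset) = \mu$, this matches $\hat g(\{j\})/\sqrt{1-\hat g(\emptyset)^2}$ exactly, as required by \Cref{def:gaussian-g}.

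There is no real obstacle here; the only pitfall is keeping the two conventions straight, namely that $\hat g(\emptyset) = \E[g] = \mu$ and that $\E[g(x)\, x_j]$ is precisely the first-level Fourier coefficient $\hat g(\{j\})$ in the $\pm 1$ convention. Once these identifications are made, the statement follows by inspection, and the proof can be written in a few lines.
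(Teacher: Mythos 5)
Your proof is correct and follows essentially the same route as the paper: a direct verification of the means, variances, independence of $x_1,\ldots,x_m$, and the covariance $\E\bigl[\tfrac{g(x)-\mu}{\sqrt{1-\mu^2}}\,x_j\bigr] = \hat g(\{j\})/\sqrt{1-\mu^2}$, with the last identity being exactly the one the paper highlights. You just spell out a few of the routine steps that the paper dismisses with ``by construction.''
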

\begin{proof}
By construction, each coordinate has zero mean and unit variance, and the last $m$ coordinates are independent. Hence it remains to compute the covariance between the first coordinate and the $(j+1)$th coordinate, which corresponds to $x_j$:
\[
 \E_{x \in \{-1,1\}^m} \left[ \frac{g(x) - \mu}{\sqrt{1 - \mu^2}} x_j \right] = \frac{\hat{g}(\{j\})}{\sqrt{1 - \mu^2}}. \qedhere
\]
\end{proof}

\subsubsection{Connected distributions}

\Cref{thm:invariance} requires adding noise to the function. We will do this in the context of evaluating
\[
 \E_Z[(f \circ g^n)(Z) \cdot (g \circ f^m)(Z)].
\]
In order to show that the noise only slightly changes this expression, we will use a result of Mossel~\cite{Mossel2010} on connected distributions.

\begin{definition}[Connected distribution]
\label{def:connected-space}
Let $\mathcal{D}$ be a distribution on $\{-1,1\}^k$. For $i \in \{1,\ldots,k-1\}$, let $G_i$ be the bipartite graph $(X,Y,E)$, where
\begin{itemize}
    \item $X$ is the projection of the support of $\mathcal{D}$ to the first $i$ coordinates.
    \item $Y$ is the projection of the support of $\mathcal{D}$ to the last $k-i$ coordinates.
    \item $(x,y) \in E$ if $(x,y)$ is in the support of $\mathcal{D}$.
\end{itemize}
The distribution $\mathcal{D}$ is \emph{connected} if all of $G_1,\ldots,G_{k-1}$ are connected.
\end{definition}

\begin{theorem} \label{thm:connected}
Fix a connected distribution $\mathcal{D}$ on $\{-1,1\}^k$ with mean vector $\mu$. For every $\epsilon > 0$ there are $\gamma_1,\ldots,\gamma_k > 0$ such that the following holds.

If $f_1,\ldots,f_k\colon \{-1,1\}^n \to [-1,1]$ then
\[
 \E_{x \sim \mathcal{D}^n}[f_1(x_1) \cdots f_k(x_k) - (T_{1-\gamma_1}^{\mu_1} f_1)(x_1) \cdots (T_{1-\gamma_k}^{\mu_k} f_k)(x_k)] < \epsilon.
\]
\end{theorem}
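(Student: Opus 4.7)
The plan is a telescoping hybrid that replaces the functions $f_1,\ldots,f_k$ by their smoothed versions one coordinate slot at a time, combined with a spectral-gap estimate for a conditional-expectation operator that is extracted from the connectedness hypothesis. For $\ell = 0,1,\ldots,k$ set
\[
 A_\ell \;=\; \E_{x \sim \mathcal{D}^n}\!\left[\prod_{j \leq \ell}(T_{1-\gamma_j}^{\mu_j} f_j)(x^{(j)}) \prod_{j > \ell} f_j(x^{(j)})\right],
\]
so that it suffices to bound each $|A_{\ell-1}-A_\ell|$ by $\epsilon/k$. After collecting the remaining factors into a single $[-1,1]$-valued function $H_\ell(x^{(-\ell)})$ and writing $F_\ell(x^{(\ell)}) = \E[H_\ell \mid x^{(\ell)}]$, the difference becomes
\[
 \E_{x^{(\ell)} \sim \pi_{\mu_\ell}^n}\!\left[(I-T_{1-\gamma_\ell}^{\mu_\ell})f_\ell(x^{(\ell)}) \cdot F_\ell(x^{(\ell)})\right]
 \;=\; \sum_{T \subseteq [n]} \bigl(1-(1-\gamma_\ell)^{|T|}\bigr)\,\widehat{f_\ell}(T)\,\widehat{F_\ell}(T),
\]
with Fourier expansion in the $\mu_\ell$-biased basis.

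The key input is a Hirschfeld--Gebelein--R\'enyi-style maximal correlation bound: the connectedness of $G_1,\ldots,G_{k-1}$ furnishes a constant $\rho = \rho(\mathcal{D})<1$ such that, in a single copy of $\mathcal{D}$, the conditional-expectation map sending mean-zero functions of $x^{(-\ell)}$ to functions of $x^{(\ell)}$ has operator norm at most $\rho$. Tensorising over $n$ i.i.d.\ positions yields the decay estimate $|\widehat{F_\ell}(T)| \leq \rho^{|T|} a_T$ with $\sum_T a_T^2 \leq \|H_\ell\|_2^2 \leq 1$; in particular $\sum_{|T|>D} \widehat{F_\ell}(T)^2 \leq \rho^{2D}$ for every threshold $D$.

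Given this decay, split the Fourier sum at a threshold~$D$. For $|T|>D$, Cauchy--Schwarz together with the tail estimate contributes at most $\rho^D$. For $|T|\leq D$, the bound $1-(1-\gamma_\ell)^{|T|} \leq D\gamma_\ell$ combined with Cauchy--Schwarz and Parseval bounds that portion by $D\gamma_\ell$. Choosing $D$ so that $\rho^D < \epsilon/(2k)$ and then $\gamma_\ell$ so that $D\gamma_\ell < \epsilon/(2k)$ makes each single-swap error at most $\epsilon/k$; summing over $\ell$ proves the theorem.

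The main obstacle is producing the spectral-gap constant $\rho<1$ from the stated connectedness hypothesis. The hybrid isolates a single slot $x^{(\ell)}$ from the remaining $x^{(-\ell)}$, while the hypothesis asserts connectedness only of the sequential partitions $(\{1,\ldots,i\},\{i+1,\ldots,k\})$; one must therefore argue --- either by iterating the splitting bound across the $k-1$ sequential partitions, or by a direct variational analysis of the Markov structure on $\mathcal{D}$ --- that connectedness of all of $G_1,\ldots,G_{k-1}$ forces the maximal correlation between $x^{(\ell)}$ and $x^{(-\ell)}$ to be strictly less than one for every $\ell$. Translating the combinatorial connectedness hypothesis into this quantitative analytic statement is the real content of the theorem.
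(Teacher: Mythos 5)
The hybrid you set up isolates slot $x^{(\ell)}$ from the entire complement $x^{(-\ell)}$, and the entire argument then rests on a maximal-correlation bound $\rho(\mathcal{D}) < 1$ for that split. This bound is false in general under the stated hypothesis, and in fact is false for the very distributions the theorem is designed to handle. The paper's introductory remark in \Cref{apx:connected} addresses this point explicitly: for the distribution of $(g(x), x)$ with $x$ uniform, if you carve out the first coordinate (the one carrying $g(x)$) against the rest, the bipartite incidence graph is a perfect matching, and the maximal correlation between $x^{(1)}$ and $x^{(-1)}$ is exactly~$1$, since $g(x)$ is a deterministic function of $x$. Your proposed decay $|\widehat{F_\ell}(T)| \leq \rho^{|T|} a_T$ with $\rho < 1$ therefore does not hold for $\ell = 1$. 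You flag this as the ``main obstacle'' and suggest deriving the single-coordinate correlation bound from connectedness of the sequential $G_i$'s; that derivation is not possible, precisely because \Cref{def:connected-space} is formulated with the sequential partitions and not the isolate-one-slot partitions for the reason above.

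The paper's proof avoids this by making the hybrid respect the sequential structure: at step $i$, the noisified factor $(1-T_{1-\gamma_i})f_i$ is kept grouped together with the already-smoothed factors $T_{1-\gamma_1}f_1,\ldots,T_{1-\gamma_{i-1}}f_{i-1}$, and the averaging operator $S_i$ used is the conditional expectation of a function of the \emph{first} $i$ coordinates given the \emph{last} $k-i$ --- exactly the partition on which connectedness gives a contraction (\Cref{lem:averaging}). The price is that the function fed into $S_i$ is a product of several factors, so one must control its Efron--Stein degree; this is done by inductively truncating each smoothed factor to degree $d_j$ and controlling the truncation error via hypercontractivity (\Cref{lem:hypercontractivity}). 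The low-degree part of $(1-T_{1-\gamma_i})f_i$ is handled separately by the bound $1-(1-\gamma_i)^e \leq e\gamma_i$, and the case $i=k$ uses $S_k$ directly. If you want to salvage your one-slot hybrid, you would need to replace the HGR bound for $x^{(\ell)}$ vs $x^{(-\ell)}$ by something that holds under sequential connectedness, but the matching example shows no such uniform spectral gap is available. The degree-controlled product-and-$S_i$ argument is the real content and is not circumventable by the cleaner-looking single-slot decomposition.
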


We prove this in \Cref{apx:connected}, using ideas from Mossel~\cite{Mossel2010}. 

We are interested in the distribution $(g(x),x)$. While this distribution is not connected, we can show that unless $g$ is XOR or NXOR, we can rearrange it so that it becomes connected.

\begin{lemma} \label{lem:g-connected}
Fix a function $g\colon \{-1,1\}^m \to \{-1,1\}$ which depends on all coordinates and is not $\pm \prod_{i=1}^m x_i$.

After a suitable permutation, the distribution $(g(x),x)$ (where $x$ is chosen uniformly from $\{-1,1\}^m$) is connected.
\end{lemma}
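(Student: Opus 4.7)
The plan is to exhibit an explicit permutation of the $m+1$ coordinates placing $g(x)$ in an interior position flanked at both ends by $x$-coordinates admitting non-sensitive inputs, and then verify connectedness of every consecutive split.

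The first ingredient is a structural strengthening: $g$ admits non-sensitive inputs in at least two distinct coordinates. Indeed, if only coordinate $x_m$ (say, after reordering) admitted a non-sensitive input, then for every other coordinate $k<m$ we would have $g(x^{\oplus k}) = -g(x)$ identically; for a $\pm 1$-valued function this forces $g(x) = x_k h_k(x_{-k})$, and iterating over $k=1,\ldots,m-1$ yields $g(x) = \bigl(\prod_{k<m} x_k\bigr) \cdot g(1,\ldots,1,x_m)$. Since $g$ depends on $x_m$, the one-variable slice is $\pm x_m$, giving $g=\pm\prod_i x_i$ and contradicting the hypothesis. After permuting the $x_i$'s we may thus assume that both $x_{m-1}$ and $x_m$ admit non-sensitive inputs.

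With this preparation I would use the ordering $(x_{m-1}, x_1, \ldots, x_{m-2}, g(x), x_m)$ and verify connectedness of each consecutive split $G_i$. The outermost splits $G_1$ and $G_m$ reduce directly to the non-sensitivity of $x_{m-1}$ and $x_m$ respectively: the two singleton-side vertices $\pm 1$ share a common neighbor precisely when some extension witnesses non-sensitivity. The split $G_{m-1}$, which isolates $(g(x),x_m)$ on one side, reduces to the equivalence $u_1 \sim u_2 \iff g(u_1,-1) = g(u_2,-1) \text{ or } g(u_1,1) = g(u_2,1)$ on $\{-1,1\}^{m-1}$; partitioning by the pair $(g(u,-1), g(u,1))$ and merging cells that share a coordinate, the only obstructions are $g(u,-1) \equiv g(u,1)$ (excluded because $g$ depends on $x_m$) and $g(u,-1) \equiv -g(u,1)$ (excluded because this would make $x_m$ sensitive at every input).

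The main obstacle is verifying $G_i$ for the intermediate splits $1<i<m-1$, where connectivity asks that any two prefixes $p_1,p_2\in\{-1,1\}^i$ admit a suffix $s$ with $g(p_1,s)=g(p_2,s)$. My approach is induction on $m$, with base case $m=2$ checked by direct enumeration: fixing $x_m$ to a value $d$ that appears in a non-sensitive input yields $g'(x_1,\ldots,x_{m-1})=g(x_1,\ldots,x_{m-1},d)$; if $g'$ depends on all its coordinates and is not $\pm\prod_{i<m} x_i$, the inductive hypothesis supplies a connected ordering $\sigma'$ for $(g'(x'),x')$, and appending $x_m$ at the final position extends it to a connected ordering for $(g(x),x)$ (the slice at $x_m=d$ reproduces the inductive graph, while flipping $x_m$ in the suffix links the two slices). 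The delicate and most challenging point is ruling out the degenerate cases where every non-sensitive restriction $g'$ either loses a coordinate or becomes $\pm\prod x_i$; in such cases one back-translates, using both available non-sensitive coordinates $x_{m-1}$ and $x_m$, to derive a structural constraint on $g$ incompatible with the assumption that $g$ itself is not $\pm\prod_i x_i$.
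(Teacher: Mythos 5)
Your structural strengthening (at least two coordinates admit non-sensitive inputs), your ordering, and your reductions for $G_1$, $G_m$, and $G_{m-1}$ are sound; for $m\leq 3$ there are no intermediate splits, so the argument is complete in those cases.

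The gap is in the intermediate splits, which first appear at $m=4$, and it is fatal to the proposed induction. You claim that when every non-sensitive restriction $g'$ is degenerate (drops a coordinate or equals $\pm\prod_i x_i$), one can back-translate to conclude $g = \pm\prod_{i=1}^m x_i$ and obtain a contradiction. That implication is false. Take $m=4$ and $g(a,b,c,d) = a\cdot b\cdot \min(c,d)$ in $\{-1,1\}$ notation: $g$ depends on all four coordinates, is not $\pm abcd$, and the only coordinates admitting non-sensitive inputs are $c$ and $d$ (flipping $a$ or $b$ always flips $g$). Every restriction along a non-sensitive coordinate is degenerate: $g|_{d=1}=abc$ and $g|_{c=1}=abd$ are $\pm\prod$, while $g|_{d=-1}=g|_{c=-1}=-ab$ drop a coordinate. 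So your inductive step never applies, and the claimed contradiction never materializes; yet the lemma is true for this $g$ (connectivity of $G_2$ under the ordering $(c,a,b,g(x),d)$ can be checked by direct enumeration). The failure lies in the proof strategy, not the statement.

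The paper's proof avoids this entirely by using a different structural invariant: the unique multiplicative decomposition $g = g_1(x|_{S_1})\cdots g_\ell(x|_{S_\ell})$ into indecomposable blocks. Since $g\neq\pm\prod_i x_i$, some block $S_1$ has $|S_1|\geq 2$; placing one element of $S_1$ first and one last, the paper constructs, for every split $G_i$ with $i\in[m]$, an explicit length-four path joining $(y,1)$ to $(y,-1)$, where the obstruction to finding the path's middle vertex would force $g$ to decompose across $S_1$, contradicting indecomposability. This handles all splits uniformly, requires no induction on $m$, and works in your example because $\{c,d\}$ is an indecomposable block even though its restrictions are degenerate -- indecomposability of the block, not nondegeneracy of restrictions, is the property that survives the split.
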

\begin{proof}
By factoring the multilinear representation of $g$, we can uniquely decompose $g$ as
\[
 g(x) = g_1(x|_{S_1}) \cdots g_\ell(x|_{S_\ell}),
\]
where $S_1,\ldots,S_\ell$ form a decomposition of $[m]$, and none of $g_1,\ldots,g_\ell$ can be further decomposed in this way. Since $g$ is not $\pm \prod_{i=1}^m x_i$, one of the sets, say $S_1$, contains at least two points. We rearrange the inputs to $g$ so that $1,m \in S_1$. We will prove that
\[
 x_1,x_2,\ldots,x_{m-1},g(x),x_m
\]
is connected.

We will show that the graph $G_i$ is connected for $i \in [m-1]$; for $i = m$, the same argument works, with $X$ and $Y$ switching roles.

Let $i \in [m-1]$.
By construction, $X = \{-1,1\}^i$, and (up to permutation) $Y$ is the collection of all $(y,b) \in \{-1,1\}^{(m-i)+1}$ such that $g(x,y) = b$ for some $x \in X$. The edges are $(x,(y,g(x,y))$.

Since $g$ depends on all coordinates, there exists some $y \in \{-1,1\}^{m-i}$ such that $(y,1),(y,-1) \in Y$. Let $X_- = \{ x \in X : g(x,y) = -1 \}$ and $X_+ = \{ x \in X : g(x,y) = 1 \}$; note that $X_-,X_+ \neq \emptyset$. We claim that there exist $x_- \in X_-$, $x_+ \in X_+$, and $(z,b) \in Y$ such that $g(x_-,z) = g(x_+,z)$. Otherwise, for every $x_- \in X_-$ and $x_+ \in X_+$ we have $g(x_-,z) g(x_+,z) = -1$. This implies that $g(x,z)$ only depends on whether $x \in X_+$ or $x \in X_-$, and so we can write $g(x,z) = g_x(x) g_z(z)$, which is impossible since $g_x$ depends on $x_1$ and $g_z$ depends on $x_m$, and we assumed that $g_1(S_1)$ is indecomposable.

Recall $b = g(x_-,z) = g(x_+,z)$. The path
\[
 (y,1) - x_+ - (z,b) - x_- - (y,-1)
\]
shows that $(y,1)$ and $(y,-1)$ are connected.

If $(z,b) \in Y$ is arbitrary, let $x \in X$ be such that $g(x,z) = b$. Then $(z,b)$ is connected to $(y,g(x,y))$ via $x$. Therefore all vertices in $Y$ are connected. Since every vertex in $X$ is connected to some vertex in $Y$, we conclude that the entire graph is connected.
\end{proof}

\subsubsection{Borell's theorem}

The upper bound in \Cref{lem:sg-ub} derives from Borell's isoperimetric theorem~\cite{Borell} in Gaussian space, in the following form.

\begin{theorem} \label{thm:borell}
Let $f\colon \mathbb{R}^n \to [-1,1]$ satisfy $\E[f] = 0$ with respect to the standard Gaussian measure. For every $\rho \in [-1,1]$,
\[
 \E[(U_\rho f)^2] \leq \frac{2}{\pi} \arcsin (\rho^2),
\]
which is tight for $f(x) = \sgn(x_1)$.
\end{theorem}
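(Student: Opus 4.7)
The plan is to reduce the stated inequality to the classical form of Borell's noise stability theorem by recognizing that $\E[(U_\rho f)^2]$ is itself a two-point correlation of $f$ with itself under a Gaussian joint distribution of correlation $\rho^2$. Concretely, writing $(U_\rho f)(x) = \E_Z[f(\rho x + \sqrt{1-\rho^2}\, Z)]$ for $Z \sim \stdnormal^n$ independent of $X \sim \stdnormal^n$, I would first express
\[
 \E[(U_\rho f)(X)^2] = \E_X\bigl[\E_Z[f(\rho X + \sqrt{1-\rho^2}\, Z)]^2\bigr] = \E[f(Y) f(Y')],
\]
where $Y = \rho X + \sqrt{1-\rho^2}\, Z$ and $Y' = \rho X + \sqrt{1-\rho^2}\, Z'$ with $Z,Z'$ independent of each other and of $X$.

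Next I would compute the joint law of $(Y,Y')$: each of $Y,Y'$ is marginally $\stdnormal^n$, and coordinatewise $\E[Y_i Y'_j] = \rho^2 \E[X_i X_j] = \rho^2 \delta_{ij}$, so $(Y,Y')$ is a pair of $\rho^2$-correlated standard Gaussian vectors. Thus
\[
 \E[(U_\rho f)^2] = \E[f(Y) f(Y')] = \E[f \cdot U_{\rho^2} f].
\]

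Now I would apply the classical Borell isoperimetric inequality: for any $f\colon \mathbb{R}^n \to [-1,1]$ with $\E f = 0$ and any $r \in [0,1]$, one has $\E[f \cdot U_r f] \le \frac{2}{\pi}\arcsin(r)$, with equality attained by the one-dimensional halfspace indicator $f(x) = \sgn(x_1)$. Taking $r = \rho^2 \in [0,1]$ immediately yields the claimed bound $\E[(U_\rho f)^2] \le \frac{2}{\pi}\arcsin(\rho^2)$.

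Finally, I would verify tightness directly via Sheppard's formula: for $f(x) = \sgn(x_1)$ and $(Y,Y')$ jointly $\rho^2$-correlated standard Gaussians, $\Pr[\sgn(Y_1) \ne \sgn(Y_1')] = \tfrac{1}{\pi}\arccos(\rho^2)$, so $\E[f(Y) f(Y')] = 1 - \tfrac{2}{\pi}\arccos(\rho^2) = \tfrac{2}{\pi}\arcsin(\rho^2)$, matching the upper bound. There is no real obstacle here beyond citing the classical Borell bound; the only thing to be careful about is the squared correlation arising from the conditional-independence coupling, which is precisely what forces $\arcsin(\rho^2)$ rather than $\arcsin(\rho)$ in the statement.
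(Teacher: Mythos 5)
Your proposal is correct, and since the paper states this result by reference to Borell's isoperimetric theorem without spelling out the reduction, what you supply is precisely the missing derivation. The key step — rewriting $\E[(U_\rho f)^2]$ as $\E[f(Y)f(Y')]$ for $(Y,Y')$ $\rho^2$-correlated Gaussians, which is the probabilistic form of the identity $\E[(U_\rho f)^2] = \langle f, U_{\rho^2} f\rangle$ coming from self-adjointness and the semigroup law $U_\rho U_\rho = U_{\rho^2}$ — is exactly how one passes from the classical noise-stability statement $\E[f\, U_r f] \le \tfrac{2}{\pi}\arcsin r$ to the form with $\arcsin(\rho^2)$, and squaring correctly absorbs the sign of $\rho$. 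The Sheppard-formula computation of the extremal value for $f = \sgn(x_1)$ is also right. The one point worth making explicit when you cite the classical bound is that it is needed for arbitrary $[-1,1]$-valued mean-zero functions, not only $\{-1,1\}$-valued ones; this extension is standard (for $r \ge 0$ the map $f \mapsto \E[f\, U_r f]$ is a positive semidefinite quadratic form, hence convex, so the maximum over the cube with the linear constraint $\E f = 0$ is attained at Boolean functions), but since the theorem statement allows fractional $f$ it should be noted.
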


The upper bound in \Cref{thm:sg-AND} derives from a generalization of Borell's theorem due to Neeman~\cite{Neeman}.

\begin{theorem} \label{thm:neeman}
Let $f_1,\ldots,f_m\colon \mathbb{R}^n \to \{0,1\}$ be arbitrary functions. Let $\mathcal{G} = (\mathcal{G}_1,\ldots,\mathcal{G}_m) \sim \normal(\mu,\Sigma)^n$ be an $m$-dimensional multivariate Gaussian distribution, where all elements of $\Sigma$ are non-negative.

Let $F_1,\ldots,F_m\colon \mathbb{R}^n \to \{0,1\}$ be functions of the form $F_i(x_1,\ldots,x_n) = 1_{x_1 \leq \theta_i}$, where $\theta_1,\ldots,\theta_m$ are chosen so that $\E[F_i] = \E[f_i]$ for all $i \in [m]$. Then
\[
 \E[f_1(\mathcal{G}_1) \cdots f_m(\mathcal{G}_m)] \leq
 \E[F_1(\mathcal{G}_1) \cdots F_m(\mathcal{G}_m)].
\]
\end{theorem}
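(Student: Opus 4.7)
The plan is to prove Neeman's theorem via the Ornstein--Uhlenbeck semigroup on $\mathbb{R}^n$, generalizing Borell's semigroup proof of the $m=2$ case. Let $L$ denote the standard OU generator and set $P_t = e^{-tL}$; write $f_i^{(t)} := P_t f_i$ and $F_i^{(t)} := P_t F_i$ for $t \geq 0$. Since $\E[f_i] = \E[F_i]$, both $f_i^{(t)}$ and $F_i^{(t)}$ converge to the same constant as $t \to \infty$. The strategy is to introduce the paths
\[
\Psi(t) := \E\Bigl[\textstyle\prod_{i=1}^m f_i^{(t)}(\mathcal{G}_i)\Bigr], \qquad \Psi^F(t) := \E\Bigl[\textstyle\prod_{i=1}^m F_i^{(t)}(\mathcal{G}_i)\Bigr],
\]
and to establish $\Psi^F(0) \geq \Psi(0)$ by pointwise comparison of their time derivatives, then integrating in $t$ over $[0,\infty)$ and using $\Psi(\infty) = \Psi^F(\infty) = \prod_i \E[f_i]$.

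First I would compute $\Psi'(t)$. Using $\partial_t f_i^{(t)} = -L f_i^{(t)}$ together with Gaussian integration by parts against the joint law $\mathcal{N}(\mu,\Sigma)^n$, a direct calculation yields cross terms of the form
\[
\Psi'(t) \;=\; \sum_{i < j} \Sigma_{ij} \cdot \E\Bigl[\textstyle\prod_{k \neq i,j} f_k^{(t)}(\mathcal{G}_k)\cdot \langle \nabla f_i^{(t)}(\mathcal{G}_i),\nabla f_j^{(t)}(\mathcal{G}_j)\rangle\Bigr],
\]
where the nonnegativity of each $\Sigma_{ij}$ is essential for the sign. For the threshold functions $F_i(x) = \mathbf{1}_{x_1 \leq \theta_i}$, the analogous quantity reduces to a \emph{one-dimensional} expression, since each $F_i^{(t)}$ depends only on the first coordinate and its gradient is parallel to $e_1$. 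The core technical step is then a pointwise inequality stating that, among all $\{0,1\}$-valued functions of prescribed Gaussian mean, the integrand above is maximized coordinatewise by halfspace indicators aligned in a common direction; this is a multidimensional Gaussian rearrangement statement in the spirit of Ehrhard's inequality, and it is here that all the entries of $\Sigma$ being nonnegative becomes critical.

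The \textbf{main obstacle} will be proving this pointwise comparison, which is the technical heart of the argument. Two issues must be handled. First, the $\{0,1\}$-valued $f_i$ lack the smoothness needed to differentiate; the standard workaround is to smooth by a vanishing amount of OU noise $P_{\varepsilon}$, prove the inequality for smooth surrogates using a second-derivative/Bakry--\'Emery-type calculation, and then pass to the limit $\varepsilon \downarrow 0$ using that both sides are continuous in $L^2$. Second, one must verify that the sign of $\Psi'$ really does force the comparison with halfspaces (and not merely monotonicity along the flow); this is where Neeman's key insight enters, namely that the ``worst'' configuration at every time $t$ is the one obtained by replacing $(f_i^{(t)})$ with the corresponding halfspace indicators, reducing the multidimensional problem to a family of one-dimensional Gaussian correlation inequalities that can be checked directly. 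An alternative route, which I would attempt if the semigroup computation proves too unwieldy, is iterated Ehrhard symmetrization: rearrange one $f_i$ at a time into a halfspace along $e_1$ of the same mass, and show (using the nonnegativity of $\Sigma$) that each such single-coordinate symmetrization weakly increases $\E[\prod_i f_i(\mathcal{G}_i)]$ when the remaining functions are held fixed.
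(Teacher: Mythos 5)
The paper does not give a proof of this theorem; it is cited as a black box from Neeman's work~\cite{Neeman}. So your attempt must stand on its own, and I think as written it has a circular step at its heart.

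Your plan introduces two paths $\Psi(t)=\E[\prod_i P_t f_i(\mathcal G_i)]$ and $\Psi^F(t)=\E[\prod_i P_t F_i(\mathcal G_i)]$ and aims to show $\Psi'(t)\geq(\Psi^F)'(t)$ pointwise in $t$, then integrate. But these derivatives involve the gradients of two \emph{different} families of functions, $P_t f_i$ versus $P_t F_i$, and there is no a priori relation between them beyond equality of masses. The inequality you invoke to bridge the gap --- ``among all $\{0,1\}$-valued functions of prescribed Gaussian mean, the integrand is maximized coordinatewise by halfspace indicators'' --- is not a manageable lemma; after smoothing, the functions $P_t f_i$ are $[0,1]$-valued, so a $\{0,1\}$-valued rearrangement claim doesn't even apply directly, and the genuine content of such a claim is essentially the theorem you are trying to prove. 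In other words, the two-path structure with a pointwise comparison of derivatives does not reduce the problem.

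The semigroup argument that actually works (Mossel--Neeman for $m=2$, extended by Neeman) is structured differently. One defines a single test function $J\colon[0,1]^m\to[0,1]$ by $J(a_1,\ldots,a_m)=\Pr[\bigcap_i\{N_i\leq\Phi^{-1}(a_i)\}]$ where $(N_1,\ldots,N_m)$ is a centered Gaussian vector with correlation matrix $\Sigma$, and considers the \emph{single} path $t\mapsto\E[J(P_t f_1(\mathcal G_1),\ldots,P_t f_m(\mathcal G_m))]$. At $t=0$, because $J$ restricted to $\{0,1\}^m$ is the product, this equals $\E[\prod_i f_i(\mathcal G_i)]$; as $t\to\infty$ it tends to $J(\E f_1,\ldots,\E f_m)$, which by construction of $J$ is exactly $\E[\prod_i F_i(\mathcal G_i)]$. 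Monotonicity of the path reduces, via the derivative identity you wrote (which is essentially correct once signs and the correlation structure are tracked), to a pointwise second-order inequality for $J$ — a concavity-type condition involving $\partial_i\partial_j J$ and the entries $\Sigma_{ij}\geq 0$. That computation is the technical heart, and it replaces the rearrangement claim you hoped to invoke. Your proposal omits $J$ entirely and so has nothing that could make the derivative comparison go through.

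Your fallback route, iterated Ehrhard-style symmetrization, also undersells the difficulty. Rearranging one $f_i$ given the others produces the indicator of a superlevel set of $\E[\prod_{k\neq i}f_k(\mathcal G_k)\mid\mathcal G_i]$, which is a halfspace along $e_1$ only when the remaining $f_k$ are already such halfspaces and $\Sigma\geq 0$ makes the conditional expectation monotone in $(\mathcal G_i)_1$. For arbitrary $f_k$ a single rearrangement step gives some level set, not a halfspace, so you need an iteration-plus-compactness (or induction on $m$) argument to land on the all-halfspace configuration, and you must separately rule out maximizers whose halfspaces point in different directions. None of this is addressed. Finally, a small but real point: the OU semigroup preserves means only with respect to the standard Gaussian, while the theorem allows general $\mu$ and diagonal of $\Sigma$; you should normalize to $\mu=0$, $\Sigma_{ii}=1$ at the outset so that $P_t f_i\to\E[f_i]$ as claimed.
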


\subsection{Proof of \crtCref{thm:list-decoding}} 

We divide the proof into two parts. In the first part, we show that \Cref{thm:list-decoding} holds when the functions $q_0,\ldots,q_m$ in the definition of $s_g^U$ are allowed to be $[-1,1]$-valued; we denote the resulting value by $S_g^U$. We then show that the value of $s_g^U$ stays the same even if we restrict $q_0,\ldots,q_m$ to be $\{-1,1\}$-valued.

\subsubsection{Proof for fractional functions}

Let us first notice that if $\E[g] \neq 0$ then $S_g^U$ is the supremum of
\[
 \frac{1}{2} + \frac{1}{2} \E_{\cG \sim \normalg{g}^n}[q_0(\cG_0) g(q_1(\cG_1), \ldots, q_m(\cG_m))]
\]
over all $n \in \mathbb{N}$ and all functions $q_0,q_1,\ldots,q_m\colon \mathbb{R}^n \to [-1,1]$ satisfying $\E[q_1] = \cdots = \E[q_m] = 0$.
When $\E[g] = 0$, we have defined $S_g^U$ using a similar expression, with the additional constraint $\E[q_0] = 0$.

Let $\|\hat{g}\|_1 = \sum_{S \subseteq [m]} |\hat{g}(S)|$, and set
\[
 \epsilon_1 = \frac{\epsilon}{3\|\hat{g}\|_1}, \quad
 \epsilon_2 = \frac{\epsilon}{(6m+3)\|\hat{g}\|_1}, \quad
 \epsilon_3 = \frac{\epsilon}{6m\|\hat{g}\|_1}, \quad
 \epsilon_4 = \frac{\epsilon}{m+1}.
\]
Let $\gamma_0,\ldots,\gamma_m > 0$ be the parameters given by \Cref{thm:connected} for $\epsilon_1$. Let $d,\tau$ be the parameters given by \Cref{thm:invariance} for $\epsilon_2$ and $\gamma = \min(\gamma_0,\ldots,\gamma_m)$. Let $L$ be the parameter given by \Cref{thm:jones} for $d,\tau,\epsilon_4$. Finally, we define $\delta = 2^{-L} \epsilon_3$. We will assume that $|\hat{f}(S)| \leq \delta$ for all $|S| \leq L$, and show that $\Pr[f \circ g^n = g \circ f^m] \leq s^U_g + \epsilon$.

According to \Cref{thm:jones}, we can find a set $T$ of at most $L$ coordinates such that with probability $1-\epsilon_4$, $f_{T \to z}$ is $(d,\tau)$-regular, and this holds both with respect to the uniform probability and with respect to $\pi_{\E[g]}$. In particular, if we choose $W \sim \{-1,1\}^{T \times m}$ then with probability $1 - (m+1)\epsilon_4 = 1 - \epsilon$, the functions $f_j = f_{T \to \col_j(W)}$ are $(d,\tau)$-regular with respect to $\pi_0$ for all $j \in [m]$, and the function $f_0 = f_{T \to g(W)}$ (where $g$ is applied row by row) is $(d,\tau)$-regular with respect to $\pi_{\E[g]}$. If these properties are satisfied, we say that $W$ is \emph{good}.

For any $W$ and any $j \in [m]$, the expected value of $f_j$ is
\[
 \sum_{S \subseteq T} \hat{f}(S) \prod_{i \in S} W_{ij},
\]
which is at most $2^L \delta = \epsilon_3$ in absolute value.

Let $W$ be a good partial input, and denote by $\mathcal{D}(W)$ all inputs compatible with $W$. Let $q_j = \clip(T_{1-\gamma_j} f_j)$ for $j \in [m]$, and let $q_0(g) = \clip(T_{1-\gamma_j}^{\E[g]} f_0)(\sqrt{1-\E[g]^2} \cdot g + \E[g])$ (we define $q_0$ so that if its input is a standard Gaussian, then the input to $f_0$ has the same mean and variance as $g(x)$). Combining \Cref{thm:invariance} and \Cref{thm:connected} shows that for each $S \subseteq [m]$,
\[
 \left|
 \E_{Z \in \mathcal{D}(W)}
 \left[(f \circ g^n)(Z) \prod_{j \in S} f_j(\col_j(Z))\right]
 -
 \E_{\mathcal{G} \sim \normalg{g}^n}
 \left[q_0(\mathcal{G}_0) \prod_{j \in S} q_j(\mathcal{G}_j)\right]
 \right| \leq \epsilon_1 + \epsilon_2.
\]
\Cref{thm:invariance} also shows that for all $j \in [m]$, $|\E[q_j]| \leq |\E[f_j]| + \epsilon_2 \leq \epsilon_2 + \epsilon_3$. Let \[ q'_j = \frac{q_j - \E[q_j]}{1 + |\E[q_j]|}, \] so that $\E[q'_j] = 0$. Then
\[
 \left|
 \E_{\mathcal{G} \sim \normalg{g}^n}
 \left[q_0(\mathcal{G}_0) \prod_{j \in S} q_j(\mathcal{G}_j)\right] -
 \E_{\mathcal{G} \sim \normalg{g}^n}
 \left[q_0(\mathcal{G}_0) \prod_{j \in S} \frac{q_j(\mathcal{G}_j)}{1 + |\E[q_j]|}\right]
 \right| \leq
 1 -
 \prod_{j \in S} \frac{1}{1 + |\E[q_j]|} \leq m(\epsilon_2 + \epsilon_3),
\]
and
\[
 \left|
  \E_{\mathcal{G} \sim \normalg{g}^n}
 \left[q_0(\mathcal{G}_0) \prod_{j \in S} \frac{q_j(\mathcal{G}_j)}{1 + |\E[q_j]|}\right] -
 \E_{\mathcal{G} \sim \normalg{g}^n}
 \left[q_0(\mathcal{G}_0) \prod_{j \in S} q'_j(\mathcal{G}_j)\right]
 \right| \leq
 \sum_{j \in S} \frac{|\E[q_j]|}{1 + |\E[q_j]|} \leq m(\epsilon_2 + \epsilon_3).
\]
Combining the three estimates,
multiplying by $\hat{g}(S)$, and summing over all $S \subseteq [m]$, we conclude that
\[
 \left|
 \E_{Z \in \mathcal{D}(W)}[(f \circ g^n)(Z) (g \circ f^m)(Z)] -
 \E_{\mathcal{G} \sim \normalg{g}^n}[q_0(\mathcal{G}_0) g(q'_1(\mathcal{G}_1), \ldots, q'_m(\mathcal{G}_m))]
 \right| \leq \|\hat{g}\|_1 (\epsilon_1 + (2m+1)\epsilon_2 + 2m\epsilon_3) = \epsilon.
\]
When $\E[g] = 0$, we similarly replace $q_0$ with $q'_0$, adjusting $\epsilon_2,\epsilon_3$ accordingly.
We conclude that
\[
 \E_{Z \in \mathcal{D}(W)}[(f \circ g^n)(Z) (g \circ f^m)(Z)] \leq (2S_g^U - 1) + \epsilon.
\]

Considering all possible $W$, this shows that
\[
 \E_Z[(f \circ g^n)(Z) (g \circ f^m)(Z)] \leq (2S_g^U - 1) + 2\epsilon.
\]
It follows that
\[
 \Pr[f \circ g^n = g \circ f^m] \leq S_g^U + \epsilon.
\]

\subsubsection{Rounding}

Let $h(x_0,\ldots,x_m) = x_0 g(x_1,\ldots,x_m)$, and note that $h$ is multilinear.

To complete the proof of \Cref{thm:list-decoding}, we show that $S_g^U = s_g^U$. Recall that $2S_g^U-1$ is the supremum of
\[
 \E_{\mathcal{G} \sim \normalg{g}^n}[h(q_0(\mathcal{G}_0),\ldots,q_m(\mathcal{G}_m))],
\]
where $n \in \mathbb{N}$ and $q_0,\ldots,q_m\colon \mathbb{R}^n \to [-1,1]$ satisfy $\E[q_1] = \cdots \E[q_m] = 0$; when $\E[g] = 0$, we also require $\E[q_0] = 0$. In contrast, $2s_g^U - 1$ is defined similarly, with the functions $q_0,\ldots,q_m$ being $\{-1,1\}$-valued.

For every $\epsilon > 0$ we can find $n \in \mathbb{N}$ and functions $q_0,\ldots,q_m\colon \mathbb{R}^n \to [-1,1]$ satisfying $\E[q_1] = \cdots = \E[q_m] = 0$ (and $\E[q_0] = 0$ if $\E[g] = 0$), and furthermore
\[
 \E_{\mathcal{G} \sim \normalg{g}^n}[h(q_0(\mathcal{G}_0),\ldots,q_m(\mathcal{G}_m))] \geq 2S_g^U - 1 - \epsilon.
\]
We will show that for every $\delta > 0$, we can replace each $q_i$ by a $\{-1,1\}$-valued function $Q_i$, with the same expectation as $q_i$, such that
\[
 \E_{\mathcal{G} \sim \normalg{g}^n}[h(Q_0(\mathcal{G}_0),\ldots,Q_m(\mathcal{G}_m))] \geq 2S_g^U - 1 - \epsilon - O(\delta).
\]

We replace the functions one by one. At step $i$, given $Q_0,\ldots,Q_{i-1},q_i,\ldots,q_m$, we find a function $Q_i$ such that
\begin{multline*}
 \E_{\mathcal{G} \sim \normalg{g}^n}[h(Q_0(\mathcal{G}_0),\ldots,Q_{i-1}(\mathcal{G}_{i-1}),Q_i(\mathcal{G}_i),q_{i+1}(\mathcal{G}_{i+1}),\ldots,q_m(\mathcal{G}_m))] \geq \\
 \E_{\mathcal{G} \sim \normalg{g}^n}[h(Q_0(\mathcal{G}_0),\ldots,Q_{i-1}(\mathcal{G}_{i-1}),q_i(\mathcal{G}_i),q_{i+1}(\mathcal{G}_{i+1}),\ldots,q_m(\mathcal{G}_m))] - O(\delta).
\end{multline*}
Since $q_i$ is bounded, we can discretize it to a function $\tilde{q}_i$ which attains finitely many values such that $|q_i(x) - \tilde{q}_i(x)| \leq \delta$ for all $x \in \mathbb{R}$, and furthermore $\E[\tilde{q}_i] = \E[q_i]$. One way to do so is to divide $[-1,1]$ into $2/\delta$ intervals of width $\delta$, and for each such interval $I$, replace $q_i^{-1}(I)$ with its expectation.
Since $Q_0,\ldots,Q_{i-1},q_{i+1},\ldots,q_m$ are all bounded,
\begin{multline*}
 \E_{\mathcal{G} \sim \normalg{g}^n}[h(Q_0(\mathcal{G}_0),\ldots,Q_{i-1}(\mathcal{G}_{i-1}),\tilde{q}_i(\mathcal{G}_i),q_{i+1}(\mathcal{G}_{i+1}),\ldots,q_m(\mathcal{G}_m))] \geq \\
 \E_{\mathcal{G} \sim \normalg{g}^n}[h(Q_0(\mathcal{G}_0),\ldots,Q_{i-1}(\mathcal{G}_{i-1}),q_i(\mathcal{G}_i),q_{i+1}(\mathcal{G}_{i+1}),\ldots,q_m(\mathcal{G}_m))] - O(\delta).
\end{multline*}

The idea now is to find a distribution $\mathbf{Q}_i$ of $\{-1,1\}$-valued functions such that $\E[\mathbf{Q}_i(x)] = \tilde{q}_i(x)$ for all $x \in \mathbb{R}$, and furthermore each distribution in the support of $\mathbf{Q}_i$ has expectation $\E[q_i]$. Since the function $h$ is multilinear,
\begin{multline*}
 \E_{\mathbf{Q}_i}
 \E_{\mathcal{G} \sim \normalg{g}^n}[h(Q_0(\mathcal{G}_0),\ldots,Q_{i-1}(\mathcal{G}_{i-1}),\mathbf{Q}_i(\mathcal{G}_i),q_{i+1}(\mathcal{G}_{i+1}),\ldots,q_m(\mathcal{G}_m))] = \\
 \E_{\mathcal{G} \sim \normalg{g}^n}[h(Q_0(\mathcal{G}_0),\ldots,Q_{i-1}(\mathcal{G}_{i-1}),\tilde{q}_i(\mathcal{G}_i),q_{i+1}(\mathcal{G}_{i+1}),\ldots,q_m(\mathcal{G}_m))]
\end{multline*}
Therefore some function $Q_i$ in the support of $\mathbf{Q}_i$ satisfies our requirements.

It remains to construct the distribution $\mathbf{Q}_i$. For each value $c$ in the support of $\tilde{q}_i$, consider $\tilde{q}_i^{-1}(c)$, and put it in a measure-preserving bijection $\alpha$ with some interval $[0,1]$ (with respect to an appropriately scaled Lebesgue measure). We round $\tilde{q}_i^{-1}(c)$ as follows: choose $\theta \in [0,1]$ at random, give $\alpha^{-1}([\theta,\theta+\frac{1+c}{2}])$ (wrapping if needed) the value $+1$, and the rest of $\tilde{q}_i^{-1}(c)$ the value $-1$. This guarantees the required properties.

\subsection{Proof of \crtCref{lem:sg-ub}}

For $j \in [m]$, define
\[
 \rho_j = \E_{\cG \sim \normalg{g}}[\cG_0 \cG_j] = \frac{\hat{g}(\{j\})}{\sqrt{1 - \E[g]^2}},
\]
and let
\[
 a = \sqrt{\sum_{j=1}^m \rho_j^2} = \sqrt{\frac{\|g^{=1}\|^2}{\|g^{\ge 1}\|^2}} < 1.
\]

Let $Z_1,\ldots,Z_m,W_1,\ldots,W_m$ be independent standard Gaussians. The vector
\[
 \sum_{\substack{i \in [m] \\ \rho_i \neq 0}} \frac{\rho_i}{a} Z_i, \quad a Z_1 + \sqrt{1-a^2} W_1, \quad \ldots, \quad a Z_m + \sqrt{1-a^2} W_m
\]
has the same distribution as $\normalg{g}$. Let us denote the first coordinate by $Z_0$, which is some linear combination of $Z_1,\ldots,Z_m$.

We will bound $2s_g^U - 1$. Suppose we are given $q_0,q_1,\ldots,q_m\colon \mathbb{R}^n \to [-1,1]$ such that $\E[q_1] = \cdots = \E[q_m] = 0$. For every $S \subseteq [m]$, we have
\[
 \E\left[q_0(Z_0) \prod_{i \in S} q_i(a Z_i + \sqrt{1-a^2} W_i)\right] =
 \E\left[q_0(Z_0) \prod_{i \in S} (U_a q_i)(Z_i) \right].
\]
Multiplying by $\hat{g}(S)$ and summing over all $S \subseteq [m]$, we obtain
\begin{multline*}
 \E_{\cG \sim \normalg{g}}[q_0(\cG_0) g(q_1(\cG_1),\ldots,q_m(\cG_m))] =
 \E[q_0(Z_0) g((U_a q_1)(Z_1), \ldots, ((U_a q_m)(Z_m)))] \leq \\
 \E[|g((U_a q_1)(Z_1), \ldots, (U_a q_m)(Z_m))|] \leq
 \sqrt{\E[g((U_a q_1)(Z_1), \ldots, (U_a q_m)(Z_m))^2]}.
\end{multline*}
We proceed to bound the expression inside the square root:
\begin{multline*}
 \E[g((U_a q_1)(Z_1), \ldots, (U_a q_m)(Z_m))^2] = \\
 \sum_{S_1,S_2 \subseteq [m]} \hat{g}(S_1) \hat{g}(S_2)
 \E\left[ \prod_{i \in S_1} (U_a q_i)(Z_i) \prod_{j \in S_2} (U_a q_j)(Z_j) \right] =
 \sum_{S \subseteq [m]} \hat{g}(S)^2 \prod_{i \in S} \E[(U_a q_i)^2],
\end{multline*}
since $\E[U_a q_i] = \E[q_i] = 0$ for all $i \in [m]$.
Applying \Cref{thm:borell}, we conclude that
\[
 \E_{\cG \sim \normalg{g}}[q_0(\cG_0) g(q_1(\cG_1),\ldots,q_m(\cG_m))]^2 \leq
 \sum_{S \subseteq [m]} \hat{g}(S)^2 \rho^S, \text{ where } \rho = \frac{2}{\pi} \arcsin(a^2) = \frac{2}{\pi} \arcsin \frac{\|g^{=1}\|^2}{\|g^{\ge 1}\|^2} < 1,
\]
and so
\[
 s_g^U \leq \frac{1}{2} + \frac{1}{2} \sqrt{\langle T_\rho g, g \rangle} < 1.
\]

\begin{remark}
We can generalize the proof technique by choosing $m$ parameters $a_1,\ldots,a_m$ rather than a single parameter $a$. The optimal parameters can be found, in principle, using Lagrange multipliers. When $g$ is symmetric, the optimal choice is $a_1 = \cdots = a_m = a$.
\end{remark}

\subsection{Proof of \crtCref{lem:sg-lb}}

We will assume for simplicity that $s_g^L$ is achieved for some finite $n$; the general case follows using very similar arguments. Also, we will first assume that $q$ is $\{-1,1\}$-valued.

When $\E[g] \neq 0$, define $q_0\colon \mathbb{R}^n \to \{-1,1\}$ by
\[
 q_0(x) = \sgn \left(\E_{\cG \sim \normalg{g}^n}[g(q(\cG_1),\ldots,q(\cG_m)) \mid \cG_0 = x]\right).
\]

We will consider functions on $Nn$ inputs, which we think of as composed of $n$ vectors $x^{(1)},\ldots,x^{(n)} \in \{-1,1\}^N$.
We lift the functions $q,q_0$ from $\mathbb{R}^n$ to $\{-1,1\}^{Nn}$ by defining
\[
 \tilde{q}(x^{(1)},\ldots,x^{(n)}) = q\left(\frac{\sum_{i=1}^N x^{(1)}_i}{\sqrt{N}}, \ldots, \frac{\sum_{i=1}^N x^{(n)}_i}{\sqrt{N}}\right),
\]
and similarly for $q_0$.

If $\E[g] = 0$ then we define $f_{Nn}\colon \{-1,1\}^{Nn} \to \{-1,1\}$ by $f_{Nn} = \tilde{q}$. Otherwise, assume for definiteness that $\E[g] < 0$, and define
\[
 f_{Nn}(x^{(1)},\ldots,x^{(n)}) =
 \begin{cases}
 \tilde{q}(x^{(1)},\ldots,x^{(n)}) & \text{if } \sum_{i=1}^n \sum_{j=1}^N x^{(i)}_j \geq \frac{\E[g]}{2} Nn, \\
 \tilde{q}_0(x^{(1)},\ldots,x^{(n)}) & \text{if } \sum_{i=1}^n \sum_{j=1}^N x^{(i)}_j < \frac{\E[g]}{2} Nn.
 \end{cases}
\]
The central limit theorem shows that $\Pr[f_{Nn} \circ g^{Nn} = g \circ f_{Nn}^m]$ tends to $s_g^L$.

To complete the proof when $q$ is $\{-1,1\}$-valued, we need to show that all low-degree Fourier coefficients of $f_{Nn}$ are small. The total weight of all Fourier coefficients on the ``$(L_1,\ldots,L_n)$ level'' (containing $L_j$ elements from $x^{(j)}$) tends to the corresponding Hermite coefficient of $q$, and so each of these coefficients has magnitude 
\[
 \Theta\left(\frac{1}{\sqrt{\binom{N}{L_1} \cdots \binom{N}{L_n}}}\right) = \Theta\left(\frac{1}{N^{(L_1 + \cdots + L_n)/2}}\right).
\]
In particular, the degree~$L$ Fourier coefficients have magnitude $O(N^{-L/2})$.

Finally, if $q$ is not $\{-1,1\}$-valued, then we first construct fractional ($[-1,1]$-valued) functions $f^*_{Nn}$ with the desired properties. We then convert them to Boolean functions $f_{Nn}$ by independently sampling $f_{Nn}(x)$ for each $x \in \{-1,1\}^{Nn}$ according to the unique distribution supported on $\{-1,1\}$ whose expectation is $f^*_{Nn}$. Standard concentration bounds complete the proof.

\subsection{Proof of \crtCref{lem:sg-nontrivial}}

For $x \in \mathbb{R}$, let
\[
 \gamma(x) = \E_{\cG \sim \normalg{g}}[g(\sgn(\cG_1),\ldots,\sgn(\cG_m)) \mid \cG_0 = x].
\]

In view of \Cref{lem:sg-lb}, it suffices to show that
\[
 \E_{x \sim \stdnormal}[|\gamma(x)|] > |\E[g]|.
\]

As observed in the introduction to \Cref{sec:list-decoding}, the triangle inequality shows that
\[
 \E_{x \sim \stdnormal}[|\gamma(x)|] \geq
 \Bigl| \E_{\cG \sim \normalg{g}}[g(\sgn(\cG_1),\ldots,\sgn(\cG_m))] \Bigr| = |\E[g]|,
\]
with equality only if $|\gamma(x)| = |\E[g]|$ almost surely. Due to continuity, equality is only possible if $\gamma(x) = \E[g]$ for all $x$.

When $x$ is very large in absolute value, it is very likely that $\sgn(\cG_j) = \sgn(\rho_j) \sgn(x)$, where $\rho_j = \E[\cG_0 \cG_j] \neq 0$ by assumption. Hence when $x$ is very large in absolute value, $\gamma(x)$ is close to a $g(y)$ for some $y \in \{-1,1\}^m$, and so $|\gamma(x)| \approx 1$. When $x$ is large enough, $|\gamma(x)| > |\E[g]|$.

\subsection{Proof of \crtCref{thm:sg-AND}}

We will show that the supremum in the definition of $s_g^U$ in \Cref{thm:list-decoding} is attained by $n = 1$ and $q_1 = \cdots = q_m$ being the sign of the first coordinate. \Cref{lem:sg-lb} gives a matching lower bound, showing that in this case, $s_g = s_g^U$.

\smallskip

As shown in the proof of \Cref{thm:list-decoding}, $2s_g^U - 1$ is the maximum of
\[
 \E_{\mathcal{G} \sim \normalg{g}}[q_0(\mathcal{G}_0) g(q_1(\mathcal{G}_1),\ldots,q_m(\mathcal{G}_m))],
\]
and so it suffices to show that this expression is maximized when $q_1,\ldots,q_m$ are sign functions.

We will be using \Cref{thm:neeman}, and to this end we need to switch from $\{-1,1\}$-valued functions to $\{0,1\}$-valued functions. Accordingly, define $p_i(x) = (1+q_i(x))/2$, so that $q_i(x) = 2p_i(x) - 1$. We have
\[
 g(q_1,\ldots,q_m) = 2 \prod_{i=1}^m \frac{1+q_i}{2} - 1 = 2\prod_{i=1}^m p_i - 1,
\]
and so $2s_g^U - 1$ is the maximum of
\[
 \E_{\mathcal{G} \sim \normalg{g}}[(2p_0(\mathcal{G}_0)-1)(2p_1(\mathcal{G}_1)\cdots p_m(\mathcal{G}_m)-1)] =
 4\E_{\mathcal{G} \sim \normalg{g}}[p_0(\mathcal{G}_0) \cdots p_m(\mathcal{G}_m)] - 2\E[p_0] + 1.
\]
Incidentally, the formula for $g$ also shows that $\hat{g}(\{j\}) = 2^{-(m-1)}$, and so $\E[\mathcal{G}_0 \mathcal{G}_j] > 0$.
\Cref{thm:neeman} therefore shows that fixing $\E[p_0]$, the maximum of the expression for $2s_g^U - 1$ is attained when $p_0,\ldots,p_m$ are threshold functions of their first coordinate with the correct expectations. In particular, $p_1(x) = \cdots = p_m(x) = 1_{x_1 \leq 0}$, and so $q_1(x) = \cdots = q_m(x) = \sgn(x_1)$. Maximizing over $\E[p_0]$, we conclude that the supremum in the definition of $s_g^U$ is attained for these $q_1,\ldots,q_m$.

\medskip

When $m = 2$, this shows that
\[
 s_\land = \frac{1}{2} + \frac{1}{2} \E_{x \in \stdnormal} \left[\Bigl| \E_{\cG \sim \normalg{g}}[\sgn(\cG_1) \land \sgn(\cG_2) \mid \cG_0 = x] \Bigr|\right],
\]
where $a \land b = \min(a,b)$.
To calculate this, note that given $\cG_0$, we can sample $\cG_1,\cG_2$ by letting $\cH_0,\cH_1,\cH_2$ be independent standard Gaussians and taking
\[
 \cG_1 = \frac{\cG_0 + \cH_0 + \cH_1}{\sqrt{3}},
 \quad
 \cG_2 = \frac{\cG_0 - \cH_0 + \cH_2}{\sqrt{3}}.
\]
Denoting $\Phi^c(t) = \Pr[\stdnormal > t]$, this shows that for $x \in \mathbb{R}$,
\begin{multline*}
 \E_{\cG \sim \normalg{g}}[\sgn(\cG_1) \land \sgn(\cG_2) \mid \cG_0 = x] =
 2\Pr[\cG_1,\cG_2 > 0] - 1 =
 2\Pr[\cH_0 + \cH_1 > -x, -\cH_0 + \cH_2 > -x] - 1 = \\
 2\E_{y \sim \stdnormal}[\Phi^c(-x-y) \Phi^c(-x+y)] - 1.
\end{multline*}
We conclude that
\[
 s_\land = \frac{1}{2} + \frac{1}{2} \E_{x \sim \stdnormal} \left|
 2\E_{y \sim \stdnormal}[\Phi^c(-x-y) \Phi^c(-x+y)] - 1
 \right| \approx 0.814975356673002.
\]


\section{Exact classification}
\label{sec:exact}

In this section we classify all exact multi-polymorphisms, that is, all exact solutions to the equation \(f_0 \circ g^n = g \circ (f_1,\ldots,f_m)\), thus proving \Cref{thm:exact-multi-polymorphisms}. As mentioned in the introduction, this was essentially solved by Dokow and Holzman~\cite{DH09}; they gave all solutions to the equation when $f_0,\ldots,f_m$ are \emph{Paretian}, that is, satisfy $f_j(b,\ldots,b) = b$ for $b \in \{0,1\}$. When $g$ is not XOR or NXOR, they gave all solutions under the relaxed assumption that $f_0,\ldots,f_m$ are \emph{surjective}, that is, non-constant.

We provide a new proof that of the characterization. Our proof goes one step further and determines all solutions to the slightly more general equation 
\[
 f_0 \circ g^n = h \circ (f_1,\ldots,f_m),
\]
where $g,h \colon \{-1,1\}^m \to \{-1,1\}$ and $f_0,\ldots,f_m \colon \{-1,1\}^n \to \{-1,1\}$.
(We switched from $\{0,1\}$ to $\{-1,1\}$ in the interest of symmetry.)
In addition, we complete the characterization when the functions $f_0,\ldots,f_m$ are allowed to be constant.

The argument is in two parts. In the first part, we classify (except to some corner cases) all \emph{real} solutions to the equation above. That is, we allow the functions to be real-valued rather than Boolean ($\{-1,1\}$-valued). We extend a function $f\colon \{-1,1\}^n \to \mathbb{R}$ to a function on $\mathbb{R}^n$ multilinearly. That is, we consider the Fourier expansion of $f$,
\[
 f(x) = \sum_{S \subseteq [n]} \hat{f}(S) \prod_{i \in S} x_i,
\]
and extend $f$ to $\mathbb{R}^n$ using this expression.

Equivalently, we think of $f_0,\ldots,f_m,g,h$ as multilinear polynomials, that is, polynomials in which no variable is squared, and interpret the equation above in the following way:
\[
 f_0(g(z_{11},\ldots,z_{1m}), \ldots, g(z_{n1},\ldots,z_{nm})) =
 h(f_1(z_{11},\ldots,z_{n1}), \ldots, f_m(z_{1m},\ldots,z_{nm})),
\]
which is an identity of multilinear polynomials.

In the second part, we determine which of these solutions correspond to Boolean functions.

\subsection{Multilinear classification}
\label{sec:exact-multilinear}

We use the notation $z_{IJ}$ to refer to the monomial $\prod_{i \in I} \prod_{j \in J} z_{ij}$. We also use similar self-explanatory notations such as $z_{iJ},z_{Ij}$. In case $I$ or $J$ are the empty set, $z_{IJ}$ represent the constant monomial~$1$.

Throughout, we assume that $f_0,\ldots,f_m$ are multilinear polynomials over the variables $x_1,\ldots,x_n$; that $g,h$ are multilinear polynomials over the variables $y_1,\ldots,y_m$; and that
\[
 f_0 \circ g^n = h \circ (f_1,\ldots,f_m).
\]

We first identify the set of coordinates that $f_0,\ldots,f_m$ depend on.

\begin{lemma} \label{lem:junta-variables}
Suppose that $f_0,\ldots,f_m$ are non-constant, and that $g,h$ together depend on all coordinates in $[m]$, where $m \geq 1$. Then $g,h$ depend on the same coordinates, and either $\deg(g) = \deg(h) = 1$, or the following holds.

There exists a set $U \subseteq [n]$ such that $f_0,\ldots,f_m$ depend only on the coordinates in $U$, and furthermore
\[
 \hat{f}_0(U),\ldots,\hat{f}_m(U) \neq 0.
\]
If $|U|>1$ then moreover
\[
 \hat{g}([m]),\hat{h}([m]) \neq 0.
\]
\end{lemma}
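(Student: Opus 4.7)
The plan is to pass to the multilinear Fourier expansion and extract the structure of the identity in three stages: matching the supports, the degrees, and finally the top-degree Fourier coefficients of the two sides. For the first stage I compute $\partial_{z_{ij}}$ of both sides. On the LHS this equals $(\partial_{y_i}f_0)(g(\row_1),\dots,g(\row_n))\cdot(\partial_{y_j}g)(\row_i)$, which is nonzero as a polynomial exactly when $i\in U_0$ (the variables $f_0$ depends on) and $j\in V_g$ (the variables $g$ depends on); for this I use that the polynomials $g(\row_1),\dots,g(\row_n)$ live in disjoint variable sets and so are algebraically independent, whence a nonzero multilinear polynomial composed with them cannot vanish. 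The analogous computation on the RHS is nonzero iff $j\in V_h$ and $i\in U_j$. Equating the two sets of appearing variables forces $V_g=V_h$ and $U_0=U_j$ for each $j\in V_h$; combined with $V_g\cup V_h=[m]$ this yields $V_g=V_h=[m]$ and $U_0=U_1=\dots=U_m=:U$.

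For the second stage I fix $i\in U$ and compute the degree of each side in the variables of row~$i$. On the LHS these appear only inside $g(\row_i)$, so the degree equals $\deg g=:d$. On the RHS, since every $f_j$ depends on $i$, each Fourier term $\hat h(S)\prod_{j\in S}f_j(\col_j)$ has row-$i$ degree $|S|$, whose maximum is $\deg h$; none of these contributions cancel because the factors live in disjoint column variables. Hence $\deg g=\deg h=:d$. A symmetric column-wise computation gives $\deg f_0=\deg f_j=:d'$ for every $j\in[m]$.

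If $d=1$ we are in the linear alternative, so assume $d\ge 2$. A monomial $z_A$ in the top-degree part (total degree $d'd$) must, from the LHS, have row support of size $d'$ with each active row carrying exactly $d$ entries, and from the RHS, have column support of size $d$ with each active column carrying $d'$ entries; these jointly force $A=T\times S$ with $|T|=d'$ and $|S|=d$. After affinely shifting the constants so that $\hat g(\emptyset)=\hat h(\emptyset)=\hat f_j(\emptyset)=0$ (a transformation that absorbs into $f_0$ and $h$ and does not change any top Fourier coefficient), matching the coefficient of $z_{T\times S}$ on both sides yields
\[
 \hat f_0(T)\,\hat g(S)^{d'} \;=\; \hat h(S)\prod_{j\in S}\hat f_j(T),\qquad |T|=d',\ |S|=d,
\]
together with analogous relations $\hat f_0(T_A)\prod_{i\in T_A}\hat g(A_i)=\hat h(S_A)\prod_{j\in S_A}\hat f_j(A^j)$ at every shape~$A$. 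From these, whenever $\hat f_0(T)\hat g(S)\neq 0$ we read off $\hat h(S)\neq 0$ and $\hat f_j(T)\neq 0$ for every $j\in S$, so the top Fourier supports of $g$ and $h$ coincide and the $f_j$'s share a common family of top supports. To upgrade this to the single set $S=[m]$ (giving $\hat g([m]),\hat h([m])\neq 0$ when $|U|>1$) and to $T=U$ (giving $\hat f_j(U)\neq 0$), I would argue by contradiction using the non-rectangular shape identities: if the top support of $g$ omitted some $j^*\in[m]$, then combining the lower-degree Fourier terms of $g$ that carry $j^*$ (which exist since $V_g=[m]$) with the rectangular identity would force a vanishing top Fourier of some $f_j$ along a coordinate incompatible with $U_j=U$; the symmetric row-side argument pushes $T$ up to $U$.

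The first two stages are essentially mechanical; the real difficulty is this final nonvanishing step, and it is here that the hypothesis $d\ge 2$ is indispensable. The linear case genuinely admits solutions with spread-out top Fourier mass, for instance $f_0=\dots=f_m=\bigoplus_{i\in U}x_i$ paired with $g=h=\bigoplus_{j\in[m]}y_j$, which realizes the alternative $\deg g=\deg h=1$ in the lemma.
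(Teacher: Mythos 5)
Your Stages 1 and 2 are sound (the derivative computation and the disjointness/algebraic-independence argument give a clean route to $V_g = V_h = [m]$ and $U_0 = \cdots = U_m =: U$, and the degree-matching via row-$i$ and column-$j$ degrees is correct), though the paper takes a slightly more direct route, working entirely with coefficients of $z_{ST}$ in the normalized Fourier expansion instead of with partial derivatives. But Stage~3, which you yourself flag as the ``real difficulty,'' has a genuine gap that your sketch does not close.

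The conclusion $\hat f_0(U),\ldots,\hat f_m(U)\neq 0$ is, after your Stage 1, equivalent to $\deg f_j = |U|$ for all $j$: you must show each $f_j$ has a monomial using \emph{every} variable it depends on. Nothing in your argument establishes this. Stage~2 gives $\deg f_0 = \cdots = \deg f_m = d'$, but it is perfectly possible for a multilinear function to depend on all coordinates of $U$ while $d' < |U|$ (e.g.\ $x_1 x_2 + x_2 x_3 + x_3 x_1$). Your rectangular identity $\hat f_0(T)\hat g(S)^{d'} = \hat h(S)\prod_{j\in S}\hat f_j(T)$ at $|T|=d'$, $|S|=d$ is therefore silent on whether $U$ itself carries a nonzero top coefficient, and your sketched contradiction (``combining the lower-degree Fourier terms of $g$ that carry $j^*$ $\ldots$ would force a vanishing top Fourier of some $f_j$'') is vague and never turned into a concrete coefficient comparison. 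Moreover, the hypothesis $|U|>1$ appears in the lemma precisely to guard the statement $\hat g([m]),\hat h([m])\neq 0$, but you never indicate where $|U|>1$ enters your argument.

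The paper avoids these problems by choosing $U$ differently from the outset: after normalizing, $U$ is taken to be an inclusion-\emph{maximal} set in the Fourier support of $f_0$, so $\hat f_0(U)\neq 0$ by fiat. One then shows $\hat f_j(U)\neq 0$ for all $j$ via $\hat f_0(U)\hat g(T)^{|U|} = \hat h(T)\prod_{j\in T}\hat f_j(U)$ (using that $g$ depends on all coordinates to pick $T\ni j$), and then shows $f_0$ has no support outside $U$ by comparing the coefficient of a \emph{non-rectangular} monomial $z_{Uj}z_{V(T\setminus\{j\})}$ with $|T|\geq 2$ (this is where $\deg h>1$ is used). Finally $|U|>1$ enters through the coefficient $\hat f_0(U)\hat g(T_1)\hat g(T_2)^{|U|-1}$ of $z_{iT_1}z_{(U\setminus\{i\})T_2}$, which shows the support of $\hat g$ is closed under unions. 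If you want to rescue your approach, you must import some version of this non-rectangular coefficient comparison: the rectangular top-degree identities alone cannot force $d' = |U|$ or $d = m$.
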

\begin{proof}
For $j \in [m]$, let $F_i = f_i - \hat{f}_i(\emptyset)$, and let $H(y_1,\ldots,y_m) = h(y_1 + \hat{f}_1(\emptyset), \ldots, y_m + \hat{f}_m(\emptyset))$. Thus $H \circ (F_1,\ldots,F_m) = h \circ (f_1,\ldots,f_m)$. Similarly, let $G = g - \hat{g}(\emptyset)$ and $F_0(x_1,\ldots,x_n) = f_0(x_1 + \hat{g}(\emptyset), \ldots, x_n + \hat{g}(\emptyset))$. Thus $F_0 \circ G^n = f_0 \circ g^n$. Therefore
\begin{equation} \label{eq:L1-FGH}
 F_0 \circ G^n = H \circ (F_1,\ldots,F_m),
\end{equation}
where $\hat{F}_1(\emptyset) = \cdots = \hat{F}_m(\emptyset) = \hat{G}(\emptyset) = 0$; $F_0,\ldots,F_m$ are non-constant; and $G,H$ together depend on all coordinates in $[m]$. We will prove the lemma for the new functions $F_0,\ldots,F_m,G,H$. By inspection, the lemma holds also for the original functions.

Expanding \eqref{eq:L1-FGH} according to the Fourier expansion, and using the condition $\hat{F}_1(\emptyset) = \cdots = \hat{F}_m(\emptyset) = \hat{G}(\emptyset) = 0$, we obtain
\begin{equation} \label{eq:L1-expansion}
 \sum_{S \subseteq [n]} \hat{F}_0(S) \prod_{i \in S} \sum_{\emptyset \neq T_i \subseteq [m]} \hat{G}(T_i) \prod_{j \in T_i} z_{ij} =
 \sum_{T \subseteq [m]} \hat{H}(T) \prod_{j \in T} \sum_{\emptyset \neq S_j \subseteq [n]} \hat{F}_j(S_j) \prod_{i \in S_j} z_{ij}.
\end{equation}
Notice that every monomial appears in exactly one summand on each side.

\paragraph{$G$ depends on all coordinates}
Let $T \neq \emptyset$ be such that $\hat{H}(T) \neq 0$. For every $j \in T$, since $F_j$ is non-constant, $\hat{F}_j(S_j) \neq 0$ for some $S_j \neq \emptyset$. Comparing coefficients of $\prod_{j \in T} z_{S_jj}$ in~\eqref{eq:L1-expansion}, we get
\[
 \hat{F}_0(S) \prod_{i \in S} \hat{G}(T_i) = \hat{H}(T) \prod_{j \in T} \hat{F}_j(S_j),
\]
where $S = \bigcup_{j \in T} S_j$ and $T_i = \{ j \in T : i \in S_j \}$. Thus $\hat{F}_0(S) \neq 0$ and $\hat{G}(T_i) \neq 0$ for all $i \in S$. For every $j \in T$, since $S_j$ is non-empty, we can find $i \in S_j$, and so $j \in T_i$. Consequently, if $H$ depends on some $j \in [m]$ then $G$ also depends on $j$. This implies that $G$ depends on all coordinates.

\paragraph{$H$ depends on all coordinates}
Since $F_0$ is non-constant, $\hat{F}_0(S) \neq 0$ for some $S \neq \emptyset$. Let $T \neq \emptyset$. Comparing coefficients of $z_{ST}$ in~\eqref{eq:L1-expansion}, we get
\[
 \hat{F}_0(S) \hat{G}(T)^{|S|} = \hat{H}(T) \prod_{j \in T} \hat{F}_j(S).
\]
Since $G$ depends on all coordinates, for every $j \in [m]$ there exists a set $T$ containing $j$ such that $\hat{G}(T) \neq 0$. This shows that $\hat{F}_j(S) \neq 0$ for all $j \in [m]$.
We conclude that if $T \neq \emptyset$ then
$\hat{G}(T) \neq 0$ iff $\hat{H}(T) \neq 0$. In particular, $G,H$ both depend on all coordinates, and $\deg(G) = \deg(H) > 1$.

This argument also shows that if $\hat{F}_0(S) \neq 0$ then $\hat{F}_j(S) \neq 0$ for all $j \in [m]$.

\paragraph{There exists a unique inclusion-maximal set $U$ in the support of $\hat{F}_0$}
Let $U$ be an inclusion-maximal set in the support of the Fourier expansion of $F_0$. Thus $\hat{F}_0(U),\hat{F}_1(U),\ldots,\hat{F}_m(U) \neq 0$. We claim that $F_0$ depends only on the coordinates in $U$. Otherwise, $\hat{F}_0(V) \neq 0$ for some $V \nsubseteq U$. Since $\deg(H)>1$, there exists a set $T$ of size at least~$2$ such that $\hat{H}(T) \neq 0$. Choose an arbitrary $j \in T$. The coefficient of $z_{Uj} z_{V(T \setminus \{j\})}$ on the right-hand side of~\eqref{eq:L1-expansion} is
\[
 \hat{H}(T) \hat{F}_j(U) \prod_{k \in T \setminus \{j\}} \hat{F}_k(V) \neq 0.
\]
The coefficient of the same monomial on the left-hand side of~\eqref{eq:L1-expansion} is a multiple of $\hat{F}_0(U \cup V)$, showing that $\hat{F}_0(U \cup V) \neq 0$, which contradicts the choice of $U$.

\paragraph{$F_j$ depends only on $U$}
Let $j \in [m]$. We claim that $F_j$ also depends only on the coordinates in $U$. Otherwise, $\hat{F}_j(V) \neq 0$ for some $V \nsubseteq U$. Since $H$ depends on all coordinates, there exists a set $T$ containing $j$ such that $\hat{H}(T) \neq 0$. Comparing coefficients of $z_{Vj} z_{U(T \setminus \{j\})}$, as before we obtain that either $\hat{F}_0(V) \neq 0$ (if $|T| = 1$) or $\hat{F}_0(V \cup U) \neq 0$ (if $|T| > 1$), which contradicts the fact that $F_0$ depends only on the coordinates in $U$.

\paragraph{If $|U| > 1$ then $\hat{G}([m]),\hat{H}([m]) \neq 0$}
We now show that if $|U| > 1$ then the support of the Fourier expansion of $G$ is closed under union, and so $\hat{G}([m]) \neq 0$ since $G$ depends on all coordinates. This implies that $\hat{H}([m]) \neq 0$ as well.

Let $i \in U$, and let $T_1,T_2 \neq \emptyset$ be two different sets in the support of the Fourier expansion of $G$. The coefficient of $z_{iT_1} z_{(U \setminus [i])T_2}$ on the left-hand side of~\eqref{eq:L1-expansion} is
\[
 \hat{F}_0(U) \hat{G}(T_1) \hat{G}(T_2)^{|U|-1} \neq 0.
\]
The coefficient of the same monomial on the right-hand side of~\eqref{eq:L1-expansion} is a multiple of $\hat{H}(T_1 \cup T_2)$, showing that $\hat{H}(T_1 \cup T_2) \neq 0$ and so $\hat{G}(T_1 \cup T_2) \neq 0$. The support of the Fourier expansion of $G$ is thus closed under union. \end{proof}

We now determine a formula for all functions, except for some corner cases.

\begin{lemma} \label{lem:formula}
Suppose that $f_0,\ldots,f_m$ are non-constant, that $g,h$ together depend on all coordinates in $[m]$, where $m \geq 1$, and that $\deg(g),\deg(h) > 1$. Let $U$ be the set promised by \Cref{lem:junta-variables}, and suppose that $|U| > 1$.

There exist constants $A_0,\ldots,A_m,C,D \neq 0$ and $\kappa_0,\ldots,\kappa_m,B_0,\ldots,B_m$ such that
\begin{align*}
g(y) &= C \prod_{j \in [m]} (y_j + \kappa_j) - \kappa_0,
&	
h(y) &= D \prod_{j \in [m]} (y_j + B_j) - B_0,
\\
f_0(x) &= A_0 \prod_{i \in U} (x_i + \kappa_0) - B_0,
&
f_j(x) &= A_j \prod_{i \in U} (x_i + \kappa_j) - B_j,
\end{align*}
where
\[
 A_0 C^{|U|} = D \prod_{j \in [m]} A_j.
\]
\end{lemma}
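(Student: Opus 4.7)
The plan is to work with the shifted functions $F_0, \ldots, F_m, G, H$ introduced in the proof of \Cref{lem:junta-variables}, and since $f_0, \ldots, f_m$ depend only on the coordinates in $U$, I may assume $n = |U|$. First I would extract a family of Fourier identities by matching monomials in the expansion~\eqref{eq:L1-expansion}; the key new identity, obtained by matching the monomial $z_{VT_1} z_{UT_2}$ for non-empty $V \subseteq U$ and a partition $[m] = T_1 \sqcup T_2$ into non-empty pieces, reads
\[
\hat F_0(U)\,\hat G([m])^{|V|}\,\hat G(T_2)^{|U|-|V|}
\;=\;
\hat H([m])\,\prod_{j \in T_1}\hat F_j(V)\,\prod_{j \in T_2}\hat F_j(U).
\]
The case $V = U$ gives the baseline $\hat F_0(U)\hat G([m])^{|U|} = \hat H([m])\prod_j \hat F_j(U)$. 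Specializing $T_1 = \{j\}$ and dividing by the baseline yields, for every $j \in [m]$ and every non-empty $V \subseteq U$,
\[
\hat F_j(V) \;=\; \hat F_j(U)\,\kappa_j^{|U|-|V|},
\qquad
\kappa_j := \hat G([m]\setminus\{j\})/\hat G([m])
\]
(when the numerator vanishes the identity forces $\hat F_j(V)=0$ for $V \subsetneq U$, which is consistent with $\kappa_j = 0$). Feeding this into the identity $\hat F_0(V)\hat G([m])^{|V|} = \hat H([m])\prod_j \hat F_j(V)$ (from the monomial $z_{V[m]}$) gives the analogous formula $\hat F_0(V) = \hat F_0(U)\,\tilde\kappa_0^{|U|-|V|}$ with $\tilde\kappa_0 = C\prod_{j=1}^m \kappa_j$ and $C := \hat g([m])$. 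Undoing the Fourier shifts, these geometric-in-$|V|$ structures translate precisely to $f_j(x) = A_j \prod_{i \in U}(x_i + \kappa_j) - B_j$ with $A_j = \hat f_j(U)$, and $f_0(x) = A_0 \prod_{i \in U}(x_i + \kappa_0) - B_0$ with $\kappa_0 = \tilde\kappa_0 - \hat g(\emptyset)$.

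To derive the formula for $g$, I would substitute these factored forms back into $f_0 \circ g^n = h \circ (f_1,\ldots,f_m)$ and change variables to $\tilde z_{ij} := z_{ij} + \kappa_j$. Writing $W_j := \prod_{i \in U}\tilde z_{ij}$, the right-hand side becomes $h(A_1W_1 - B_1,\ldots,A_m W_m - B_m)$, whose expansion as a polynomial in the $\tilde z_{ij}$'s is supported only on the column-product monomials $\prod_{j \in T'}\prod_{i \in U}\tilde z_{ij}$, one for each $T' \subseteq [m]$. The left-hand side equals $A_0 \prod_{i \in U} q(\row_i \tilde z) - B_0$ where $q(y) := g(y_1-\kappa_1,\ldots,y_m-\kappa_m) + \kappa_0$. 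Expanding $q(y) = \sum_S \hat q(S)\prod_{j \in S} y_j$, the product $\prod_i q(\row_i \tilde z)$ decomposes as $\sum_{(S_i)_{i \in U}}\prod_i \hat q(S_i) \cdot \prod_{i \in U}\prod_{j \in S_i}\tilde z_{ij}$, and distinct tuples produce distinct monomials in $\tilde z$. Comparing with the column-product support of the right-hand side, every tuple with $\prod_i \hat q(S_i) \neq 0$ must satisfy $S_{i_1} = S_{i_2}$ for all $i_1, i_2 \in U$. Since $|U| \geq 2$ by hypothesis, this forces the support of $\hat q$ to have size at most one, and since $\hat q([m]) = \hat g([m]) = C \neq 0$, we conclude $q(y) = C\prod_{j=1}^m y_j$, i.e.\ $g(y) = C\prod_{j=1}^m(y_j + \kappa_j) - \kappa_0$.

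Finally, both sides of the equation now equal $A_0 C^{|U|}\prod_j W_j - B_0$, so $h(A_1 W_1 - B_1, \ldots, A_m W_m - B_m) = A_0 C^{|U|}\prod_j W_j - B_0$ as polynomials in $W_1,\ldots,W_m$. Expanding $h$ in the basis $\{\prod_{j \in T}(y_j + B_j)\}_{T \subseteq [m]}$, the substitution $y_j = A_j W_j - B_j$ sends each basis element to the distinct monomial $\prod_{j \in T} A_j W_j$, so matching coefficients forces all basis coefficients with $T \notin \{\emptyset, [m]\}$ to vanish. This yields $h(y) = D \prod_j(y_j + B_j) - B_0$ with $D\prod_j A_j = A_0 C^{|U|}$, completing the proof. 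The main obstacle is the support-of-$\hat q$ argument in the second paragraph, which crucially uses $|U| \geq 2$ to generate a contradiction from two distinct support elements; the rest amounts to careful bookkeeping with the Fourier shifts that mediate between $f_0, g$ and $F_0, G$.
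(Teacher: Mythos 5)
Your proof is correct, but it takes a genuinely different route from the paper. The paper starts with an \emph{adaptive} affine shift: it sets $F_j(x) = f_j(x_1-\kappa_j,\ldots,x_n-\kappa_j)$ with $\kappa_j$ chosen so that $\hat F_j([n-1]) = 0$; this is what makes a single monomial comparison (against $z_{[n-1][m]}\,z_{nT}$) force $\hat G(T) = 0$ for every $T \neq \emptyset,[m]$, after which $G$ is a shifted monomial, the $F_j$'s become shifted monomials, and $F_0, H$ follow by one more substitution. You instead leave the functions as constant shifts $F_j = f_j - \hat f_j(\emptyset)$ and extract the full geometric decay $\hat F_j(V) = \hat F_j(U)\,\kappa_j^{|U|-|V|}$ directly from the family of monomials $z_{VT_1}z_{UT_2}$, with $\kappa_j = \hat G([m]\setminus\{j\})/\hat G([m])$; the product forms of $f_0,\ldots,f_m$ then come out first, and $g,h$ are pinned down afterwards by the column-product support argument for the substituted equation. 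Both are legitimate; the paper's adaptive choice of $\kappa_j$ shortens the bookkeeping by making a single coefficient vanish, while yours is more explicit about where the $\kappa_j$'s come from (ratios of top-level Fourier coefficients) and gives a tidy ``distinct monomial'' argument for why $|U| \geq 2$ forces the support of the shifted $g$ to be a singleton. You correctly flag the degenerate case $\kappa_j = 0$, and the final coefficient-matching in the basis $\{\prod_{j \in T}(y_j + B_j)\}$ to get $h$ and the relation $A_0 C^{|U|} = D\prod_j A_j$ is sound.
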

\begin{proof}
Suppose for definiteness that $U = [n]$, where $n \geq 2$.

\paragraph{Reducing $g,f_1,\ldots,f_m$ to affine shifts of monomials}
Let $\kappa_1,\ldots,\kappa_m$ be parameters to be chosen later. For $j \in [m]$, define $F_j(x_1,\ldots,x_n) = f_j(x_1 - \kappa_j,\ldots,x_n - \kappa_j)$, and let $G(y_1,\ldots,y_m) = g(y_1 - \kappa_1, \ldots, y_m - \kappa_m)$. Notice that
\[
 f_0 \circ G^n = h \circ (F_1,\ldots,F_m),
\]
which implies that
\begin{equation} \label{eq:L2-expansion}
 \sum_{S \subseteq [n]} \hat{f}_0(S) \prod_{i \in S} \sum_{T_i \subseteq [m]} \hat{G}(T_i) \prod_{j \in T_i} z_{ij} =
 \sum_{T \subseteq [m]} \hat{h}(T) \prod_{j \in T} \sum_{S_j \subseteq [n]} \hat{F}_j(S_j) \prod_{i \in S_j} z_{ij}.
\end{equation}

We choose $\kappa_j$ in such a way that $\hat{F}_j([n-1]) = 0$. This is possible since
\[
 \hat{F}_j([n-1]) = \hat{f}_j([n-1]) - \kappa_j \hat{f}_j([n]),
\]
and by \Cref{lem:junta-variables}, $\hat{f}_j([n]) \neq 0$. The lemma also implies that $\hat{G}([m]),\hat{h}([m]) \neq 0$.

Let $T \neq \emptyset,[m]$. Consider the monomial $\zeta = z_{[n-1][m]} z_{nT} = z_{[n-1] \overline{T}} z_{[n] T}$. Since $\zeta$ mentions all row indices and all column indices, it only appears once on each side of~\eqref{eq:L2-expansion} (for $S = [n]$ and $T = [m]$). Comparing coefficients, we obtain
\[
 \hat{f}_0([n]) \hat{G}([m])^{n-1} \hat{G}(T) = \hat{h}([m]) \prod_{j \in T} \hat{F}_j([n]) \prod_{j \notin T} \hat{F}_j([n-1]) = 0.
\]
It follows that $\hat{G}(T) = 0$ for all $T \neq \emptyset,[m]$. Therefore there exist constants $C \neq 0$ and $\kappa_0$ such that
\[
 G(y_1,\ldots,y_m) = C y_{[m]} - \kappa_0.
\]
Every monomial appearing on the left-hand side of~\eqref{eq:L2-expansion} is thus of the form $z_{S[m]}$ for some $S \subseteq [n]$.

Let $S_1,\ldots,S_m \neq \emptyset$ be such that $\hat{F}_j(S_j) \neq 0$. The coefficient of $\prod_{j \in [m]} z_{S_jj}$ on the right-hand side of~\eqref{eq:L2-expansion} is
\[
 \hat{h}([m]) \prod_{j \in [m]} \hat{F}_j(S_j) \neq 0,
\]
and so necessarily $S_1 = \cdots = S_m$. Since we know that $\hat{F}_j([n]) \neq 0$ for all $j \in [m]$, it follows that the Fourier expansions of $F_1,\ldots,F_m$ are supported on $\emptyset,[n]$, and so there exist $A_1,\ldots,A_m \neq 0$ and $B_1,\ldots,B_m$ such that
\[
 F_j(x_1,\ldots,x_n) = A_j x_{[n]} - B_j.
\]

\paragraph{Reducing $f_0,h$ to affine shifts of monomials}
Let $F_0(x_1,\ldots,x_n) = f_0(Cx_1-\kappa_0,\ldots,Cx_n-\kappa_0)$, so that
\[
 F_0 \circ y_{[m]}^n = f_0 \circ G^n.
\]
Similarly, let $H(y_1,\ldots,y_m) = h(A_1y_1-B_1,\ldots,A_my_m-B_m)$, so that
\[
 H \circ (x_{[n]},\ldots,x_{[n]}) = h \circ (F_1,\ldots,F_m).
\]
Thus
\[
 F_0(z_{1[m]},\ldots,z_{n[m]}) = H(z_{[n]1},\ldots,z_{[n]m}).
\]
Every monomial appearing on the left-hand side is of the form $z_{S[m]}$, and every monomial appearing on the right-hand side is of the form $z_{[n]T}$. Consequently, the only monomials appearing in this equation are $1$ and $z_{[n][m]}$. Therefore there exist $\alpha,\beta \neq 0$ and $\gamma,\delta$ such that
\begin{align*}
 F_0(x) &= \alpha x_{[n]} - \gamma, &
 H(y) &= \beta y_{[m]} - \delta.
\end{align*}
Inspection shows that $\alpha = \beta$ (which correspond to $A_0,D$) and $\gamma = \delta$ (which correspond to $B_0$).

Unrolling, we get the formulas in the statement of the lemma.
\end{proof}

\subsection{Boolean classification}
\label{sec:exact-boolean}

Since we are mainly interested in Boolean functions, we need to determine when functions of the form given by \Cref{lem:formula} are Boolean.

\begin{lemma} \label{lem:boolean}
Suppose that $f\colon \{-1,1\}^n \to \{-1,1\}$ is given by
\[
 f = A \prod_{i=1}^n (x_i + \kappa_i) - B,
\]
where $A \neq 0$ and $n \geq 1$. Then one of the following holds:
\begin{itemize}
\item $\kappa_1 = \cdots = \kappa_n = B = 0$ and $A \in \{-1,1\}$, i.e.
\[
 f = A \prod_{i=1}^n x_i, \text{ where } A \in \{-1,1\}.
\]
\item $\kappa_1,\ldots,\kappa_n,B \in \{-1,1\}$ and $A = 2B/\prod_{i=1}^n (2\kappa_i)$, i.e.
\[
 f = 2B \prod_{i=1}^n \frac{\kappa_i x_i + 1}{2} - B, \text{ where } \kappa_1,\ldots,\kappa_n,B \in \{-1,1\}.
\]
\end{itemize}
\end{lemma}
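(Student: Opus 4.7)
The plan is to use the Boolean condition $f^2 \equiv 1$ on $\{-1,1\}^n$ to obtain polynomial constraints on the parameters $A, B, \kappa_i$, and then solve by case analysis.

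The first step is to compute the multilinear reduction of $P^2$, where $P = \prod_i (x_i+\kappa_i)$. Using $x_i^2 \equiv 1$ on the cube, $(x_i+\kappa_i)^2 \equiv (1+\kappa_i^2) + 2\kappa_i x_i$, so
\[
P^2 \equiv \sum_{S \subseteq [n]} 2^{|S|}\Bigl(\prod_{i \in S}\kappa_i\Bigr)\Bigl(\prod_{i \notin S}(1+\kappa_i^2)\Bigr)\, x_S,
\]
while $P = \sum_S \bigl(\prod_{i \notin S}\kappa_i\bigr)\, x_S$ is already multilinear. Equating the coefficient of each $x_S$ in $A^2 P^2 - 2ABP + (B^2-1)$ to zero yields, for every $S \neq \emptyset$,
\[
A \cdot 2^{|S|}\Bigl(\prod_{i \in S}\kappa_i\Bigr)\Bigl(\prod_{i \notin S}(1+\kappa_i^2)\Bigr) = 2B \prod_{i \notin S}\kappa_i, \qquad (\ast_S)
\]
together with the constant-term relation $A^2\prod_i(1+\kappa_i^2) - 2AB\prod_i\kappa_i + B^2 = 1$.

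The second step is a case split on $Z = \{i : \kappa_i = 0\}$. Testing $(\ast_{\{j\}})$ for any $j \notin Z$: the left-hand side is nonzero since $A, \kappa_j \neq 0$, while the right-hand side vanishes as soon as $Z \setminus \{j\} = Z \neq \emptyset$. Hence $Z = \emptyset$ or $Z = [n]$. In the degenerate case $Z = [n]$, all $\kappa_i = 0$, and $(\ast_{[n]})$ collapses to $0 = 2B$, forcing $B = 0$; the constant-term relation then gives $A^2 = 1$, which is Case~1 of the lemma.

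The main work is the remaining case $Z = \emptyset$ with $n \geq 2$. Dividing $(\ast_{\{j\}})$ by $(\ast_{\{k\}})$ and simplifying forces $\kappa_j^2(1+\kappa_k^2) = \kappa_k^2(1+\kappa_j^2)$, so $\kappa_j^2 = \kappa_k^2 =: c$ is a common value. Then dividing $(\ast_{\{j,k\}})$ by $(\ast_{\{j\}})$ yields $2c = 1+c$, hence $c = 1$ and $\kappa_i \in \{-1, 1\}$ for every $i$. Plugging back into $(\ast_{\{j\}})$ gives $B = A\cdot 2^{n-1}\prod_i\kappa_i$, and the constant-term relation collapses (using $\kappa_i^2 = 1$) to $A^2\cdot 4^{n-1} = 1$, so $A = \pm 1/2^{n-1}$ and $B = \pm\prod_i\kappa_i \in \{-1,1\}$; a short check confirms the identity $A = 2B/\prod_i(2\kappa_i)$, which is Case~2. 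The base case $n = 1$ is a direct computation that yields $A = \pm 1$ and $B = A\kappa_1$, so $f = \pm x_1$, which fits Case~1 (with the canonical choice $\kappa_1 = 0$) or Case~2 (with $\kappa_1 \in \{-1,1\}$). The main obstacle is the algebraic step that pins down $c = 1$ via the comparison of $(\ast_{\{j\}})$ and $(\ast_{\{j,k\}})$; once that identity is established, the rest is book-keeping with the explicit formulas in Case~2.
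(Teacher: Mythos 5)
Your argument is correct and takes a genuinely different route from the paper's. The paper fixes a base point $y$ with all $y_i \neq -\kappa_i$ and tracks the values $f(y_S) = (f(y)+B)\prod_{i\in S}\rho_i - B$ as coordinates are flipped, observing that the Boolean range forces the products $\prod_{i\in S}\rho_i$ to attain at most two values; this pins all $\rho_i$ to a common $\rho \in \{-1,0\}$, which is then unwound to give the two cases. You instead reduce $f^2 - 1$ to a multilinear polynomial on the cube and equate each coefficient to zero, obtaining the family $(\ast_S)$ plus the constant-term constraint; the singleton relations kill the mixed case $\emptyset \subsetneq Z \subsetneq [n]$, the ratio of two singletons forces a common $\kappa_i^2$, and the ratio of a pair to a singleton forces that common value to be $1$. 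Both proofs are correct and of comparable length; the paper's is a pointwise geometric argument (and needs the slightly delicate choice of a base point with $\kappa_i + y_i \neq 0$), while yours is a purely algebraic coefficient comparison, which arguably makes the case analysis more transparent.

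One small caveat shared by both proofs: for $n = 1$ the lemma's dichotomy, as literally stated about the given parameters, is not quite right (e.g., $f = 1\cdot(x_1 + \tfrac12) - \tfrac12 = x_1$ satisfies the hypotheses with $A\neq 0$ but matches neither bullet). The paper's proof only asserts $f \in \{x_1,-x_1\}$ for $n=1$ and your proof papers over it with a ``canonical choice'' of $\kappa_1$; since the lemma is only invoked downstream with $n = |U| \geq 2$, this is harmless, but you should flag the $n=1$ case as an exception rather than claim it falls under the two listed cases for the given parameters.
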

\begin{proof}
If $n = 1$ then $f \in \{x_1,-x_1\}$. Now suppose that $n \geq 2$.
Choose an assignment $y_i \neq -\kappa_i$ for all $i \in [n]$, and let $f(y) = F$, where $F + B \neq 0$. Let $\rho_i = (\kappa_i - y_i)/(\kappa_i + y_i)$, and let $y_S$ be the point different from $y$ on the indices in $S \subseteq [n]$. Then
\[
 f(y_S) = (F + B) \prod_{i \in S} \rho_i - B.
\]
Since $f(y_S) \in \{-1,1\}$ for all $S$, the products $\prod_{i \in S} \rho_i$ must attain exactly two different values. In particular, considering singleton $S$, since $\rho_i \neq 1$ for all $i$, we deduce that all $\rho_i$ are equal to some $\rho \neq 1$. If $\rho \neq 0$ then $\rho^2 \neq \rho$, hence $\rho^2 = 1$, implying that $\rho = -1$. Furthermore, $-F = f(y_{\{1\}}) = (F+B)\rho - B$, and so
\[
 \rho = \frac{B-F}{B+F}, \quad \kappa_i = \frac{1+\rho}{1-\rho} y_i = \frac{B}{F} y_i.
\]

If $\rho = -1$ then $B = 0$ and so $\kappa_1 = \cdots = \kappa_m = 0$.
If $\rho = 0$ then $B = F$ and so $\kappa_i = y_i$. In both cases, we can compute $A$ via $f(y) = F$.
\end{proof}

Putting everything together, we obtain our main classification.

\begin{theorem} \label{thm:main-g-h}
Let $f_0,\ldots,f_m\colon \{-1,1\}^n \to \{-1,1\}$ and $g,h\colon \{-1,1\}^m \to \{-1,1\}$ satisfy
\[
 f_0 \circ g^n = h \circ (f_1,\ldots,f_m).
\]

Let $J \subseteq [m]$ consist of all $j$ such that $f_j$ is non-constant, and let $H\colon \{-1,1\}^J \to \{-1,1\}$ be obtained from $h$ by fixing the $\overline{J}$-coordinates to the constants $f_j$ for $j \in \overline{J}$.

One of the following holds:
\begin{enumerate}[(i)]
\item $f_0$ is constant, and $H = f_0$ is also constant. \label{c:f0-H-const}
\item $g$ and $H$ are constants, and $f_0(g,\ldots,g) = H$. \label{c:g-H-const}
\item There exists $j \in J$ such that $g(y) = \gamma y_j$ and $H(y) = \eta y_j$, and furthermore $f_0(\gamma x) = \eta f_j(x)$. \label{c:g-H-dict}
\item There exists $i \in [n]$ such that $f_j = \phi_j x_i$ for $j \in \{0,\ldots,m\}$, and furthermore $\phi_0 g(y_1,\ldots,y_m) = H(\phi_1 y_1, \ldots, \phi_m y_m)$. \label{c:f-dict}
\item There exist non-empty sets $K \subseteq J$ and $U \subseteq [n]$ such that one of the following holds: \label{c:prod}
\begin{enumerate}[(a)]
\item $g(y) = \gamma y_K$, $H(y) = \eta y_K$ and $f_j(x) = \phi_j x_U$ for $j \in K \cup \{0\}$, where $\phi_0 \gamma^{|U|} = \eta \prod_{j \in K} \phi_j$. \label{c:XOR}
\item There exist $\kappa_j,B_j \in \{-1,1\}$ for $j \in K \cup \{0\}$ such that \label{c:AND}
\begin{align*}
g(y) &= 2\kappa_0 \prod_{j \in K} \frac{1 + \kappa_j y_j}{2} - \kappa_0,
&	
H(y) &= 2B_0 \prod_{j \in K} \frac{1 + B_j y_j}{2} - B_0,
\\
f_0(x) &= 2B_0 \prod_{i \in U} \frac{1 + \kappa_0 x_i}{2} - B_0,
&
f_j(x) &= 2B_j \prod_{i \in U} \frac{1 + \kappa_j x_i}{2} - B_j.
\end{align*}
\end{enumerate}
\end{enumerate}
\end{theorem}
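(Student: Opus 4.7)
The plan is to derive the theorem by combining the multilinear classification (\Cref{lem:junta-variables} and \Cref{lem:formula}) with the Booleanity constraint (\Cref{lem:boolean}). I first normalize by absorbing any constant $f_j$ ($j \in \overline{J}$) into $h$, yielding a function $H$ on the $J$-coordinates and the cleaner equation $f_0 \circ g^n = H \circ (f_j)_{j \in J}$ with each $f_j$ ($j \in J$) non-constant. The easy corner cases fall out immediately: constant $f_0$ forces the right-hand side to be that same constant, so $H = f_0$ (case~(i)); constant $g$ together with non-constant $f_0$ forces $H = f_0(g,\ldots,g)$ (case~(ii)). Assuming from now on that $f_0,g,H$ are non-constant, a short argument using that the right-hand side ignores $\overline{J}$-columns of $Z$ shows that $g$ itself is independent of the $\overline{J}$-coordinates; restricting further to the set $K \subseteq J$ on which $g$ and $H$ jointly depend places us in the setting of \Cref{lem:junta-variables}.

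\Cref{lem:junta-variables} then offers a dichotomy. If $\deg g = \deg H = 1$, Booleanity forces $g$ and $H$ to each be $\pm y_k$ for some single $k \in K$; matching monomial supports on the two sides of the equation pins down a common $k$, and a direct calculation produces case~(iii). Otherwise, the lemma provides a set $U \subseteq [n]$ on which every $f_j$ ($j \in J \cup \{0\}$) depends, together with the nonvanishing of the top Fourier coefficients $\hat{g}(K)$, $\hat{H}(K)$, $\hat{f}_j(U)$. If $|U| = 1$, then each such $f_j$ is a signed dictator $\phi_j x_i$ of the single variable $x_i \in U$, and substitution reduces the equation to $\phi_0 g(y) = H((\phi_j y_j)_{j \in J})$, which is case~(iv).

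The principal case is $|U| > 1$ with $\deg g, \deg H > 1$, where \Cref{lem:formula} gives the explicit parametric forms
\[
 g = C\prod_{j \in K}(y_j+\kappa_j)-\kappa_0,\quad f_0 = A_0\prod_{i \in U}(x_i+\kappa_0)-B_0,
\]
\[
 H = D\prod_{j \in K}(y_j+B_j)-B_0,\quad f_j = A_j\prod_{i \in U}(x_i+\kappa_j)-B_j,
\]
with $A_0 C^{|U|} = D\prod_j A_j$. Applying \Cref{lem:boolean} separately to each of these Boolean functions yields, in each case, a local dichotomy: either the additive shifts and the constant term all vanish (with leading coefficient $\pm 1$), or they all lie in $\{-1,1\}$. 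Because the parameter $\kappa_0$ is shared between $g$ and $f_0$, and $B_0$ is shared between $H$ and $f_0$, these local dichotomies become coupled: $\kappa_0 = 0$ forces every $\kappa_j = 0$ in $g$, which (via $f_0$'s formula) forces $B_0 = 0$, which forces every $B_j = 0$ in $H$, which in turn forces every $\kappa_j = 0$ in $f_j$, landing us in case~(v)(a); the opposite regime places every parameter in $\{-1,1\}$, giving case~(v)(b).

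The principal obstacle is this final propagation chase: one must verify carefully that the local Booleanity dichotomies on the six families of functions lock into a globally consistent regime, so that no mixed Boolean solution escapes the classification. Everything else reduces to direct invocations of the prior lemmas or routine bookkeeping of the constant-$f_j$ corner cases.
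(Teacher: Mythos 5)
Your proposal is correct and follows essentially the same route as the paper's proof: normalize $h$ to $H$ by absorbing the constant $f_j$, dispose of the degenerate cases (constant $f_0$, constant $g$), apply \Cref{lem:junta-variables} to locate $K$ and $U$, peel off the $|K|=1$ and $|U|=1$ dictator cases, and finish with \Cref{lem:formula} plus \Cref{lem:boolean}, chasing the shared parameters $\kappa_0$ and $B_0$ through $g$, $f_0$, $H$ and the $f_j$ to lock the final dichotomy. The only cosmetic difference is in the ordering of the case split: you invoke the degree dichotomy of \Cref{lem:junta-variables} first and then use Booleanity to collapse the degree-one branch, whereas the paper classifies by $|K|$ and observes that $|K|>1$ forces degree greater than one for a Boolean function; the two are interchangeable.
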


\begin{proof}
If $f_0$ is constant and $y \in \{-1,1\}^J$, then since the functions $f_j$ are not constant for $j \in J$, we can find an input $z$ such that $f_0 = (h \circ (f_1,\ldots,f_m))(z) = H(y)$, showing that $H$ is also constant. This is case~\eqref{c:f0-H-const}.

Now suppose that $f_0$ is not constant. We claim that $g$ depends only on the coordinates in $J$. Otherwise, we can find two inputs $z^{(0)},z^{(1)}$, differing only on the coordinates in $\overline{J}$, such that $g(z_0) = 0$ and $g(z_1) = 1$. Thus $f_0(g(z^{(y_1})),\ldots,g(z^{(y_n)})) = f_0(y_1,\ldots,y_n)$. However, $f_0 \circ g^n = h \circ (f_1,\ldots,f_m)$ is invariant to changing only coordinates in $\overline{J}$, showing that $f_0$ is constant, contradicting our assumption.
From now on we think of $g$ as a function on $\{-1,1\}^J$.

Let $K \subseteq J$ be the set of coordinates that $g,H$ together depend on.
According to \Cref{lem:junta-variables}, both of $g,H$ depends on all coordinates in $K$.
If $K = \emptyset$ then $g,H$ are both constants, which is case~\eqref{c:g-H-const}.
If $K = \{j\}$ then $g,H$ are both of the form $\pm y_j$, which is case~\eqref{c:g-H-dict}.

If $|K| > 1$ then $\deg(g),\deg(H) > 1$, since a Boolean function of degree~$1$ depends on a single coordinate. Therefore \Cref{lem:junta-variables} shows that there exists a non-empty set $U \subseteq [n]$ such that $f_0,\ldots,f_m$ depends only on the coordinates in $U$, and $\hat{f}_0(U),\ldots,\hat{f}_m(U) \neq 0$. If $|U|=1$, say $U = \{i\}$, then each of $f_0,\ldots,f_m$ is of the form $\pm x_i$, which is case~\eqref{c:f-dict}.
If $|U| > 1$ then \Cref{lem:formula} shows that we are in case~\eqref{c:prod}:
there exist constants $A_j,C,D \neq 0$ and $\kappa_j,B_j$ such that
\begin{align*}
g(y) &= C \prod_{j \in K} (y_j + \kappa_j) - \kappa_0,
&	
H(y) &= D \prod_{j \in K} (y_j + B_j) - B_0,
\\
f_0(x) &= A_0 \prod_{i \in U} (x_i + \kappa_0) - B_0,
&
f_j(x) &= A_j \prod_{i \in U} (x_i + \kappa_j) - B_j.
\end{align*}

\Cref{lem:boolean} applied to $f_0,\ldots,f_m$ shows that for each $j \in K \cup \{0\}$, either $\kappa_j = B_j = 0$ or $\kappa_j,B_j \in \{-1,1\}$. Applying the same lemma to $g$ and $H$, we see that in fact either all $\kappa_j$ and all $B_j$ equal~$0$, or all of them belong to $\{-1,1\}$. The former is subcase~\eqref{c:XOR}, and the latter is subcase~\eqref{c:AND}.
\end{proof}

Let us now switch back from $\{-1,1\}$ to $\{0,1\}$. We do so using the correspondence
\[
 0 \leftrightarrow 1, \quad 1 \leftrightarrow -1.
\]

Under this correspondence, product corresponds to XOR; the function
\[
 2 \prod_{i \in U} \frac{1 + (-1)^{b_i} x_i}{2} - 1
\]
corresponds to $\biglor_{i \in U} (x_i \oplus b_i)$; and the function
\[
 -2 \prod_{i \in U} \frac{1 - (-1)^{b_i} x_i}{2} + 1
\]
corresponds to $\bigland_{i \in U} (x_i \oplus b_i)$. 

Using this, we can translate \Cref{thm:main-g-h} into $\{0,1\}$ variables, and deduce \Cref{thm:exact-multi-polymorphisms} by taking $g = h$.

\section{Open questions}
\label{sec:open-questions}

Our work suggests many open questions. Here are some of them.

\begin{open}
Can \Cref{thm:main} be extended to polymorphisms of predicates? That is, given a function $g\colon \{0,1\}^m \to \{0,1\}$, what we can say about functions $f\colon \{0,1\}^n \to \{0,1\}$ satisfying
\[
 \Pr[(g \circ f^m)(Z) = 1 \mid g(\row_i(Z)) = 1 \text{ for all } i \in [n]] \geq 1 - \epsilon?
\]
\end{open}

As mentioned in the introduction, Kalai~\cite{Kalai} proved a version of \Cref{thm:main} for the predicate $\mathsf{NAE_3}$, and Friedgut and Regev proved a version of \Cref{thm:main} for the predicate $\mathsf{NAND_2}$.

\begin{open}
What is the optimal dependence between $\epsilon$ and $\delta$ in \Cref{thm:main}? Does it depend on $g$?
\end{open}

In our current proof, the dependence is not polynomial. In fact, due to the use of Jones' regularity lemma, the dependence is of tower type. This can be dramatically improved by using a different regularity lemma, which approximates the function by a decision tree rather than by a junta. The dependence now becomes only doubly exponential. We sketch this argument in \Cref{apx:decision-trees}.

For many specific $g$ we can prove a version of \Cref{thm:main} in which $\delta$ is polynomial in $\epsilon$. This is the case for linearity testing, and also for $\Maj_3$, the majority function on three inputs, as we sketch in \Cref{apx:Maj3}.

\begin{open}
Can we extend \Cref{thm:main} to larger alphabets, replacing $\{0,1\}$ with an arbitrary finite set?
\end{open}

One issue is that, to the best of our knowledge, a complete analog of \Cref{thm:exact-multi-polymorphisms} for larger alphabets is not currently known, though some preliminary results appear in \cite{DH10,SX}. Moreover, while the complete classification of polymorphism of binary predicates is known (it is given by Post's lattice), the situation for larger alphabets is known to be much wilder.

Nevertheless, it might be possible to show that every approximate polymorphism is close to a skew polymorphism, even without classifying the latter.

\begin{open}
Can we extend \Cref{thm:main} to tensors? For example, what can we say about Boolean functions $f,g,h$ satisfying $f \circ (g \circ h^m)^n = g \circ (h \circ f^p)^m$ with probability $1 - \delta$?
\end{open}

\Cref{sec:list-decoding} gives an upper bound $s_g^U$ and a lower bound $s_g^L$ on $s_g$ which are similar but not identical.

\begin{open}
Is $s_g^U = s_g^L$? Is the optimum always achieved in one-dimensional Gaussian space?
\end{open}

Another interesting question concerns an analog of ``approximation resistance''.
When $g$ is unbalanced, we trivially have $s_g \ge \max(\E[g],1-\E[g])$ by taking $f$ to be random around the middle slice, and constant around the $\E[g]$-slice; and when $g$ is balanced, we trivially have $s_g \ge 1/2$ by taking $f$ to be a random function.

\begin{open}
For which functions $g$ is $s_g > \max(\E[g],1-\E[g])$?
\end{open}

\Cref{lem:sg-nontrivial} shows that the strict inequality holds for unbalanced $g$ whenever all Fourier coefficients on the first level are non-zero. Conversely, when all Fourier coefficients on the first level vanish, \Cref{thm:list-decoding} shows that equality holds.

Finally, it would be nice to extend the classification of exact solutions to the case in which we are allowed not only multiple $f$'s, but also multiple $g$'s.

\begin{open}
Classify all solutions $f_0,\ldots,f_m\colon \{0,1\}^n \to \{0,1\}$ and $g_0,g_1,\ldots,g_n\colon \{0,1\}^m \to \{0,1\}$ to the equation
\[
 f_0 \circ (g_1,\ldots,g_n) = g_0 \circ (f_1,\ldots,f_m).
\]
\end{open}

We conjecture that except for some corner cases, the solutions are either XORs or ANDs/ORs of literals.

\bibliographystyle{alphaurl}
\bibliography{biblio}

\appendix

\section{Regularity lemma for multiple measures}
\label{apx:jones}

In this section we prove \Cref{thm:jones} and \Cref{thm:jones-dt}, closely following the argument of Jones~\cite{Jones}. It will be more convenient to consider functions taking values in $\{-1,1\}$ rather than $\{0,1\}$.

Jones' argument employs the notion of \emph{noise stability}, closely related to noise sensitivity.

\begin{definition}[Noise stability] \label{def:noise-stability}
Let $p,\rho \in (0,1)$ and let $f\colon \{0,1\}^n \to \{-1,1\}$. The \emph{noise stability} of $f$ with respect to $\mu_p$ is
\[
 \Stab_\rho^{\mu_p}(f) = \E_{x,y \sim N_\rho^{\mu_p}}[f(x)f(y)] = \sum_{S \subseteq [n]} \rho^{|S|} \hat{f}(S)^2.
\]
\end{definition}

\Cref{thm:jones} is formulated for low-degree influences. We will work directly with a different notion.

\begin{definition}[Noisy influence]
\label{def:noisy-influence}

Let $p,\rho \in (0,1)$, and let $f\colon \{0,1\}^n \to \mathbb{R}$. For $i \in [n]$, define $\partial_i f\colon \{0,1\}^{n-1} \to \mathbb{R}$ by $\partial_i f = f_{\{i\} \to 1} - f_{\{i\} \to 0}$.

The $i$th \emph{noisy influence} of $f$ with respect to $\mu_p$ is
\[
 \Inf_i^{\rho,\mu_p}[f] = p(1-p) \Stab_\rho^{\mu_p}[\partial_i f] =
 \sum_{\substack{S \subseteq [n] \\ i \in S}} \rho^{|S|-1} \hat{f}(S)^2.
\]

A function $f\colon \{0,1\}^n \to \{-1,1\}$ is \emph{$(\delta,\tau)$-noisy-regular with respect to $\mu_p$} if $\Inf_i^{(1-\delta),\mu_p}[f] \leq \tau$ for all $i \in [n]$.
\end{definition}

The following lemma drives the entire proof. It shows that if some noisy influence is large, then querying the corresponding variable increases the average noise stability.

\begin{lemma} \label{lem:jones}
Let $p,\rho \in (0,1)$, and let $f\colon \{0,1\}^n \to \{-1,1\}$. For every $i \in [n]$,
\[
 \E_{a \sim \mu_p}[\Stab_\rho^{\mu_p}(f_{\{i\} \to a})] = \Stab_\rho^{\mu_p}(f) + (1-\rho) \Inf_i^{\rho,\mu_p}[f] \geq \Stab_\rho^{\mu_p}(f).
\]
\end{lemma}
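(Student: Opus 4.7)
\medskip

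The plan is to prove this by direct computation in the $p$-biased Fourier basis, exploiting the natural decomposition of $f$ according to whether Fourier-subsets contain the distinguished coordinate $i$.

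I would begin by writing the $p$-biased Fourier expansion
\[
 f = \sum_{S \subseteq [n]} \hat{f}(S) \chi_S^{(p)},
\]
where $\chi_{\{i\}}^{(p)}(x_i)$ is the $p$-biased character of the singleton, which has mean $0$ and squared mean $1$ under $\mu_p$. Splitting the sum by whether $S \ni i$, one can write
\[
 f(x) = A(x_{-i}) + \chi_{\{i\}}^{(p)}(x_i)\,B(x_{-i}),
\]
where $A,B\colon \{0,1\}^{n-1} \to \mathbb{R}$ have Fourier coefficients $\hat{A}(S) = \hat{f}(S)$ and $\hat{B}(S) = \hat{f}(S \cup \{i\})$ for $S \subseteq [n] \setminus \{i\}$. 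Restricting the $i$-th coordinate to a value $a$ then gives
\[
 f_{\{i\} \to a} = A + \chi_{\{i\}}^{(p)}(a)\,B.
\]

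Next I would compute $\Stab_\rho^{\mu_p}(f_{\{i\} \to a})$ by Parseval-type expansion and take expectation over $a \sim \mu_p$. Because $\E_{a \sim \mu_p}[\chi_{\{i\}}^{(p)}(a)] = 0$ and $\E_{a \sim \mu_p}[\chi_{\{i\}}^{(p)}(a)^2] = 1$, the cross terms vanish, leaving
\[
 \E_{a \sim \mu_p}[\Stab_\rho^{\mu_p}(f_{\{i\} \to a})] = \sum_{S \not\ni i} \rho^{|S|} \hat{f}(S)^2 + \sum_{S \not\ni i} \rho^{|S|} \hat{f}(S \cup \{i\})^2.
\]
Reindexing the second sum by $T = S \cup \{i\}$ turns it into $\sum_{T \ni i} \rho^{|T|-1} \hat{f}(T)^2 = \Inf_i^{\rho,\mu_p}[f]$.

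Then I would compare with $\Stab_\rho^{\mu_p}(f) = \sum_S \rho^{|S|} \hat{f}(S)^2$, which splits as the same ``$S \not\ni i$'' term plus $\rho \cdot \Inf_i^{\rho,\mu_p}[f]$. Subtracting yields the claimed identity
\[
 \E_{a \sim \mu_p}[\Stab_\rho^{\mu_p}(f_{\{i\} \to a})] - \Stab_\rho^{\mu_p}(f) = (1-\rho)\,\Inf_i^{\rho,\mu_p}[f],
\]
and the inequality is immediate since $\rho < 1$ and the noisy influence is a sum of squares. There is no real obstacle here; the only care needed is in using the correct $p$-biased basis (so that restriction and averaging interact cleanly via the orthonormality of the characters), and in checking that the definition of $\Inf_i^{\rho,\mu_p}$ matches the reindexed sum.
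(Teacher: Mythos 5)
Your proof is correct, and it takes a genuinely different route from the paper. You argue entirely in the $p$-biased Fourier basis: decompose $f = A + \chi^{(p)}_{\{i\}} B$ according to whether the coordinate $i$ appears, observe that restriction just plugs a value of the unit-variance, mean-zero character into that decomposition, and then the cross terms drop out when you average over $a \sim \mu_p$. After reindexing, the bookkeeping matches the Fourier formula $\Inf_i^{\rho,\mu_p}[f] = \sum_{S \ni i} \rho^{|S|-1} \hat f(S)^2$ exactly, and comparing with $\Stab_\rho^{\mu_p}(f)$ gives the identity. The paper instead proceeds probabilistically: it writes $\E_a[\Stab_\rho(f_{\{i\}\to a})]$ as an expectation of $f(x)f(y)$ over the $\rho$-correlated pair conditioned on $x_i = y_i$, realizes this as $\E[f(x) f(y^{i\to x_i})]$, subtracts $\E[f(x)f(y)]$, and after a short case analysis in the $i$th coordinate identifies the difference as $(1-\rho)p(1-p)\,\E[\partial_i f(x')\,\partial_i f(y')] = (1-\rho)\Inf_i^{\rho,\mu_p}[f]$, using the definition $\Inf_i^{\rho,\mu_p}[f] = p(1-p)\Stab_\rho^{\mu_p}[\partial_i f]$. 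Both arguments are short; your Fourier computation is more mechanical and perhaps more transparent about where the factor $(1-\rho)$ comes from, while the paper's coupling argument stays at the level of the probabilistic definitions and does not require writing out the $p$-biased characters. Either works, and the translation between them is just the standard identity $\Stab_\rho^{\mu_p}(h) = \sum_S \rho^{|S|}\hat h(S)^2$.
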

\begin{proof}
Assume for simplicity that $i = n$.

The left-hand side is the expectation of $f(x) f(y)$ with respect to $(x,y) \sim N_\rho^{\mu_p}$, conditioned on $x_n = y_n$. Equivalently, it is the expectation of $f(x) f(y^{n \to x_n})$, where $y^{n \to x_n}$ is obtained by setting the $n$th coordinate to $x_n$. Therefore the left-hand side equals
\[
 \E_{(x,y) \sim N_\rho^{\mu_p}}[f(x) f(y) + f(x) (f(y^{n \to x_n}) - f(y))] =
 \Stab_\rho^{\mu_p}(f) + \E_{(x,y) \sim N_\rho^{\mu_p}}[f(x) (f(y^{n \to x_n}) - f(y))].
\]
The second summand equals
\begin{multline*}
 (1-p)\cdot(1-\rho)\cdot p \E_{(x',y') \sim N_\rho^{\mu_p}}[f(x',0)(f(y',0) - f(y',1))] +
 p\cdot(1-\rho)\cdot (1-p) \E_{(x',y') \sim N_\rho^{\mu_p}}[f(x',1)(f(y',1) - f(y',0))] = \\
 (1-\rho)p(1-p) \E_{(x',y') \sim N_\rho^{\mu_p}}[(f(x',0) - f(x',1))(f(y',0) - f(y',1))] = (1-\rho) \Inf_i^{\rho,\mu_p}[f],
\end{multline*}
completing the proof.
\end{proof}

Using this lemma, we prove a version of \Cref{thm:jones} for noisy influences.

\begin{lemma} \label{lem:jones-noisy-influences}
For all $\delta,\tau,\epsilon > 0$ and $p_1,\ldots,p_\ell \in (0,1)$ there exists $L \in \mathbb{N}$ such that the following holds.

For every $f\colon \{0,1\}^n \to \{-1,1\}$ we can find a set $T \subseteq [n]$ of at most $L$ coordinates such that for all $k \in [\ell]$,
\[
 \Pr_{z \in \mu_{p_k}(\{0,1\}^T)}[f_{T \to z} \text{ is $(\delta,\tau)$-noisy-regular with respect to $\mu_{p_k}$}] \geq 1-\epsilon.
\]
\end{lemma}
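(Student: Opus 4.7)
The plan is a potential-function argument driven by \Cref{lem:jones}. For each $k \in [\ell]$ and $T \subseteq [n]$, define
\[
 \Phi_k(T) := \E_{z \sim \mu_{p_k}(\{0,1\}^T)}\bigl[\Stab_{1-\delta}^{\mu_{p_k}}(f_{T \to z})\bigr] \in [0,1],
\]
and set $\Phi(T) := \sum_{k=1}^{\ell} \Phi_k(T) \in [0,\ell]$. Applying \Cref{lem:jones} to the function $f_{T \to z}$ and averaging over $z$ yields the exact increment identity
\[
 \Phi_k(T \cup \{i\}) - \Phi_k(T) \;=\; \delta \cdot \E_{z}\bigl[\Inf_i^{(1-\delta),\mu_{p_k}}[f_{T \to z}]\bigr] \;\ge\; 0,
\]
so each $\Phi_k$, and hence $\Phi$, is monotone non-decreasing in $T$ and bounded above by $\ell$.

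I build $T$ greedily starting from $T = \emptyset$: while there exists $k^\ast \in [\ell]$ with $\Pr_{z \sim \mu_{p_{k^\ast}}(\{0,1\}^T)}[f_{T \to z}\text{ is not }(\delta,\tau)\text{-noisy-regular w.r.t.}\ \mu_{p_{k^\ast}}] > \epsilon$, pick $i^\ast \notin T$ maximizing $\E_z\bigl[\Inf_{i^\ast}^{(1-\delta),\mu_{p_{k^\ast}}}[f_{T \to z}]\bigr]$ and add it to $T$. The termination bound comes from the main claim: each greedy step increases $\Phi$ by at least some $\gamma = \gamma(\delta,\tau,\epsilon) > 0$, yielding $|T| \le L := \lceil \ell/\gamma \rceil$.

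The claim is established by combining two facts. First, for any Boolean $g\colon \{0,1\}^n \to \{-1,1\}$ and any product measure $\mu$, the total noisy influence satisfies
\[
 \sum_i \Inf_i^{(1-\delta),\mu}[g] \;=\; \sum_S |S|(1-\delta)^{|S|-1} \hat{g}(S)^2 \;\leq\; C(\delta) = O(1/\delta),
\]
since $|S|(1-\delta)^{|S|-1}$ is a bounded function of $|S|$; in particular, any single restriction has at most $C(\delta)/\tau$ coordinates with noisy influence above $\tau$. Second, badness for $k^\ast$ gives $\sum_i \Pr_z[\Inf_i^{(1-\delta),\mu_{p_{k^\ast}}}[f_{T \to z}] > \tau] \ge \epsilon$ and hence $\sum_i \E_z\bigl[\Inf_i^{(1-\delta),\mu_{p_{k^\ast}}}[f_{T \to z}]\bigr] \ge \tau\epsilon$. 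A careful averaging step combining these yields a single $i^\ast$ with $\E_z\bigl[\Inf_{i^\ast}^{(1-\delta),\mu_{p_{k^\ast}}}[f_{T \to z}]\bigr] \ge \gamma/\delta$.

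The main obstacle is making this averaging quantitative with a constant $\gamma$ independent of $n$: naive pigeonhole only gives a bound of the form $\tau\epsilon/n$, since the total expected influence $\tau\epsilon$ could a priori be spread over many coordinates. Its resolution exploits the bound on the number of $\tau$-influential coordinates per restriction together with a structural bound on how $\Inf_i^{(1-\delta),\mu_{p_{k^\ast}}}[f_{T \to z}]$ varies across restrictions $z$ of the same $T$. The resulting $\gamma$ is in general of tower type in $1/(\delta\tau\epsilon)$, consistent with the tower-type dependence mentioned in \Cref{sec:open-questions}.
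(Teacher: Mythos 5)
Your setup is right and largely matches the paper's: you use the same potential $\Phi(T) = \sum_k \E_{z\sim\mu_{p_k}}[\Stab_{1-\delta}^{\mu_{p_k}}(f_{T\to z})]$, the same monotonicity identity coming from \Cref{lem:jones}, and the same ``potential bounded by $\ell$, so it can only increase finitely often'' termination logic. The gap is in the greedy one-variable-per-step refinement and, specifically, in your claim that each bad step increases $\Phi$ by a constant $\gamma = \gamma(\delta,\tau,\epsilon) > 0$ independent of $n$ and of $|T|$.

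That claim is false, and your proposed fix (``a structural bound on how $\Inf_i^{(1-\delta),\mu_{p_k}}[f_{T\to z}]$ varies across restrictions $z$'') cannot exist. Consider $f(z,x) = x_{i(z)}$ where $z$ ranges over $\{0,1\}^t$ and $i$ is an injection into $\{t+1,\dots,t+2^t\}$. With $T = [t]$, every restriction $f_{T\to z}$ is the dictator $x_{i(z)}$; it has exactly one coordinate of noisy influence $1$ (so every restriction is bad), and for distinct $z$ those coordinates are disjoint. Hence $\max_{i\notin T}\E_z[\Inf_i^{(1-\delta)}[f_{T\to z}]] = 2^{-t}$, and the best possible single-coordinate step raises $\Phi$ by only $\delta\cdot 2^{-t}$. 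So the per-step gain of the greedy necessarily decays geometrically in $|T|$, and since $\sum_{t\ge 0}\delta\tau 2^{-t}$ converges, the fact that $\Phi\le\ell$ puts no bound on the number of steps. Bounding the number of $\tau$-influential coordinates per restriction by $O(1/(\delta\tau))$ does not help: there can be $2^{|T|}\cdot O(1/(\delta\tau))$ distinct relevant coordinates across all restrictions, which is exactly what drives the decay.

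The paper's proof resolves this by adding, in a single step, one coordinate $i_z$ per bad restriction $z$ of the current $T$ (so up to $2^{|T|}$ new coordinates at once). Then for every bad $z$, the stability of the $z$-branch goes up by more than $\delta\tau$ by \Cref{lem:jones}, and averaging over the at-least-$\epsilon$ measure of bad $z$'s gives the uniform gain $\phi(T_{N+1}) \ge \phi(T_N) + \epsilon\delta\tau$. This terminates in at most $\ell/(\epsilon\delta\tau)$ rounds; the price is $|T_{N+1}| \le |T_N| + 2^{|T_N|}$, which is exactly the source of the tower-type bound on $L$. In short: the tower blowup lives in the size of $T$ per round, not in the smallness of the per-round potential gain. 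You should replace the ``pick a single maximizing $i^\ast$'' step by ``add one $i_z$ for each bad restriction $z$'', and then the rest of your argument goes through verbatim.
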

\begin{proof}
We define a potential function on subsets of $[n]$:
\[
\phi(T) = \sum_{k=1}^\ell \E_{z \sim \mu_{p_k}(\{0,1\}^T)}[\Stab_{1-\delta}^{\mu_{p_k}}(f_{T \to z})].
\]
Clearly $\phi(\emptyset) \ge 0$ and $\phi(T) \leq \ell$ for all $T \subseteq [n]$.

We define an increasing sequence of subsets, starting with $T_0 = \emptyset$, and terminating with a set which satisfies the requirements for $T$ in the lemma. 

Suppose we have constructed $T_N$. If $T_N$ satisfies the lemma, then we are done. Otherwise, there exists some $k \in [\ell]$ such that with respect to $\mu_{p_k}$, the function $f_{T_N \to z}$ is not $(\delta,\tau)$-noisy-regular with probability at least $1-\epsilon$.
For each $z \in \{0,1\}^{T_N}$ such that $f_{T_N \to z}$ is not $(\delta,\tau)$-noisy-regular with respect to $\mu_{p_k}$, choose a variable $i_z$ satisfying $\Inf_{i_z}^{(1-\delta),\mu_{p_k}}[f_{T_N \to z}] > \tau$, and let $T_{N+1}$ consist of $T_N$ together with all of the variables $i_z$. Thus $|T_{N+1}| \leq |T_N| + 2^{|T_N|}$.

\Cref{lem:jones} shows that
\[
 \phi(T_{N+1}) \geq \phi(T_N) + \epsilon \delta \tau.
\]
It follows that the process terminates at some $N \leq \ell/\epsilon\delta\tau$, and we can choose $L$ accordingly.
\end{proof}

We now derive \Cref{thm:jones} by relating noisy influences and low degree influences.
\begin{proof}[Proof of \Cref{thm:jones}]
Given parameters $d,\tau,\epsilon,p_1,\ldots,p_\ell$, let $\delta = 1/d$, and apply \Cref{lem:jones-noisy-influences} with $(1-\delta)^d \tau = \Theta(\tau)$ to obtain a set $T$.

Choose an arbitrary $p \in \{p_1,\ldots,p_\ell\}$. According to \Cref{lem:jones-noisy-influences}, if $z \sim \mu_p$ then $f_{T \to z}$ is $(\delta,\tau)$-noise-regular with respect to $\mu_p$ with probability $1-\epsilon$. When this happens, for every $i \in [n] \setminus T$ we have
\[
 \Inf_i^{\leq d}[f_{T \to z}] = \sum_{\substack{|S| \leq d \\ i \in S}} \widehat{f_{T \to z}}(S)^2 \leq (1-\delta)^{-d} \sum_{\substack{S \subseteq [n] \\ i \in S}} (1-\delta)^{|S|-1} \widehat{f_{T \to z}}(S)^2 \leq \tau,
\]
and so $f_{T \to z}$ is also $(d,\tau)$-regular.
\end{proof}


\medskip

For completeness, we now formulate and prove a version of \Cref{thm:jones} for decision trees.

\begin{definition}[Random leaf] \label{def:random-leaf}
Let $T$ be a decision tree on $\{0,1\}^n$ in which internal nodes are labelled with variables.

Given $p \in (0,1)$, a \emph{random leaf} is a pair $(B,z)$, where $B \subseteq [n]$ and $z \in \{0,1\}^B$, formed by choosing a random path in $T$ according to $\mu_p$. We denote this distribution by $\mu_p(T)$.
\end{definition}

\begin{theorem}
\label{thm:jones-dt}

Fix $p_1,\ldots,p_\ell \in (0,1)$. For all $d \in \mathbb{N}$ and $\tau,\epsilon > 0$ there exists $L \in \mathbb{N}$ such that the following holds.

For every $f\colon \{0,1\}^n \to \{-1,1\}$ we can find a decision tree $T$ of depth at most $L$ such that for all $k \in [\ell]$
\[
 \Pr_{(B,z) \sim \mu_{p_k}(T) \in \{0,1\}^T}[f_{B \to z} \text{ is $(d,\tau)$-regular with respect to $\mu_{p_k}$}] \geq 1-\epsilon.
\]

Furthermore, we can take $L = O(\ell d/\epsilon\tau)$.
\end{theorem}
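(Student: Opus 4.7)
The plan is to adapt the proof of Lemma~\ref{lem:jones-noisy-influences}, replacing ``grow a junta set'' with ``grow a decision tree by querying a new variable below each bad leaf.'' Doing so avoids the $2^{|T|}$ blow-up inherent in the junta version, and controls only the depth, which is what we want.

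Set $\delta = 1/d$ and define the potential
\[
 \phi(T) = \sum_{k=1}^\ell \E_{(B,z) \sim \mu_{p_k}(T)}\bigl[\Stab_{1-\delta}^{\mu_{p_k}}(f_{B \to z})\bigr],
\]
which lies in $[0,\ell]$ since the stability of any $\{-1,1\}$-valued function is at most $1$. Start from $T_0$, the trivial tree consisting of a single leaf. Given $T_N$, stop if the theorem's conclusion already holds. Otherwise there is some $k \in [\ell]$ for which the $\mu_{p_k}$-probability of leaves $(B,z)$ where $f_{B \to z}$ fails to be $(d,\tau)$-regular with respect to $\mu_{p_k}$ exceeds $\epsilon$; for each such bad leaf pick some $i \notin B$ with $\Inf_i^{\leq d}[f_{B \to z}] > \tau$ (computed under $\mu_{p_k}$) and query it, forming $T_{N+1}$.

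To bound the number of iterations, Lemma~\ref{lem:jones}, applied separately for each measure $\mu_{p_{k'}}$ and each extended leaf, shows that the corresponding contribution to $\phi$ grows by a non-negative quantity
\[
 \Pr_{\mu_{p_{k'}}}[\text{reach }(B,z)] \cdot \delta \cdot \Inf_i^{(1-\delta),\mu_{p_{k'}}}[f_{B \to z}].
\]
Focusing on $k' = k$ and using the termwise bound $\Inf_i^{(1-\delta),\mu_{p_k}}[\,\cdot\,] \geq (1-\delta)^{d-1}\,\Inf_i^{\leq d}[\,\cdot\,]$ in the $\mu_{p_k}$-Fourier expansion, together with $(1-1/d)^{d-1} \geq 1/e$, summing over bad leaves (whose total $\mu_{p_k}$-probability is at least $\epsilon$) gives
\[
 \phi(T_{N+1}) - \phi(T_N) \geq \frac{\epsilon \tau}{e\,d}.
\]
Since $\phi \leq \ell$, the process terminates after at most $O(\ell d/(\epsilon \tau))$ iterations. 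Each iteration only replaces existing leaves by one-level refinements, so the depth of the final tree is bounded by the number of iterations, giving $L = O(\ell d/(\epsilon \tau))$.

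The only subtlety worth flagging is the simultaneous-extension step: at a single iteration many bad leaves may be refined, each potentially by a different variable. This is harmless for the decision-tree construction because each leaf is extended independently, and the depth grows by at most $1$ per iteration because we only append a single query below each refined leaf. Everything else is a direct translation of Jones' original argument, with the dramatic improvement (polynomial, rather than tower-type, dependence) coming entirely from replacing the ``union over all bad $z$'' step by ``branch at each bad leaf.''
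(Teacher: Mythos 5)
Your proof is correct and follows essentially the same approach as the paper's: a potential argument on the decision tree using $\phi(T) = \sum_k \E_{(B,z)\sim\mu_{p_k}(T)}[\Stab_{1-\delta}^{\mu_{p_k}}(f_{B\to z})]$, together with \Cref{lem:jones} to show that querying an influential variable increases the potential. The only difference is cosmetic: the paper first proves a separate noisy-influence version (\Cref{lem:jones-noisy-influences-dt}) and then converts to low-degree influences in a second step, whereas you fold the conversion $\Inf_i^{(1-\delta),\mu_{p_k}} \geq (1-\delta)^{d-1}\Inf_i^{\leq d}$ into the iteration directly; both give $L = O(\ell d/\epsilon\tau)$.
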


The proof of \Cref{thm:jones-dt} is very similar to the proof of \Cref{thm:jones}, so we only sketch it. We start with an analog of \Cref{lem:jones-noisy-influences}.
\begin{lemma} \label{lem:jones-noisy-influences-dt}
For all $\delta,\tau,\epsilon > 0$ and $p_1,\ldots,p_\ell \in (0,1)$ there exists $L \in \mathbb{N}$ such that the following holds.

For every $f\colon \{0,1\}^n \to \{-1,1\}$ we can find a decision tree $T$ of depth at most $L$ such that for all $k \in [\ell]$,
\[
 \Pr_{(B,z) \in \mu_{p_k}(T)}[f_{B \to z} \text{ is $(\delta,\tau)$-noisy-regular with respect to $\mu_{p_k}$}] \geq 1-\epsilon.
\]

Furthermore, we can take $L = \ell/\epsilon\delta\tau$.
\end{lemma}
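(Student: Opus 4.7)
The plan is to mirror the proof of \Cref{lem:jones-noisy-influences} very closely, but to replace the single fixed set $T$ with a decision tree that can query different variables along different paths. This extra flexibility is precisely what allows the depth to be linear rather than exponential in $1/\tau$. Concretely, I would define the potential
\[
 \phi(T) = \sum_{k=1}^\ell \E_{(B,z)\sim \mu_{p_k}(T)}\bigl[\Stab_{1-\delta}^{\mu_{p_k}}(f_{B\to z})\bigr],
\]
and observe that $0 \le \phi(T) \le \ell$ for every tree $T$.

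I would then construct a sequence of trees $T_0,T_1,\ldots$ greedily. Take $T_0$ to be the trivial one-leaf tree. Given $T_N$, if for every $k\in[\ell]$ the $\mu_{p_k}$-mass of leaves $(B,z)$ for which $f_{B\to z}$ fails to be $(\delta,\tau)$-noisy-regular is at most $\eps$, we are done. Otherwise, pick such a $k$, and at every $\mu_{p_k}$-bad leaf $(B,z)$ choose a coordinate $i_{(B,z)}\in[n]\setminus B$ with $\Inf_{i_{(B,z)}}^{(1-\delta),\mu_{p_k}}[f_{B\to z}]>\tau$; extend $T_N$ by splitting each such leaf on its chosen coordinate. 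Different leaves may split on different coordinates, which is exactly the power that trees buy us over sets.

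The key estimate comes from \Cref{lem:jones}: extending a single leaf $(B,z)$ by querying $i$ changes the contribution of that leaf to the $k$th summand of $\phi$ from $\Pr_{\mu_{p_k}}[(B,z)]\cdot\Stab_{1-\delta}^{\mu_{p_k}}(f_{B\to z})$ to
\[
 \Pr_{\mu_{p_k}}[(B,z)]\cdot\bigl(\Stab_{1-\delta}^{\mu_{p_k}}(f_{B\to z}) + \delta \Inf_{i}^{(1-\delta),\mu_{p_k}}[f_{B\to z}]\bigr),
\]
and a net increase of at least $\Pr_{\mu_{p_k}}[(B,z)]\cdot\delta\tau$ at every bad leaf. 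Summing over the $\mu_{p_k}$-bad leaves, whose total $\mu_{p_k}$-mass exceeds $\eps$, gives $\phi(T_{N+1})\ge \phi(T_N)+\eps\delta\tau$. (Contributions from the other summands $k'\ne k$ are unaffected by \Cref{lem:jones}, so they do not decrease $\phi$.) Since $\phi\le \ell$, the loop terminates after at most $\ell/\eps\delta\tau$ iterations, and since each iteration increases the depth by at most $1$, the resulting tree has depth at most $L=\ell/\eps\delta\tau$.

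The only subtlety I foresee is verifying that the per-iteration potential gain really is $\eps\delta\tau$ and not smaller after summing over leaves; this is immediate once one writes the average of $\Stab_{1-\delta}^{\mu_{p_k}}$ over the tree as a sum of leaf contributions weighted by $\mu_{p_k}$-path probabilities and applies \Cref{lem:jones} leaf-by-leaf. The other minor point is that noisy influences are only defined on coordinates outside $B$, but that is automatic since an $(\delta,\tau)$-noisy-regular restriction is only required to have small influence in its own coordinates, and the violating coordinate $i_{(B,z)}$ naturally lies outside $B$.
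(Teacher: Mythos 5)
Your proposal is correct and matches the paper's proof, which is itself a brief sketch stating "the proof is almost identical to the proof of \Cref{lem:jones-noisy-influences}" with the same potential function and the same $\ell/\epsilon\delta\tau$ bound on the number of iterations. You have simply filled in the leaf-by-leaf application of \Cref{lem:jones} and the monotonicity observation for $k'\ne k$, which the paper leaves implicit.
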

\begin{proof}
The proof is almost identical to the proof of \Cref{lem:jones-noisy-influences}. We define the potential function in an analogous way, for decision trees:
\[
 \phi(T) = \sum_{k=1}^\ell \E_{(B,z) \sim \mu_{p_k}(T)}[\Stab_{1-\delta}^{\mu_{p_k}}(f_{B\to z})].
\]

We construct a sequence of decision trees, starting with a decision tree $T_0$ consisting of only a root. If $T_N$ doesn't satisfy the requirements in the lemma, then we can extend it to a decision tree $T_{N+1}$ which is one level deeper and satisfies $\phi(T_{N+1}) \geq \phi(T_N) + \epsilon\delta\tau$.

The process must stop within $\ell/\epsilon\delta\tau$ steps, and the final tree has at most this depth.
\end{proof}

\Cref{thm:jones-dt} now immediately follows, using an identical argument to the proof of \Cref{thm:jones}; since we choose $\delta = 1/d$ and apply \Cref{lem:jones-noisy-influences-dt} with $\Theta(\tau)$, the resulting decision tree has depth $O(\ell d/\epsilon\tau).$

\section{Noise resilience of connected distributions}
\label{apx:connected}

In this section, we prove \Cref{thm:connected}, using ideas from Mossel~\cite{Mossel2010} (especially Lemma~6.1).
Mossel in fact proves a version of \Cref{thm:connected} in his Lemma~6.2, but his result needs a different set of bipartite graphs to be connected, namely the ones in which $X$ is the projection to a single coordinate, and $Y$ is the projection to all other coordinates. Our distributions $\normalg{g}$ are not connected in this sense since if we choose $i$ to be the coordinate corresponding to $g(x)$, then the resulting bipartite graph is a disconnected matching $\{ (g(x),x) : x \in \{-1,1\}^m \}$. 

We will need several preliminaries.

\begin{lemma}[Hypercontractivity] \label{lem:hypercontractivity}
Let $\mathcal{D}$ be a connected distribution on $\{-1,1\}^k$, with non-constant marginals $\mathcal{D}_1,\ldots,\mathcal{D}_k$. Let $\lambda_i = \min(\Pr[\mathcal{D}_i = 1], \Pr[\mathcal{D}_i = -1])$. The following holds for $M = \max(1/\lambda_1,\ldots,1/\lambda_k)$.

If $f\colon \{-1,1\}^n \to \mathbb{R}$ then for every $i \in [k]$ and every $q \geq 2$,
\[
 \E[|f(\mathcal{D}_i^n)|^q]^{1/q} \leq \left(M^{1/2-1/q} \sqrt{q-1}\right)^{\deg(f)} \E[|f(\mathcal{D}_i^n)|^2]^{1/2}.
\]
\end{lemma}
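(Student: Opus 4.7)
The statement is essentially a repackaging of the $p$-biased hypercontractive inequality of Wolff and Oleszkiewicz (the biased generalization of Bonami--Beckner), so the plan is to reduce directly to that result and then apply the crude bound $1/\lambda_i \le M$. A first thing to notice is that connectedness of $\mathcal{D}$ plays no role here: the left-hand side only depends on the product measure $\mathcal{D}_i^n$ on $\{-1,1\}^n$, so the joint structure of $\mathcal{D}$ is immaterial for the bound.

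The first step is the observation above: $\mathcal{D}_i^n$ is a product measure on $\{-1,1\}^n$ whose one-dimensional marginal has atoms of sizes $\lambda_i$ and $1-\lambda_i$. The second step is to invoke the one-variable $p$-biased hypercontractive inequality (as stated, e.g., in the chapter of~\cite{ODonnell} on biased Fourier analysis), which yields that for any multilinear polynomial $f\colon\{-1,1\}^n\to\mathbb{R}$ of degree at most $d$ and any $q \ge 2$,
\[
 \E[|f(\mathcal{D}_i^n)|^q]^{1/q}
 \;\le\; \Bigl( \lambda_i^{-(1/2 - 1/q)} \sqrt{q-1} \Bigr)^{d}
 \E[|f(\mathcal{D}_i^n)|^2]^{1/2}.
\]
The third step is to weaken this slightly: since $q \ge 2$ makes the exponent $1/2 - 1/q$ nonnegative, and $1/\lambda_i \le M$ by definition of $M$, we get $\lambda_i^{-(1/2-1/q)} \le M^{1/2-1/q}$. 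Substituting into the previous inequality gives exactly the bound claimed.

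Because the lemma is a direct consequence of a classical result, I do not anticipate a real obstacle in the proof; the actual nontrivial content is in the underlying biased hypercontractive estimate, which is well documented and can simply be cited. The only careful bookkeeping needed is to match the constants: verifying that the formulation in the reference gives precisely the exponent $1/2 - 1/q$ on the parameter $M$, rather than a cosmetically different but equivalent form. Once that is checked, no further work is needed, and the connectedness hypothesis is carried along only because this lemma is invoked in later arguments where $\mathcal{D}$ arises from a connected distribution.
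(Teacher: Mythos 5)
Your proposal is correct and matches the paper's proof, which simply cites the biased $(2,q)$-hypercontractive inequality (\cite[Corollary 10.20]{ODonnell}) and then applies the same "standard argument" of tensorizing to a degree-$d$ multilinear polynomial and bounding $\lambda_i^{-1}\le M$. Your observation that the connectedness hypothesis plays no role in this particular lemma is also accurate; it is carried in the statement only because the lemma is invoked in a context where $\mathcal{D}$ is connected.
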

\begin{proof}
This is \cite[Corollary 10.20]{ODonnell} followed by a standard argument.
\end{proof}

The Efron--Stein decomposition is an analog of the Fourier decomposition for arbitrary distributions $\mathcal{D}$ on arbitrary domains $X$ (in our case, $X \subseteq \{-1,1\}^k$). It decomposes an arbitrary function $f$ on $X^n$ as a sum
\[
 f = \sum_{S \subseteq [n]} f_S,
\]
where $f_S$ only depends on the coordinates in $S$; if we fix coordinates in any set $R \not\supseteq S$ then $f_S$ averages to zero; and $f_S,f_T$ are orthogonal for $S \neq T$ (with respect to $\mu$). We define $\deg(f)$ to be the maximal size of a set $S \subseteq [n]$ such that $f_S \neq 0$.

We will need the following two simple observations:
\begin{itemize}
    \item $\deg(fg) \leq \deg(f) + \deg(g)$.
    \item If $g^{\leq D} = 0$ then $(fg)^{\leq D - \deg(f)} = 0$.
\end{itemize}
The first observation is simple. As for the second observation, let
\[
 f = \sum_{|S| \leq d} f_S, \quad g = \sum_{|T|>D} g_T.
\]
If we average $f_S g_T$ over any $i \in T \setminus S$ then it vanishes (choose $R = \overline{\{i\}} \not\supseteq T$). Since $|T \setminus S| > D-d$, this means that for every set $U$ of size at most $D-d$, if we average $fg$ over the coordinates outside $U$, then it vanishes. Induction on $|U|$ shows that $(fg)_U = 0$ for all $|U| \leq D-d$.

\smallskip

We use connectivity via a certain averaging operator.

\begin{definition}[Averaging operator] \label{def:averaging}
Let $\mathcal{D}$ be a connected distribution over $\{-1,1\}^k$, let $n \in \mathbb{N}$, and let $i \in [k-1]$. We consider $x \sim \mathcal{D}^n$, denoting the $k$ projections by $x^{(1)},\ldots,x^{(k)} \in \{-1,1\}^n$.

For $i \in [k-1]$, define an operator $S_i$ which takes a function $f$ depending on $x^{(1)},\ldots,x^{(i)}$ and outputs a function $S_if$ depending on $x^{(i+1)},\ldots,x^{(k)}$, defined by
\[
 (S_if)(y^{(i+1)},\ldots,y^{(k)}) = \E[f(x^{(1)},\ldots,x^{(i)}) \mid x^{(i+1)}=y^{(i+1)},\ldots,x^{(k)}=y^{(k)}].
\]

When $i = k$, we similarly define an operator $S_k$ which takes a function depending on $x^{(k)}$ and output a function $S_kf$ depending on $x^{(1)},\ldots,x^{(k-1)}$, defined in an analogous way.
\end{definition}

\begin{lemma} \label{lem:averaging}
Let $\mathcal{D}$ be a connected distribution on $\{-1,1\}^k$. There exists $\rho < 1$ such that the following holds for all $n \in \mathbb{N}$ and $i \in [k-1]$.

If $f$ is a function depending on the projection of $\mathcal{D}^n$ to the first $i$ coordinates and
$f^{\leq D} = 0$ then \[ \|S_i f\|_2 \leq \rho^D \|f\|_2. \]
\end{lemma}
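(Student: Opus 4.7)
The plan is to reduce to the single-sample case $n=1$ and then tensorize using the Efron--Stein decomposition, extracting a single parameter $\rho < 1$ from the connectedness hypothesis.

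First I would handle $n=1$. There $S_i$ is the conditional expectation operator $f \mapsto \E[f(X) \mid Y = \cdot]$, where $X$ and $Y$ are the projections of $\mathcal{D}$ to the first $i$ and last $k-i$ coordinates. Its norm on the mean-zero subspace equals the maximal correlation
\[
 \rho_i^\star = \sup\{\E[f(X) g(Y)] : \E[f]=\E[g]=0,\ \|f\|_2 = \|g\|_2 = 1\}.
\]
To see that $\rho_i^\star < 1$, I would observe that if the supremum were attained with value $1$ then the Cauchy--Schwarz case of equality would force $f(X) = g(Y)$ almost surely, so $f(x) = g(y)$ for every edge $(x,y)$ of $G_i$. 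Walking along paths, $f$ would be constant on each connected component, and by hypothesis $G_i$ is connected, so $f$ would be constant on all of $X$, contradicting $\E[f]=0$.

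Next I would tensorize to general $n$. Decomposing $f = \sum_{S \subseteq [n]} f_S$ via Efron--Stein with respect to $\mathcal{D}_X^n$ (the product of the marginals of $\mathcal{D}$ on the first $i$ coordinates), the hypothesis $f^{\le D} = 0$ is $f_S = 0$ for $|S| \le D$. The key fact to verify is that $S_i f_S$ is itself a pure-$S$ function with respect to $\mathcal{D}_Y^n$: it depends only on $(y_j)_{j \in S}$ by the product structure of the $n$ independent samples, and averaging over any single $y_j$ with $j \in S$ gives zero, since conditioning on $Y_{-j}$ leaves $X_j$ distributed according to its marginal and the single-coordinate average of $f_S$ over $X_j$ vanishes. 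With this in hand, the pure-$S$ subspace is isomorphic to $\bigotimes_{j \in S} V_j$, where each $V_j$ is the mean-zero part of the single-sample space, and $S_i$ acts on it as the tensor product of per-coordinate operators, each of operator norm at most $\rho_i^\star$. Orthogonality of Efron--Stein components then yields
\[
 \|S_i f\|_2^2 = \sum_{|S| > D} \|S_i f_S\|_2^2 \le \sum_{|S|>D} (\rho_i^\star)^{2|S|} \|f_S\|_2^2 \le (\rho_i^\star)^{2(D+1)} \|f\|_2^2.
\]
Setting $\rho = \max_{i \in [k-1]} \rho_i^\star < 1$ gives $\|S_i f\|_2 \le \rho^{D+1} \|f\|_2 \le \rho^D \|f\|_2$.

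The main obstacle is the tensorization bookkeeping. Concretely, one must check two things carefully: that $S_i$ maps the pure-$S$ Efron--Stein subspace of $L^2(\mathcal{D}_X^n)$ into the pure-$S$ subspace of $L^2(\mathcal{D}_Y^n)$, and that on this subspace the operator splits as a genuine tensor product of single-sample operators (so its norm is bounded by the product of per-coordinate norms). Both ultimately follow from the independence of the $n$ samples under $\mathcal{D}^n$, but require setting up the tensor factorization of the Hilbert spaces explicitly.
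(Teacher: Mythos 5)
Your proof is correct, and it is essentially a self-contained derivation of what the paper simply cites: the paper's proof of this lemma is the single line ``Follows from \cite[Propositions 2.11 and 2.12]{Mossel2010},'' where Proposition~2.11 is exactly your observation that connectedness of $G_i$ forces the single-sample maximal correlation $\rho_i^\star$ strictly below~$1$, and Proposition~2.12 is exactly your tensorization step showing $\|S_i f_S\|_2 \le (\rho_i^\star)^{|S|}\|f_S\|_2$ on each Efron--Stein piece. So you've taken the same route, just unpacking the reference.
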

\begin{proof}
Follows from \cite[Propositions 2.11 and 2.12]{Mossel2010}.
\end{proof}

We can now prove \Cref{thm:connected}, making free use of \Cref{lem:hypercontractivity}.

The proof uses a hybrid argument. For every $i \in [k]$, we will show that
\begin{equation} \label{eq:connected-hybrid}
 \bigl|\E_{x \sim \mathcal{D}^n}[|T_{1-\gamma_1} f_1(x^{(1)}) \cdots T_{1-\gamma_{i-1}} f_{i-1}(x^{(i-1)}) (1-T_{1-\gamma_i}) f_i(x^{(i)}) f_{i+1}(x^{(i+1)}) \cdots f_k(x^{(k)})]\bigr| \leq \frac{\epsilon}{k}.
\end{equation}

We will define the parameters $\gamma_1,\ldots,\gamma_{k-1}$ inductively, and $\gamma_k$ directly.

Suppose first that $i = k$. Since $T_{1-\gamma_1} f_1,\ldots,T_{1-\gamma_{k-1}} f_{k-1}$ are bounded, we can bound the \emph{square} of the left-hand side of~\eqref{eq:connected-hybrid} by
\[
 \E[|S_k(1-T_{1-\gamma_k}) f_k|]^2 \leq
 \E[(S_k(1-T_{1-\gamma_k}) f_k)^2] \leq
 \sum_{e \geq 0} \rho^{2e} (1-(1-\gamma_k)^e) \E[(f_k^{=e})^2] \leq \max_{e \geq 0} \rho^{2e} e\gamma_k,
\]
using $1 - (1-\gamma_k)^e \leq e\gamma_k$ and
$\E[f_k^2] = 1$. We can choose $\gamma_k > 0$ so that the maximum on the right-hand side is at most $(\epsilon/k)^2$.

Suppose next that $i < k$.
Since $f_{i+1},\ldots,f_k$ are bounded, we can bound the left-hand side of~\eqref{eq:connected-hybrid} by
\[
 \E[|S_i(T_{1-\gamma_1} f_1\ldots T_{1-\gamma_{i-1}} f_{i-1} (1-T_{1-\gamma_i}) f_i)|].
\]
Note that $T_{1-\gamma_j} f_j$ is bounded in magnitude by $1$ for all $j < i$, and $(1-T_{1-\gamma_i}) f_i$ is bounded in magnitude by $2$.

We will inductively find parameters $d_1,\ldots,d_i \in \mathbb{N}$ such that the following three inequalities hold.
First, for all $j \in [i-1]$,
\begin{equation} \label{eq:connected-1}
 \E[|T_{1-\gamma_1} f_1^{\leq d_1} \cdots T_{1-\gamma_{j-1}} f_{j-1}^{\leq d_{j-1}} T_{1-\gamma_j} f_j^{>d_j}|] \leq \frac{\epsilon}{2k(i+1)}.
\end{equation}
Second,
\begin{equation} \label{eq:connected-2}
 \E[|T_{1-\gamma_1} f_1^{\leq d_1} \cdots T_{1-\gamma_{i-1}} f_{i-1}^{\leq d_{i-1}} (1-T_{1-\gamma_i}) f_i^{\leq d_i}|] \leq \frac{\epsilon}{k(i+1)}.
\end{equation}
Third,
\begin{equation} \label{eq:connected-3}
 \E[|S_i (T_{1-\gamma_1} f_1^{\leq d_1} \cdots T_{1-\gamma_{i-1}} f_{i-1}^{\leq d_{i-1}} (1-T_{1-\gamma_i}) f_i^{> d_i})|] \leq \frac{\epsilon}{k(i+1)}.
\end{equation}
Together, these imply \eqref{eq:connected-hybrid}, since $S_i$ is contractive. 

Let us start with~\eqref{eq:connected-1}. Generalized H\"older's inequality shows that the left-hand side of~\eqref{eq:connected-1} is bounded by
\[
 \prod_{\ell=1}^{j-1} \E[|T_{1-\gamma_\ell} f_\ell^{\leq d_\ell}|^{2(j-1)}]^{1/2(j-1)} \E[(T_{1-\gamma_j} f_j^{>d_j})^2]^{1/2} \leq
 (M\sqrt{2j-3})^{d_1 + \cdots + d_{j-1}} (1-\gamma_j)^{d_j}.
\]
Given $d_1,\ldots,d_{j-1},\gamma_j$, we can choose $d_j$ so that \eqref{eq:connected-1} is satisfied.

The argument for~\eqref{eq:connected-2} is similar. Generalized H\"older's inequality bounds the left-hand side of~\eqref{eq:connected-2} by
\[
 \prod_{\ell=1}^{i-1} \E[|T_{1-\gamma_\ell} f_\ell^{\leq d_\ell}|^{2(i-1)}]^{1/2(i-1)} \E[((1-T_{1-\gamma_i}) f_j^{\leq d_i})^2]^{1/2} \leq
 (M\sqrt{2i-3})^{d_1 + \cdots + d_{i-1}} d_i \gamma_i.
\]
Given $d_1,\ldots,d_i$, we can choose $\gamma_i$ so that \eqref{eq:connected-2} holds. We choose $d_i$ below, \emph{independently of $\gamma_i$}.

Finally, we tackle~\eqref{eq:connected-3}. Applying Cauchy--Schwarz, it suffices to show that
\[
 \E[(S_i (T_{1-\gamma_1} f_1^{\leq d_1} \cdots T_{1-\gamma_{i-1}} f_{i-1}^{\leq d_{i-1}} (1-T_{1-\gamma_i}) f_i^{> d_i}))^2]^{1/2} \leq \frac{\epsilon}{k(i+1)}.
\]
Let $d_{<i} = d_1 + \cdots + d_{i-1}$. The function $F$ that $S_i$ is applied to satisfies $F^{\leq d_i - d_{<i}} = 0$, and so we can use \Cref{lem:averaging} to bound the left-hand side by
\begin{multline*}
 \rho^{d_i - d_{<i}}
  \E[(T_{1-\gamma_1} f_1^{\leq d_1} \cdots T_{1-\gamma_{i-1}} f_{i-1}^{\leq d_{i-1}} (1-T_{1-\gamma_i}) f_i^{> d_i})^2]^{1/2} \leq \\
  \rho^{d_i - d_{<i}} \E[(T_{1-\gamma_1} f_1^{\leq d_1} \cdots T_{1-\gamma_{i-1}} f_{i-1}^{\leq d_{i-1}} (1-T_{1-\gamma_i}) f_i)^2]^{1/2} + \\
  \rho^{d_i - d_{<i}} \E[(T_{1-\gamma_1} f_1^{\leq d_1} \cdots T_{1-\gamma_{i-1}} f_{i-1}^{\leq d_{i-1}} (1-T_{1-\gamma_i}) f_i^{\leq d_i})^2]^{1/2},
\end{multline*}
using the triangle inequality. We will show that both summands are bounded by $\frac{\epsilon}{2k(i+1)}$. We present the proof for the second summand; the same argument also bounds the first summand.

Applying the generalized H\"older inequality, for any $q_1,\ldots,q_i \geq 2$ such that $1/q_1 + \cdots + 1/q_i = 1/2$ we can bound the second summand by
\[
 \rho^{d_i - d_{<i}}\prod_{j=1}^{i-1} \E[|T_{1-\gamma_j} f_j^{\leq d_j}|^{q_j}]^{1/q_j}
 \E[|(1-T_{1-\gamma_i}) f_i^{\leq d_i}|^{q_i}]^{1/q_i}
 \leq
 2\rho^{d_i - d_{<i}}\prod_{j=1}^i (M^{1/2-1/q_j} \sqrt{q_j-1})^{d_j},
\]
since $T_{1-\gamma_1} f_1,\ldots,T_{1-\gamma_{i-1}}f_{i-1}$ are bounded by $1$ and $(1-T_{1-\gamma_i}) f_i$ is bounded by $2$.

We choose $q_1,\ldots,q_i$ as follows. Let $q$ be a large parameter. We take $q_1 = \cdots = q_{i-1} = 2(i-1)q$ and $q_i = \frac{2q}{q-1}$. We can bound the summand by
\[
 2\rho^{d_i - d_{<i}} \sqrt{2Miq}^{d_{<i}} \left(M^{1/2q} \sqrt{1+\frac{2}{q-1}}\right)^{d_i}.
\]
Choose $q$ large enough so that $\rho \cdot M^{1/2q} \sqrt{1+\frac{2}{q-1}} \leq \sqrt{\rho}$. We can bound the summand by
\[
 2\sqrt{\rho}^{d_i} (\sqrt{2Miq}/\rho)^{d_{<i}}.
\]
We can choose $d_i$ to make this at most $\frac{\epsilon}{2k(i+1)}$, completing the proof.

\section{Approximate polymorphisms using decision trees}
\label{apx:decision-trees}

\Cref{thm:jones-dt} is a version of \Cref{thm:jones} for decision trees. The main advantage of \Cref{thm:jones-dt} over \Cref{thm:jones} is the vastly improved parameters.

We can prove \Cref{lem:f-junta-general} using \Cref{thm:jones-dt} rather than \Cref{thm:jones}; thus $T$ is now a decision tree rather than a junta. The main difference is in the way in which $Z,W$ are chosen:

\begin{itemize}
    \item Choose $Z$ uniformly at random.
    \item Let $B$ be the set of variables that $T$ encounters when reading $\col_m(Z)$.
    \item Let $R \subseteq \overline{B}$ consist of those rows $i \notin B$ such that $\row_i(Z)|_{[m-1]} = \alpha$.
    \item Define $W$ to be the matrix obtained from $Z$ by resampling $W_{im}$ for $i \in R$.
\end{itemize}
The rest of the proof of \Cref{lem:f-junta-general} goes through without changes.

We can obtain \Cref{lem:f-junta-general} as stated by replacing the resulting decision tree of depth $L$ by a junta depending on all variables mentioned by the decision tree; there are fewer than $2^{L+1}$ of these.

To see what parameters we obtain, notice first that \Cref{thm:it-aint-over} holds with $\delta = \epsilon^{O(1/\sqrt{\log(1/\epsilon)})}$, $\tau = \epsilon^{O(\sqrt{\log(1/\epsilon)})},$ and $d = O(\log^{3/2}(1/\epsilon))$, where $\epsilon = \min(\epsilon_1,\epsilon_2) = \epsilon^{\Theta(1)}$. We therefore get
\[
 L = O(d/\tau \epsilon) = (1/\epsilon)^{O(1/\sqrt{\log(1/\epsilon)})}
\]
and
\[
 \eta = \epsilon^{O(1/\sqrt{\log(1/\epsilon)})}.
\]

In the proof of \Cref{lem:approximate-polymorphisms-not-xor} we choose $\delta = \min(\epsilon,\eta,2^{-m2^{L+1}}/3)$ (our junta is on $2^{L+1}$ variables), and so we get
\[
 1/\delta = 2^{2^{(1/\epsilon)^{O(1/\sqrt{\log(1/\epsilon)})}}}.
\]

In contrast, the argument in~\cite{FLMM2020} gives $1/\delta = 2^{(1/\epsilon)^{O(1)}}$, as worked out in~\cite{SimpleAND}.

\section{Approximate polymorphisms of \texorpdfstring{$\Maj_3$}{Maj3}} \label{apx:Maj3}

In this section, we prove the following result.

\begin{lemma} \label{lem:maj3}
Any $\epsilon$-approximate polymorphism of $\Maj_3$, the majority function on three inputs, is $O(\epsilon)$-close to a constant, a dictator, or an anti-dictator.
\end{lemma}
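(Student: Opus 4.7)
The plan is to prove the lemma in two stages. First, a Fourier-analytic warm-up will show that $f$ is $O(\sqrt\epsilon)$-close to one of the candidate functions $\{1,-1,x_j,\lnot x_j : j \in [n]\}$; second, a self-correction bootstrap exploiting the $1$-stability of $\Maj_3$ will sharpen this to $O(\epsilon)$. Throughout I pass to $\{-1,1\}$-valued functions, writing $F=(-1)^f$ and $\Maj_3(a,b,c)=\tfrac12(a+b+c-abc)$. The hypothesis gives $\|P-Q\|_2^2 \le 4\epsilon$ where $P(Z)=F(\Maj_3^n(Z))$ and $Q(Z)=\Maj_3(F(\col_1(Z)),F(\col_2(Z)),F(\col_3(Z)))$ are multilinear polynomials on $\{-1,1\}^{n\times 3}$, so Parseval yields $\sum_{\mathcal T}(\hat P(\mathcal T)-\hat Q(\mathcal T))^2 \le 4\epsilon$.

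In Phase~1 I compute the Fourier coefficients of $P,Q$ and compare them on specific families of $\mathcal T\subseteq[n]\times[3]$, indexed by the slice pattern $T_i=\{j:(i,j)\in\mathcal T\}$. Expanding $P=\sum_S \hat F(S)\prod_{i\in S}\Maj_3(z_{i,1},z_{i,2},z_{i,3})$ makes $\hat P(\mathcal T)$ vanish unless each $T_i\in\{\emptyset,\{1\},\{2\},\{3\},[3]\}$. The comparison on $\mathcal T=\emptyset$ gives $\hat F(\emptyset)^2(1-\hat F(\emptyset)^2)^2=O(\epsilon)$; on those $\mathcal T$'s with $T_i\in\{\emptyset,[3]\}$ and exactly one coordinate with $T_i=[3]$ it gives $\hat F(\{i\})^2(1-\hat F(\{i\})^2)^2=O(\epsilon)$ for every $i$; and on those with $T_i\in\{\emptyset,\{1\}\}$ (and symmetrically for columns $2,3$) it gives
\[
 \sum_{A\subseteq[n]} \hat F(A)^2\bigl[2^{-|A|}-\tfrac12(1-\hat F(\emptyset)^2)\bigr]^2 \le 4\epsilon.
\]
A dichotomy follows. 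If $\hat F(\emptyset)^2\ge 1-O(\sqrt\epsilon)$, Parseval makes $F$ $O(\sqrt\epsilon)$-close to the constant $\sgn\hat F(\emptyset)$. Otherwise $\hat F(\emptyset)^2=O(\epsilon)$, so the displayed inequality gives $\sum_{|A|\ge 2}\hat F(A)^2=O(\epsilon)$ (since $|2^{-|A|}-1/2|\ge 1/4$ for $|A|\ge 2$); combining with Parseval yields $\sum_i\hat F(\{i\})^2\ge 1-O(\epsilon)$. The per-coordinate inequality forces each $\hat F(\{i\})^2$ into $[0,O(\epsilon)]\cup[1-O(\sqrt\epsilon),1]$, and at most one can be large because squared coefficients sum to $1$. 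Thus there is a unique $j$ with $|\hat F(\{j\})|\ge 1-O(\sqrt\epsilon)$, and $F$ is $O(\sqrt\epsilon)$-close to $\sgn(\hat F(\{j\}))\,x_j$. Call the identified candidate $G$.

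In Phase~2 I bootstrap $\eta:=\Pr[F\neq G]=O(\sqrt\epsilon)$ down to $O(\epsilon)$. The key observation is that every candidate $G$ (constant, dictator, anti-dictator) is an \emph{exact} polymorphism of $\Maj_3$ (trivially for constants, because $\Maj_3$ commutes with coordinate projection for dictators, and by oddness of $\Maj_3$ for anti-dictators), so $G\circ\Maj_3^n=\Maj_3\circ G^3$ identically. Since $\Maj_3^n(Z)$ is uniform,
\[
 \eta = \Pr_Z[F(\Maj_3^n(Z))\neq G(\Maj_3^n(Z))] \le \epsilon + \Pr_Z\bigl[\Maj_3(F(\col_k))_{k}\neq \Maj_3(G(\col_k))_{k}\bigr].
\]
Because $\Maj_3$ is unchanged when one input is flipped, the latter event forces $F(\col_k)\neq G(\col_k)$ for at least two values of $k$; by independence of $\col_1,\col_2,\col_3$ this has probability at most $3\eta^2$. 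Thus $\eta\le\epsilon+3\eta^2$, which yields $\eta\le 2\epsilon$ whenever $\eta\le 1/6$. Phase~1 delivers $\eta\le 1/6$ once $\epsilon$ is below an absolute constant, and for larger $\epsilon$ the $O(\epsilon)$ bound is trivial (take $G=1$).

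The main obstacle is the bookkeeping in Phase~1: one must verify that the three families of Fourier comparisons combine to rule out the possibility of many $\hat F(\{i\})$ having moderate size, which relies on Parseval together with the sharp ``$0$-or-$1$'' shape of $t(1-t)^2$. The Phase~2 bootstrap is then almost immediate once the $1$-stability of $\Maj_3$ is isolated; this is the feature that buys a linear rather than square-root dependence on $\epsilon$, and it is special to functions like $\Maj_3$ for which every close candidate is automatically a genuine polymorphism.
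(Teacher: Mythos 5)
Your Phase~2 bootstrap rests on a false claim about $\Maj_3$. You write that ``$\Maj_3$ is unchanged when one input is flipped,'' so that $\Maj_3(F(\col_1),F(\col_2),F(\col_3))\neq\Maj_3(G(\col_1),G(\col_2),G(\col_3))$ forces at least two columns to disagree, hence the $3\eta^2$ bound. But $\Maj_3(1,1,-1)=1$ while $\Maj_3(-1,1,-1)=-1$: a single flip does change the majority whenever the other two inputs disagree. When $G$ is a dictator or an anti-dictator, the other two column values $G(\col_2),G(\col_3)$ disagree with probability exactly $\tfrac12$, so the probability that the majorities disagree while only one column differs is $\Theta(\eta)$, not $O(\eta^2)$. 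The resulting inequality $\eta\le\epsilon+\Theta(\eta)+3\eta^2$ is vacuous. (Your argument is fine when $G$ is constant, because then the three reference values are equal and flipping one of them cannot change the majority; but the dictator/anti-dictator cases, which are the substance of the lemma, are exactly where it breaks.)

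There is also a gap in Phase~1. From $\hat F(\{i\})^2(1-\hat F(\{i\})^2)^2=O(\epsilon)$ you conclude that each $\hat F(\{i\})^2$ lies in $[0,O(\epsilon)]\cup[1-O(\sqrt\epsilon),1]$, and since they sum to at most $1$, at most one can be in the upper range. But ``at most one is large'' combined with ``total level-$1$ weight is $1-O(\epsilon)$'' does not entail that one of them \emph{is} large: all of them could be $O(\epsilon)$ with $\Omega(1/\epsilon)$ coordinates contributing. Ruling out that spread is exactly the content of the Friedgut--Kalai--Naor theorem, which you would need to invoke anyway.

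Happily, neither the per-coordinate dichotomy nor the bootstrap is needed. Your own Phase~1 display
\[
 \sum_{A\subseteq[n]} \hat F(A)^2\Bigl[2^{-|A|}-\tfrac12\bigl(1-\hat F(\emptyset)^2\bigr)\Bigr]^2 \le 4\epsilon
\]
already does all the work once $\hat F(\emptyset)$ is controlled. To avoid the square-root loss on the constant term, compare degree-$0$ parts in $L^1$ rather than $L^2$, as the paper does: $|\E[F]-\Maj_3(\E[F],\E[F],\E[F])|\le 2\epsilon$ gives $|\E[F]|\cdot(1-\E[F]^2)=O(\epsilon)$, so $\E[F]$ is $O(\epsilon)$-close to $\{-1,0,1\}$. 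If $|\E[F]|$ is $O(\epsilon)$-close to $1$ then $F$ is $O(\epsilon)$-close to a constant. Otherwise $\hat F(\emptyset)^2=O(\epsilon)$, in which case the bracket in the display above is $\Omega(1)$ for every $|A|\ge 2$, yielding $\|F^{>1}\|^2=O(\epsilon)$; the FKN theorem then gives that $F$ is $O(\epsilon)$-close to $\pm x_j$ for some $j$. This is the paper's route, and it delivers the linear-in-$\epsilon$ bound without any self-correction step.
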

\begin{proof}
Let $F\colon \{-1,1\}^n \to \{-1,1\}$ be given by $F((-1)^x) = (-1)^{f(x)}$. The fact that $f$ is an $\epsilon$-approximate polymorphism of $\Maj_3$ translates to the following fact about $F$:
\[
 \E_{x,y,z \in \{-1,1\}^n}[|F(\Maj_3(x,y,z)) - \Maj_3(F(x),F(y),F(z))|] \leq 2\epsilon,
\]
and so $|\E[F] - \Maj_3(\E[F],\E[F],\E[F])| \leq 2\epsilon$. Since $\Maj_3(\E[F],\E[F],\E[F]) = \E[F] + \frac{1-\E[F]^2}{2} \E[F]$, it follows that $\E[F]$ is $O(\epsilon)$-close to $\{-1,0,1\}$. If $\E[F]$ is $O(\epsilon)$-close to $\{-1,1\}$ then $f$ is $O(\epsilon)$-close to constant. Otherwise, since
\[
 \E_{x,y,z \in \{-1,1\}^n}[(F(\Maj_3(x,y,z)) - \Maj_3(F(x),F(y),F(z)))^2] \leq 4\epsilon,
\]
considering the Fourier coefficients corresponding to $\prod_{i \in S} x_i$ for all $S \neq \emptyset$, we have
\[
 \sum_{S \neq \emptyset} \frac{1-\E[F]^2}{2} \hat{F}(S)^2 \leq 4\epsilon.
\]
Since $\E[F]$ is close to~$0$, this shows that $\|F^{>1}\|^2 = O(\epsilon)$, and so the FKN theorem~\cite{FKN} implies that $f$ is close to a dictator or anti-dictator.
\end{proof}

\end{document}